\documentclass[a4paper,11pt]{article}

\usepackage[
  a4paper,
  bindingoffset=0cm,
  left=2.0cm,
  right=2.0cm,
  top=2.5cm,
  bottom=2.5cm,
  footskip=1.0cm
]{geometry}

\usepackage{amsthm}
\usepackage{amssymb}

\usepackage{graphicx}
\usepackage{dcolumn}
\usepackage{bm}
\usepackage[colorlinks=true,allcolors=blue,backref=page]{hyperref}
\hypersetup{colorlinks=true,breaklinks=true}
\renewcommand*{\backref}[1]{}

\renewcommand*{\backrefalt}[4]{%
  \ifcase #1%
  \or [p.~#2]%
  \else [pp.~#2]%
  \fi}
\usepackage{booktabs}

\usepackage{mathtools}
\usepackage{amsthm}
\usepackage{xcolor}
\usepackage{comment}
\usepackage{fancybox}
\usepackage{stmaryrd}
\usepackage{ytableau}
\usepackage{dsfont}
\usepackage{physics}
\usepackage{multirow}

\usepackage[normalem]{ulem}
\usepackage{cancel}
\usepackage{bm}

\newtheorem{theorem}{Theorem}

\newtheorem{lemma}[theorem]{Lemma}
\newtheorem{proposition}[theorem]{Proposition}
\newtheorem{corollary}[theorem]{Corollary}

\newcommand{\mcA}{\mathcal{A}}
\newcommand{\mcB}{\mathcal{B}}
\newcommand{\mcC}{\mathcal{C}}
\newcommand{\mcD}{\mathcal{D}}
\newcommand{\mcE}{\mathcal{E}}
\newcommand{\mcF}{\mathcal{F}}
\newcommand{\mcG}{\mathcal{G}}
\newcommand{\mcH}{\mathcal{H}}
\newcommand{\mcI}{\mathcal{I}}

\newcommand{\mcK}{\mathcal{K}}
\newcommand{\mcL}{\mathcal{L}}
\newcommand{\mcM}{\mathcal{M}}
\newcommand{\mcN}{\mathcal{N}}
\newcommand{\mcO}{\mathcal{O}}
\newcommand{\mcP}{\mathcal{P}}
\newcommand{\mcQ}{\mathcal{Q}}
\newcommand{\mcR}{\mathcal{R}}
\newcommand{\mcS}{\mathcal{S}}
\newcommand{\mcT}{\mathcal{T}}
\newcommand{\mcU}{\mathcal{U}}
\newcommand{\mcV}{\mathcal{V}}
\newcommand{\mcW}{\mathcal{W}}
\newcommand{\mcX}{\mathcal{X}}

\newcommand{\mcZ}{\mathcal{Z}}

\newcommand{\Wr}{\mathsf{Wr}}
\newcommand{\sfT}{\mathsf{T}}
\newcommand{\young}[2]{\mathsf{Y}_{#2}^{#1}}

\makeatletter
\newcommand{\doublewidetilde}[1]{{%
  \mathpalette\double@widetilde{#1}%
}}
\newcommand{\double@widetilde}[2]{%
  \sbox\tw@{$\m@th#1#2$}%
  \sbox\z@{$\m@th#1\widetilde{\copy\tw@}$}%
  \ht\z@=.9\ht\z@
  \widetilde{\copy\z@}%
}
\makeatother

\newcommand{\mfS}{\mathfrak{S}}

\newcommand{\U}{\mathsf{U}}
\newcommand{\irrep}[1]{\widehat{#1}}

\newcommand{\CC}{\mathbb{C}}
\newcommand{\ZZ}{\mathbb{Z}}
\newcommand{\RR}{\mathbb{R}}

\newcommand{\dbra}[1]{\langle\!\langle {#1} \vert}
\newcommand{\dket}[1]{\vert {#1} \rangle\!\rangle}

\NewDocumentCommand{\dketbra}{m g}{\vert {#1} \rangle\!\rangle\!\langle\!\langle {\IfNoValueTF{#2}{#1}{#2}} \vert}

\newcommand{\1}{\mathds{1}}

\newcommand{\Comm}{\mathrm{Comm}}

\DeclareMathOperator*{\argmin}{arg\,min}
\newcommand{\spec}{\operatorname{spec}}
\newcommand{\Prob}{\operatorname{Prob}}
\newcommand{\vol}{\operatorname{vol}}
\newcommand{\pol}{\operatorname{pol}}
\DeclareMathOperator{\poly}{poly}
\DeclareMathOperator{\polylog}{polylog}
\newcommand{\St}{\mathsf{St}}
\newcommand{\su}{\mathfrak{su}}
\newcommand{\id}{\mathrm{id}}

\newcommand{\W}{\mathsf{W}}
\newcommand{\Dil}{\mathsf{Dil}}
\newcommand{\Pur}{\mathsf{Pur}}
\newcommand{\sgn}{\operatorname{sgn}}
\newcommand{\supp}{\operatorname{supp}}

\newcommand{\PCovChan}{\mathsf{PCovChan}}
\newcommand{\CovChan}{\mathsf{CovChan}}

\begin{document}

\title{Optimal learning of covariant quantum states and channels}

\author{Satoshi Yoshida\thanks{Department of Physics, Graduate School of Science, The University of Tokyo, Hongo 7-3-1, Bunkyo-ku, Tokyo 113-0033, Japan. \texttt{satoshiyoshida.phys@gmail.com}}, Kazuki Okigami\thanks{OptQC Corp., 3-28-13 Nishi-Ikebukuro, Toshima-ku, Tokyo 171-0021, Japan. \texttt{kazuki.okigami@optqc.com}}, Pietro M. Posta\thanks{Department of Mathematical Sciences, University of Copenhagen, Denmark. \texttt{pmp@math.ku.dk}}, Dmitry Grinko\thanks{QuSoft \& Institute for Logic, Language and Computation \& Korteweg-de Vries Institute for Mathematics, University of Amsterdam, The Netherlands. \texttt{d.grinko@uva.nl}}}

\date{\today}

\maketitle

\begin{abstract}
  We give collective tomography protocols for quantum states and channels with known symmetries, using random purification and dilation to reduce learning to pure-state estimation.
  For states commuting with a compact-group representation with multiplicities $m_\lambda$, the optimal copy complexity is $\Theta((\sum_\lambda m_\lambda^2+\log\eta^{-1})/\varepsilon^2)$ for sufficiently small trace-distance error $\varepsilon$ and failure probability $\eta$ for the nontrivial case $\sum_\lambda m_\lambda^2>1$.
  For $G$-covariant channels with finite-dimensional unitary representations of a compact group $G$, parallel $\widetilde{O}(D_G/\varepsilon^2)$ queries achieve a diamond-distance error $\varepsilon $ at fixed success probability $2/3$, where $D_G$ counts the real parameters of covariant Choi operators before imposing trace preservation.
  For permutation-covariant channels on $k$ qudits of fixed local dimension $d$, at sufficiently small fixed error and fixed success probability $2/3$, we obtain the optimal query scalings $\Theta_d(k^{d^4-1})$ for diamond distance and $\Theta_d(k^{d^4-d^2})$ for Choi trace distance.
  Furthermore, we resolve an open problem in quantum tomography of constructing efficient quantum circuits that approximate the Hayashi measurement for optimal pure-state estimation.
  We encode states in the symmetric subspace into bosonic occupation modes, realize the measurement by heterodyne detection and normalization, and approximate this procedure on qubits using the quantum Hermite transform.
  Combined with our symmetry-compatible purification and dilation circuits, this yields query-optimal and gate-efficient learners of permutation-covariant states and channels with gate complexity $O_d(\mathrm{poly}(k,\varepsilon^{-1},\log\eta^{-1}))$.
\end{abstract}

\clearpage

\tableofcontents

\clearpage

\section{Introduction}
\label{sec:introduction}

Quantum tomography studies how we can learn a description of a quantum system from experiments in a mathematically optimal way.
Tomography protocols usually reconstruct a state from independently prepared copies, or a channel from its action on chosen inputs.
Early work established optimal collective measurements for pure states~\cite{massar1995optimal,hayashi1998asymptotic} and protocols for reconstructing quantum processes~\cite{chuang1997prescription,poyatos1997complete}.
Subsequent results determined the sample complexity of mixed-state tomography~\cite{haah2017sample,odonnell2016efficient,pelecanos2025mixed} and the query bounds for general channels~\cite{mele2025optimal,chen2026quantum}.
Unitary channels form an important special case, linking estimation of group transformations~\cite{chiribella2005optimal,chiribella2011group} to query-optimal reconstruction in diamond distance~\cite{haah2023query}.
For many-body systems, however, unrestricted tomography remains expensive because the Hilbert-space dimension grows exponentially with the number of sites.

However, knowing a symmetry can change this scaling.
Degrees of freedom of a state commuting with a group representation are determined by its multiplicity blocks, rather than an arbitrary matrix on the physical Hilbert space.
Permutation-invariant tomography already illustrates how this structure reduces experimental and computational requirements~\cite{toth2010permutationally,moroder2012permutationally}.
The corresponding channel condition is covariance: transforming the input by a known representation is equivalent to transforming the output by another.
For state tomography, the full sample has a wreath-product symmetry that combines independent symmetry transformations of each copy with permutations of the copies.
This brings us to the central questions which we aim to answer in this paper:
\begin{center}
\textit{How does symmetry determine the number of queries needed for learning, and do the resulting measurements admit efficient quantum circuits?}
\end{center}

It turns out that the random purification channel (RPC) \cite{tang2025conjugate} provides a route to resolve these two questions.
RPC converts $n$ copies of a mixed state into an ensemble of $n$ copies of one shared random purification~\cite{tang2025conjugate,girardi2025random}.
Estimating the pure representative via the Hayashi measurement~\cite{hayashi1998asymptotic} and discarding its reference system then yields a mixed-state estimate~\cite{pelecanos2025mixed}.
A general construction extends this transformation to states with prescribed symmetries~\cite{chen2026gaussian}.
For channels, random dilation similarly converts parallel channel queries into parallel uses of one shared random Stinespring isometry~\cite{girardi2025random2,yoshida2025random}.
Together, these reductions connect state and channel tomography to collective estimation of pure states and isometries.
However, these works left the question of efficient implementability of the Hayashi measurement unanswered.
In order to describe efficient protocols for covariant state and channel learning, we consider the following question:
\begin{center}
\textit{Does the Hayashi measurement admit efficient implementation?}
\end{center}

\subsection{Summary of results}

Our results determine how known symmetries reduce the cost of tomography and provide quantum circuits for the required measurements.
The state protocols process copies collectively, and the channel protocols use parallel queries.

\paragraph{Optimal learning of symmetric states.}\leavevmode
For states commuting with a known compact-group representation, the symmetry fixes part of the density operator and leaves smaller matrix blocks to be learned, with block sizes $m_\lambda$ and dimension parameter $D_\mathrm{st} \coloneqq \sum_\lambda m_\lambda^2$, where $m_\lambda$ are the multiplicities of the irreducible representations $\lambda$.
For any nontrivial such family, we provide a learning protocol achieving the optimal sample complexity $\Theta((D_\mathrm{st} + \log \eta^{-1})/\varepsilon^2)$ (Thm.~\ref{thm:state-learning-optimal}) for sufficiently small trace-distance error $\varepsilon$ and failure probability $0<\eta\leq1/3$.
For instance, permutation-invariant $k$-qudit states at fixed local dimension $d$ require a polynomial number of copies in $k$ at fixed error and success probability.
This protocol utilizes the symmetry-compatible random purification and the Hayashi measurement, both of which can be implemented efficiently for permutation symmetry (see also the paragraphs below).

\paragraph{Learning covariant channels.}
We provide a general reduction from learning covariant channels to learning covariant isometries.
We give a protocol for learning $G$-covariant channels with parallel queries that achieves the scaling $\widetilde{O}(D_G/\varepsilon^2)$ with diamond distance error $\varepsilon$ at fixed success probability (Cor.~\ref{cor:covariant-channel-tomography-high-probability}), where $D_G$ counts the real parameters of covariant Choi operators before imposing trace preservation.
When the real affine dimension of the set of $G$-covariant channels, denoted by $D''_G$, satisfies $D''_G = \Theta(D_G)$, we provide a matching lower bound $\Omega(D_G)$ up to polylogarithmic factors at sufficiently small fixed error and fixed success probability (Thm.~\ref{thm:permutation-channel-strong-diamond-lower}).
We also provide an upper bound on the query complexity for learning covariant channels with respect to the Choi trace-distance error, but the matching lower bound for general symmetry is still open.
For permutation-covariant channels on $k$ qudits of fixed local dimension $d$, the optimal query complexity is given by $\Theta_d(k^{d^4-1})$ for diamond-distance error and $\Theta_d(k^{d^4-d^2})$ for Choi trace-distance error at sufficiently small fixed error and fixed success probability (Thms.~\ref{thm:permutation-covariant-channel-sql-scaling}, \ref{thm:learning-permutation-covchannel-compression}, Cor.~\ref{cor:permutation-channel-strong-diamond-lower} and Thm.~\ref{thm:permutation-channel-trace-matching-lower}).
Similarly to the permutation-covariant state tomography, our construction is efficient in terms of gate complexity for permutation-covariant channel tomography.

\paragraph{Symmetry-compatible random purification and dilation.}
For finite groups $G$, we construct quantum circuits that convert copies of $G$-covariant states $\rho$ or channels $\Lambda$ into copies of one shared random purification or Stinespring isometry, respectively (Thms.~\ref{thm:random-covariant-purification} and~\ref{thm:random-covariant-dilation}), a construction that we further extend to compact groups in Sec.~\ref{sec:compact-oracle-implementation}.
The random purification and dilation utilize the wreath-product symmetry of $\rho^{\otimes n}$ and $\Lambda^{\otimes n}$, which combines independent symmetry transformations of each copy with permutations of the copies.
The circuit is implemented by using the quantum Fourier transform and the generalized Schur transform for the wreath product.
By using the recent construction of the quantum Fourier transform for the wreath product~\cite{bruinsma2026toolbox}, we show that the random purification and dilation can be implemented efficiently for permutation symmetry at fixed local dimension (Sec.~\ref{sec:random-purification-efficiency}).

\paragraph{Efficient implementation of the Hayashi measurement.}
We resolve the open problem of efficiently implementing the Hayashi measurement for optimal pure-state estimation (Thm.~\ref{thm:efficient-hayashi-measurement}).
The key observation is that, by encoding the symmetric subspace into a multi-mode bosonic system, the Hayashi measurement can be implemented by the heterodyne detection (Thm.~\ref{thm:hayashi-exact-heterodyne}).
We approximate this procedure on qubits using the quantum Hermite transform~\cite{jain2025hermite}.
The gate complexity for implementing the Hayashi measurement on $t$ copies of a $q$-dimensional system is given by
\begin{align}
    O((t+q)\polylog(t,q,1/\tau,1/\zeta)),
\end{align}
where $\tau$ and $\zeta$ are the implementation accuracy and failure probability, respectively.
For permutation-covariant states and channels, we combine the Hayashi measurement with our symmetry-compatible purification and dilation circuits to give learners whose gate complexity is polynomial in the number of sites, the inverse target error, and the logarithm of the inverse failure probability at fixed local dimension.

\paragraph{Optimal estimation of covariant isometries.}
We also characterize the minimum mean squared Choi trace-distance error of learning covariant isometries as $1-\lambda_{\max}(E_n)$, where $\lambda_{\max}(E_n)$ is the maximum eigenvalue of a certain matrix $E_n$ which only depends on the input- and output-space representations (Thm.~\ref{thm:learning-colored-estimation}).
Using this characterization, we determine the leading behavior of the minimum mean squared error (Thm.~\ref{thm:learning-leading-estimation}).

\subsection{Related work}

Previous work on permutation-invariant tomography uses symmetry to obtain polynomial block descriptions of states and protocols for their reconstruction~\cite{toth2010permutationally,moroder2012permutationally,klimov2013optimal,schwemmer2014experimental,marconi2026symmetric}.
Our state theorem gives a symmetry-based counterpart to the usual unrestricted tomography setting~\cite{haah2017sample,odonnell2016efficient,odonnell2017efficient,scharnhorst2025optimal}.

For unrestricted $d$-dimensional unitary channels, Haah et al.~\cite{haah2023query} prove the optimal $\Theta(d^2/\varepsilon)$ diamond-distance query complexity at constant success probability.
Unitary estimation is also connected to optimal parallel transposition under average fidelity~\cite{quintino2022deterministic}, and an $n$-query estimation protocol corresponds to deterministic port-based teleportation with $n+1$ ports~\cite{yoshida2026correspondence}.
The optimal fidelity of port-based teleportation can itself be expressed through the largest eigenvalue of a matrix indexed by Young diagrams~\cite{studzinski2017port,mozrzymas2018optimal}.
Our spectral characterization in Thm.~\ref{thm:learning-colored-estimation} extends the results for unrestricted-isometry estimation~\cite{yoshida2025quantum} to coupled multiplicity sectors, while random dilation transfers learning guarantees to general covariant channels.

The random purification channel~\cite{tang2025conjugate} converts state copies into copies of one shared random purification, and the random dilation superchannel~\cite{girardi2025random2,yoshida2025random} gives the analogous transformation for parallel channel queries.
Allowing a random purification avoids the obstruction to deterministic universal purification~\cite{liu2025no}.
Mele and Bittel~\cite{mele2025optimal} combine Choi-state preparation, random purification, and pure-state estimation for unrestricted channel learning in diamond distance.
Purification compatible with a prescribed symmetry is given in Ref.~\cite{chen2026gaussian}, but its application to our case requires the quantum Schur transform for the wreath-product representations, whose explicit construction remained open.
We provide an alternative construction of a finite-group circuit for symmetry-compatible purification for wreath-product representations, which we extend to random dilation by combining the state-channel duality~\cite{choi1975completely,jamiolkowski1972linear} with the transpose-trick circuit construction.
Their efficiency requires efficient Fourier transforms, controlled representation operations, and generalized Schur transforms, which we implement for permutation symmetry using the recent construction of the quantum Fourier transform for the wreath product~\cite{bruinsma2026toolbox}, which generalizes the corresponding construction for the symmetric group~\cite{beals1997quantum,kawano2016quantum,bruinsma2026symmetric}.

Turning these reductions into efficient learning algorithms also requires an implementation of the collective pure-state measurement.
Optimal collective measurements, including finite positive operator-valued measures (POVMs), are known~\cite{bruss1999optimal,hayashi1998asymptotic,hayashi2005pure,gill2000state,bagan2006optimal,pelecanos2025mixed}.
Among these, the Hayashi measurement~\cite{hayashi1998asymptotic} provides a covariant measurement that is crucial for our channel learning protocols in diamond distance since covariance enables us to bound the diamond distance error without incurring additional dimension-dependent factors~\cite{mele2025optimal}.
However, an efficient finite-qubit approximation of the Hayashi measurement with explicit gate-complexity bounds in the number of copies, state dimension, accuracy, and failure probability remained open~\cite{pelecanos2025mixed}.
Although Ref.~\cite{chiribella2007continuous} guarantees that a continuous POVM can be realized as a probabilistic mixture of finite-outcome POVMs, it does not provide an explicit construction of these finite POVMs.
References~\cite{shojaee2018optimal,jackson2019implement} provide an implementation of the Hayashi measurement using continuous-time weak measurements, but do not give gate-complexity bounds for an approximation over a fixed finite universal gate set with prescribed accuracy and failure probability.
Reference~\cite{mele2025optimal} provides an alternative implementation of the covariant measurement using the Haar twirling, but its efficient implementation is not known.
Reference~\cite{pelecanos2025mixed} utilizes unitary $t$-designs to approximate the twirling, but it only guarantees a high-probability bound on the trace-distance error, which is not sufficient for the diamond distance error guarantee in channel learning.
We address this question by constructing a finite-qubit approximation of the Hayashi measurement using the quantum Hermite transform~\cite{jain2025hermite}, resulting in a gate-efficient implementation of the Hayashi measurement for any finite number of copies.

\subsection{Discussion and outlook}

We discuss some open problems and directions for future work.

\paragraph{Matching bounds for channel tomography.}
While we provide matching upper and lower bounds for learning permutation-covariant channels at fixed error and success probability, the optimal query complexity for general symmetries remains open.
For diamond distance, our protocol achieves the optimal scaling $\widetilde{\Theta}(D_G)$ when there exists $\kappa=\Theta(1)$ satisfying $D''_G\geq \kappa D_G$, but it remains to determine the optimal query complexity when $\kappa=o(1)$.
A similar problem is investigated for general quantum channels with low Kraus rank~\cite{chen2026optimal}, and it would be interesting to understand how the scaling of $\kappa$ affects the transition between the Heisenberg and standard quantum limits for covariant channels.
For Choi trace distance, applying the learner for general symmetries to permutation-covariant channels gives the query complexity $O_d(k^{d^4-(d^2+1)/2})$ (see Thm.~\ref{thm:permutation-covariant-channel-sql-scaling}), which is not optimal since the optimal scaling is given by $\Theta_d(k^{d^4-d^2})$.
Even for the permutation-covariant channel, the matching lower bound for diamond-distance error is only established for sufficiently small fixed error, and it remains open to determine the optimal query complexity for general target error.

\paragraph{Efficient measurements for covariant unitary and isometry tomography.}
This work constructs an efficient circuit for the Hayashi measurement by embedding the symmetric subspace into multi-mode bosonic systems to implement the continuous-outcome measurement.
This is possible due to the high symmetry of the Hayashi measurement.
Similarly to pure-state tomography, unitary and isometry tomography also admits the optimal covariant measurements with continuous outcomes~\cite{chiribella2005optimal,chiribella2011group,yang2020optimal, yoshida2025quantum}.
We leave it an open problem to construct efficient circuits for the optimal covariant POVMs used in unitary and isometry tomography.

\subsection{Organization of the paper}
After the preliminaries, Sec.~\ref{sec:random-purification-dilation} constructs random purification for states and random dilation for channels.
Section~\ref{sec:learning-covariant-states} uses the wreath-product symmetry of $\rho^{\otimes n}$ to establish optimal state-learning bounds and compare them with standard tomography.
Section~\ref{sec:learning-covariant-channels} develops the channel-learning upper and lower bounds.
Section~\ref{sec:hayashi-qubit-implementation} implements the Hayashi measurement efficiently.
Appendix~\ref{appendix:efficient-wreath-schur} gives the generalized Schur transform circuits for wreath-product representations.
Appendix~\ref{appendix:learning-covariant-isometries-choi-trace-norm} derives the optimal covariant-isometry estimation in mean squared Choi trace distance error.
Appendices~\ref{app:state-tomography-lower} and~\ref{appendix:lower-bound-channel} prove the state- and channel-tomography converses.
Appendix~\ref{sec:uniform-fixed-height-young-estimates} provides the asymptotic estimates of quantities related to Young diagrams used in the upper and lower bounds.

\section{Preliminaries}
\label{sec:preliminaries}

\subsection{Notation}
We denote $[n]\coloneqq \{1, \ldots, n\}$, and $\log(\cdot)$ denotes the natural logarithm.
We denote the set of linear operators on a Hilbert space $\mcH$ by $\mcL(\mcH)$, and the identity operator on $\mcH$ as $\1_\mcH$.
We write the probability of an event $E$ as $\Prob[E]$ and the expectation of a random variable $X$ as $\mathbb{E}[X]$.
We utilize Landau's big-$O$ notation $O(\cdot)$, the big-$\Omega$ notation $\Omega(\cdot)$, the big-$\Theta$ notation $\Theta(\cdot)$, and the soft big-$O$ and big-$\Theta$ notations $\widetilde{O}(\cdot)$ and $\widetilde{\Theta}(\cdot)$, which suppress polylogarithmic factors.
We put a variable $x$ on the subscripts of the big-$O$ notation, e.g., $O_x(\cdot)$, to indicate that the hidden constant may depend on $x$, and do the same for the big-$\Omega$ and big-$\Theta$ notations.

\subsection{Distance measures for quantum states and channels}
\label{sec:distance-measures}

We use the trace distance between two quantum states $\rho$ and $\sigma$ to quantify the estimation error in quantum state tomography, which is defined by
\begin{align}
    \label{eq:prelim-state-distance-measures}
    d_\mathrm{tr}(\rho,\sigma) \coloneqq {1\over 2}\|\rho-\sigma\|_1,
\end{align}
where $\|\cdot\|_1$ is the trace norm defined by
\begin{align}
    \|X\|_1 \coloneqq \Tr\sqrt{X^\dagger X}.
\end{align}
The trace distance quantifies the distinguishability of two quantum states~\cite{helstrom1967detection}.

For process tomography, we use the diamond distance and Choi trace distance\footnote{We adopt this terminology from Ref.~\cite{chen2026quantum}, with a small modification; Reference~\cite{chen2026quantum} defined the Choi trace norm as $\|\rho_\Lambda-\rho_\Gamma\|_1$ without the $1/2$ factor.} to quantify the estimation error between two quantum channels $\Lambda$ and $\Gamma$, which are defined by
\begin{align}
    \label{eq:permutation-learning-losses}
    d_\diamond(\Lambda,\Gamma) \coloneqq {1\over 2}\|\Lambda-\Gamma\|_\diamond, \qquad
    d_\mathrm{tr}(\rho_\Lambda,\rho_\Gamma),
\end{align}
respectively, where $\|\cdot\|_\diamond$ is the diamond norm defined by~\cite{kitaev1997quantum}
\begin{align}
    \|\Lambda\|_\diamond \coloneqq \sup_{r\geq 1} \sup_{\substack{X\in\mcL(\CC^r\otimes\mcI)\\X\neq 0}} {\|(\1_{\mcL(\CC^r)}\otimes\Lambda)(X)\|_1 \over \|X\|_1},
\end{align}
for $\Lambda: \mcL(\mcI) \to \mcL(\mcO)$, and $\rho_\Lambda$ and $\rho_\Gamma$ are the normalized Choi states of the channels $\Lambda$ and $\Gamma$, respectively, defined by~\cite{choi1975completely,jamiolkowski1972linear}
\begin{align}
    \label{eq:def-choi-state}
    \rho_\Lambda \coloneqq {J_\Lambda \over d_\mcI}, \qquad 
    J_\Lambda \coloneqq \sum_{i,j=1}^{d_\mcI} \ketbra{i}{j}_{\overline{\mcI}} \otimes \Lambda(\ketbra{i}{j})_{\mcO},
\end{align}
using the Choi operators $J_\Lambda$ and the computational basis $\{\ket{i}\}_{i=1}^{d_\mcI}$ of $\mcI$ with $d_\mcI\coloneqq \dim \mcI$.
The diamond distance quantifies the worst-case error of the quantum channels, while the Choi trace distance quantifies the average-case error~\cite{rosenthal2024quantum}.
For isometry channels $\mcW(\cdot) \coloneqq W(\cdot) W^\dagger$ and $\mcW'(\cdot) \coloneqq W'(\cdot)W'^\dagger$ for $W, W': \mcI \to \mcO$, we denote the diamond distance as
\begin{align}
    d_\diamond(W,W')\coloneqq d_\diamond(\mcW,\mcW').
\end{align}
The normalized Choi states are denoted as
\begin{align}
    \rho_W\coloneqq\dketbra{W}/d_\mcI, \qquad
    \rho_{W'}\coloneqq\dketbra{W'}/d_\mcI,
\end{align}
where $\dket{W}$ is the Choi vector of the isometry $W$ defined by
\begin{align}
    \dket{W} \coloneqq \sum_{i=1}^{d_\mcI} \ket{i}_{\overline{\mcI}} \otimes W\ket{i}_\mcO.
\end{align}
Then, the Choi trace distance between the isometry channels is denoted by
\begin{align}
    d_\mathrm{tr}(\rho_W,\rho_{W'}).
\end{align}

In the analysis of the learning errors, we utilize the $q$-norm of an operator $X\in \mcL(\mcH)$ given by
\begin{align}
    \|X\|_q \coloneqq \qty[\Tr (X^\dagger X)^{q/2}]^{1/q},
\end{align}
and the operator norm given by
\begin{align}
    \|X\|_\infty \coloneqq \sup_{\ket{\psi}\in\mcH, \|\ket{\psi}\|=1} \|X\ket{\psi}\|.
\end{align}
We also utilize the total variation distance between two probability distributions $p$ and $q$ over a finite set $\mathsf{X}$, which is defined by
\begin{align}
    d_\mathrm{TV}(p,q) \coloneqq {1 \over 2} \sum_{x\in\mathsf{X}} \abs{p(x) - q(x)}.
\end{align}

\subsection{Schur--Weyl duality and Young diagrams}\label{sec:schur-weyl-basis-conventions}

We use the standard notation for the Schur--Weyl duality and Young diagrams (see the standard textbooks, e.g., Refs.~\cite{fulton1997young,georgi2000lie,ceccherini2010representation} for more details).
We consider the representation of the symmetric group $\mfS_n$ on the $n$-fold tensor product $(\CC^d)^{\otimes n}$ of the $d$-dimensional Hilbert space $\CC^d$ given by
\begin{align}
    \label{eq:def-permutation-operator}
    \pi_d(\sigma) \ket{i_1\cdots i_n} \coloneqq \ket{i_{\sigma^{-1}(1)} \cdots i_{\sigma^{-1}(n)}},
\end{align}
using the computational basis $\{\ket{i}\}_{i=1}^d$ of $\CC^d$.
We also consider the representation of the unitary group $\U(d)$ on $(\CC^d)^{\otimes n}$ given by
\begin{align}
    U^{\otimes n}.
\end{align}
The representations $\pi_d: \mfS_n \to \mcL((\CC^d)^{\otimes n})$ and $U\mapsto U^{\otimes n}$ admit the following simultaneous decomposition into tensor products of irreducible representations (irreps) of $\mfS_n$ and $\U(d)$:
\begin{align}
    \label{eq:schur-weyl-decomposition}
    (\CC^d)^{\otimes n} \cong \bigoplus_{\lambda \vdash_d n} \mcS_\lambda \otimes \mcU_\lambda^{(d)}, \qquad \pi_d(\sigma) \cong \bigoplus_{\lambda \vdash_d n} g_\lambda(\sigma) \otimes \1_{\mcU_\lambda^{(d)}}, \qquad U^{\otimes n} \cong \bigoplus_{\lambda \vdash_d n} \1_{\mcS_\lambda} \otimes f_\lambda^{(d)}(U),
\end{align}
where $\lambda\vdash_d n$ denotes a Young diagram of size $n$ with at most $d$ rows, represented by a tuple $\lambda = (\lambda_1, \ldots, \lambda_d)\in\ZZ^d$ satisfying $\lambda_1\geq \cdots \geq \lambda_d \geq 0$ and $\sum_{i=1}^d \lambda_i = n$.
The space $\mcS_\lambda$ is the representation space of the irrep $g_\lambda$ corresponding to the Young diagram $\lambda$, and $\mcU_\lambda^{(d)}$ is the multiplicity space of $g_\lambda$ in $(\CC^d)^{\otimes n}$, which is also the representation space of the corresponding irrep $f_\lambda^{(d)}$ of $\U(d)$.
Throughout, we equip each space $\mcS_\lambda$ with the orthonormal Young--Yamanouchi basis, indexed by standard Young tableaux. We fix the Young orthogonal matrix conventions used for the symmetric-group irreps in Ref.~\cite{bruinsma2026symmetric}. The matrices $g_\lambda(\sigma)$, the symmetric-group Fourier transforms, and the symmetric-group irrep registers of Schur transforms all use this convention.
We also define the set of Young diagrams of size $n$ with at most $d$ rows as
\begin{align}
    \young{d}{n} \coloneqq \{\lambda \mid \lambda\vdash_d n\}.
\end{align}
We denote the dimension of the irrep $g_\lambda$ by $d_\lambda \coloneqq \dim \mcS_\lambda$ and the dimension of the multiplicity space $\mcU_\lambda^{(d)}$ by $m_\lambda^{(d)} \coloneqq \dim \mcU_\lambda^{(d)}$.
The dimensions $d_\lambda$ and $m_\lambda^{(d)}$ can be computed by the Frobenius formula and the Weyl dimension formula given by
\begin{align}
    \label{eq:hook-length}
    d_\lambda&={n! \prod_{1\leq i<j\leq d}(\widetilde{\lambda}_i - \widetilde{\lambda}_j) \over \prod_{i=1}^{d} \widetilde{\lambda}_i!},\\
    \label{eq:weyl-dimension}
    m_\lambda^{(d)}&=\prod_{1\leq i<j\leq d}{\widetilde{\lambda}_i - \widetilde{\lambda}_j \over j-i},
\end{align}
where $\widetilde{\lambda}_i \coloneqq \lambda_i + d - i$ for $i \in [d]$.
We define the quantum Schur transform $U_\mathrm{Sch}^{(n,d)}$ by the unitary corresponding to the isomorphism in Eq.~\eqref{eq:schur-weyl-decomposition}.
Then, we define the Young projector $P_\lambda$ by
\begin{align}
    \label{eq:def-young-projector}
    P_\lambda \coloneqq U_\mathrm{Sch}^{(n,d)\dagger} \qty(\1_{\mcS_\lambda} \otimes \1_{\mcU_\lambda^{(d)}}) U_\mathrm{Sch}^{(n,d)}.
\end{align}

Similarly to the decomposition~\eqref{eq:schur-weyl-decomposition} of $U^{\otimes n}$ for $U\in \U(d)$, we can also consider the decomposition of $W^{\otimes n}$ for an isometry $W:\CC^d\to\CC^D$ with $D\geq d$.
Since $\pi_D(\sigma)^\dagger W^{\otimes n} \pi_d(\sigma) = W^{\otimes n}$ holds for all $\sigma\in\mfS_n$, the operator $W^{\otimes n}$ is decomposed as~\cite{yoshida2023universal}
\begin{align}
    \label{eq:prelim-isometry-schur-decomposition}
    W^{\otimes n} \cong \bigoplus_{\lambda \vdash_d n} \1_{\mcS_\lambda} \otimes f_\lambda^{(d\to D)}(W),
\end{align}
where $f_\lambda^{(d\to D)}(W): \mcU_\lambda^{(d)} \to \mcU_\lambda^{(D)}$ is an isometry.

\subsection{Hayashi measurement and optimal pure-state estimation}
\label{subsec:prelim-hayashi}

We consider the task of learning an unknown pure state $\ket{\psi}\in \CC^q$ for $q\geq 2$ using $t$ copies of the state, i.e., we apply a measurement on $\ket{\psi}^{\otimes t}$ and output an estimate $\ket{u}$ of the state $\ket{\psi}$.
Since the input state $\ket{\psi}^{\otimes t}$ is supported on the symmetric subspace of $(\CC^q)^{\otimes t}$, denoted by $\operatorname{Sym}^t(\CC^q)$, we can restrict the measurement to this subspace without loss of generality.
The \emph{Hayashi measurement}~\cite{hayashi1998asymptotic} is a continuous-outcome measurement represented by a positive operator-valued measure (POVM) on $\operatorname{Sym}^t(\CC^q)$ defined by
\begin{align}
    H_{t,q}(\mathrm{d}u)
    \coloneqq D_{t,q} \ketbra{u}^{\otimes t} \dd u, \qquad D_{t,q} \coloneqq \dim\operatorname{Sym}^t(\CC^q) = \binom{t+q-1}{t},
    \label{eq:prelim-hayashi-povm}
\end{align}
where $\dd u$ is the normalized Haar measure on the unit sphere of $\CC^q$.
It provides the optimal expected fidelity for pure-state estimation, which is given by~\cite{hayashi1998asymptotic,bruss1999optimal}
\begin{align}
    \mathbb{E}\qty[\abs{\braket{\psi}{u}}^2] = {t+1 \over t+q}.
    \label{eq:prelim-hayashi-mean-fidelity}
\end{align}
In addition, due to the covariance of the measurement, the probability distribution of the measurement outcome $\ket{u}$ is given by~\cite{pelecanos2025mixed,mele2025optimal}
\begin{align}
\ket{u}
=e^{i\phi}\sqrt{1-\delta}\ket{\psi}
+\sqrt\delta\ket{r},
\label{eq:covariant-additive-confidence-beta-output-prelim}
\end{align}
where $1-\delta\sim\mathrm{Beta}(t+1,q-1)$,
$\phi$ is uniformly distributed on $[0,2\pi)$, and $\ket{r}$ is
Haar distributed on the unit sphere of $\ket{\psi}^{\perp}$.
The random variables $\delta$, $\phi$, and $\ket{r}$
are mutually independent.
This property is crucial in the diamond-distance error analysis of the channel tomography~\cite{mele2025optimal}.
We can also derive a high-probability guarantee for the Hayashi measurement, which is useful in the analysis of the sample complexity of quantum state tomography\footnote{A similar result is shown in Ref.~\cite{mele2025optimal}, but the constants are taken differently.}.

\begin{proposition}[High-probability guarantee for the Hayashi measurement]
\label{prop:prelim-hayashi-confidence}
Let $q\geq2$ and $t\geq1$ be integers, and let $0<\varepsilon,\eta<1$.
Assume that the number of copies satisfies
\begin{align}
    t\geq\left\lceil\frac{2(q-1)+8\log(1/\eta)}{\varepsilon^2}\right\rceil.
    \label{eq:prelim-hayashi-confidence}
\end{align}
Then, for every unit vector $\ket\psi\in\CC^q$, the outcome $\ket u$ of the Hayashi measurement on $\ket\psi^{\otimes t}$ satisfies
\begin{align}
    \Prob_\psi\!\left[1-|\braket{\psi}{u}|^2>\varepsilon^2\right]\leq\eta.
    \label{eq:prelim-hayashi-high-probability}
\end{align}
\end{proposition}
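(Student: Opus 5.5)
Using the covariant output representation \eqref{eq:covariant-additive-confidence-beta-output-prelim}, we have $1-|\braket{\psi}{u}|^2=\delta$ with $1-\delta\sim\mathrm{Beta}(t+1,q-1)$, independently of $\ket\psi$. The statement therefore reduces to a tail bound for a Beta random variable: we must show $\Prob[\delta>\varepsilon^2]\le\eta$ under \eqref{eq:prelim-hayashi-confidence}. We compute this tail directly instead of using concentration inequalities. The density of $\delta$ is $f(x)=\frac{1}{B(t+1,q-1)}x^{q-2}(1-x)^{t}$ on $[0,1]$, where $B(t+1,q-1)^{-1}=(q-1)\binom{t+q-1}{q-1}$. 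Bounding $(1-x)^t\le(1-\varepsilon^2)^t$ on $[\varepsilon^2,1]$ and $\int_{\varepsilon^2}^1x^{q-2}\,dx\le 1/(q-1)$ gives
\begin{align}
    \Prob[\delta>\varepsilon^2]\le \binom{t+q-1}{q-1}(1-\varepsilon^2)^{t}\le \binom{t+q-1}{q-1}e^{-\varepsilon^2 t}.
\end{align}
This crude bound costs a factor $\binom{t+q-1}{q-1}\le (e(t+q-1)/(q-1))^{q-1}$, so it only yields $t\gtrsim (q-1)\log(t/q)/\varepsilon^2$. That carries an unwanted logarithm, so we need to be sharper.

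The sharper route is the binomial identity: $\Prob[\mathrm{Beta}(t+1,q-1)<1-\varepsilon^2]=\Prob[\mathrm{Bin}(t+q-1,1-\varepsilon^2)\le t]$, or equivalently $\Prob[\delta>\varepsilon^2]=\Prob[\mathrm{Bin}(N,\varepsilon^2)\ge q-1]$ with $N=t+q-1$. This is a small-count upper tail event. Its mean is $\mu=N\varepsilon^2\ge t\varepsilon^2\ge 2(q-1)+8\log(1/\eta)$, so the threshold $q-1$ sits far \emph{below} the mean, and we need the lower tail $\Prob[\mathrm{Bin}(N,\varepsilon^2)\le q-2]$ (note the direction: $\delta>\varepsilon^2$ means few successes). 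The multiplicative Chernoff bound $\Prob[X\le(1-\theta)\mu]\le e^{-\theta^2\mu/2}$ applies. Since $q-2<\mu/2$, we may take $\theta\ge1/2$, which gives $e^{-\mu/8}\le e^{-\log(1/\eta)}=\eta$. The factor $8$ in \eqref{eq:prelim-hayashi-confidence} exactly matches $\theta=1/2$, and the $2(q-1)$ term guarantees $q-1\le\mu/2$. This is what makes us confident this route is the intended one.

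The main obstacle is getting the Beta–binomial duality and the tail direction exactly right, including the off-by-one between $q-1$ and $q-2$. We will verify the identity by counting: $1-\delta$ is distributed as the $(t+1)$-th order statistic of $N$ uniforms. Monotonicity in $t$ is then immediate because $\mu$ grows with $t$, so the ceiling in \eqref{eq:prelim-hayashi-confidence} is harmless. If the Beta parameters turn out to be shifted, we will absorb the difference into the slack $2(q-1)$ versus $q-1$.
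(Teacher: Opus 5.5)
Your final argument is the same as the paper's proof: you use the Beta--binomial identity $\Prob[\delta>\varepsilon^2]=\Prob[\mathrm{Bin}(t+q-1,\varepsilon^2)\le q-2]$, then note $q-2<\mu/2$ and apply the multiplicative Chernoff bound with $\theta=1/2$ to get $e^{-\mu/8}\le\eta$. Delete the identities you state first, $\Prob[\mathrm{Bin}(N,1-\varepsilon^2)\le t]$ and $\Prob[\mathrm{Bin}(N,\varepsilon^2)\ge q-1]$, because they give the probability of the complementary event $\delta\le\varepsilon^2$; the lower-tail form you then switch to is the correct one.
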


\begin{proof}
By Eq.~\eqref{eq:covariant-additive-confidence-beta-output-prelim}, the infidelity $\delta=1-|\braket{\psi}{u}|^2$ has distribution $\operatorname{Beta}(q-1,t+1)$ for every $\psi$.
The beta--binomial tail identity~\cite[Eq.~(6.6.4)]{abramowitz1964handbook} therefore gives
\begin{align}
    \Prob_\psi[\delta>\varepsilon^2]=\Prob[B\leq q-2],\qquad B\sim\operatorname{Bin}(t+q-1,\varepsilon^2).
    \label{eq:prelim-hayashi-exact-tail}
\end{align}
Under Eq.~\eqref{eq:prelim-hayashi-confidence}, the mean $\mathbb{E}[B]=(t+q-1)\varepsilon^2$ satisfies $\mathbb{E}[B]\geq2(q-1)+8\log(1/\eta)$, so $q-2<\mathbb{E}[B]/2$.
Thus, the Chernoff bound yields
\begin{align}
    \Prob[B\leq q-2]\leq\Prob[B\leq \mathbb{E}[B]/2]\leq e^{-\mathbb{E}[B]/8}\leq\eta,
\end{align}
which concludes the proof.
\end{proof}

\section{Random purification of $G$-invariant states and $G$-covariant channels}
\label{sec:random-purification-dilation}
\subsection{$G$-invariant quantum states and their purifications}
Suppose $G$ is a compact group and $U:G\to\U(\mcH)$ is a unitary representation on $\mcH\simeq\CC^d$.
Then, due to the Peter--Weyl theorem, the representation $U$ decomposes into irreducible representations (irreps) as
\begin{align}
    \label{eq:schur-G}
    \mcH &\simeq\bigoplus_{\lambda\in\irrep{G}}\mcR_\lambda\otimes\mcM_\lambda,\\
    U(g) &\simeq\bigoplus_{\lambda\in\irrep{G}}U_\lambda(g)\otimes\1_{\mcM_\lambda},
    \label{eq:single-copy-representation-decomposition}
\end{align}
where $\irrep{G}$ is the set of labels of the irreducible representations of $G$, $U_\lambda: G \to \U(\mcR_\lambda)$ is an irrep labeled by $\lambda\in\irrep{G}$, and $\mcM_\lambda$ is the multiplicity space of $U_\lambda$.
For each $\lambda$, we fix an orthonormal basis of $\mcR_\lambda$ and regard $U_\lambda(g)$ as a fixed matrix in this basis. All local isotypic decompositions use these same matrices; when $G$ is finite, the $G$-Fourier transforms used below also use these representatives and bases. For $G=\mfS_k$, we use the Young--Yamanouchi convention fixed in Section~\ref{sec:schur-weyl-basis-conventions}.
We call the homomorphism~\eqref{eq:schur-G} the generalized Schur transform for the group $G$, denoted by $U_\mathrm{Sch}^{(G \curvearrowright \mcH)}$.
We abbreviate it as $U_\mathrm{Sch}^{(G)}$ when the Hilbert space $\mcH$ is clear from the context.
We define the dimension of the irrep $U_\lambda$ as $d_\lambda\coloneqq \dim\mcR_\lambda$ and the multiplicity of $U_\lambda$ as $m_\lambda\coloneqq \dim\mcM_\lambda$.
We consider a set of $G$-invariant quantum states defined by
\begin{align}
    \mcS_G(\mcH)\coloneqq \left\{\rho\in\mcL(\mcH) \;\middle | \; \rho\succeq 0,\ \Tr\rho=1,\ U(g)\rho U(g)^\dagger=\rho \quad \forall g\in G \right\}.
\end{align}
Then, due to Schur's lemma, any $G$-invariant quantum state $\rho\in\mcS_G(\mcH)$ can be expressed as
\begin{align}
    \label{eq:invariant-state-block-decomposition}
    \rho = \sum_{\lambda\in\irrep{G}} {\1_{\mcR_\lambda} \over d_\lambda} \otimes \rho_\lambda,
\end{align}
where $\rho_\lambda\in \mcL(\mcM_\lambda)$ satisfies $\rho_\lambda\succeq 0$ and $\sum_\lambda\Tr\rho_\lambda=1$.

For a $G$-invariant quantum state $\rho\in \mcS_G(\mcH)$, we define the canonical purification $\dket{\sqrt{\rho}}$ in the doubled Hilbert space $\mcE\otimes\mcH$ for $\mcE\simeq \overline{\mcH}$ by
\begin{align}
\dket{\1_d}&\coloneqq \sum_{i=1}^d\ket{\bar{i}}_{\mcE} \otimes \ket{i}_{\mcH},\\
\dket{\sqrt{\rho}}&\coloneqq (\1_{\mcE}\otimes\sqrt{\rho})\dket{\1_d},
\label{eq:vectorization-convention}
\end{align}
where $\{\ket{i}\}_{i=1}^d$ is the computational basis of $\mcH$ and $\{\ket{\bar{i}}\}_{i=1}^d$ is the corresponding basis of $\mcE$.
Then, $G$-invariance of $\rho$ is equivalent to
\begin{align}
    (\overline{U}(g)\otimes U(g))\dket{\sqrt{\rho}}=\dket{\sqrt{\rho}}\quad \forall g\in G.
\end{align}
We define the set of $G$-invariant purifications of $\rho$ as
\begin{align}
    \Pur_G(\rho)\coloneqq \left\{\ket{\psi}\in\mcE\otimes\mcH \; \middle | \; \Tr_{\mcE}\ketbra{\psi}=\rho,\ (\overline{U}(g)\otimes U(g))\ket{\psi}=\ket{\psi}\quad \forall g\in G\right\}.
\end{align}
Then, this set is characterized by the following lemma.

\begin{lemma}[$G$-invariant purification]
\label{lem:invariant-purifications}
For every $G$-invariant density operator $\rho \in \mcS_G(\mcH)$, the set of $G$-invariant purifications $\Pur_G(\rho)$ is given by
\begin{align}
\Pur_G(\rho)=\left\{(\overline{V}\otimes\1_{\mcH})\dket{\sqrt{\rho}}\; \middle | \; V\in \U_G(\mcH)\right\},
\label{eq:purification-orbit}
\end{align}
where $\U_G(\mcH)$ is the group of unitaries commuting with $U(G)$, defined by
\begin{align}
    \U_G(\mcH)\coloneqq \left\{V\in \U(\mcH)\; \middle | \; [V, U(g)] = 0 \quad \forall g\in G\right\}.
\end{align}
\end{lemma}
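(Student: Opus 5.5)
We prove the two inclusions separately, working in the vectorization picture in which every vector of $\mcE\otimes\mcH$ is written as $\dket{X}\coloneqq(\1_\mcE\otimes X)\dket{\1_d}$ for a unique $X\in\mcL(\mcH)$. The standard identities we need are
\begin{align}
    (\overline{A}\otimes B)\dket{X}=\dket{BXA^{\dagger}}, \qquad \Tr_\mcE\dketbra{X}=XX^\dagger .
\end{align}
The first follows from the transpose trick $(A^{T}\otimes\1)\dket{\1_d}=(\1\otimes A)\dket{\1_d}$ applied with $A^{T}=\overline{A^{\dagger}}$.

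\emph{Inclusion $\supseteq$.} Let $V\in\U_G(\mcH)$ and set $\ket\psi=(\overline V\otimes\1_\mcH)\dket{\sqrt\rho}=\dket{\sqrt\rho V^\dagger}$. Its reduced state is $\sqrt\rho V^\dagger V\sqrt\rho=\rho$. For invariance, compute $(\overline{U}(g)\otimes U(g))\dket{\sqrt\rho V^\dagger}=\dket{U(g)\sqrt\rho V^\dagger U(g)^\dagger}$. Since $\rho$ commutes with $U(g)$, so does $\sqrt\rho$, being a function of $\rho$. Since $V$ commutes with $U(g)$, so does $V^\dagger$. Hence the right-hand side equals $\dket{\sqrt\rho V^\dagger}$, and $\ket\psi\in\Pur_G(\rho)$.

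\emph{Inclusion $\subseteq$.} Let $\ket\psi=\dket{X}\in\Pur_G(\rho)$. Then $XX^\dagger=\rho$, and invariance gives $U(g)XU(g)^\dagger=X$, i.e.\ $X$ commutes with $U(G)$. By the polar decomposition we can write $X=\sqrt{XX^\dagger}\,W=\sqrt\rho\,W$ with $W$ a partial isometry. The main obstacle is to show that the unitary completing $W$ can be chosen to commute with $U(G)$, since a generic polar unitary need not be $G$-equivariant.

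To handle this, we pass to the block form given by Schur's lemma, as in Eq.~\eqref{eq:invariant-state-block-decomposition}. Every operator commuting with $U(G)$ has the form $\bigoplus_\lambda \1_{\mcR_\lambda}\otimes X_\lambda$ with $X_\lambda\in\mcL(\mcM_\lambda)$. In particular $X\simeq\bigoplus_\lambda\1_{\mcR_\lambda}\otimes X_\lambda$, and the condition $XX^\dagger=\rho$ becomes $X_\lambda X_\lambda^\dagger=\rho_\lambda/d_\lambda$ for each $\lambda$. In each finite-dimensional square block $\mcM_\lambda$ we take a polar decomposition $X_\lambda=\sqrt{\rho_\lambda/d_\lambda}\,W_\lambda$ with $W_\lambda$ unitary on $\mcM_\lambda$; such a unitary always exists for a square matrix. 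Setting $W\coloneqq\bigoplus_\lambda\1_{\mcR_\lambda}\otimes W_\lambda$ gives a unitary in $\U_G(\mcH)$ with $X=\sqrt\rho\,W$.

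Finally take $V\coloneqq W^\dagger\in\U_G(\mcH)$. Then $\ket\psi=\dket{\sqrt\rho\, V^\dagger}=(\overline V\otimes\1_\mcH)\dket{\sqrt\rho}$, which has the required form. The only remaining care is checking the sign and conjugation conventions in the identity $(\overline A\otimes B)\dket{X}=\dket{BXA^\dagger}$ against the definition in Eq.~\eqref{eq:vectorization-convention}.
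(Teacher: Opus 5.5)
Your proof is correct, but it takes a somewhat different route from the paper's.

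\textbf{The paper's route.} It starts from the general fact that every purification in $\mcE\otimes\mcH$ has the form $(\overline{V}\otimes\1_{\mcH})\dket{\sqrt{\rho}}$ for some $V\in\U(\mcH)$. It observes that invariance is equivalent to $(U(g)VU(g)^\dagger-V)\sqrt{\rho}=0$, so $V$ restricted to $\mcS=\supp\rho$ is a unitary $G$-intertwiner onto $\mcT=V(\mcS)$. It then repairs $V$ off the support by gluing in a unitary intertwiner $V_\perp:\mcS^\perp\to\mcT^\perp$. This does not change the purification because the glued unitary $\widetilde V$ satisfies $\widetilde V\sqrt{\rho}=V\sqrt{\rho}$.

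\textbf{Your route.} You vectorize the purification as $\dket{X}$ and note that invariance forces $X$ into the commutant of $U(G)$. You then do the polar decomposition blockwise inside $\bigoplus_\lambda \1_{\mcR_\lambda}\otimes\mcL(\mcM_\lambda)$. Since the blocks are square, the polar factors are automatically unitary and automatically $G$-equivariant.

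\textbf{Comparison.} Your argument is more explicit and self-contained. It builds the equivariant $V$ directly from $X$ and uses only the Schur block form already recorded in Eq.~\eqref{eq:invariant-state-block-decomposition}. The paper's argument is basis-free and shorter, but it leaves implicit why a unitary intertwiner $V_\perp$ exists. That step needs $\mcS^\perp\cong\mcT^\perp$ as $G$-representations, which follows from $\mcS\cong\mcT$ and complete reducibility by cancelling isotypic multiplicities.

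\textbf{Your convention check.} Your remaining concern is resolved. With $\dket{\1_d}=\sum_i\ket{\bar i}\otimes\ket{i}$ one has $(A\otimes\1)\dket{\1_d}=(\1\otimes A^{\top})\dket{\1_d}$. This gives exactly $(\overline{A}\otimes B)\dket{X}=\dket{BXA^\dagger}$ under the paper's convention in Eq.~\eqref{eq:vectorization-convention}.
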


\begin{proof}
    Every purification of $\rho$ has the form $(\overline{V}\otimes\1_{\mcH})\dket{\sqrt{\rho}}$ for some $V\in \U(\mcH)$.
    The $G$-invariance is equivalent to
    \begin{align}
        (U(g)VU(g)^\dagger-V)\sqrt{\rho}=0
    \quad \forall g\in G.
    \end{align}
    Defining $\mcS\coloneqq \supp \rho$ and $\mcT\coloneqq V(\mcS)$, the above condition implies $V\vert_{\mcS}: \mcS \to \mcT$ is a unitary $G$-intertwiner.
    We take a unitary $G$-intertwiner $V_\perp: \mcS^\perp \to \mcT^\perp$ and define $\widetilde{V} \coloneqq V\vert_{\mcS} \oplus V_\perp$.
    Then, $\widetilde{V} \in \U_G(\mcH)$ and $(\overline{\widetilde{V}}\otimes\1_{\mcH})\dket{\sqrt{\rho}} = (\overline{V}\otimes\1_{\mcH})\dket{\sqrt{\rho}}$, which proves Eq.~\eqref{eq:purification-orbit}.
\end{proof}

Due to Schur's lemma, the group $\U_G(\mcH)$ is isomorphic to a product of unitary groups as
\begin{align}
    \U_G(\mcH)\simeq \prod_{\lambda\in\irrep{G}}\U(m_\lambda).
\end{align}
Using this isomorphism, we can define the Haar measure on $\U_G(\mcH)$ as the product of normalized Haar measures on $\U(m_\lambda)$.
We also define the Haar measure on $\Pur_G(\rho)$ as the pushforward of the Haar measure on $\U_G(\mcH)$.
For a nonzero $G$-invariant positive operator $A$, we extend this notation by $\Pur_G(A)\coloneqq\sqrt{\Tr A}\,\Pur_G(A/\Tr A)$ and use the correspondingly rescaled pushforward measure; these purification vectors have norm $\sqrt{\Tr A}$.

\subsection{$\U_G(\mcH) - \Wr_n$ duality}

We define the wreath product $\Wr_n\coloneqq G\wr\mfS_n$ as the semidirect product $G^n\rtimes \mfS_n$, where $\mfS_n$ is the symmetric group on $n$ elements.
It is defined by
\begin{align}
    \Wr_n = \left\{(\bm{g}, \sigma) \; \middle | \; \bm{g} = (g_1,\ldots,g_n)\in G^n, \sigma \in \mfS_n \right\}
\end{align}
with the product
\begin{align}
 (\bm{g},\sigma)(\bm{h},\tau)=\qty((g_i h_{\sigma^{-1}(i)})_{i=1}^n,\sigma\tau).
\end{align}
We define the representation $R:\Wr_n\to \U(\mcH^{\otimes n})$ by
\begin{align}
    R(\bm{g},\sigma) \coloneqq U(\bm{g})\pi_d(\sigma),
\label{eq:wreath-representation}
\end{align}
where $\pi_d(\sigma)$ is the permutation operator defined in Eq.~\eqref{eq:def-permutation-operator}, and $U(\bm{g})$ is defined by
\begin{align}
    U(\bm{g}) \coloneqq U(g_1)\otimes \cdots \otimes U(g_n).
\end{align}
Then, the representation $R$ commutes with $V^{\otimes n}$ for every $V\in \U_G(\mcH)$:
\begin{align}
    \label{eq:commutation-relation}
    [V^{\otimes n}, R(\bm{g}, \sigma)] = 0 \quad \forall V\in \U_G(\mcH), (\bm{g},\sigma)\in \Wr_n.
\end{align}
In fact, we have the following duality between the actions of $\U_G(\mcH)$ and $\Wr_n$ on $\mcH^{\otimes n}$.

\begin{proposition}[$\U_G(\mcH) - \Wr_n$ duality]
\label{prop:double-centralizer}
The commutant of $\{V^{\otimes n} \; | \; V\in \U_G(\mcH)\}$ is given by the span of the representation $R$ of the wreath product $\Wr_n$:
\begin{align}
    \Comm\left(\left\{V^{\otimes n}\; \middle | \; V\in \U_G(\mcH)\right\}\right)=\mathrm{span}_{\CC}\left\{R(\bm{g},\sigma)\; \middle | \; (\bm{g},\sigma)\in\Wr_n\right\}.
\label{eq:generalized-schur-weyl-duality}
\end{align}
\end{proposition}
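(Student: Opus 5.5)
The plan is to reduce the statement to the ordinary Schur--Weyl duality on each multiplicity space, using the decomposition~\eqref{eq:schur-G}. The inclusion ``$\supseteq$'' is already Eq.~\eqref{eq:commutation-relation}, so only ``$\subseteq$'' needs proof. Since both sides are $*$-algebras (the right-hand side is the image of the group algebra $\CC[\Wr_n]$), the double commutant theorem lets us instead prove
\begin{align}
    \Comm\left(\left\{R(\bm g,\sigma)\right\}\right)=\mathcal{A},\qquad \mathcal{A}\coloneqq\mathrm{span}_\CC\left\{V^{\otimes n}\;\middle|\;V\in\U_G(\mcH)\right\}.
\end{align}
Taking commutants of both sides then gives Eq.~\eqref{eq:generalized-schur-weyl-duality}. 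Here we use that $\mathcal{A}$ is a unital $*$-algebra, since $(V^{\otimes n})^\dagger=(V^\dagger)^{\otimes n}$ and products of elements $V^{\otimes n}$ are again of this form.

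\textbf{Step 1: commutant of $U(G)^{\otimes n}$.} First compute the commutant of the subgroup $G^n\subset\Wr_n$ alone. By Eq.~\eqref{eq:schur-G},
\begin{align}
    \mcH^{\otimes n}\simeq\bigoplus_{\bm\lambda}\left(\bigotimes_{i}\mcR_{\lambda_i}\right)\otimes\left(\bigotimes_i\mcM_{\lambda_i}\right),
\end{align}
and $\bigotimes_i\mcR_{\lambda_i}$ are pairwise inequivalent irreps of $G^n$. Schur's lemma then gives
\begin{align}
    \Comm(U(G^n))=\bigoplus_{\bm\lambda}\1\otimes\mcL\left(\bigotimes_i\mcM_{\lambda_i}\right)=\mcL(\mcM)^{\otimes n},\qquad \mcL(\mcM)\coloneqq\bigoplus_\lambda\1_{\mcR_\lambda}\otimes\mcL(\mcM_\lambda).
\end{align}
Note that $\mcL(\mcM)$ is exactly $\Comm(U(G))$.

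\textbf{Step 2: adding the permutations.} The commutant of all $R(\bm g,\sigma)$ is then the set of $\mfS_n$-invariant elements of $\Comm(U(G))^{\otimes n}$, i.e.\ the symmetric subspace $\operatorname{Sym}^n(\mathcal{B})$ with $\mathcal{B}\coloneqq\Comm(U(G))$. The standard polarization argument shows that $\operatorname{Sym}^n(\mathcal{B})$ is spanned by $\{X^{\otimes n}\mid X\in\mathcal{B}\}$: expand $(\sum_j t_jX_j)^{\otimes n}$ and extract coefficients. It therefore remains to show that $X^{\otimes n}\in\mathcal{A}$ for every $X\in\mathcal{B}$.

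\textbf{Step 3: from unitaries to the whole algebra.} This is the main step. The set $\{X\in\mathcal{B}\mid X^{\otimes n}\in\mathcal{A}\}$ contains $\U_G(\mcH)$, which is the unitary group of $\mathcal{B}\simeq\bigoplus_\lambda M_{m_\lambda}(\CC)$. Since $\mathcal{A}$ is a closed linear subspace and $X\mapsto X^{\otimes n}$ is a polynomial map, it suffices to show that $\U_G(\mcH)$ is Zariski dense in the complexification $\mathcal{B}$; this is the usual argument behind Schur--Weyl duality. Concretely, $\U_G(\mcH)\simeq\prod_\lambda\U(m_\lambda)$ is Zariski dense in $\prod_\lambda GL(m_\lambda)$, because a holomorphic polynomial vanishing on $\U(m)$ vanishes on $GL(m)=\U(m)\exp(i\,\mathfrak{u}(m))$ by analytic continuation. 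In turn, $\prod_\lambda GL(m_\lambda)$ is dense in $\mathcal{B}$. Hence every polynomial functional annihilating $\{V^{\otimes n}\}$ annihilates $\{X^{\otimes n}\mid X\in\mathcal{B}\}$, which gives $\operatorname{Sym}^n(\mathcal{B})\subseteq\mathcal{A}$.

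Combining the steps, $\Comm(R(\Wr_n))=\mathcal{A}$, and the double commutant theorem yields the claim. The only subtle point I expect is the density argument in Step 3, in particular handling the block structure, where the $V_\lambda$ are independent across $\lambda$. This is resolved by the product form of the Zariski density.
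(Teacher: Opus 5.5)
Your proof is correct, but it takes a genuinely different route from the paper's.

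The paper works from the $\U_G(\mcH)$ side. It stably sorts the local irrep labels and applies ordinary Schur--Weyl duality to each $\mcM_\lambda^{\otimes n_\lambda}$, which yields $\mcH^{\otimes n}\cong\bigoplus_{\bm\alpha}\mcG_{\bm\alpha}\otimes\mcU_{\bm\alpha}$. It then identifies the $\Wr_n$-action on $\mcG_{\bm\alpha}$ with the induced representation $\operatorname{Ind}_{I_{\bm n}}^{\Wr_n}(\widetilde U_{\bm n}\otimes\widetilde g_{\bm\alpha})$ in an explicit basis. Finally, it imports irreducibility and pairwise inequivalence of these induced representations from the little-group analysis of Ref.~\cite{bruinsma2026toolbox}. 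That result is proved there for finite groups; the compact case is only asserted in a footnote.

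You instead compute $\Comm(R(\Wr_n))$ directly and close with the double commutant theorem. Schur's lemma for $G^n$ gives $\mcB^{\otimes n}$, and permutation invariance cuts this down to symmetric tensors. Polarization together with Zariski density of $\prod_\lambda\U(m_\lambda)$ in $\mcB\cong\bigoplus_\lambda M_{m_\lambda}(\CC)$ then identifies these with $\mathrm{span}\{V^{\otimes n}\}$.

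This is the classical proof of Schur--Weyl duality with the full matrix algebra replaced by $\mcB=\Comm(U(G))$. It is self-contained, needs no representation theory of $\Wr_n$, and holds verbatim for compact $G$. Moreover, the $f_{\bm\alpha}^{(\bm m)}$ are pairwise inequivalent irreps of $\U_G(\mcH)$, which is easy to see. Combined with that fact, your result implies that the $\gamma_{\bm\alpha}$ are irreducible and pairwise inequivalent. So your argument actually proves the compact-group extension that the paper's footnote takes for granted.

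What your route does not provide is the explicit induced-basis matrices $\gamma_{\bm\alpha}(\bm g,\sigma)$ of Eq.~\eqref{eq:wreath-irrep-matrices}. The paper needs these anyway to match the wreath-product QFT basis with the generalized Schur transform, both in the random purification circuit and in Appendix~\ref{appendix:efficient-wreath-schur}. That is why it builds the decomposition by hand.
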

\begin{proof}
Since $\mcH$ is finite dimensional, the representation support $\Lambda_U\coloneqq\left\{\lambda\in\irrep{G}\;\middle|\;m_\lambda>0\right\}$ is finite, even when $G$ is compact.
All irrep labels in this proof range over $\Lambda_U$.
By Schur's lemma, $V\in\U_G(\mcH)$ can be written as
\begin{align}
    V\cong\bigoplus_{\lambda\in\Lambda_U}\1_{\mcR_\lambda}\otimes V_\lambda,
\end{align}
using $V_\lambda\in\U(\mcM_\lambda)$.
By taking the tensor product, we have
\begin{align}
    \label{eq:basis-change-1}
    V^{\otimes n} \cong \bigoplus_{\lambda_1, \ldots, \lambda_n \in \Lambda_U} \1_{\mcR_{\lambda_1}} \otimes \cdots \otimes \1_{\mcR_{\lambda_n}} \otimes V_{\lambda_1} \otimes \cdots \otimes V_{\lambda_n}.
\end{align}
We fix a total order $\mu_1\prec\cdots\prec\mu_{|\Lambda_U|}$ on $\Lambda_U=\left\{\mu_1,\ldots,\mu_{|\Lambda_U|}\right\}$.
For a word $\bm{\lambda}=(\lambda_1,\ldots,\lambda_n)$, define the stable sorting permutation $p_{\bm{\lambda}}\in\mfS_n$ to send each original position to its position after sorting in increasing order, and define the action of $p\in \mathfrak{S}_n$ on $\bm{\lambda}$ by $p\cdot\bm{\lambda}=(\lambda_{p^{-1}(1)},\ldots,\lambda_{p^{-1}(n)})$.
It is uniquely determined by
\begin{align}
    \lambda_{p_{\bm{\lambda}}^{-1}(i)}\preceq\lambda_{p_{\bm{\lambda}}^{-1}(j)}, \qquad
    \lambda_i=\lambda_j\implies p_{\bm{\lambda}}(i)<p_{\bm{\lambda}}(j)\qquad\forall i<j.
\end{align}
The second condition preserves the original order of equal labels.
We define the occupation vector $\bm{n}=(n_\lambda)_{\lambda\in\Lambda_U}$ by $n_\lambda\coloneqq\#\left\{i\in[n]\;\middle|\;\lambda_i=\lambda\right\}$.
For each occupation vector $\bm{n}$, define its canonical sorted tuple by
\begin{align}
    \label{eq:def-sorted-tuple}
    \bm{\lambda}^{\uparrow}(\bm{n})\coloneqq(\underbrace{\mu_1,\ldots,\mu_1}_{n_{\mu_1}},\underbrace{\mu_2,\ldots,\mu_2}_{n_{\mu_2}},\ldots,\underbrace{\mu_{|\Lambda_U|},\ldots,\mu_{|\Lambda_U|}}_{n_{\mu_{|\Lambda_U|}}}).
\end{align}
Write $\lambda_i^{\uparrow}(\bm{n})$ for its $i$th component.
For every word $\bm{\lambda}$ with occupation vector $\bm{n}$, stable sorting gives
\begin{align}
    p_{\bm{\lambda}}\cdot\bm{\lambda}=\bm{\lambda}^{\uparrow}(\bm{n}),\qquad \lambda_i^{\uparrow}(\bm{n})=\lambda_{p_{\bm{\lambda}}^{-1}(i)}\quad(1\leq i\leq n).
\end{align}
Thus the sorted tuple depends only on $\bm{n}$, independently of the original order of the word.
For each occupation vector $\bm{n}$, we define the set of permutations that preserve the order within each equal-label block of $\bm{\lambda}^{\uparrow}(\bm{n})$ by
\begin{align}
    \mathsf{P}_{\bm{n}}\coloneqq\left\{p\in\mfS_n\;\middle|\;\lambda_i^{\uparrow}(\bm{n})=\lambda_j^{\uparrow}(\bm{n})\implies p^{-1}(i)<p^{-1}(j)\quad\forall\,1\leq i<j\leq n\right\}.
\end{align}
Then, the tuple $\bm{\lambda}$ is uniquely determined by the pair $(\bm{n},p_{\bm{\lambda}})$.
We define $\CC[\mathsf{P}_{\bm{n}}]\coloneqq\operatorname{span}_{\CC}\left\{\ket{q}\;\middle|\;q\in\mathsf{P}_{\bm{n}}\right\}$ with its orthonormal basis $\ket{q}$, and $\mcR_{\bm{n}}\coloneqq\bigotimes_{\lambda\in\Lambda_U}\mcR_\lambda^{\otimes n_\lambda}$.
Reordering $\mcR_{\bm{\lambda}} \coloneqq \mcR_{\lambda_1} \otimes \cdots \otimes \mcR_{\lambda_n}$ and $\mcM_{\lambda_1} \otimes \cdots \otimes \mcM_{\lambda_n}$ by $p_{\bm{\lambda}}$ in Eq.~\eqref{eq:basis-change-1}, we have
\begin{align}
    \label{eq:basis-change-2}
    V^{\otimes n} \cong \bigoplus_{\bm{n}} \1_{\CC[\mathsf{P}_{\bm{n}}]} \otimes \1_{\mcR_{\bm{n}}} \otimes \bigotimes_{\lambda\in \Lambda_U} V_\lambda^{\otimes n_\lambda}.
\end{align}
We apply the Schur--Weyl duality to $V_\lambda^{\otimes n_\lambda}$ to obtain the following isotypic decomposition:
\begin{align}
    \mcM_\lambda^{\otimes n_\lambda} \cong \bigoplus_{\alpha^\lambda \vdash_{m_\lambda} n_\lambda} \mcS_{\alpha^\lambda} \otimes \mcU_{\alpha^\lambda}^{(m_\lambda)}, \qquad V_\lambda^{\otimes n_\lambda} \cong \bigoplus_{\alpha^\lambda \vdash_{m_\lambda} n_\lambda} \1_{\mcS_{\alpha^\lambda}} \otimes f_{\alpha^\lambda}^{(m_\lambda)}(V_\lambda),
\end{align}
and for $\bm{\alpha} = (\alpha^\lambda)_{\lambda\in \Lambda_U}$ and $\bm{m} = (m_\lambda)_{\lambda\in \Lambda_U}$, we denote
\begin{align}
    \mcS_{\bm{\alpha}} \coloneqq \bigotimes_{\lambda\in \Lambda_U} \mcS_{\alpha^\lambda}, \qquad \mcU_{\bm{\alpha}} \coloneqq \bigotimes_{\lambda\in \Lambda_U} \mcU_{\alpha^\lambda}^{(m_\lambda)}, \qquad \mcG_{\bm{\alpha}} \coloneqq \CC[\mathsf{P}_{\bm{n}}] \otimes \mcR_{\bm{n}} \otimes \mcS_{\bm{\alpha}}, \qquad f_{\bm{\alpha}}^{(\bm{m})}(V) \coloneqq \bigotimes_{\lambda\in \Lambda_U} f_{\alpha^\lambda}^{(m_\lambda)}(V_\lambda).
\end{align}
Then, we have the following isotypic decomposition of $V^{\otimes n}$ for $V\in \U_G(\mcH)$:
\begin{align}
    \label{eq:generalized-schur-decomposition}
    \mcH^{\otimes n} \cong \bigoplus_{\bm{\alpha}} \mcG_{\bm{\alpha}} \otimes \mcU_{\bm{\alpha}}, \qquad V^{\otimes n} \cong \bigoplus_{\bm{\alpha}} \1_{\mcG_{\bm{\alpha}}} \otimes f_{\bm{\alpha}}^{(\bm{m})}(V),
\end{align}
where the summation is taken over all $\bm{\alpha} = (\alpha^\lambda)_{\lambda\in \Lambda_U}$ and $\bm{n} = (n_\lambda)_{\lambda\in \Lambda_U}$ such that $\alpha^\lambda \vdash_{m_\lambda} n_\lambda$ for each $\lambda\in \Lambda_U$ and $\sum_\lambda n_\lambda = n$.
Since $\bm{n}$ is uniquely determined by $\bm{\alpha}$, we omit the summation over $\bm{n}$.

We then determine the representation $R$ of the wreath product $\Wr_n$ on the multiplicity space $\mcG_{\bm{\alpha}}$.
Since the commutation relation~\eqref{eq:commutation-relation} holds, the representation $R$ is block-diagonalized in the decomposition~\eqref{eq:generalized-schur-decomposition} as
\begin{align}
    \label{eq:generalized-wreath-decomposition}
    R(\bm{g}, \sigma) \cong \bigoplus_{\bm{\alpha}} \gamma_{\bm{\alpha}}(\bm{g}, \sigma) \otimes \1_{\mcU_{\bm{\alpha}}} \quad \forall (\bm{g}, \sigma)\in\Wr_n,
\end{align}
where $\gamma_{\bm{\alpha}}:\Wr_n \to \U(\mcG_{\bm{\alpha}})$ is a representation.
For $\bm{n}$, we define the stabilizer of the sorted tuple and the corresponding subgroup of $\Wr_n$ by
\begin{align}
    H_{\bm{n}} \coloneqq \qty{\eta\in\mfS_n \;\middle|\; \eta\cdot\bm{\lambda}^{\uparrow}(\bm{n}) = \bm{\lambda}^{\uparrow}(\bm{n})}, \qquad I_{\bm{n}} \coloneqq G^n \rtimes H_{\bm{n}}.
\end{align}
Since $\eta\in H_{\bm{n}}$ preserves $\bm{\lambda}^{\uparrow}(\bm{n})$, it decomposes as
\begin{align}
    \eta = \prod_{\lambda\in \Lambda_U} \eta_\lambda,
\end{align}
where $\eta_\lambda$ permutes the $\lambda$-block in Eq.~\eqref{eq:def-sorted-tuple}.
This defines the homomorphism
\begin{align}
    H_{\bm{n}} \cong \prod_{\lambda\in \Lambda_U} \mfS_{n_\lambda}.
\end{align}
We also define the subspace of $\mcH^{\otimes n}$ corresponding to the sorted tuple by
\begin{align}
    \mcK_{\bm{n}} \coloneqq \bigotimes_{i=1}^n \qty(\mcR_{\lambda_i^{\uparrow}(\bm{n})} \otimes \mcM_{\lambda_i^{\uparrow}(\bm{n})}) \subset \mcH^{\otimes n}, \qquad \mcK_{\bm{n}} \cong \mcR_{\bm{n}} \otimes \bigotimes_{\lambda\in\Lambda_U} \mcM_\lambda^{\otimes n_\lambda}.
\end{align}
For $\bm{g}\in G^n$ and $\eta\in H_{\bm{n}}$, define
\begin{align}
    U_{\bm{n}}(\bm{g})\coloneqq \bigotimes_{i=1}^n U_{\lambda_i^{\uparrow}(\bm{n})}(g_i), \qquad g_{\bm{\alpha}}(\eta)\coloneqq \bigotimes_{\lambda\in\Lambda_U} g_{\alpha^\lambda}(\eta_\lambda).
\end{align}
Let $P_{\mcR_{\bm{n}}}(\eta)$ permute the irrep tensor factors according to the same convention as Eq.~\eqref{eq:def-permutation-operator}:
\begin{align}
    P_{\mcR_{\bm{n}}}(\eta)\bigotimes_{i=1}^n\ket{v_i}
    =\bigotimes_{i=1}^n\ket{v_{\eta^{-1}(i)}},
    \qquad \ket{v_i}\in\mcR_{\lambda_i^{\uparrow}(\bm{n})}.
\end{align}
This is an operator on $\mcR_{\bm{n}}$ since $\eta$ preserves each equal-label block.
The subspace $\mcK_{\bm{n}}$ is $I_{\bm{n}}$-invariant, and its restriction decomposes as
\begin{align}
    R(\bm{g},\eta)\vert_{\mcK_{\bm{n}}}
    \cong\bigoplus_{\bm{\alpha}:\,|\alpha^\lambda|=n_\lambda}
    \bigl[U_{\bm{n}}(\bm{g})P_{\mcR_{\bm{n}}}(\eta)\bigr]
    \otimes g_{\bm{\alpha}}(\eta)\otimes\1_{\mcU_{\bm{\alpha}}},
    \label{eq:induced-representation}
\end{align}
where the sum retains the constraints $\alpha^\lambda\vdash_{m_\lambda}n_\lambda$ from Eq.~\eqref{eq:generalized-schur-decomposition}.
For a fixed $\bm{\alpha}$, let $e\in\mathsf{P}_{\bm{n}}$ be the identity permutation and define its sorted-word subspace by
\begin{align}
    \mcL_{\bm{n},\bm{\alpha}}\coloneqq
    \operatorname{span}_{\CC}\left\{\ket{e}\right\}\otimes\mcR_{\bm{n}}\otimes\mcS_{\bm{\alpha}}
    \subset\mcG_{\bm{\alpha}},
    \qquad \mcL_{\bm{n},\bm{\alpha}}\cong\mcR_{\bm{n}}\otimes\mcS_{\bm{\alpha}}.
\end{align}
Thus $\mcK_{\bm{n}}\cong\bigoplus_{\bm{\alpha}:\,|\alpha^\lambda|=n_\lambda}\mcL_{\bm{n},\bm{\alpha}}\otimes\mcU_{\bm{\alpha}}$, and on this fixed $\bm{\alpha}$ block Eq.~\eqref{eq:induced-representation} provides
\begin{align}
    \gamma_{\bm{\alpha}}(\bm{g},\eta)\vert_{\mcL_{\bm{n},\bm{\alpha}}}
    \cong\bigl[U_{\bm{n}}(\bm{g})P_{\mcR_{\bm{n}}}(\eta)\bigr]\otimes g_{\bm{\alpha}}(\eta).
\end{align}
Define the action of $p\in\mfS_n$ on $\bm{g}\in G^n$ by $p\cdot\bm{g}\coloneqq(g_{p^{-1}(1)},\ldots,g_{p^{-1}(n)})$.
The covariance identity
\begin{align}
    P_{\mcR_{\bm{n}}}(\eta)U_{\bm{n}}(\bm{h})P_{\mcR_{\bm{n}}}(\eta)^\dagger
    =U_{\bm{n}}(\eta\cdot\bm{h})
\end{align}
shows that $\widetilde{U}_{\bm{n}}(\bm{g},\eta)\coloneqq U_{\bm{n}}(\bm{g})P_{\mcR_{\bm{n}}}(\eta)$ is a representation of $I_{\bm{n}}$, since
\begin{align}
    \widetilde{U}_{\bm{n}}(\bm{g},\eta)\widetilde{U}_{\bm{n}}(\bm{h},\theta)
    =\widetilde{U}_{\bm{n}}\bigl(\bm{g}(\eta\cdot\bm{h}),\eta\theta\bigr)
\end{align}
holds.
Together with the representation $\widetilde{g}_{\bm{\alpha}}(\bm{g},\eta)\coloneqq g_{\bm{\alpha}}(\eta)$ of $I_{\bm{n}}$, this yields
\begin{align}
    \operatorname{Res}_{I_{\bm{n}}}^{\Wr_n}\gamma_{\bm{\alpha}}\vert_{\mcL_{\bm{n},\bm{\alpha}}}
    \cong\widetilde{U}_{\bm{n}}\otimes\widetilde{g}_{\bm{\alpha}}.
\end{align}
To recover the representation $\gamma_{\bm{\alpha}}$ on the entire space $\mcG_{\bm{\alpha}}$, we induce this representation from $I_{\bm{n}}$ to $\Wr_n$.
We define
\begin{align}
    t_p \coloneqq (\bm{e}, p^{-1}) \in \Wr_n \quad (p\in\mathsf{P}_{\bm{n}}),
\end{align}
where $\bm{e} = (e, \ldots, e) \in G^n$ is the tuple of identity elements in $G$.
Then, $\left\{t_p\right\}_{p\in\mathsf{P}_{\bm{n}}}$ forms a left transversal for $I_{\bm{n}}$ in $\Wr_n$, and the spaces $\gamma_{\bm{\alpha}}(t_p)\mcL_{\bm{n},\bm{\alpha}}$ correspond to the distinct words with occupation vector $\bm{n}$ and are mutually orthogonal.
Thus, we have the following decomposition:
\begin{align}
    \mcG_{\bm{\alpha}} \cong \bigoplus_{p\in \mathsf{P}_{\bm{n}}} \gamma_{\bm{\alpha}}(t_p)\mcL_{\bm{n},\bm{\alpha}}.
\end{align}
For $w = (\bm{g}, \sigma)\in\Wr_n$, we define
\begin{align}
    q\coloneqq p_{\sigma\cdot(p^{-1}\cdot\bm{\lambda}^{\uparrow}(\bm{n}))}, \qquad \eta\coloneqq q\sigma p^{-1}\in H_{\bm{n}}.
\end{align}
Then, for $\bm{\lambda} = p^{-1}\cdot \bm{\lambda}^{\uparrow}(\bm{n})$, we have $q = p_{\sigma\cdot \bm{\lambda}}$.
Thus, we have
\begin{align}
    t_q^{-1} w t_p = (q\cdot \bm{g}, q\sigma p^{-1}) = (q\cdot \bm{g}, \eta) \in I_{\bm{n}}.
\end{align}
Thus, for $\ket{\kappa}\in\mcL_{\bm{n},\bm{\alpha}}$, we have
\begin{align}
    \gamma_{\bm{\alpha}}(w)\bigl[\gamma_{\bm{\alpha}}(t_p)\ket{\kappa}\bigr]
    =\gamma_{\bm{\alpha}}(t_q)\gamma_{\bm{\alpha}}(q\cdot\bm{g},\eta)\ket{\kappa}.
\end{align}
Equivalently, under the identification $\gamma_{\bm{\alpha}}(t_p)(\ket{e}\otimes\ket{v})\leftrightarrow\ket{p}\otimes\ket{v}$, the action on $\ket{v}\in\mcR_{\bm{n}}\otimes\mcS_{\bm{\alpha}}$ is given by
\begin{align}
    \gamma_{\bm{\alpha}}(\bm{g},\sigma)(\ket{p}\otimes\ket{v})
    =\ket{q}\otimes\Bigl(\bigl[U_{\bm{n}}(q\cdot\bm{g})P_{\mcR_{\bm{n}}}(\eta)\bigr]
    \otimes g_{\bm{\alpha}}(\eta)\Bigr)\ket{v}.
    \label{eq:wreath-irrep-matrices}
\end{align}
Therefore, the representation $\gamma_{\bm{\alpha}}$ of $\Wr_n$ on $\mcG_{\bm{\alpha}}$ is equivalent to
\begin{align}
    \gamma_{\bm{\alpha}} \cong \operatorname{Ind}_{I_{\bm{n}}}^{\Wr_n} \qty(\widetilde{U}_{\bm{n}} \otimes \widetilde{g}_{\bm{\alpha}}),
\end{align}
which is shown to be irreducible and pairwise inequivalent for different $\bm{\alpha}$ in Ref.~\cite[Thm.~8.7]{bruinsma2026toolbox}\footnote{Reference~\cite{bruinsma2026toolbox} shows the theorem for finite groups, but the same argument can be extended to compact groups.}.
Thus, from Eqs.~\eqref{eq:generalized-schur-decomposition} and \eqref{eq:generalized-wreath-decomposition}, we obtain Eq.~\eqref{eq:generalized-schur-weyl-duality}.
\end{proof}

We write $d_{\bm{\alpha}}\coloneqq\dim\mcG_{\bm{\alpha}}=\dim\gamma_{\bm{\alpha}}$ for this common irrep dimension.
We call a unitary realizing Eqs.~\eqref{eq:generalized-schur-decomposition} and \eqref{eq:generalized-wreath-decomposition}, with the fixed induced matrices $\gamma_{\bm\alpha}$ of Eq.~\eqref{eq:wreath-irrep-matrices}, a generalized Schur transform and denote it by $U_{\mathrm{Sch}}^{(\Wr_n)}$.

\subsection{Random purification channel of $G$-invariant quantum states}

The polynomial representations $f_{\alpha^\lambda}^{(m_\lambda)}$ extend from unitaries to arbitrary operators.
For a state $\rho=\bigoplus_\lambda(\1_{\mcR_\lambda}/d_\lambda)\otimes\rho_\lambda$, we use the shorthand $f_{\bm{\alpha}}(\rho)\coloneqq\bigotimes_\lambda f_{\alpha^\lambda}^{(m_\lambda)}(\rho_\lambda)$, where the $\rho_\lambda$ are the subnormalized blocks defined above.
Then, similarly to $V^{\otimes n}$ for $V\in \U_G(\mcH)$, the tensor product $\rho^{\otimes n}$ for $\rho\in\mcS_G(\mcH)$ is decomposed as
\begin{align}
    \label{eq:tensor-rho-decomposition}
    \rho^{\otimes n} \cong \bigoplus_{\bm{\alpha}} c_{\bm{\alpha}} \1_{\mcG_{\bm{\alpha}}} \otimes f_{\bm{\alpha}}(\rho),
\end{align}
where $c_{\bm{\alpha}}$ is defined by $c_{\bm{\alpha}}\coloneqq \prod_\lambda d_\lambda^{-n_\lambda}$.
We define the randomized $n$-fold purification of $\rho\in \mcS_G(\mcH)$ as
\begin{align}
    \Omega_{G,n}(\rho)
    &\coloneqq \int_{\Pur_G(\rho)} \dd \ket{\psi} \; \ketbra{\psi}^{\otimes n}\\
    &=\int_{\U_G(\mcH)} \dd V \; (\overline{V}^{\otimes n}\otimes \1_{\mcH^{\otimes n}})(\dketbra{\sqrt{\rho}})^{\otimes n}(\overline{V}^{\otimes n}\otimes \1_{\mcH^{\otimes n}})^\dagger.
\end{align}
The isotypic decomposition of $V^{\otimes n}$ for $V\in \U_G(\mcH)$ in Eq.~\eqref{eq:generalized-schur-decomposition} gives the following decomposition of $\Omega_{G,n}(\rho)$:
\begin{align}
    \Omega_{G,n}(\rho)\cong \bigoplus_{\bm{\alpha}}c_{\bm{\alpha}}\,
\dketbra{\1_{\mcG_{\bm{\alpha}}}}\otimes
\tau_{\overline{\mcU}_{\bm{\alpha}}}\otimes f_{\bm{\alpha}}(\rho),
\label{eq:twirled-purification}
\end{align}
where $\tau_{\overline{\mcU}_{\bm{\alpha}}}$ is the maximally mixed state on $\overline{\mcU}_{\bm{\alpha}}$ defined by $\tau_{\overline{\mcU}_{\bm{\alpha}}}\coloneqq \1_{\overline{\mcU}_{\bm{\alpha}}}/\dim\mcU_{\bm{\alpha}}$.
Similarly to the random purification channel given in Refs.~\cite{girardi2025random2,yoshida2025random}, when $G$ is a finite group, the quantum state $\Omega_{G,n}(\rho)$ can be prepared by using the plus state $\ket{+_{\Wr_n}}$, the controlled representation $\mathrm{ctrl}-R$, and the quantum Fourier transform $\mathrm{QFT}_{\Wr_n}$ over $\Wr_n$ defined by\footnote{This definition of $\mathrm{QFT}_{\Wr_n}$ is slightly different from the one used in Refs.~\cite{girardi2025random2,yoshida2025random}, where $w$ on the right-hand side is replaced by $w^{-1}$.}
\begin{align}
    \ket{+_{\Wr_n}}&\coloneqq {1 \over \sqrt{|\Wr_n|}}\sum_{w\in\Wr_n}\ket{w} =\ket{+_G}^{\otimes n}\otimes\ket{+_{\mfS_n}}\in \CC[\Wr_n],\\
    \mathrm{ctrl}-R&\coloneqq \sum_{w\in\Wr_n} \ketbra{w} \otimes R(w),\\
    \mathrm{QFT}_{\Wr_n}\ket{w}&\coloneqq
    \sum_{\eta\in\irrep{\Wr_n}}\sum_{i,j=1}^{d_\eta}
    \sqrt{{d_\eta \over |\Wr_n|}}\,
    \overline{[\eta(w)]_{ij}}\ket{\eta,i,j} \quad \forall w\in \Wr_n,
    \label{eq:def-qft-wreath}
\end{align}
where $\ket{+_G}$ and $\ket{+_{\mfS_n}}$ are the plus states on $\CC[G]$ and $\CC[\mfS_n]$ defined similarly, respectively, $\irrep{\Wr_n}$ is the set of irreducible representations of $\Wr_n$ and $d_\eta$ is the dimension of $\eta\in\irrep{\Wr_n}$.
Throughout, the wreath-product QFT uses the induced little-group basis of Ref.~\cite[Section~7]{bruinsma2026toolbox}, with the local $G$-irrep bases fixed above and the Young--Yamanouchi bases fixed in Section~\ref{sec:schur-weyl-basis-conventions} for the little-group factors. We take the underlying total order in that reference to be the reverse of $\prec$, so that its decreasingly sorted orbit representatives coincide with our increasingly sorted tuples. Its stable orbit transversal is then $t_p=(\bm e,p^{-1})$, as above. In particular, for every irrep occurring in $\mcH^{\otimes n}$, the matrices $\eta(w)$ in Eq.~\eqref{eq:def-qft-wreath} are exactly $\gamma_{\bm\alpha}(w)$ from Eq.~\eqref{eq:wreath-irrep-matrices}, and the Fourier indices are identified with the corresponding induced-basis indices. The same conventions apply to all representations used in the purification and dilation circuits.

\begin{figure}
\centering
\includegraphics[width=.75\linewidth]{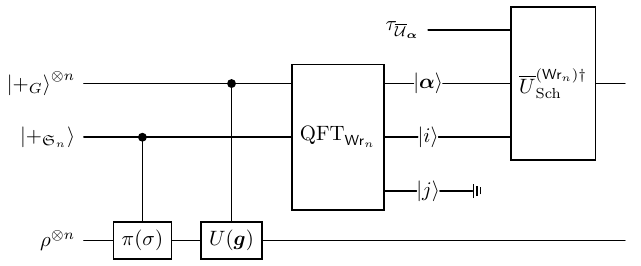}
\caption{Quantum circuit for the random purification channel of $G$-invariant quantum states.
The two controlled gates jointly implement $\mathrm{ctrl}-R$, and after the quantum Fourier transform $\mathrm{QFT}_{\Wr_n}$, the Fourier index $j$ is discarded.
The maximally mixed state $\tau_{\overline{\mcU}_{\bm{\alpha}}}$ is prepared conditionally on the irrep label $\bm{\alpha}$, and $\overline{U}_\mathrm{Sch}^{(\Wr_n)\dagger}$ is applied.}
\label{fig:covariant_random_purification}
\end{figure}

\begin{theorem}[Random purification channel of $G$-invariant quantum states]
\label{thm:random-covariant-purification}
The quantum circuit shown in Fig.~\ref{fig:covariant_random_purification} implements $\Omega_{G,n}(\rho)$ using $n$ copies of a $G$-invariant quantum state $\rho\in\mcS_G(\mcH)$, when $G$ is a finite group.
\end{theorem}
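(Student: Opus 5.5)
The plan is to track the state through the circuit of Fig.~\ref{fig:covariant_random_purification} step by step. I will work throughout in the block decomposition of Proposition~\ref{prop:double-centralizer}, namely Eqs.~\eqref{eq:generalized-schur-decomposition} and \eqref{eq:generalized-wreath-decomposition}, and in the fixed induced bases of Eq.~\eqref{eq:wreath-irrep-matrices}. The output will then be compared with the block form \eqref{eq:twirled-purification} of $\Omega_{G,n}(\rho)$.

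\textbf{Step 1 (target block form).} I first justify Eq.~\eqref{eq:twirled-purification} from the definition of $\Omega_{G,n}$. Writing $(\dketbra{\sqrt\rho})^{\otimes n}=(\1\otimes\sqrt{\rho}^{\otimes n})\dketbra{\1}^{\otimes n}(\1\otimes\sqrt{\rho}^{\otimes n})$, the vector $\dket{\1_d}^{\otimes n}$ on $\overline{\mcH}^{\otimes n}\otimes\mcH^{\otimes n}$ decomposes as $\bigoplus_{\bm\alpha}\dket{\1_{\mcG_{\bm\alpha}}}\otimes\dket{\1_{\mcU_{\bm\alpha}}}$. Applying $\overline{V}^{\otimes n}\cong\bigoplus\1\otimes\overline{f_{\bm\alpha}(V)}$ and Haar-averaging over $\U_G(\mcH)\simeq\prod_\lambda\U(m_\lambda)$ does two things. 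First, Schur's lemma kills the cross terms between different $\bm\alpha$, since the $f_{\bm\alpha}$ are pairwise inequivalent irreps of $\prod_\lambda \U(m_\lambda)$. Second, the average replaces $\dketbra{\1_{\mcU_{\bm\alpha}}}$ by $\tau_{\overline{\mcU}_{\bm\alpha}}\otimes\1_{\mcU_{\bm\alpha}}$. Together with $\sqrt{\rho}^{\otimes n}\cong\bigoplus\sqrt{c_{\bm\alpha}}\1_{\mcG_{\bm\alpha}}\otimes f_{\bm\alpha}(\sqrt\rho)$ from Eq.~\eqref{eq:tensor-rho-decomposition}, this yields Eq.~\eqref{eq:twirled-purification}.

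\textbf{Step 2 (controlled representation and QFT).} The first two gates act on $\ket{+_{\Wr_n}}\otimes\rho^{\otimes n}$ as the Kraus-like operator
\begin{align}
K\coloneqq(\mathrm{QFT}_{\Wr_n}\otimes\1)\,\mathrm{ctrl}\text{-}R\,(\ket{+_{\Wr_n}}\otimes\1)
=\sum_{\eta,i,j}\sqrt{\tfrac{d_\eta}{|\Wr_n|^2}}\sum_{w}\overline{[\eta(w)]_{ij}}\ket{\eta,i,j}\otimes R(w).
\end{align}
Inserting $R(w)\cong\bigoplus_{\bm\alpha}\gamma_{\bm\alpha}(w)\otimes\1_{\mcU_{\bm\alpha}}$, I then apply the Schur orthogonality relations $\frac{1}{|\Wr_n|}\sum_w\overline{[\eta(w)]_{ij}}[\gamma_{\bm\alpha}(w)]_{kl}=\delta_{\eta,\gamma_{\bm\alpha}}\delta_{ik}\delta_{jl}/d_{\bm\alpha}$. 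These require that $\eta(w)$ and $\gamma_{\bm\alpha}(w)$ are literally the same unitary matrices, which is exactly the basis convention fixed after Eq.~\eqref{eq:def-qft-wreath}. The result is
\begin{align}
K\cong\sum_{\bm\alpha}\sum_{i,j=1}^{d_{\bm\alpha}}d_{\bm\alpha}^{-1/2}\ket{\bm\alpha,i,j}\otimes\ketbra{i}{j}_{\mcG_{\bm\alpha}}\otimes\1_{\mcU_{\bm\alpha}},
\end{align}
where irreps of $\Wr_n$ not occurring in $\mcH^{\otimes n}$ drop out.

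\textbf{Step 3 (discard $j$, append and rotate the reference).} Next I compute $\Tr_j[K\rho^{\otimes n}K^\dagger]$ using Eq.~\eqref{eq:tensor-rho-decomposition}. The sum over $j$ of $\ketbra{i}{j}\,c_{\bm\alpha}\1\,\ketbra{j}{i'}$ equals $c_{\bm\alpha}d_{\bm\alpha}\ketbra{i}{i'}$, and the factor $d_{\bm\alpha}$ cancels the prefactor $d_{\bm\alpha}^{-1}$. This gives
\begin{align}
\bigoplus_{\bm\alpha}c_{\bm\alpha}\ketbra{\bm\alpha}\otimes\dketbra{\1_{\mcG_{\bm\alpha}}}\otimes f_{\bm\alpha}(\rho).
\end{align}
Here the $(\bm\alpha,i)$ register plays the role of $\overline{\mcG}_{\bm\alpha}$, and the $\mcH^{\otimes n}$ system is left in its Schur-block form. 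Appending $\tau_{\overline{\mcU}_{\bm\alpha}}$ conditionally on $\bm\alpha$ and applying $\overline{U}_{\mathrm{Sch}}^{(\Wr_n)\dagger}$ then maps $\bigoplus_{\bm\alpha}\overline{\mcG}_{\bm\alpha}\otimes\overline{\mcU}_{\bm\alpha}$ onto $\mcE^{\otimes n}$. The result is precisely the right-hand side of Eq.~\eqref{eq:twirled-purification}, which equals $\Omega_{G,n}(\rho)$ by Step~1.

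\textbf{Main obstacle.} I expect the main difficulty to be the basis bookkeeping rather than the algebra. Two identifications must hold simultaneously. First, the Fourier index $i$ must be paired with the complex-conjugate basis used on the reference side, so that $\sum_i\ket{i}\ket{i}$ really is the vectorized identity $\dket{\1_{\mcG_{\bm\alpha}}}$ on $\overline{\mcG}_{\bm\alpha}\otimes\mcG_{\bm\alpha}$ under $\overline{U}_{\mathrm{Sch}}^{(\Wr_n)}$. Second, the QFT convention of Eq.~\eqref{eq:def-qft-wreath}, with $w$ rather than $w^{-1}$, must produce $\overline{\eta(w)}$ with the correct index order, so that Schur orthogonality contracts $i$ with the output index of $\gamma_{\bm\alpha}$. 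I would check both directly against the fixed induced matrices of Eq.~\eqref{eq:wreath-irrep-matrices}.
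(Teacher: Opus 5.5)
Your proposal is correct and follows essentially the same route as the paper's proof: apply Schur orthogonality after $\mathrm{QFT}_{\Wr_n}$, with the Fourier matrices identified with $\gamma_{\bm\alpha}$; trace out $j$ so that the factor $d_{\bm\alpha}$ cancels; then append $\tau_{\overline{\mcU}_{\bm\alpha}}$. The only differences are that you carry out the orthogonality step at the level of the operator $K$ rather than on the full double sum over $v,w$, and that you additionally derive the block form \eqref{eq:twirled-purification}, which the paper states without proof.
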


\begin{proof}
    The quantum state just after $\mathrm{QFT}_{\Wr_n}$ is given by
    \begin{align}
        &{1\over \abs{\Wr_n}}\sum_{v, w\in \Wr_n} \mathrm{QFT}_{\Wr_n}\ketbra{v}{w}\mathrm{QFT}_{\Wr_n}^\dagger \otimes R(v)\rho^{\otimes n}R(w)^\dagger\nonumber\\
        &\cong {1\over \abs{\Wr_n}}\sum_{v, w\in \Wr_n} \sum_{\bm{\alpha}, \bm{\alpha}'\in \irrep{\Wr_n}} \sum_{i,j=1}^{d_{\bm{\alpha}}} \sum_{i',j'=1}^{d_{\bm{\alpha}'}} {\sqrt{d_{\bm{\alpha}} d_{\bm{\alpha}'}} \over \abs{\Wr_n}} \overline{[\gamma_{\bm{\alpha}}(v)]_{ij}}[\gamma_{\bm{\alpha}'}(w)]_{i'j'} \ketbra{\bm{\alpha},i,j}{\bm{\alpha}',i',j'}\notag\\
        &\hspace{240pt} \otimes \bigoplus_{\bm{\beta}} c_{\bm{\beta}}\gamma_{\bm{\beta}}(v)\gamma_{\bm{\beta}}(w)^\dagger \otimes f_{\bm{\beta}}(\rho)\\
        &= \bigoplus_{\bm{\alpha}} {c_{\bm{\alpha}} \over d_{\bm{\alpha}}} \sum_{i,j, i', j'=1}^{d_{\bm{\alpha}}} \ketbra{\bm{\alpha},i,j}{\bm{\alpha},i',j'} \otimes  \ketbra{i}{j} \cdot \ketbra{j'}{i'}\otimes f_{\bm{\alpha}}(\rho),
    \end{align}
    where we use the definition~\eqref{eq:def-qft-wreath} of $\mathrm{QFT}_{\Wr_n}$ and the decomposition~\eqref{eq:tensor-rho-decomposition} of $\rho^{\otimes n}$ in the first equality, and the Schur orthogonality relation
    \begin{align}
        \sum_{v\in \Wr_n} \overline{[\gamma_{\bm{\alpha}}(v)]_{ji}} \gamma_{\bm{\alpha}'}(v) = \delta_{\bm{\alpha}\bm{\alpha}'} {\abs{\Wr_n} \over d_{\bm{\alpha}}} \ketbra{j}{i}
    \end{align}
    in the last equality.
    Thus, after discarding the Fourier index $j$ and preparing the maximally mixed state $\tau_{\overline{\mcU}_{\bm{\alpha}}}$, we obtain the output state
    \begin{align}
        \bigoplus_{\bm{\alpha}} c_{\bm{\alpha}}\dketbra{\1_{\mcG_{\bm{\alpha}}}} \otimes \tau_{\overline{\mcU}_{\bm{\alpha}}} \otimes f_{\bm{\alpha}}(\rho) \cong \Omega_{G,n}(\rho),
    \end{align}
    which concludes the proof.
\end{proof}

Our construction of the random purification channel of $G$-invariant quantum states is different from that in Ref.~\cite{chen2026gaussian}.
Theorem~\ref{thm:random-covariant-purification} gives a finite-group circuit-level realization of this general construction in terms of the wreath-product Fourier transform and the generalized Schur transform.
Our construction enables the random-dilation construction in the next subsection.

\subsection{Random dilation superchannel of \texorpdfstring{$G$}{G}-covariant quantum channels}

We consider a compact group $G$ and two finite-dimensional representations $U_1: G\to \U(\mcI)$ and $U_2: G\to \U(\mcO)$ on Hilbert spaces $\mcI$ and $\mcO$.
For $i\in\{1,2\}$ and $(\bm g,\sigma)\in\Wr_n$, define $R_i(\bm g,\sigma)\coloneqq U_i(\bm g)\pi_{d_i}(\sigma)$, where $U_i(\bm g)\coloneqq\bigotimes_{a=1}^n U_i(g_a)$, $d_1=\dim\mcI$, $d_2=\dim\mcO$, and $R_1$ and $R_2$ act on $\mcI^{\otimes n}$ and $\mcO^{\otimes n}$, respectively.
We consider the set of $G$-covariant quantum channels given by
\begin{align}
    \mathrm{CovChan}_G(\mcI, \mcO)\coloneqq \left\{\Lambda:\mcL(\mcI)\to\mcL(\mcO) \;\middle|\; \Lambda \text{ is CPTP}, \Lambda\circ\mcU_1(g)=\mcU_2(g)\circ\Lambda\quad \forall g\in G\right\},
\end{align}
where $\mcU_i(g)$ is the unitary channel defined by $\mcU_i(g)(\cdot)\coloneqq U_i(g)(\cdot)U_i(g)^\dagger$ for $i\in\{1,2\}$.
For the Choi operator $J_{\Lambda} \in \mcL(\overline{\mcI} \otimes \mcO)$, the covariance condition is equivalent to
\begin{align}
    [\overline{U}_1(g)\otimes U_2(g)]J_\Lambda [\overline{U}_1(g)\otimes U_2(g)]^\dagger=J_\Lambda \quad \forall g\in G.
\label{eq:choi-covariance}
\end{align}
The canonical purification $\dket{\sqrt{J_\Lambda}}$ of $J_\Lambda$ is defined in $\mcE \otimes \overline{\mcI} \otimes \mcO$ with $\mcE\simeq \mcI\otimes \overline{\mcO}$ satisfying
\begin{align}
[[U_1(g)\otimes \overline{U}_2(g)]_{\mcE}\otimes \overline{U_1}(g) \otimes U_2(g)]
\dket{\sqrt{J_{\Lambda}}}=\dket{\sqrt{J_{\Lambda}}}
\quad \forall g\in G.
\label{eq:choi-canonical-purification-invariance}
\end{align}
Then the operator $W:\mcI \to \mcE\otimes \mcO$ constructed from the canonical purification $\dket{\sqrt{J_{\Lambda}}}$ is defined by
\begin{align}
    W\ket{i} \coloneqq (\1_{\mcE}\otimes\bra{i}_{\overline{\mcI}} \otimes \1_{\mcO})\dket{\sqrt{J_{\Lambda}}},
\end{align}
which satisfies $\Tr_{\mcE}[W(\cdot)W^\dagger]=\Lambda(\cdot)$, i.e., $W$ is a Stinespring dilation of $\Lambda$.
The $G$-invariance~\eqref{eq:choi-canonical-purification-invariance} provides
\begin{align}
[[U_1(g)\otimes \overline{U}_2(g)]_{\mcE}\otimes U_2(g)]W=WU_1(g)
\quad \forall g\in G.
\end{align}
Suppose $U_1$ and $U_1\otimes\overline U_2\otimes U_2$ are decomposed as
\begin{align}
    \label{eq:decomposition-multiplicity-1}
    U_1(g) &\cong \bigoplus_{\lambda\in\irrep{G}} U_\lambda(g)\otimes \1_{m_{\lambda}},\\
    \label{eq:decomposition-multiplicity-2}
    U_1(g)\otimes\overline U_2(g)\otimes U_2(g)
    &\cong \bigoplus_{\lambda\in\irrep{G}} U_\lambda(g)\otimes \1_{M_{\lambda}},
\end{align}
using the multiplicities $m_\lambda$ and $M_\lambda$. Then, the $G$-covariant Stinespring dilation $W$ decomposes as
\begin{align}
    \label{eq:decomposition_W}
    W \cong \bigoplus_{\lambda\in\irrep{G}} \1_{\mcR_\lambda}\otimes W_\lambda,
\end{align}
using isometry operators $W_\lambda:\CC^{m_\lambda}\longrightarrow\CC^{M_\lambda}$.
We define the set of $G$-covariant isometries as
\begin{align}
    \W_G\coloneqq \left\{W \cong \bigoplus_{\lambda\in\irrep{G}} \1_{\mcR_\lambda}\otimes W_\lambda \;\middle|\; W_\lambda\in \W_{M_\lambda, m_\lambda}\right\} \cong \prod_{\lambda\in\irrep{G}} \W_{M_\lambda, m_\lambda},
\end{align}
where $\W_{M_\lambda, m_\lambda}$ is the set of isometries from $\CC^{m_\lambda}$ to $\CC^{M_\lambda}$.
Similarly to the $G$-invariant quantum state, we define the set of $G$-covariant Stinespring dilations of $\Lambda$ as
\begin{align}
\Dil_G(\Lambda)&\coloneqq \left\{W:\mcI\to\mcE\otimes\mcO \;\middle|\; \Tr_{\mcE}[W(\cdot)W^\dagger]=\Lambda(\cdot), W\in \W_G\right\}\\
    &=\left\{W:\mcI\to\mcE\otimes\mcO\;\middle|\; \dket{W} \in \Pur_G(J_\Lambda)\right\}.
\label{eq:dilation-orbit}
\end{align}
Here, when identifying Choi vectors with purifications of $J_\Lambda$, we implicitly swap the registers $\overline{\mcI}$ and $\mcE$, mapping $\overline{\mcI}\otimes\mcE\otimes\mcO$ to $\mcE\otimes\overline{\mcI}\otimes\mcO$.
We equip $\Dil_G(\Lambda)$ with the pushforward of Haar measure on $\Pur_G(J_\Lambda)$.
We define the randomized $n$-fold Stinespring dilation channel of $\Lambda$ by
\begin{align}
    \Omega_{G, n}(\Lambda) \coloneqq \mathbb{E}_{W \sim \Dil_G(\Lambda)} \qty[W^{\otimes n}(\cdot)W^{\dagger \otimes n}].
\label{eq:random-dilation-superchannel}
\end{align}
Similarly to the construction of random dilation superchannels in Refs.~\cite{girardi2025random2,yoshida2025random}, for a finite group $G$, we can implement a random dilation superchannel of $G$-covariant quantum channels as follows.

\begin{figure}[t]
\centering
\includegraphics[width=\linewidth]{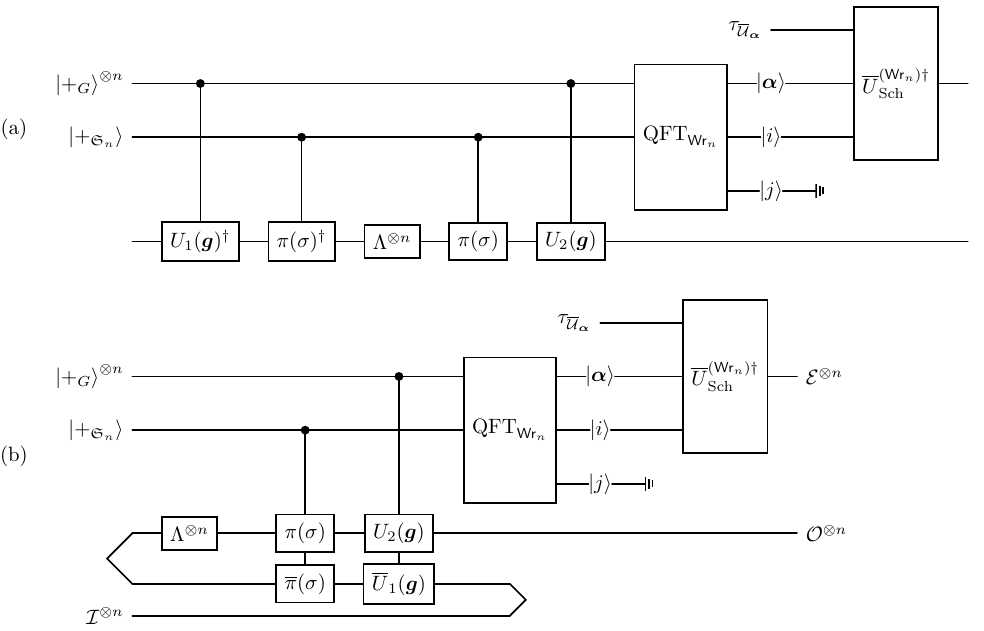}
\caption{(a) Quantum circuit for the random dilation superchannel on $G$-covariant quantum channels.
(b) Its derivation from random purification of $J_\Lambda^{\otimes n}$.
The left cup denotes the unnormalized maximally entangled vector $\dket{\1_{d_{\mcI}^{\,n}}}$.
The right cap represents a postselected closed timelike curve (P-CTC), used here as a formal contraction with the corresponding unnormalized maximally entangled bra.
Moving the reference-side controlled gates through this contraction replaces $\overline{U}_1(\bm g)\overline{\pi}(\sigma)$ by $\pi(\sigma)^\dagger U_1(\bm g)^\dagger$ and yields the deterministic circuit in (a).}
\label{fig:covariant_random_dilation}
\end{figure}

\begin{theorem}[Random covariant dilation circuit]
\label{thm:random-covariant-dilation}
The quantum circuit shown in Fig.~\ref{fig:covariant_random_dilation}~(a) implements $\Omega_{G,n}(\Lambda)$ using $n$ parallel calls to a $G$-covariant quantum channel $\Lambda$, when $G$ is a finite group.
\end{theorem}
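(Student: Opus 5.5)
The plan is to deduce the statement from Theorem~\ref{thm:random-covariant-purification}, applied to the (unnormalized) Choi operator $J_\Lambda$, and then to undo the Choi–Jamiołkowski correspondence with the transpose trick. This follows the derivation sketched in Fig.~\ref{fig:covariant_random_dilation}(b).

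The first step is to set up the relevant representation. The Choi operator $J_\Lambda$ is invariant under $g\mapsto \overline{U}_1(g)\otimes U_2(g)$ on $\overline{\mcI}\otimes\mcO$, by Eq.~\eqref{eq:choi-covariance}. Its purification space is $\mcE\simeq \mcI\otimes\overline{\mcO}$, which carries the conjugate representation. The wreath-product representation on $(\overline{\mcI}\otimes\mcO)^{\otimes n}$ is $R(\bm g,\sigma)=[\overline{U}_1(\bm g)\overline\pi(\sigma)]\otimes[U_2(\bm g)\pi(\sigma)]$. Everything in the proof of Theorem~\ref{thm:random-covariant-purification} is linear in $\rho^{\otimes n}$ and uses only the decomposition~\eqref{eq:tensor-rho-decomposition}. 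Hence the same circuit, fed with $J_\Lambda^{\otimes n}$, outputs $(\Tr J_\Lambda)^n\,\Omega_{G,n}(J_\Lambda/\Tr J_\Lambda)$. By Eq.~\eqref{eq:dilation-orbit} and the rescaling convention for $\Pur_G(A)$, this equals $\mathbb{E}_{W\sim\Dil_G(\Lambda)}\dketbra{W}^{\otimes n}$, after the register swap noted below Eq.~\eqref{eq:dilation-orbit}. In other words, the purification circuit produces exactly the Choi operator of the superchannel output $\Omega_{G,n}(\Lambda)$ in Eq.~\eqref{eq:random-dilation-superchannel}.

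Next, the state $J_\Lambda^{\otimes n}$ is prepared as $(\1\otimes\Lambda^{\otimes n})(\dketbra{\1_{d_\mcI^{\,n}}})$. The controlled gate $\mathrm{ctrl}\text{-}R$ splits into a reference-side part $\overline{U}_1(\bm g)\overline\pi(\sigma)$ on $\overline{\mcI}^{\otimes n}$ and an output-side part $U_2(\bm g)\pi(\sigma)$ on $\mcO^{\otimes n}$. None of the other gates (the plus state, $\mathrm{QFT}_{\Wr_n}$, the maximally mixed state preparation, $\overline U_\mathrm{Sch}^{(\Wr_n)\dagger}$) act on $\overline{\mcI}^{\otimes n}$.

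To go from the Choi operator back to the channel, I contract the $\overline{\mcI}^{\otimes n}$ register of the Choi operator against an input state $\ket{\phi}$ on $\mcI^{\otimes n}$ via $\dbra{\1}$; this is the postselected cap in the figure. Before doing so, I move the reference-side controlled gate through the maximally entangled vector using $(X\otimes\1)\dket{\1}=(\1\otimes X^{T})\dket{\1}$. This turns $\overline{U}_1(\bm g)\overline\pi(\sigma)$ into $(U_1(\bm g)\pi(\sigma))^\dagger$ acting on $\mcI^{\otimes n}$ \emph{before} the $n$ parallel calls to $\Lambda$, controlled on the same group register. After this step, the contraction becomes the identity on the input. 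The resulting circuit is the deterministic circuit of Fig.~\ref{fig:covariant_random_dilation}(a): every operation in it is a physical gate or a partial trace, so it is trace preserving, and its Choi operator equals that of $\Omega_{G,n}(\Lambda)$. Two CP maps with the same Choi operator are equal, which finishes the argument.

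I expect the main obstacle to be the bookkeeping of the transpose step in a controlled setting. The gate $\overline{U}_1(\bm g)\overline\pi(\sigma)$ acts on the reference while being coherently controlled by $\ket{w}$, so I need to check that the identity $\sum_w\ketbra{w}\otimes (X_w\otimes\1)\dket{\1}=\sum_w \ketbra{w}\otimes(\1\otimes X_w^T)\dket{\1}$ holds blockwise. I also need $(\overline{U}\,\overline{\pi})^T=\pi^\dagger U^\dagger$, where $\pi(\sigma)$ is real. Finally, the ordering relative to $\Lambda^{\otimes n}$ must be consistent: the input-side gate has to commute past the cup but not past $\Lambda$. The normalization factors $d_\mcI^{\,n}$ and $(\Tr J_\Lambda)^n=d_\mcI^{\,n}$ also need to match.
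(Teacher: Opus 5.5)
Your proposal is correct and follows the paper's proof. Both arguments apply the random purification circuit for $U=\overline{U}_1\otimes U_2$ to $J_\Lambda^{\otimes n}$, giving $J_{\Omega_{G,n}(\Lambda)}$, and then use the postselected contraction and the transpose trick to turn the reference-side $\mathrm{ctrl}-\overline{R_1}$ into $\mathrm{ctrl}-R_1^\dagger$ acting before the $n$ channel calls. Your version additionally makes explicit the normalization $(\Tr J_\Lambda)^n=d_\mcI^{\,n}$ and the final step that the two maps have equal Choi operators, which the paper leaves implicit.
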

\begin{proof}
We follow the Choi-purification and transpose-trick argument of Ref.~\cite{yoshida2025random}.
Due to the correspondence between $G$-covariant channels and $G$-invariant states, we can construct a random dilation superchannel of $G$-covariant channels from the random purification channel of $G$-invariant states.
In particular, define $\mcP_{G,n}:\mcL(\mcH^{\otimes n})\to\mcL(\mcE^{\otimes n}\otimes\mcH^{\otimes n})$ to be the random purification channel constructed in Thm.~\ref{thm:random-covariant-purification} for $U\coloneqq \overline{U}_1 \otimes U_2$.
Then, we have
\begin{align}
    \mcP_{G,n}(J_\Lambda^{\otimes n})=J_{\Omega_{G,n}(\Lambda)}.
\end{align}
Figure~\ref{fig:covariant_random_dilation}(b) expresses this identity as the following formal circuit construction using the P-CTC~\cite{lloyd2011quantum}:
\begin{enumerate}
\item Apply $\Lambda^{\otimes n}$ to one half of an unnormalized maximally entangled state to obtain $J_\Lambda^{\otimes n}$.
\item Apply $\mcP_{G,n}$ to obtain the Choi operator $J_{\Omega_{G,n}(\Lambda)}$.
\item Retrieve the action of the quantum channel $\Omega_{G,n}(\Lambda)$ from its Choi state by applying the P-CTC.
\end{enumerate}
The controlled operation $\mathrm{ctrl}-\overline{R_1}\coloneqq \sum_{\bm{g}\in G^n, \sigma\in \mfS_n} \ketbra{\bm{g}}\otimes \ketbra{\sigma} \otimes \overline{U}_1(\bm{g}) \overline{\pi}(\sigma)$ can be rewritten as $\mathrm{ctrl}-R_1^\dagger$ by using the transpose trick.
This transformation provides the deterministic circuit in Fig.~\ref{fig:covariant_random_dilation}~(a) from the formal P-CTC construction in Fig.~\ref{fig:covariant_random_dilation}~(b).
\end{proof}

\subsection{Efficient implementation}
\label{sec:random-purification-efficiency}

As shown in Appendix~\ref{appendix:efficient-wreath-schur}, for the case of the permutation group $G=\mfS_k$ with the representations $U_1(\sigma)=U_2(\sigma)=\pi_d(\sigma)$ on $\mcI\simeq\mcO\simeq(\CC^d)^{\otimes k}$, the generalized Schur transform $U_\mathrm{Sch}^{(\Wr_n)}$ and its inverse can be implemented with a quantum circuit of size $O_d\qty(\poly(n,k, \log \delta^{-1}))$, where $\delta$ is the compilation error.
The quantum Fourier transform $\mathrm{QFT}_{\Wr_n}$ over the wreath product $\Wr_n$, in the little-group basis specified after Eq.~\eqref{eq:def-qft-wreath}, can also be implemented with a quantum circuit of size $O\qty(\poly(n,k, \log \delta^{-1}))$~\cite[Thm.~8.1]{bruinsma2026toolbox}. Appendix~\ref{appendix:efficient-wreath-schur} verifies that the Schur circuit produces the same irrep basis.
The plus state $\ket{+_G}^{\otimes n} \otimes \ket{+_{\mfS_n}}$ can also be prepared efficiently by using the following equation:
\begin{align}
    \label{eq:plus-state-wreath}
    \ket{+_G}^{\otimes n} \otimes \ket{+_{\mfS_n}} = \ket{+_{\Wr_n}} = \mathrm{QFT}_{\Wr_n}^\dagger (\ket{\varnothing} \otimes \ket{0} \otimes \ket{0}),
\end{align}
where $\varnothing$ denotes the trivial representation of $\Wr_n$ given by $\gamma_\varnothing: \Wr_n \ni w \mapsto 1 \in \Gamma_{\varnothing} \cong \CC$, and $\ket{0}$ denotes the basis vector of the $1$-dimensional space $\Gamma_\varnothing$.
Since each space $\mcU_{\alpha^\lambda}$ is spanned by the Gelfand--Tsetlin basis, which is labelled by semistandard Young tableaux, the maximally mixed state $\tau_{\overline{\mcU}_{\bm{\alpha}}}$ can be prepared efficiently by uniformly sampling a semistandard Young tableau of shape $\alpha^\lambda$ with entries in $[m_\lambda]$ for each $\lambda$ using Algorithm HC of Ref.~\cite{krattenthaler1999another}, independently for each $\lambda$.
The controlled permutation can be implemented efficiently by using the decomposition of a permutation into transpositions.
Thus, the random purification for permutation-invariant states and random dilation circuit for permutation-covariant channels can be implemented with a quantum circuit of size $O_d\qty(\poly(n,k, \log \delta^{-1}))$.

\subsection{Extension to compact groups}
\label{sec:compact-oracle-implementation}

\begin{figure}
\centering
\includegraphics[width=\linewidth]{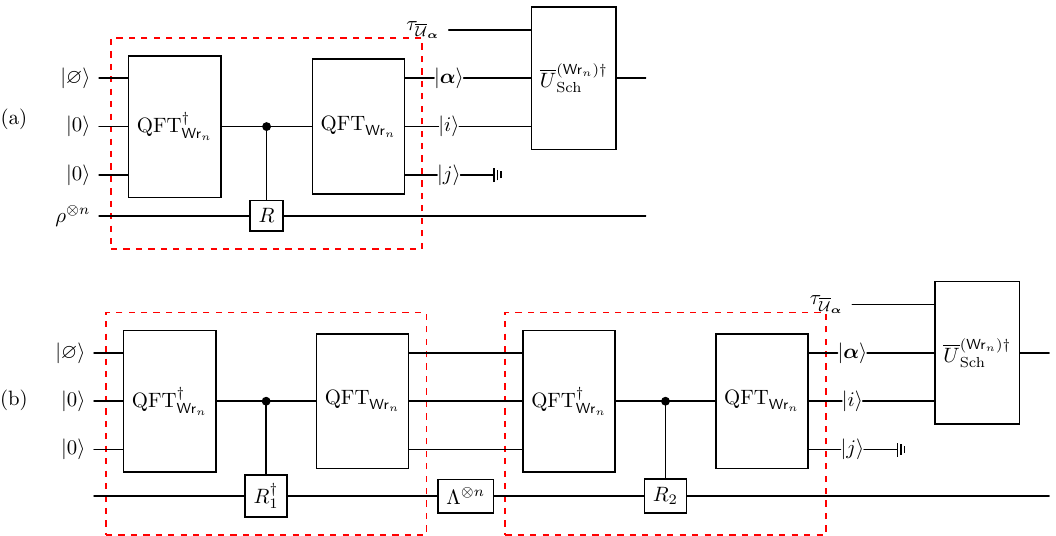}
\caption{(a) The random purification circuit and (b) the random dilation circuit using the generalized phase estimation shown in the red boxes.
By replacing the generalized phase estimation (red boxes) with the Clebsch--Gordan transforms, we can implement these circuits for compact groups using a finite number of qubits.}
\label{fig:expanded-oracle-circuits}
\end{figure}

The construction of random purification and dilation circuits in the preceding subsections is based on the finite-group Fourier transform.
For an infinite compact group, the full Fourier transform acts on an infinite-dimensional space, but only finitely many representation sectors are reached by the circuits considered there, and we can implement the random purification and dilation circuits as below.

For a finite group $G$, the plus state $\ket{+_{\Wr_n}}$ can be written as Eq.~\eqref{eq:plus-state-wreath}.
This provides the random purification circuit using the generalized phase estimation~\cite{harrow2005applications}, which is the controlled-$R$ gate sandwiched between the quantum Fourier transforms, in Fig.~\ref{fig:expanded-oracle-circuits}.
This generalized phase estimation can be implemented using the Clebsch--Gordan transforms associated to the representation $R$~\cite{harrow2005applications}, which only acts on finite-dimensional spaces (see also Refs.~\cite{grinko2025quantum,foxman2026quantum} for its explicit construction).
Thus, the random purification circuit can be implemented by replacing the generalized phase estimation in Fig.~\ref{fig:expanded-oracle-circuits} with the Clebsch--Gordan transforms, and similarly for the random dilation circuit.

\section{\texorpdfstring{Learning covariant quantum states}{Learning covariant quantum states}}
\label{sec:learning-covariant-states}
We develop the state-tomography tools used for covariant channel learning in Sec.~\ref{sec:learning-covariant-channels}.
We consider the task of learning an unknown state using independent copies, allowing collective measurements and measuring error in trace distance.
The compact group $G$, its finite-dimensional unitary representation $U$ on $\mcH$, and a representation basis are known.

\subsection{Covariant quantum states}

A covariant, or $G$-invariant, quantum state satisfies
\begin{align}\label{eq:state-covariance-promise}
\rho U(g)=U(g)\rho \quad\forall g\in G.
\end{align}
A state is supported on invariant vectors if the following stricter promise is satisfied.
\begin{align}
\rho U(g)=U(g)\rho=\rho \quad\forall g\in G. \label{eq:state-invariant-support-promise}
\end{align}
Using the irrep and multiplicity spaces of Eq.~\eqref{eq:single-copy-representation-decomposition}, the standard block parameterization~\cite{corte2023parameterizing,sauvage2024classical} is
\begin{align}
\rho=\bigoplus_\lambda\frac{\1_{\mcR_\lambda}}{d_\lambda}\otimes\rho_\lambda,\qquad \rho_\lambda\succeq0,\qquad\sum_\lambda\Tr\rho_\lambda=1,\qquad D_{\mathrm{st}}\coloneqq\sum_\lambda m_\lambda^2,
\label{eq:state-statistical-dimension}
\end{align}
where $\Tr\rho_\lambda$ is the sector probability, all sums run over $\lambda$ satisfying $m_\lambda>0$, and the family $\mcS_G(\mcH)$ has $D_{\mathrm{st}}-1$ real parameters.
For the stronger promise~\eqref{eq:state-invariant-support-promise}, let $\varnothing$ denote the trivial irrep and set
\begin{align}
\mcH_{\varnothing}\coloneqq\left\{\ket{v}\in\mcH \; \middle| \; U(g)\ket{v}=\ket{v}\ \forall g\in G\right\},\qquad P_{\varnothing}\coloneqq\int_G U(g)\, \dd g,
\label{eq:state-fixed-subspace}
\end{align}
where $\dd g$ is the Haar measure over $G$ and $P_{\varnothing}$ projects onto $\mcH_{\varnothing}\simeq\mcM_{\varnothing}$, with $\mcR_{\varnothing} \otimes \mcM_{\varnothing}= \mcM_{\varnothing}$.
We normalize Haar measure so that $\int_G\dd g=1$, and set $\mcM_{\varnothing}=\{0\}$ and $m_{\varnothing}=0$ if the trivial irrep is absent.
Integrating Eq.~\eqref{eq:state-invariant-support-promise} gives $\rho P_{\varnothing}=P_{\varnothing}\rho=\rho$, and the converse follows from $U(g)P_{\varnothing}=P_{\varnothing}$.
Thus this promise is equivalent to $\rho=P_{\varnothing}\rho P_{\varnothing}$: it allows every state on $\mcM_{\varnothing}$ and is empty when $m_{\varnothing}=0$.

\begin{proposition}[Compression and purification dimension]
\label{prop:state-compression}
The family $\mcS_G(\mcH)$ is equivalent to the block-diagonal states $\sigma=\bigoplus_\lambda\rho_\lambda$ on $\bigoplus_\lambda\mcM_\lambda$ through known quantum channels that preserve trace distance on these families.
The invariant purifications of Sec.~\ref{sec:random-purification-dilation} lie in
\begin{align}
\mcK_G\coloneqq(\overline{\mcH}\otimes\mcH)^G\simeq\bigoplus_\lambda\overline{\mcM_\lambda}\otimes\mcM_\lambda,\qquad \dim\mcK_G=D_{\mathrm{st}}.
\label{eq:state-purification-space}
\end{align}
Here $G$ acts by $\overline U\otimes U$.
\end{proposition}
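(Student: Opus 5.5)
The plan is to build both channels explicitly from the generalized Schur transform $U_\mathrm{Sch}^{(G)}$ of Eq.~\eqref{eq:schur-G}, which is a fixed known unitary, and then to read off the purification space from Schur's lemma.

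For the compression, define the encoding channel $\mcE_{\mathrm{enc}}(\sigma)\coloneqq U_\mathrm{Sch}^{(G)\dagger}\bigl(\bigoplus_\lambda \tfrac{\1_{\mcR_\lambda}}{d_\lambda}\otimes P_\lambda\sigma P_\lambda\bigr)U_\mathrm{Sch}^{(G)}$. Here $P_\lambda$ is the projector onto $\mcM_\lambda$ inside $\bigoplus_\lambda\mcM_\lambda$. The pinching $\sigma\mapsto\sum_\lambda P_\lambda\sigma P_\lambda$ is CPTP, and appending the maximally mixed state on $\mcR_\lambda$ controlled on the sector label is CPTP as well. Conjugating by the isometry $\bigoplus_\lambda\mcR_\lambda\otimes\mcM_\lambda\cong\mcH$ preserves this property. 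The decoding channel is
\[
\mcE_{\mathrm{dec}}(\rho)\coloneqq\bigoplus_\lambda \Tr_{\mcR_\lambda}\bigl[(\Pi_\lambda\otimes\1)\,U_\mathrm{Sch}^{(G)}\rho\, U_\mathrm{Sch}^{(G)\dagger}(\Pi_\lambda\otimes\1)\bigr],
\]
that is, measure the sector label and then trace out the irrep register. By Eq.~\eqref{eq:invariant-state-block-decomposition}, $\mcE_{\mathrm{dec}}\circ\mcE_{\mathrm{enc}}$ is the identity on block-diagonal $\sigma$, and $\mcE_{\mathrm{enc}}\circ\mcE_{\mathrm{dec}}$ is the identity on $\mcS_G(\mcH)$. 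Both maps send each family onto the other. Data processing gives $d_\mathrm{tr}$ non-increasing under each map, so applying it in both directions yields equality of trace distances. One could also check this directly: $\|\bigoplus_\lambda \tfrac{\1}{d_\lambda}\otimes X_\lambda\|_1=\sum_\lambda\|X_\lambda\|_1$.

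For the purification space, $(\overline{\mcH}\otimes\mcH)^G$ is identified with the operators $X\in\mcL(\mcH)$ commuting with $U(G)$ via $X\mapsto(\1\otimes X)\dket{\1_d}$. The reason is that $(\overline U(g)\otimes U(g))(\1\otimes X)\dket{\1_d}=(\1\otimes U(g)XU(g)^\dagger)\dket{\1_d}$, by the transpose trick $(\overline{U}\otimes\1)\dket{\1_d}=(\1\otimes U^\dagger)\dket{\1_d}$. By Schur's lemma the commutant is $\bigoplus_\lambda\1_{\mcR_\lambda}\otimes\mcL(\mcM_\lambda)\cong\bigoplus_\lambda\overline{\mcM_\lambda}\otimes\mcM_\lambda$, which has dimension $\sum_\lambda m_\lambda^2=D_{\mathrm{st}}$. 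Lemma~\ref{lem:invariant-purifications} shows that every element of $\Pur_G(\rho)$ has the form $(\overline V\otimes\1)\dket{\sqrt\rho}=(\1\otimes\sqrt\rho\,V^\dagger)\dket{\1_d}$ with $V\in\U_G(\mcH)$. Since $\sqrt\rho\,V^\dagger$ commutes with $U(G)$, these purifications lie in $\mcK_G$.

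The only delicate step is the bookkeeping that the conjugation and transpose conventions match the register ordering $\mcE\otimes\mcH$ of Eq.~\eqref{eq:vectorization-convention}. The identification with $\overline{\mcM_\lambda}\otimes\mcM_\lambda$ requires $\1_{\mcR_\lambda}$ to vectorize to the invariant vector $\dket{\1_{\mcR_\lambda}}\in\overline{\mcR_\lambda}\otimes\mcR_\lambda$. That vector spans the trivial isotypic component of $\overline{U_\lambda}\otimes U_\lambda$, by Schur's lemma again.
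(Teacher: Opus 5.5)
Your proposal is correct and follows essentially the same route as the paper. You use the same two channels: measure the sector and discard the irrep register, then re-append $\1_{\mcR_\lambda}/d_\lambda$. You get trace-distance preservation either by data processing in both directions or by the direct block computation the paper uses. For the invariant space you use Schur's lemma in commutant form after vectorizing, which is equivalent to the paper's direct computation that $(\overline{\mcR_\lambda}\otimes\mcR_\mu)^G$ vanishes for $\lambda\neq\mu$ and is spanned by $\dket{\1_{d_\lambda}}$ otherwise.

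Two small remarks. Membership of $\Pur_G(\rho)$ in $\mcK_G$ is immediate from its definition, so the detour through Lemma~\ref{lem:invariant-purifications} is unnecessary. Also, your map $X\mapsto(\1\otimes X)\dket{\1_d}$ rescales the $\lambda$-block by $\sqrt{d_\lambda}$. That is harmless for the dimension count, but the paper normalizes each block by $1/\sqrt{d_\lambda}$ to obtain the isometry $J$ used later in the learning protocol.
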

\begin{proof}
Measure the sector and discard the irrep register, as in symmetry-adapted tomography~\cite{sauvage2024classical}; the inverse channel restores $\1_{\mcR_\lambda}/d_\lambda$, and
\begin{align}
d_{\mathrm{tr}}(\rho,\rho')=\frac12\sum_\lambda\|\rho_\lambda-\rho'_\lambda\|_1=d_{\mathrm{tr}}(\sigma,\sigma').
\label{eq:state-compression-isometry}
\end{align}
Schur's lemma gives $(\overline{\mcR_\lambda}\otimes\mcR_\mu)^G=\left\{0\right\}$ for $\lambda\neq\mu$ and the span of $\dket{\1_{d_\lambda}}/\sqrt{d_\lambda}$ otherwise, proving Eq.~\eqref{eq:state-purification-space}.
With $\mcE\simeq\overline{\mcH}$ and the vectorization convention~\eqref{eq:vectorization-convention}, the support isometry $J:\bigoplus_\lambda\overline{\mcM_\lambda}\otimes\mcM_\lambda\to\mcE\otimes\mcH$ and reconstruction map are
\begin{align}
J\!\left(\bigoplus_\lambda\dket{A_\lambda}\right)=\bigoplus_\lambda\frac{\dket{\1_{d_\lambda}}}{\sqrt{d_\lambda}}\otimes\dket{A_\lambda},\qquad \Tr_{\mcE}\!\left[J\ketbra{u}J^\dagger\right]=\bigoplus_\lambda\frac{\1_{\mcR_\lambda}}{d_\lambda}\otimes A_\lambda A_\lambda^\dagger,
\label{eq:state-support-isometry}
\end{align}
where $\ket{u}=\bigoplus_\lambda\dket{A_\lambda}$.
The normalized irrep vectors give $\|J\ket u\|_2^2=\sum_\lambda\|A_\lambda\|_2^2=\|u\|_2^2$, and the range of $J$ is $\mcK_G$.
\end{proof}

\subsection{Wreath-product symmetry}
\label{sec:state-symmetries}

Let $\mcQ=\bigoplus_\lambda\mcM_\lambda$, $Q=\sum_\lambda m_\lambda$, and $d_\mcH=\sum_\lambda d_\lambda m_\lambda$.
Under the identification in Eq.~\eqref{eq:state-purification-space},
\begin{align}
\mcK_G\subset\overline{\mcQ}\otimes\mcQ,\qquad D_{\mathrm{st}}=\sum_\lambda m_\lambda^2\leq\Big(\sum_\lambda m_\lambda\Big)^2=Q^2\leq\Big(\sum_\lambda d_\lambda m_\lambda\Big)^2=d_\mcH^2.
\label{eq:state-dimension-chain}
\end{align}

\paragraph{The $n$-copy input.}
The sample $\rho^{\otimes n}$ commutes with independent group actions and permutations of the copies, hence with the wreath-product representation
\begin{align}
R(\bm g,\sigma)=\left(\bigotimes_{i=1}^nU(g_i)\right)\pi(\sigma),
\label{eq:state-learning-wreath-symmetry}
\end{align}
where $(\bm g,\sigma)\in G\wr\mfS_n$ and $\pi(\sigma)$ permutes the copies.
The generalized Schur decomposition~\eqref{eq:tensor-rho-decomposition} uses this symmetry to separate the irrep registers from the multiplicity registers that contain the unknown state.
Random purification then produces a mixture of tensor powers of invariant purifications; encoding each factor by $J^\dagger$ places the output in $\operatorname{Sym}^n(\CC^{D_{\mathrm{st}}})$, where the Hayashi measurement acts.

\paragraph{Three levels of symmetry.}
An unrestricted learner on $\mcH$ uses the dimension parameter $d_\mcH^2$; compressing each copy but ignoring the block constraints gives $Q^2$; retaining the full promise gives $D_{\mathrm{st}}$.
For each family, Theorem~\ref{thm:state-learning-optimal} gives copy complexity $\Theta((D-1+\log(1/\eta))/\varepsilon^2)$ when $D\geq 2$ holds.
The unrestricted results~\cite{haah2017sample,odonnell2016efficient,scharnhorst2025optimal} follow by taking the trivial group on $\mcH$ or $\mcQ$.
Treating the state as unrestricted allows $Q^2-D_{\mathrm{st}}=\sum_{\lambda\neq\mu}m_\lambda m_\mu$ off-diagonal parameters that the symmetry fixes to zero.
If $L$ irreps occur in $U$, Cauchy--Schwarz gives $Q^2/D_{\mathrm{st}}\leq L$.
These are worst-case comparisons over different state families; they do not give a lower bound for an unrestricted learner at each covariant input.
Twirling an unrestricted estimate cannot increase its trace error on covariant inputs, but leaves the measurement and copy count unchanged, so it alone gives no improved copy bound.

\subsection{Optimal learning}
\label{sec:state-optimal-learning}

The upper bounds below specialize symmetry-compatible random purification~\cite{chen2026gaussian} and the purification-based tomography analysis of Ref.~\cite{pelecanos2025mixed} to the $D_{\mathrm{st}}$-dimensional invariant purification space.
We combine these guarantees with a matching lower bound for the covariant state family.
For a nonempty family $\mcF$, let $N_{\mcF}(\varepsilon,\eta)$ denote the least number of copies needed by a collective protocol to estimate every $\rho\in\mcF$ within trace distance $\varepsilon$, with failure probability at most $\eta$.
We use the squared fidelity
\begin{equation}
F(\rho,\sigma)\coloneqq\left\|\sqrt{\rho}\sqrt{\sigma}\right\|_1^2.
\label{eq:state-fidelity}
\end{equation}

\begin{theorem}[Optimal learning of covariant states]
\label{thm:state-learning-optimal}
For every known finite-dimensional unitary representation $U$ of a compact group $G$ and every $n\geq1$, there is a collective protocol that takes $n$ copies of any $\rho\in\mcS_G(\mcH)$ and returns a quantum state $\widehat\rho\in\mcS_G(\mcH)$ satisfying
\begin{align}
\sup_{\rho\in\mcS_G(\mcH)}\mathbb E_\rho\!\left[d_{\mathrm{tr}}(\rho,\widehat\rho)^2\right]\leq\sup_{\rho\in\mcS_G(\mcH)}\mathbb E_\rho\!\left[1-F(\rho,\widehat\rho)\right]\leq\frac{D_{\mathrm{st}}-1}{n+D_{\mathrm{st}}}.
\label{eq:state-learning-mean}
\end{align}
For $D_{\mathrm{st}}\geq2$, $0<\varepsilon<1$, and $0<\eta<1$, it satisfies $\Prob_\rho[d_{\mathrm{tr}}(\rho,\widehat\rho)>\varepsilon]\leq\eta$ uniformly over $\rho$ whenever
\begin{align}
n\geq\left\lceil\frac{2(D_{\mathrm{st}}-1)+8\log(1/\eta)}{\varepsilon^2}\right\rceil.
\label{eq:state-learning-confidence}
\end{align}
There is a universal $\varepsilon_0>0$ such that, uniformly over the representation, $0<\varepsilon\leq\varepsilon_0$, and $0<\eta\leq1/3$,
\begin{align}
N_{\mcS_G(\mcH)}(\varepsilon,\eta)=\Theta\!\left(\frac{D_{\mathrm{st}}-1+\log(1/\eta)}{\varepsilon^2}\right)\qquad(D_{\mathrm{st}}\geq2).
\label{eq:state-learning-minimax}
\end{align}
The implicit constants are universal, and the lower bound allows arbitrary collective measurements and quantum-state estimates outside $\mcS_G(\mcH)$.
If $D_{\mathrm{st}}=1$, the family contains a single known state and requires no copies.
\end{theorem}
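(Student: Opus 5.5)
The protocol has three stages, applied to the $n$ copies of $\rho\in\mcS_G(\mcH)$. First, run the random purification channel of Thm.~\ref{thm:random-covariant-purification}; for compact $G$, use the version of Sec.~\ref{sec:compact-oracle-implementation}. This outputs $\ketbra{\psi}^{\otimes n}$ for a Haar-random $\ket\psi\in\Pur_G(\rho)$. Every such $\ket\psi$ lies in $\mcK_G$, so by Prop.~\ref{prop:state-compression} we apply $J^\dagger$ to each factor and obtain $\ket{u_\psi}^{\otimes n}$ with $\ket{u_\psi}=J^\dagger\ket\psi\in\CC^{D_{\mathrm{st}}}$ a unit vector. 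Second, apply the Hayashi measurement $H_{n,D_{\mathrm{st}}}$ to get an estimate $\ket u$. Third, output $\widehat\rho\coloneqq\Tr_{\mcE}[J\ketbra{u}J^\dagger]$. By Eq.~\eqref{eq:state-support-isometry} this is $\bigoplus_\lambda(\1_{\mcR_\lambda}/d_\lambda)\otimes A_\lambda A_\lambda^\dagger$, which lies in $\mcS_G(\mcH)$.

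\textbf{Upper bounds.} Since $J\ket{u_\psi}=\ket\psi$ is a purification of $\rho$ and $J\ket u$ is a purification of $\widehat\rho$, Uhlmann's theorem gives $F(\rho,\widehat\rho)\geq|\braket{\psi}{J^\dagger J|u}|^2=|\braket{u_\psi}{u}|^2$. Fuchs--van de Graaf gives $d_{\mathrm{tr}}^2\leq1-F$. Averaging with Eq.~\eqref{eq:prelim-hayashi-mean-fidelity} at $t=n$, $q=D_{\mathrm{st}}$, and conditioning on $\psi$, yields $\mathbb E[1-F]\leq1-\frac{n+1}{n+D_{\mathrm{st}}}=\frac{D_{\mathrm{st}}-1}{n+D_{\mathrm{st}}}$. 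This proves Eq.~\eqref{eq:state-learning-mean}. For the high-probability bound, note that $d_{\mathrm{tr}}(\rho,\widehat\rho)>\varepsilon$ forces $1-|\braket{u_\psi}{u}|^2>\varepsilon^2$. Prop.~\ref{prop:prelim-hayashi-confidence} bounds that event by $\eta$ for every $\psi$, hence also after averaging over $\psi$; this gives Eq.~\eqref{eq:state-learning-confidence} and the $O(\cdot)$ half of Eq.~\eqref{eq:state-learning-minimax}. If $D_{\mathrm{st}}=1$, there is a single $\lambda$ with $m_\lambda=1$, so $\rho=\1_{\mcR_\lambda}/d_\lambda$ is known and no copies are needed.

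\textbf{Lower bound.} This is the main obstacle. By Prop.~\ref{prop:state-compression} the problem reduces, with trace distance preserved, to learning block-diagonal states $\bigoplus_\lambda\rho_\lambda$ on $\mcQ$. A learner for $\mcS_G(\mcH)$ composed with these known channels yields a learner for the compressed family, so the compressed lower bound transfers; estimates outside the family are allowed because the final projection only costs a factor of $2$.

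The $\log(1/\eta)/\varepsilon^2$ term comes from two-point testing. Pick two states in the family at trace distance about $2\varepsilon$ with infidelity $O(\varepsilon^2)$; these exist because $D_{\mathrm{st}}\geq2$, either within a block with $m_\lambda\geq2$ or by shifting weight between two sectors. The Helstrom bound with multiplicativity of fidelity, $F(\rho^{\otimes n},\sigma^{\otimes n})=F^n$, then forces $n\gtrsim\log(1/\eta)/\varepsilon^2$.

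For the $D_{\mathrm{st}}/\varepsilon^2$ term, I would use a packing and Holevo argument. Pick a reference state that is near-maximally mixed on a large block structure, for instance $\rho_\lambda\propto(\Tr\rho_\lambda)\1_{m_\lambda}/m_\lambda$ with weights chosen so that the blocks carrying most of $D_{\mathrm{st}}$ receive constant total mass. Then perturb it within blocks by $\varepsilon$-scaled random Hermitian directions, giving about $e^{cD_{\mathrm{st}}}$ states that are pairwise $\Omega(\varepsilon)$-separated. Each copy carries Holevo information $O(\varepsilon^2)$, so Fano's inequality gives $n\varepsilon^2\gtrsim D_{\mathrm{st}}$. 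The delicate part is keeping the constants universal when many blocks have $m_\lambda=1$. There the packing must also use the sector-probability simplex, which has dimension $L-1$, through a classical distribution-learning bound. I would handle this by splitting into two cases: either $\sum_{m_\lambda\geq2}m_\lambda^2\geq D_{\mathrm{st}}/2$, or the number of sectors $L$ is at least $D_{\mathrm{st}}/2$. In each case I would apply the matching known lower bound (for mixed-state tomography, or for classical distribution learning) adapted to a direct sum. Since the lower bound is only claimed for $\varepsilon\leq\varepsilon_0$, local perturbative estimates suffice throughout.
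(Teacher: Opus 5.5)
Your protocol and upper-bound analysis match the paper's: random purification, encoding by $J^\dagger$, the Hayashi measurement with $t=n$ and $q=D_{\mathrm{st}}$, and $d_{\mathrm{tr}}^2\le 1-F\le 1-|\langle u_\psi|u\rangle|^2$. Your Uhlmann step is the paper's monotonicity of fidelity under $\Tr_{\mcE}$. The architecture of your lower bound also matches. The paper combines a Holevo--Fano packing inside the multiplicity blocks with a classical distribution-learning bound on the sector simplex via $D_{\mathrm{st}}-1\le 9\max\{L-1,R_*\}$, which is the same dichotomy as your two cases. Your Helstrom/fidelity two-point test is a valid substitute for the paper's relative-entropy argument, provided the two states are more than $2\varepsilon$ apart. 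A test with both errors at most $\eta$ forces $F^n\le 4\eta(1-\eta)$, hence $n=\Omega(\log(1/\eta)/\varepsilon^2)$ for $\eta\le 1/3$. No factor $2$ is needed for estimates outside the family: the compression channel is contractive, and Fano uses nearest-neighbour decoding.

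\emph{The gap.} The step that carries the content of the lower bound, the direct-sum packing, is asserted (``about $e^{cD_{\mathrm{st}}}$ pairwise $\Omega(\varepsilon)$-separated states'') rather than constructed. It does not follow from the single-block tomography bound plus the classical one. For example, with $\sqrt K$ blocks of size $K^{1/4}$ one has $D_{\mathrm{st}}=K$ while $m_{\max}^2=L=\sqrt K$.

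The missing idea is how mass must be spread across blocks, and the threat to universal constants is not only the $m_\lambda=1$ blocks you flag. Suppose block $\lambda$ carries $\frac{p_\lambda}{m_\lambda}(\1+\delta H_\lambda)$ with $\Tr H_\lambda=0$ and $\|H_\lambda\|_\infty\le 1$. Then the relative entropy to the reference is at most $\delta^2$. However, block $\lambda$ contributes at most $\delta p_\lambda$ to the separation while supporting $e^{\Theta(m_\lambda^2)}$ codewords. The natural concentration estimate for the separation sum has variance proxy $\sum_\lambda p_\lambda^2/m_\lambda^2$, so it certifies $e^{c\sum_\lambda m_\lambda^2}$ codewords only when $p_\lambda\propto m_\lambda^2$. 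With uniform or $m_\lambda$-proportional weights and block sizes spread over many dyadic scales (e.g.\ $4^{J-j}$ blocks of size $2^j$, $j=1,\dots,J$), you lose a factor of order $J\approx\log_2 m_{\max}$, so the constant is no longer universal.

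The paper fixes this as follows. It takes $s_\lambda=\lfloor m_\lambda/2\rfloor$, weights $p_\lambda=s_\lambda^2/R_*$, and $H_{U,\lambda}=\bigl(\begin{smallmatrix}0&U_\lambda\\ U_\lambda^\dagger&0\end{smallmatrix}\bigr)$ with Haar $U_\lambda$, so that $H_{U,\lambda}^2=\1$ and $\Tr H_{U,\lambda}=0$ hold exactly; random Hermitian directions would need both enforced. It then derives $\Prob\bigl[\sum_\lambda s_\lambda\|U_\lambda-V_\lambda\|_{\mathrm{HS}}^2<R_*\bigr]\le e^{-R_*/16}$ from the moment bound $\mathbb E\,e^{t\operatorname{Re}\Tr W}\le e^{t^2}$ for Haar $W$. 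This gives $\log|\mcC|\ge R_*/128$ and separation at least $\delta/4$. With $\delta=12\varepsilon$, this yields $n=\Omega(R_*/\varepsilon^2)$ with a universal constant. You need this weighting and a concentration bound of this type before Fano closes the argument.
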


\begin{proof}
For $D_{\mathrm{st}}\geq2$, random purification reduces the task to pure-state estimation in dimension $D_{\mathrm{st}}$, followed by reconstruction through $J$ defined in Eq.~\eqref{eq:state-support-isometry}.
\paragraph{Protocol.}
\begin{enumerate}
\item \emph{Random purification and support encoding.}
Apply Theorem~\ref{thm:random-covariant-purification} (Sec.~\ref{sec:compact-oracle-implementation} for compact groups), then encode the output using $J^\dagger$:
\begin{align}
\int_{\Pur_G(\rho)}\left(J^\dagger\ketbra{\Psi}J\right)^{\otimes n}\,d\Psi=\int\ketbra{\psi}^{\otimes n}\,d\nu_\rho(\psi),\qquad\ket{\psi}=J^\dagger\ket{\Psi}\in\CC^{D_{\mathrm{st}}},
\label{eq:state-learning-shared-purifier}
\end{align}
where $\nu_\rho$ is the distribution obtained by mapping a Haar-random invariant purification $\Psi$ to $J^\dagger\ket\Psi$.
\item \emph{Covariant pure-state measurement.}
Apply the Hayashi POVM~\eqref{eq:prelim-hayashi-povm} with $t=n$ and $q=D_{\mathrm{st}}$~\cite{hayashi1998asymptotic,bruss1999optimal,hayashi2005pure}.
\item \emph{Reconstruction of the quantum state.}
For the unit-vector outcome $\ket{u}=\bigoplus_\lambda\dket{A_\lambda}$, return
\begin{align}
\widehat\rho_\lambda\coloneqq A_\lambda A_\lambda^\dagger,\qquad\widehat\rho\coloneqq\bigoplus_\lambda\frac{\1_{\mcR_\lambda}}{d_\lambda}\otimes\widehat\rho_\lambda=\Tr_{\mcE}\!\left[J\ketbra{u}J^\dagger\right].
\label{eq:state-learning-estimator}
\end{align}
\end{enumerate}

\paragraph{Error analysis.}
The Fuchs--van de Graaf inequality combined with contractivity of fidelity under the partial trace in Eq.~\eqref{eq:state-learning-estimator} provides $d_\mathrm{tr}(\rho,\widehat\rho)^2\leq 1-F(\rho, \widehat{\rho}) \leq 1-|\braket{\psi}{u}|^2$.
We show the mean squared error~\eqref{eq:state-learning-mean} and the high-probability guarantee~\eqref{eq:state-learning-confidence} by applying Eq.~\eqref{eq:prelim-hayashi-mean-fidelity} and Prop.~\ref{prop:prelim-hayashi-confidence} with $t=n$ and $q=D_\mathrm{st}$.

\paragraph{Optimality.}
Appendix~\ref{app:state-tomography-lower} extends the Holevo--Fano method~\cite{haah2017sample} to multiplicity blocks and uses classical distribution bounds~\cite{canonne2020learning}.
This gives $\Omega((D_{\mathrm{st}}-1)/\varepsilon^2)$ copies, while binary testing gives $\Omega(\log(1/\eta)/\varepsilon^2)$.
Their maximum is at least half their sum, proving Eq.~\eqref{eq:state-learning-minimax} for a sufficiently small universal $\varepsilon_0>0$.
\end{proof}

\paragraph{Stronger support promises.}
Encoding $\mcH_{\varnothing}$ and applying the theorem to the trivial representation provides $D_{\mathrm{st}}=m_{\varnothing}^2$.
More generally, suppose each block is supported on a known subspace $F_\lambda\subseteq\mcM_\lambda$ of dimension $s_\lambda$, with $\sum_\lambda s_\lambda>0$.
Restrict $U$ to $\mcV=\bigoplus_\lambda\mcR_\lambda\otimes F_\lambda$ and apply the theorem with $D_{\mathrm{st}}$ replaced by $\sum_\lambda s_\lambda^2$.
The lower bound still allows estimates on all of $\mcH$: if $\Pi$ projects onto $\mcV$ and $\omega$ is a fixed state on $\mcV$, the channel $X\mapsto\Pi X\Pi+\Tr[(\1-\Pi)X]\omega$ fixes the family and maps every estimate into $\mcV$ without increasing its error.

\paragraph{Permutation symmetry.}
For site permutations on $k$ qudits, conjugation permutes the factors of $(\mcL(\CC^d))^{\otimes k}$, so the invariant operator space is $\operatorname{Sym}^k(\mcL(\CC^d))$ and
\begin{align}
D_{\mathrm{st}}=\sum_{\lambda\vdash_d k}(m_\lambda^{(d)})^2=\binom{k+d^2-1}{d^2-1} = \Theta_d(k^{d^2-1})\qquad m_{\varnothing}=\dim\operatorname{Sym}^k(\CC^d)=\binom{k+d-1}{d-1} = \Theta_d(k^{d-1}).
\label{eq:state-permutation-dimension}
\end{align}
As $k\to\infty$ at fixed $d\geq2$, Proposition~\ref{prop:symmetric-group-parameters-asymptotics} provides $Q=\sum_{\lambda} m_\lambda^{(d)}=\sum_{\lambda\vdash_d k}m_\lambda^{(d)}=\Theta_d(k^{(d-1)(d+2)/2})$.
Table~\ref{tab:state-permutation-comparison} compares the resulting copy bounds.
\begin{table}[tb]
\centering
\caption{Dimension parameter $D$ in the optimal collective copy complexity $\Theta((D-1+\log(1/\eta))/\varepsilon^2)$ for fixed local dimension $d\geq2$, $0<\eta\leq1/3$, and sufficiently small $\varepsilon$.
We consider the unrestricted case, and the three levels of symmetry in Sec.~\ref{sec:state-symmetries} for $k$ qudits.}
\label{tab:state-permutation-comparison}
\begingroup
\renewcommand{\arraystretch}{1.4}
\begin{tabular*}{\linewidth}{@{\extracolsep{\fill}}llcc@{}}
\toprule
State family used by learner & $D$ & Qudits (local dimension $d$) & Qubits ($d=2$) \\
\midrule
Unrestricted on $\mcH$ & $d_\mcH^2$ & $d^{2k}$ & $4^k$ \\
Unrestricted on $\mcQ$ & $Q^2$ & $\Theta_d\!\left(k^{(d-1)(d+2)}\right)$ & $\lfloor(k+2)^2/4\rfloor^2$ \\
Permutation-invariant & $D_{\mathrm{st}}$ & $\tbinom{k+d^2-1}{d^2-1}=\Theta_d\!\left(k^{d^2-1}\right)$ & $\tbinom{k+3}{3}$ \\
Supported on $\operatorname{Sym}^k(\CC^d)$ & $m_{\varnothing}^2$ & $\tbinom{k+d-1}{d-1}^{2}=\Theta_d\!\left(k^{2d-2}\right)$ & $(k+1)^2$ \\
\bottomrule
\end{tabular*}
\endgroup
\end{table}

\paragraph{Implementation.}
Theorem~\ref{thm:efficient-hayashi-measurement} implements the Hayashi measurement with $t=n$ and $q=D_{\mathrm{st}}$, using the same $n$ input copies.
Suppose purification and encoding prepare the full ensemble within trace distance $\delta_{\mathrm p}$ of the ideal one.
On the ideal ensemble, the theorem couples the normalized circuit output $\widehat u$ to the ideal outcome $u$ so that $\|u-\widehat u\|_2\leq\tau$ except with probability $\zeta$.
Return $\widehat\rho_{\mathrm{impl}}=\Tr_{\mcE}[J\ketbra{\widehat u}J^\dagger]$.
For normalized vectors,
\begin{align}
d_{\mathrm{tr}}(\ketbra{u},\ketbra{\widehat u})=\sqrt{1-|\braket{u}{\widehat u}|^2}\leq\|u-\widehat u\|_2.
\label{eq:state-readout-stability}
\end{align}
Contractivity under reconstruction, the triangle inequality, and the preparation error give
\begin{align}
\Prob[d_{\mathrm{tr}}(\rho,\widehat\rho_{\mathrm{impl}})>\varepsilon+\tau]\leq\eta+\delta_{\mathrm p}+\zeta,
\label{eq:state-learning-finite-implementation}
\end{align}
whenever $n$ satisfies Eq.~\eqref{eq:state-learning-confidence}.
For total targets $0<\varepsilon,\eta<1$, use statistical targets $(\varepsilon/2,\eta/2)$ and choose $\delta_{\mathrm p}=\eta/4$, $\tau=\varepsilon/2$, and $\zeta=\eta/4$.
The resulting copy count and Hayashi gate cost are
\begin{align}
n=\left\lceil\frac{8(D_{\mathrm{st}}-1)+32\log(2/\eta)}{\varepsilon^2}\right\rceil,\qquad G_H=O\!\left((n+D_{\mathrm{st}})\polylog(n,D_{\mathrm{st}},\varepsilon^{-1},\eta^{-1})\right).
\label{eq:state-implementation-resources}
\end{align}

The total gate cost also includes random purification and support encoding, and a small $D_{\mathrm{st}}$ alone does not make these steps efficient for an arbitrary representation.
For permutations at fixed $d$, Sec.~\ref{sec:random-purification-efficiency} supplies random purification, and the Schur transform and its conjugate implement the support encoding by removing the known irrep vector in Eq.~\eqref{eq:state-support-isometry}.
That vector is a uniform superposition of matching standard tableaux, which can be prepared recursively using the branching probabilities $d_{\lambda\setminus c}/d_\lambda$ from the Frobenius formula~\eqref{eq:hook-length}.
Reversing this preparation and indexing the sector and multiplicity coordinates gives the $D_{\mathrm{st}}$-dimensional encoding.
Allocating error $\eta/8$ to purification and $\eta/(8n)$ to each copy's encoding ensures $\delta_{\mathrm p}\leq\eta/4$.
These steps have gate cost polynomial in $k$, $n$, and the logarithms of inverse implementation errors; with the copy count above, the total is $O_d(\poly(k,\varepsilon^{-1},\log\eta^{-1}))$.
The classical estimate is reported as multiplicity blocks, whose size is polynomial in $k$ at fixed $d$; writing the full matrix in the physical basis has a different output cost.

\section{Learning covariant quantum channels}
\label{sec:learning-covariant-channels}

We consider the problem of learning an unknown $G$-covariant quantum channel $\Lambda\in \CovChan_G(\mcI, \mcO)$ for a compact group $G$.
As a concrete example, we consider permutation-covariant channels, which are covariant with respect to the symmetric group $G=\mfS_k$ with representations $U_1(\sigma) = U_2(\sigma) = \pi_d(\sigma)$ on $\mcI\simeq \mcO\simeq (\CC^d)^{\otimes k}$.
We define the set of all permutation-covariant quantum channels as
\begin{align}
    \PCovChan_{k,d}\coloneqq \CovChan_{\mfS_k}((\CC^d)^{\otimes k}, (\CC^d)^{\otimes k}).
\end{align}
Then, an $\mfS_k$-covariant dilation isometry $W\in \Dil_{\mfS_k}(\Lambda)$ satisfies
\begin{align}
    W\pi_d(\sigma)=\pi_d(\sigma)^{\otimes 3} W\quad \forall \sigma\in \mfS_k.
\end{align}
Since $\pi_d(\sigma)^{\otimes 3} \cong \pi_{d^3}(\sigma)$ holds, the multiplicities $m_\lambda$ and $M_\lambda$ in Eqs.~\eqref{eq:decomposition-multiplicity-1} and \eqref{eq:decomposition-multiplicity-2} are given by
\begin{align}
    m_\lambda = m_{\lambda}^{(d)}, \qquad M_\lambda = m_{\lambda}^{(d^3)}.
\end{align}
We consider the problem of learning an unknown permutation-covariant quantum channel $\Lambda$ by using $n$ calls to the channel and evaluating the diamond distance or the Choi trace distance between the true channel $\Lambda$ and the estimated channel $\widehat{\Lambda}$.
We summarize the main results of this section in Tab.~\ref{tab:learning-protocol-comparison}.

\begin{table}
\centering
\caption{The upper and lower bounds on the query complexity of learning covariant channels at constant success probability.
For $G$-covariant channels, we define $D_G$, $D''_G$, and $C'_G$ as detailed in Cor.~\ref{cor:covariant-channel-tomography-high-probability} and Thm.~\ref{thm:permutation-channel-strong-diamond-lower}, and define $\kappa$ such that $D''_G \geq \kappa D_G$.
The diamond distance upper bound hides a polylogarithmic factor in the number of irreps $L$, which is denoted by $\widetilde{O}(\cdot)$.
For the diamond distance at sufficiently small constant error, when $\kappa = \Theta(1)$ holds, the upper bound and lower bound coincide up to a polylogarithmic factor, while the matching lower bound for the Choi trace distance remains open.
For permutation-covariant channels in $\PCovChan_{k,d}$ at fixed local dimension $d$, the upper and lower bounds on the query complexity match up to a constant factor for the Choi trace distance in the stated small-error regime, while the diamond distance lower bound matches the upper bound for sufficiently small constant error.}
\label{tab:learning-protocol-comparison}
\begingroup
\newcommand{\tablerowspace}{\rule[-0.9\baselineskip]{0pt}{3\baselineskip}}
\begin{tabular*}{\linewidth}{@{\extracolsep{\fill}}llcc@{}}
\toprule
Covariance & Figure of merit & Upper bound & Lower bound \\
\midrule
\tablerowspace\multirow{2}{*}[-2mm]{Any compact group $G$} & Diamond distance
& $\displaystyle \widetilde{O}\left(\frac{D_G}{\varepsilon_\diamond^2}\right)$ (Cor.~\ref{cor:covariant-channel-tomography-high-probability})
& $\displaystyle\Omega_\kappa\left(\frac{D_G}{\varepsilon_\diamond^{\kappa/2}}\right)$ (Thm.~\ref{thm:permutation-channel-strong-diamond-lower}) \\[4mm]
\tablerowspace & Choi trace distance
& $\displaystyle O\left(\frac{C'_G}{\varepsilon_\mathrm{tr}^2}\right)$ (Cor.~\ref{cor:covariant-channel-tomography-high-probability})
& Open \\[4mm]
\midrule
\tablerowspace\multirow{2}{*}[-4mm]{Symmetric group} & Diamond distance
& $\displaystyle O_d\left(\frac{k^{d^4-1}}{\varepsilon_\diamond^2}\right)$ (Thm.~\ref{thm:permutation-covariant-channel-sql-scaling})
& $\displaystyle\Omega_d\left(k^{d^4-1}\over \varepsilon_\diamond^{1/4}\right)$ (Cor.~\ref{cor:permutation-channel-strong-diamond-lower}) \\[4mm]
\tablerowspace & Choi trace distance
& $\displaystyle O_d\left(\frac{k^{d^4-d^2}}{\varepsilon_\mathrm{tr}^2}\right)$ (Thm.~\ref{thm:learning-permutation-covchannel-compression})
& $\displaystyle\Omega_d\left(\frac{k^{d^4-d^2}}{\varepsilon_\mathrm{tr}^2}\right)$ (Thm.~\ref{thm:permutation-channel-trace-matching-lower}) \\[4mm]
\bottomrule
\end{tabular*}
\endgroup
\end{table}

\subsection{Learning covariant isometries and channels}

We show the upper bound on the query complexity of learning covariant isometries in diamond distance and Choi trace distance with high probability in Thm.~\ref{thm:covariant-isometry-tomography-high-probability}, which also provides upper bounds for learning covariant channels using the random dilation superchannel (Cor.~\ref{cor:covariant-channel-tomography-high-probability}).
We also show the optimal mean square Choi trace distance error of learning covariant isometries in Appendix~\ref{appendix:learning-covariant-isometries-choi-trace-norm}.

\begin{theorem}[Learning covariant isometries in diamond distance and Choi trace distance with high probability]
\label{thm:covariant-isometry-tomography-high-probability}
Suppose $W: \mcI\to \mcO$ is an unknown isometry with the block structure $W = \bigoplus_{\lambda} \1_{d_\lambda} \otimes W_\lambda$ where $W_\lambda: \CC^{m_\lambda} \to \CC^{M_\lambda}$, and define the parameters
\begin{align}
    \label{eq:parameters-for-diamond-norm}
    L\coloneqq \#\left\{\lambda\;\middle|\;m_\lambda>0\right\}, \quad D_G\coloneqq\sum_{\lambda}m_\lambda M_\lambda,\quad K_G\coloneqq \sum_{\lambda}m_\lambda, \quad
    D'_G\coloneqq D_G + K_G \log(4L),
\end{align}
\begin{align}
    \label{eq:parameters-for-choi-trace-norm}
    d_\mcI\coloneqq \sum_{\lambda} d_\lambda m_\lambda, \quad
    C'_G\coloneqq {1\over d_\mcI}\!\left(\sum_{\lambda}m_\lambda\sqrt{d_\lambda M_\lambda}\right)^2.
\end{align}
For every $0<\varepsilon_\diamond\leq1$ and $0<\eta\leq1/2$, there exists a parallel estimation protocol using
\begin{align}
    n = O\qty({D'_G + K_G\log(\eta^{-1})\over\varepsilon_\diamond^2})
\end{align}
queries of $W$ and outputting an estimate $\widehat{W}$ such that
\begin{align}
    \label{eq:guarantee-diamond-norm}
    \Prob\qty[d_\diamond(W,\widehat{W})>\varepsilon_\diamond]
    \leq\eta.
\end{align}
For every $0<\varepsilon_\mathrm{tr}\leq1$ and $0<\eta\leq1/2$, there exists a parallel protocol using
\begin{align}
    n = O\qty({C'_G + \log(\eta^{-1})\over\varepsilon_\mathrm{tr}^2})
\end{align}
queries of $W$ and outputting an estimate $\widehat{W}$ such that
\begin{align}
    \label{eq:guarantee-choi-trace-norm}
    \Prob\qty[d_\mathrm{tr}(\rho_W,\rho_{\widehat{W}})>\varepsilon_\mathrm{tr}]
    \leq\eta,
\end{align}
where $\rho_W$ and $\rho_{\widehat{W}}$ are the normalized Choi states of $W$ and $\widehat{W}$, respectively.
The gate complexity of the protocol is given by
\begin{align}
    O\qty(n N_\mathrm{Sch}^{(G \curvearrowright \mcI)}\qty({\eta\over 16n}) + n N_\mathrm{Sch}^{(G \curvearrowright \mcO)}\qty({\eta\over 16n})+ \poly\qty(D_G, n, \log \eta^{-1}, \log \varepsilon^{-1})),
\end{align}
where $\varepsilon = \varepsilon_\diamond$ for diamond distance and $\varepsilon = \varepsilon_\mathrm{tr}$ for Choi trace distance, and $N_\mathrm{Sch}^{(G\curvearrowright X)}(\xi)$ denotes the gate count for implementing the Schur transform on $X\in\left\{\mcI,\mcO\right\}$ with operator-norm error at most $\xi$.
\end{theorem}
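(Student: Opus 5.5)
The plan is to reduce covariant-isometry learning to pure-state learning of the blocks $W_\lambda$, using the Choi vector in the reduced multiplicity space. The idea parallels Prop.~\ref{prop:state-compression}, with the Hayashi measurement of Prop.~\ref{prop:prelim-hayashi-confidence} as the core estimator.

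\textbf{Step 1: queries to block Choi states.} Since $W=\bigoplus_\lambda \1_{d_\lambda}\otimes W_\lambda$, Schur transforms on $\mcI$ and $\mcO$ give access to each $W_\lambda$. For the Choi-trace guarantee, feed $W^{\otimes n}$ the $n$-fold tensor power of the maximally entangled input. Schur-transforming input and output yields $n$ copies of the pure state
\begin{align}
\ket{\psi_W}=\bigoplus_\lambda \sqrt{\tfrac{d_\lambda}{d_\mcI}}\,\dket{W_\lambda}\otimes(\text{known irrep vector}).
\end{align}
After removing the known irrep vector this becomes a unit vector in a space of dimension $D_G$. However, the block weights $d_\lambda/d_\mcI$ are known, so a better scheme is to choose the input weights $p_\lambda\propto m_\lambda\sqrt{d_\lambda M_\lambda}$ per block. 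We then estimate $\bigoplus_\lambda \sqrt{p_\lambda/m_\lambda}\,\dket{W_\lambda}$ by the Hayashi measurement. For this estimator the Choi error is a weighted sum $\sum_\lambda (d_\lambda m_\lambda/d_\mcI)\,\|\cdot\|$ of per-block errors. Optimizing the allocation by Cauchy--Schwarz, as in the standard $\sum_\lambda\sqrt{a_\lambda b_\lambda}$ allocation, should yield exactly $C'_G$.

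\textbf{Step 2: per-block estimation and the diamond guarantee.} For the diamond-distance claim I will instead run, for each $\lambda$ with $m_\lambda>0$, a covariant pure-state estimation of $W_\lambda$ with $n_\lambda\propto m_\lambda M_\lambda$ queries. Because the Hayashi outcome has the additive-error structure \eqref{eq:covariant-additive-confidence-beta-output-prelim} with a Haar-random error direction, the argument of Ref.~\cite{mele2025optimal} applies. One inputs $\dket{\1_{m_\lambda}}/\sqrt{m_\lambda}$ to obtain the Choi state, estimates it to infidelity $\varepsilon^2/m_\lambda$ in the $m_\lambda M_\lambda$-dimensional space, and projects to the nearest isometry. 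The random error direction then gives diamond error $O(\varepsilon)$ with probability $1-\eta/L$ without losing an extra factor $m_\lambda$. This costs $O((m_\lambda M_\lambda + m_\lambda\log(L/\eta))/\varepsilon^2)$ queries. Since $d_\diamond(W,\widehat W)\le\max_\lambda d_\diamond(W_\lambda,\widehat W_\lambda)$ for block-diagonal isometries with a common irrep factor, a union bound over the $L$ blocks and summation over $\lambda$ gives $n=O((D_G+K_G\log(4L)+K_G\log\eta^{-1})/\varepsilon_\diamond^2)$, which is the claimed $D'_G$ scaling. Each query is routed by measuring the input sector, which is possible because inputs are prepared in a chosen sector.

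\textbf{Step 3: gate complexity.} Each of the $n$ queries needs one Schur transform on $\mcI$ and one on $\mcO$, each with error $\eta/(16n)$, so the total compilation error is at most $\eta/8$. The Hayashi measurements cost $\poly(D_G,n,\log\eta^{-1},\log\varepsilon^{-1})$ by Thm.~\ref{thm:efficient-hayashi-measurement}, with the remaining failure budget split between statistics and implementation as in \eqref{eq:state-learning-finite-implementation}. Rounding the estimated Choi vector to an isometry via polar decomposition is classical post-processing.

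The main obstacle is Step 2: converting the Beta/Haar structure of the Hayashi outcome into a diamond-norm bound whose overhead is only $\log$ of the failure probability and not the dimension, uniformly over blocks. I will adapt the concentration argument of Ref.~\cite{mele2025optimal}, which bounds the operator norm of the random component. The $m_\lambda\log(L/\eta)$ term is then where the $K_G\log$ contributions arise.
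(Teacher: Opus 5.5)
The gap is in Step 2: the inequality $d_\diamond(W,\widehat W)\le\max_\lambda d_\diamond(W_\lambda,\widehat W_\lambda)$ is false. Each block channel is insensitive to the phase of $\widehat W_\lambda$, but $W(\cdot)W^\dagger$ is sensitive to the relative phases between sectors. For instance, take $G=\ZZ_2$ acting by $\mathrm{diag}(1,-1)$ on $\mcI=\mcO=\CC^2$, so $m_\pm=M_\pm=1$. Then $W=\1$ and $\widehat W=\mathrm{diag}(1,-1)$ have all block distances equal to zero, yet $d_\diamond(W,\widehat W)=1$. Because you run an independent Hayashi measurement per sector, each $\widehat W_\lambda$ carries its own independent, uniformly random phase $e^{\mathrm{i}\phi_\lambda}$. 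Because you also route queries by measuring the input sector, no query ever probes inter-sector coherences. So for $L\geq2$ your output is typically far from $W$ in diamond distance, however large $n$ is.

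The missing idea, which is what the paper does, is to use the same joint probe as in your Step 1 for the diamond case as well. Measure $n$ copies of $\ket{\psi_{W,\bm v}}=\bigoplus_\lambda\sqrt{v_\lambda/m_\lambda}\,\dket{W_\lambda}$ with a single Hayashi measurement. The phase in Eq.~\eqref{eq:covariant-additive-confidence-beta-output-prelim} is then common to all blocks, and the bound $d_\diamond(W,\widehat W)\le\|e^{\mathrm{i}\phi}W-\widehat W\|_\infty=\max_\lambda\|e^{\mathrm{i}\phi}W_\lambda-\widehat W_\lambda\|_\infty$ becomes valid. The paper then combines three ingredients:
\begin{itemize}
\item the exact coupling $\sqrt{\delta/(1-\delta)}\,\ket{r}=\ket{H}/\sqrt{Y}$, with $Y\sim\mathrm{Gamma}(n+1,1)$ and $\ket{H}$ a projected complex Gaussian;
\item the bound $\|G_\lambda\|_\infty\le\sqrt{m_\lambda}+\sqrt{M_\lambda}+t$, with a union bound over the $L$ blocks;
\item the weights $v_\lambda\propto m_\lambda\bigl(\sqrt{m_\lambda}+\sqrt{M_\lambda}+2\sqrt{\log(4L/\eta)}\bigr)^2$.
\end{itemize}
Together these give exactly your count $\sum_\lambda m_\lambda\bigl(m_\lambda+M_\lambda+\log(4L/\eta)\bigr)/\varepsilon_\diamond^2$. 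Your per-sector budgets reappear as shares of one probe rather than as separate experiments.

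A secondary omission is in Step 1. The construction is right, but you give no tail bound producing an additive $\log\eta^{-1}$. Along the natural route, you bound
$d_{\mathrm{tr}}(\rho_W,\rho_{\widehat W})^2\le\frac{4}{Y}\sum_\lambda\frac{d_\lambda m_\lambda}{d_\mcI v_\lambda}\|H_\lambda\|_2^2$
and apply a Hanson--Wright-type bound to this Gaussian quadratic form. The tail coefficient is then $\max_\lambda d_\lambda m_\lambda/(d_\mcI v_\lambda)$. For your weights $v_\lambda\propto m_\lambda\sqrt{d_\lambda M_\lambda}$ this equals $\sqrt{C'_G}\max_\lambda\sqrt{d_\lambda/(d_\mcI M_\lambda)}$. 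It can be of order $\sqrt{C'_G}$, for example when some sector has $d_\lambda\approx d_\mcI$ but small $M_\lambda$, which yields $\sqrt{C'_G}\log\eta^{-1}$ instead of $\log\eta^{-1}$. The paper instead uses $v_\lambda=\frac12\bigl(p_\lambda+d_\lambda m_\lambda/d_\mcI\bigr)$ with $p_\lambda\propto m_\lambda\sqrt{d_\lambda M_\lambda}$. This at most doubles the mean term $C'_G$ while capping the tail coefficient at $2$.
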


\begin{proof}
    We first construct a protocol, which is a symmetry-adapted version of the isometry tomography via its Choi state in Ref.~\cite{mele2025optimal}.
    Then, we evaluate its diamond-distance error and Choi trace distance error to conclude the proof.
    If $D_G=1$, the isometry channel is unique and both claims are trivial; hence below we assume $D_G\geq2$.

\paragraph{Protocol.}
We fix $\bm{v} = (v_\lambda)_{\lambda \in \irrep{G}, m_\lambda>0}$ satisfying $v_\lambda > 0$ and $\sum_\lambda v_\lambda = 1$, which will be optimized for the diamond distance error and Choi trace distance error separately.
\begin{enumerate}
\item
\emph{Preparation of the probe state.}
Using $n$ parallel queries of $W$, prepare $n$ copies of
\begin{align}
    \label{eq:psi_W_v}
\ket{\psi_{W,\bm{v}}}
\coloneqq\bigoplus_\lambda
\sqrt{v_\lambda\over m_\lambda}\dket{W_\lambda}
\in\bigoplus_\lambda\CC^{m_\lambda}\otimes\CC^{M_\lambda}.
\end{align}
\item
\emph{Covariant measurement.}
Apply the covariant pure-state POVM~\cite{hayashi1998asymptotic} to $\ket{\psi_{W,\bm{v}}}^{\otimes n}$ and denote its pure-state estimate by $\ket{\widehat{\psi}}$.
\item
\emph{Construction of the estimate.}
For the projector $\Pi_\lambda$ onto the $\lambda$-th vectorized block, we define
\begin{align}
\dket{\widetilde{W}_\lambda}
\coloneqq\sqrt{m_\lambda \over v_\lambda}\,
\Pi_\lambda\ket{\widehat{\psi}},
\qquad
\widehat{W}_\lambda\coloneqq\pol(\widetilde{W}_\lambda), \qquad
\widehat{W}\coloneqq
\bigoplus_\lambda\1_{d_\lambda}\otimes\widehat{W}_\lambda,
\end{align}
where $\pol(X)$ is the polar isometry of $X$ defined by the polar decomposition
\begin{align}
    X = \pol(X) Z
\end{align}
using an isometry $\pol(X)$ and a positive semidefinite matrix $Z=(X^\dagger X)^{1/2}$.
It is characterized as~\cite{fan1955metric,audenaert2014generalisation}
\begin{align}
    \label{eq:pol-argmin}
    \pol(X) \in \argmin_{Y: Y^\dagger Y = \1} \|X-Y\|_q
\end{align}
for any $q\in [1,\infty]$.
When $X^\dagger X$ is full-rank, $\pol(X)$ is unique and given by $\pol(X) = X(X^\dagger X)^{-1/2}$.
Otherwise, we can take the Moore-Penrose pseudoinverse in the above expression and pad outside the support of $X^\dagger X$ to make $\pol(X)$ an isometry.
The operator $\widehat{W}$ is the estimate of the $G$-covariant isometry $W$.
\end{enumerate}

The covariant-POVM output has the form
\begin{align}
\ket{\widehat{\psi}}
=e^{i\phi}\sqrt{1-\delta}\ket{\psi_{W,\bm{v}}}
+\sqrt\delta\ket{r},
\label{eq:covariant-additive-confidence-beta-output}
\end{align}
where $1-\delta\sim{\rm Beta}(n+1,D_G-1)$, and $\ket{r}$ is an independent Haar-random unit vector in $\ket{\psi_{W,\bm{v}}}^\perp$ [see Eq.~\eqref{eq:covariant-additive-confidence-beta-output-prelim} in Sec.~\ref{subsec:prelim-hayashi}].
Let $\ket G$ be a standard complex Gaussian vector on the probe space, i.e.,
\begin{align}
    \ket{G} = \sum_j {x_j + i y_j \over \sqrt{2}} \ket{j},
\end{align}
using the computational basis $\{\ket{j}\}_j$ and independent standard normal random variables $x_j, y_j$.
For each $\lambda$, we define $G_\lambda\in\CC^{M_\lambda\times m_\lambda}$ by $\dket{G_\lambda}\coloneqq\Pi_\lambda\ket{G}$.
Then, entries in $G_\lambda$ are independent standard complex Gaussian random variables.
Then, $\abs{\braket{\psi_{W,\bm{v}}}{G}}^2$ obeys the same distribution as ${x^2+y^2 \over 2}$ for $x, y\sim \mcN(0,1)$, i.e., $\abs{\braket{\psi_{W,\bm{v}}}{G}}^2 \sim \mathrm{Exp}(1)$.
We define $\ket{H} \coloneqq (\1-\ketbra{\psi_{W,\bm{v}}})\ket{G}$, and independently take $Y\sim{\rm Gamma}(n+1,1)$.
Since $\|H\|_2^2\sim{\rm Gamma}(D_G-1,1)$ and its direction is independent of its norm, Eq.~\eqref{eq:covariant-additive-confidence-beta-output} admits the exact coupling
\begin{align}
\ket{r}={\ket{H}\over\|H\|_2},
\qquad
{\delta\over1-\delta}={\|H\|_2^2\over Y},
\end{align}
and thus we have
\begin{align}
    \label{eq:delta-r-h-y}
\sqrt{\delta\over1-\delta}\ket{r}={\ket{H}\over\sqrt{Y}}.
\end{align}
With $\dket{R_\lambda}\coloneqq\Pi_\lambda\ket{r}$, the operator $\widetilde{W}_\lambda$ is given by $\widetilde{W}_\lambda=e^{i\phi}\sqrt{1-\delta}\,W_\lambda+\sqrt\delta\sqrt{m_\lambda/v_\lambda}\,R_\lambda$.
Since $\pol(aX)=\pol(X)$ for every $a>0$, we have
\begin{align}
\widehat{W}_\lambda
=\pol\left(e^{i\phi}W_\lambda
+\sqrt{\delta\over 1-\delta} \sqrt{m_\lambda \over v_\lambda} R_\lambda\right).
\end{align}
Since the polar isometry is characterized by Eq.~\eqref{eq:pol-argmin}, we have
\begin{align}
\|e^{i\phi}W_\lambda-\widehat{W}_\lambda\|_q
\leq 2\sqrt{\delta\over1-\delta}\sqrt{m_\lambda \over v_\lambda}\|R_\lambda\|_q
={2\over\sqrt{Y}}\sqrt{m_\lambda\over v_\lambda}\,
\|H_\lambda\|_q,
\label{eq:covariant-additive-confidence-polar-bound}
\end{align}
where $H_\lambda$ is defined by $\dket{H_\lambda}\coloneqq \Pi_\lambda\ket{H}$.

\paragraph{Error analysis in diamond distance.}
We use the following inequality\footnote{Reference~\cite{haah2023query} proves this inequality for unitaries, but its proof extends directly to isometries.}~\cite[Prop.~1.6]{haah2023query}
\begin{align}
    d_\diamond(W, \widehat{W}) \leq \|e^{i\phi} W - \widehat{W}\|_\infty
\end{align}
to bound the diamond distance by the operator norm.
Using Eq.~\eqref{eq:covariant-additive-confidence-polar-bound} with $q=\infty$, we have
\begin{align}
    \label{eq:diamond-norm-bound}
    d_\diamond(W, \widehat{W})\leq {2\over \sqrt{Y}}\max_\lambda\sqrt{m_\lambda\over v_\lambda}\,\|H_\lambda\|_\infty.
\end{align}
By definition of $H_\lambda$, we have
\begin{align}
H_\lambda
=G_\lambda-\braket{\psi_{W,\bm{v}}}{G} \sqrt{v_\lambda\over m_\lambda}W_\lambda.
\end{align}
The Gaussian concentration inequality~\cite[Thm.~II.13]{davidson2001local} (see also Ref.~\cite[Prop.~6.33]{aubrun2017alice}) yields
\begin{align}
    \Prob\qty[\|G_\lambda\|_\infty \geq \sqrt{m_\lambda}+\sqrt{M_\lambda} + t] \leq e^{-t^2}
\end{align}
for $t\geq 0$.
Since $|\braket{\psi_{W,\bm{v}}}{G}|^2\sim{\rm Exp}(1)$, we have $\Prob[|\braket{\psi_{W,\bm{v}}}{G}|\geq t]\leq e^{-t^2}$ for $t\geq0$.
The triangle inequality provides
\begin{align}
    \|H_\lambda\|_\infty
    &\leq \|G_\lambda\|_\infty + |\braket{\psi_{W,\bm{v}}}{G}| \sqrt{v_\lambda\over m_\lambda}\\
    &\leq \|G_\lambda\|_\infty + |\braket{\psi_{W,\bm{v}}}{G}|.
\end{align}
By using the union bound, we have
\begin{align}
    \Prob\qty[\|H_\lambda\|_\infty \geq \sqrt{m_\lambda}+\sqrt{M_\lambda} + 2t]\leq 2e^{-t^2}.
\end{align}
On the other hand, for $Y\sim \mathrm{Gamma}(n+1, 1)$, we have
\begin{align}
    \label{eq:Y-tail-bound}
    \Prob\qty[Y<{n+1\over 2}]\leq \exp\left[-\left(\log2-{1\over2}\right)(n+1)\right],
\end{align}
where we use the Chernoff bound
\begin{align}
    \Prob\{X\leq x\}
    \leq e^{\theta x}\mathbb E[e^{-\theta X}]
    =e^{\theta x}(1+\theta)^{-q},
    \qquad X\sim\operatorname{Gamma}(q,1),\quad\theta>0,
\end{align}
with $q = n+1$, $x={n+1\over 2}$, and $\theta=1$.
Using the union bound with Eq.~\eqref{eq:diamond-norm-bound}, we have
\begin{align}
    &\Prob\qty[d_\diamond(W, \widehat{W}) \geq \sqrt{8\over n+1} \max_\lambda \sqrt{m_\lambda \over v_\lambda}(\sqrt{m_\lambda}+\sqrt{M_\lambda}+2t)]\notag\\
    &\leq \Prob\qty[Y<{n+1\over 2}] + \sum_\lambda \Prob\qty[\|H_\lambda\|_\infty \geq \sqrt{m_\lambda}+\sqrt{M_\lambda} + 2t]\\
    &\leq \exp\left[-\left(\log2-{1\over2}\right)(n+1)\right] + 2L e^{-t^2}.
\end{align}
By choosing $t = \sqrt{\log(4L/\eta)}$ and $n+1\geq \log(2/\eta)/(\log2-1/2)$ so that the first and second terms are at most $\eta/2$, we have
\begin{align}
    \Prob\qty[d_\diamond(W, \widehat{W}) \geq \sqrt{8\over n+1} \max_\lambda \sqrt{m_\lambda \over v_\lambda}(\sqrt{m_\lambda}+\sqrt{M_\lambda}+2\sqrt{\log(4L/\eta)})] \leq \eta.
\end{align}
We choose $v_\lambda$ to minimize $\max_\lambda \sqrt{m_\lambda \over v_\lambda}(\sqrt{m_\lambda}+\sqrt{M_\lambda}+2\sqrt{\log(4L/\eta)})$, given by
\begin{align}
    v_\lambda \coloneqq {m_\lambda (\sqrt{m_\lambda}+\sqrt{M_\lambda}+2\sqrt{\log(4L/\eta)})^2 \over \sum_\lambda m_\lambda (\sqrt{m_\lambda}+\sqrt{M_\lambda}+2\sqrt{\log(4L/\eta)})^2},
\end{align}
and then we have
\begin{align}
    \max_\lambda \sqrt{m_\lambda \over v_\lambda}(\sqrt{m_\lambda}+\sqrt{M_\lambda}+2\sqrt{\log(4L/\eta)})
    &= \sqrt{\sum_\lambda m_\lambda (\sqrt{m_\lambda}+\sqrt{M_\lambda}+2\sqrt{\log(4L/\eta)})^2}\\
    &\leq \sqrt{3 \sum_\lambda m_\lambda (m_\lambda+M_\lambda+4\log(4L/\eta))}.
\end{align}
Thus, we have
\begin{align}
    \Prob\qty[d_\diamond(W, \widehat{W}) \geq \varepsilon_\diamond] \leq \eta,
\end{align}
with
\begin{align}
    n
    &= O\qty(\sum_{\lambda} m_\lambda (m_\lambda+M_\lambda+\log(4L/\eta))\over \varepsilon_\diamond^2)\\
    &=O\qty(D'_G + K_G \log(\eta^{-1}) \over \varepsilon_\diamond^2),
\end{align}
where we use $m_\lambda\leq M_\lambda$ in the last line.

\paragraph{Error analysis in Choi trace distance.}
The squared Choi trace distance error is bounded by
\begin{align}
    d_\mathrm{tr}(\rho_W, \rho_{\widehat{W}})^2 \leq {1\over d_\mcI} \|e^{i\phi} W - \widehat{W}\|_2^2 = {1\over d_\mcI} \sum_\lambda d_\lambda \|e^{i\phi} W_\lambda - \widehat{W}_\lambda\|_2^2.
\end{align}
Using Eq.~\eqref{eq:covariant-additive-confidence-polar-bound} with $q=2$, we have
\begin{align}
    \label{eq:choi-trace-norm-bound-H_lambda}
    d_\mathrm{tr}(\rho_W, \rho_{\widehat{W}})^2\leq {4\over Y} \sum_\lambda{d_\lambda m_\lambda \over d_\mcI v_\lambda} \|H_\lambda\|_2^2.
\end{align}
By definition of $H_\lambda$, we have
\begin{align}
    \sum_\lambda{d_\lambda m_\lambda \over d_\mcI v_\lambda} \|H_\lambda\|_2^2 = \bra{G} A \ket{G},
\end{align}
where $A$ is defined by
\begin{align}
    A\coloneqq \sum_\lambda {d_\lambda m_\lambda \over d_\mcI v_\lambda} (\1-\ketbra{\psi_{W,\bm{v}}}) \Pi_\lambda (\1-\ketbra{\psi_{W,\bm{v}}}),
\end{align}
satisfying
\begin{align}
    \label{eq:A-infty-norm}
    \|A\|_\infty &\leq \max_\lambda {d_\lambda m_\lambda \over d_\mcI v_\lambda},\\
    \Tr A &\leq \sum_{\lambda}  {d_\lambda m_\lambda \over d_\mcI v_\lambda} \Tr \Pi_\lambda = \sum_{\lambda} {d_\lambda m_\lambda^2 M_\lambda \over d_\mcI v_\lambda},\\
    \label{eq:A-trace-square}
    \Tr A^2 &\leq \|A\|_\infty \Tr A.
\end{align}
The Gaussian concentration inequality~\cite[Prop.~1.1]{hsu2012tail} provides
\begin{align}
    \label{eq:gaussian-tail-bound}
    \Prob[\bra{G}A\ket{G} \geq \Tr A + 2\sqrt{\Tr A^2 t}+2\|A\|_\infty t] \leq e^{-t}
\end{align}
for $t\geq 0$.
Since
\begin{align}
    \Tr A + 2\sqrt{\Tr A^2 t}+2\|A\|_\infty t
    &\leq \Tr A + 2\sqrt{\|A\|_\infty \Tr A t}+2\|A\|_\infty t\\
    &\leq 2\Tr A + 3\|A\|_\infty t\\
    &\leq 2\sum_\lambda {d_\lambda m_\lambda^2 M_\lambda \over d_\mcI v_\lambda} + 3\max_\lambda {d_\lambda m_\lambda \over d_\mcI v_\lambda} t
\end{align}
holds, where we use the AM--GM inequality and Eqs.~\eqref{eq:A-infty-norm}--\eqref{eq:A-trace-square}, we have
\begin{align}
    Q_t
    &\coloneqq 2\sum_\lambda {d_\lambda m_\lambda^2 M_\lambda \over d_\mcI v_\lambda}
    + 3\max_\lambda {d_\lambda m_\lambda \over d_\mcI v_\lambda} t,\\
    \Prob\qty[\bra{G}A\ket{G} \geq Q_t] &\leq e^{-t}.
\end{align}
By using the union bound with Eqs.~\eqref{eq:Y-tail-bound} and \eqref{eq:gaussian-tail-bound}, we have
\begin{align}
    \Prob\qty[d_\mathrm{tr}(\rho_W, \rho_{\widehat{W}}) \geq \sqrt{{8Q_t\over n+1}}] \leq e^{-t} + \exp\left[-\left(\log2-{1\over2}\right)(n+1)\right].
\end{align}
By choosing $t = \log(2/\eta)$ and $n+1\geq \log(2/\eta)/(\log2-1/2)$ so that the first and second terms are at most $\eta/2$, we have
\begin{align}
    \Prob\qty[d_\mathrm{tr}(\rho_W, \rho_{\widehat{W}}) \geq \sqrt{{8Q_{\log(2/\eta)}\over n+1}}] \leq \eta.
\end{align}
By choosing $v_\lambda$ by
\begin{align}
    v_\lambda\coloneqq {1\over 2} \qty({\sqrt{d_\lambda m_\lambda^2 M_\lambda} \over \sum_\lambda \sqrt{d_\lambda m_\lambda^2 M_\lambda}}+{d_\lambda m_\lambda \over d_\mcI}),
\end{align}
we have
\begin{align}
    \sum_\lambda {d_\lambda m_\lambda^2 M_\lambda \over d_\mcI v_\lambda}
    &\leq 2 C'_G, \qquad \max_\lambda {d_\lambda m_\lambda \over d_\mcI v_\lambda} \leq 2,
\end{align}
i.e.,
\begin{align}
     \Prob\qty[d_\mathrm{tr}(\rho_W, \rho_{\widehat{W}}) \geq \sqrt{{8\over n+1}\qty(4 C'_G + 6 \log (2\over \eta))}] \leq \eta.
\end{align}
Thus, we have
\begin{align}
    \Prob\qty[d_\mathrm{tr}(\rho_W, \rho_{\widehat{W}}) \geq \varepsilon_\mathrm{tr}] \leq \eta,
\end{align}
with
\begin{align}
    n=O\qty(C'_G + \log(\eta^{-1}) \over \varepsilon_\mathrm{tr}^2).
\end{align}

\paragraph{Gate complexity.}\leavevmode
We fix the final targets $\varepsilon,\eta$ and apply the preceding statistical analysis with accuracy $\varepsilon/2$ and failure probability $\eta/4$, so that we can leave additional error budgets for the implementation errors.
Choose $n$ to satisfy these statistical bounds and $n+1\geq\log(8/\eta)/(\log2-1/2)$, and set the per-operation compilation precision to $\xi\coloneqq\eta/(16n)$.
Given a classical description of $v_\lambda$, the probe state $\ket{\psi_{W, \bm{v}}}$ in Eq.~\eqref{eq:psi_W_v} can be implemented in a quantum circuit as follows.
\begin{enumerate}
    \item Prepare the superposition state $\sum_\lambda \sqrt{v_\lambda} \ket{\lambda}$.
    This step can be implemented by $O(L)$ gates with arbitrary-angle single-qubit rotations and CNOT gates using the state preparation shown in Ref.~\cite{mottonen2005transformation}.
    By compiling the arbitrary-angle single-qubit rotations into a universal gate set using the Solovay--Kitaev theorem~\cite{kitaev1997quantum,dawson2005solovay}, this preparation unitary can be implemented with operator-norm error at most $\xi$ using $O(L\polylog(L/\xi))$ gates.
    \item Controlled on $\ket{\lambda}$, prepare the maximally entangled state $m_\lambda^{-1/2}\sum_{i=1}^{m_\lambda} \ket{i}\otimes \ket{i}$.
    The preparation of the maximally entangled state for each $\lambda$ can be implemented by $O(\log m_\lambda)$ gates allowing arbitrary-angle single-qubit rotations~\cite{shukla2024efficient}.
    Similarly to Step~1, each preparation unitary is compiled to operator-norm error at most $\xi$ using $O(\log m_\lambda\polylog(\log m_\lambda/\xi))$ gates over the universal gate set.
    Using unary iteration~\cite{babbush2018encoding}, the controlled preparation can be implemented with operator-norm error at most $\xi$ for the entire controlled operation using $O(\sum_\lambda\log m_\lambda\polylog(\log m_\lambda/\xi))$ gates.
    We can append the representation-space register in any efficiently implementable state, e.g., $\ket{0}$, which only has a negligible contribution to the overall gate complexity.
    \item Apply $U_\mathrm{Sch}^{(G\curvearrowright \mcO)} W U_\mathrm{Sch}^{(G \curvearrowright \mcI)\dagger}$.
    Compile each Schur transform to operator-norm error at most $\xi$, using $O(N_\mathrm{Sch}^{(G\curvearrowright\mcI)}(\xi)+N_\mathrm{Sch}^{(G\curvearrowright\mcO)}(\xi))$ gates in total.
    \item The resulting state is in the space $\mathrm{span}\{\ket{\lambda, 0, i} \otimes \ket{\lambda, 0, j}\}$, where the first register encodes the irrep label $\lambda$, the second register is the representation-space register, and the third register is the multiplicity-space register.
    We compress it to a $D_G$-dimensional space by discarding the representation-space register and relabeling the multiplicity-space register as
    \begin{align}
        (\lambda, i, j) \mapsto \mathrm{offset}(\lambda) + (i-1) M_\lambda + j,
    \end{align}
    where $\mathrm{offset}(\lambda) \coloneqq \sum_{\lambda'\prec \lambda} m_{\lambda'} M_{\lambda'}$.
    This relabeling can be implemented with $O(L\polylog(D_G))$ gates.
\end{enumerate}
Each of the two preparation operations and the two Schur transforms is used $n$ times and contributes at most $n\xi=\eta/16$ to the trace-distance error of the prepared state.
By contractivity of errors under the subsequent measurement and classical processing, the final output distribution differs in total variation distance by at most $\eta/4$ from that of the same measurement and processing applied to the ideally prepared state.
Then, the Hayashi measurement can be implemented by $O((n+D_G)\polylog(n, D_G, \zeta^{-1}, \tau^{-1}))$ gates with approximation error $\tau$ and failure probability $\zeta$ as shown in Sec.~\ref{sec:hayashi-qubit-implementation}.
We set $\zeta=\eta/4$ and choose $\tau$ explicitly below.
For the classical post-processing, we use the polar estimator defined above and count its singular value decomposition cost as $O(m_\lambda^2 M_\lambda)$ arithmetic operations~\cite{dongarra2018singular} for each $\lambda$.

We first analyze the implemented Hayashi measurement on the ideally prepared state, and then add the preparation error budget $\eta/4$.
Let $\ket{u}$ be the ideal Hayashi measurement outcome and $\ket{\widehat{u}}$ be the implemented Hayashi measurement outcome, satisfying
\begin{align}
    \Prob\qty[\|\ket{\widehat{u}}-\ket{u}\|_2 > \tau] \leq \zeta.
\end{align}
The ideal Hayashi measurement outcome $\ket{u}$ has the form [see Eqs.~\eqref{eq:covariant-additive-confidence-beta-output}, \eqref{eq:delta-r-h-y}]
\begin{align}
    \ket{u} = \sqrt{Y\over Y+\|H\|_2^2}\left(e^{i\phi}\ket{\psi_{W,\bm{v}}} + {1\over \sqrt{Y}}\ket{H}\right).
\end{align}
Thus, the implemented Hayashi measurement outcome $\ket{\widehat{u}}$ can be written as
\begin{align}
    \ket{\widehat{u}} = \sqrt{Y\over Y+\|H\|_2^2}\left(e^{i\phi}\ket{\psi_{W,\bm{v}}} + {1\over \sqrt{Y}}\ket{H}\right) + \ket{E},
\end{align}
where $\ket{E}$ is the error vector satisfying $\|\ket{E}\|_2 \leq \tau$ with probability at least $1-\zeta$.
With $E_\lambda$ defined by $\dket{E_\lambda} = \Pi_\lambda \ket{E}$, $\|\ket{E}\|_2 \leq \tau$ implies $\sum_\lambda \|E_\lambda\|_2^2 \leq \tau^2$.
Defining $\widehat{W}_\lambda^{\mathrm{fin}}$ by the same post-processing procedure as $\widehat{W}_\lambda$ but using $\ket{\widehat{u}}$ instead of $\ket{u}$, we have
\begin{align}
    \|\widehat{W}_\lambda^{\mathrm{fin}} - e^{i\phi} W_\lambda\|_q
    &\leq 2\sqrt{m_\lambda \over v_\lambda}\left({\|H_\lambda\|_q \over \sqrt{Y}}  + \sqrt{1+{\|H\|_2^2 \over Y}}\|E_\lambda\|_q\right).
\end{align}
For the diamond distance case, similarly to Eq.~\eqref{eq:diamond-norm-bound}, we have
\begin{align}
    d_\diamond(W, \widehat{W}^{\mathrm{fin}}) \leq 2\max_\lambda \sqrt{m_\lambda \over v_\lambda}\left({\|H_\lambda\|_\infty \over \sqrt{Y}} + \sqrt{1+{\|H\|_2^2 \over Y}}\|E_\lambda\|_\infty \right).
\end{align}
By the chosen statistical budget, the first term is at most $\varepsilon_\diamond/2$ except with probability $\eta/4$.
By Eq.~\eqref{eq:Y-tail-bound} and the choice of $n$, $Y\geq(n+1)/2$ holds except with probability $\eta/8$.
Since $\|H\|_2^2\sim{\rm Gamma}(D_G-1,1)$, the Chernoff bound gives $\|H\|_2^2\leq2(D_G-1+\log(8/\eta))$ except with probability $\eta/8$.
Consequently, with probability at least $1-\eta/4$,
\begin{align}
    \label{eq:s-bound}
    \sqrt{1+{\|H\|_2^2\over Y}}\leq A_{n,\eta}\coloneqq\sqrt{1+{4(D_G-1+\log(8/\eta))\over n+1}}.
\end{align}
We also have $\|E_\lambda\|_\infty \leq \|E_\lambda\|_2 = \|\Pi_\lambda \ket{E}\|_2 \leq \|\ket{E}\|_2 \leq \tau$.
For diamond distance, we choose
\begin{align}
    \tau\coloneqq\frac{\varepsilon_\diamond}{4A_{n,\eta}\sqrt{\max_\lambda(m_\lambda/v_\lambda)}},
\end{align}
such that
\begin{align}
    d_\diamond(W,\widehat{W}^{\mathrm{fin}})\leq\frac{\varepsilon_\diamond}{2}+2A_{n,\eta}\sqrt{\max_\lambda(m_\lambda/v_\lambda)}\,\tau=\varepsilon_\diamond.
\end{align}
For the Choi trace distance case, similarly to Eq.~\eqref{eq:choi-trace-norm-bound-H_lambda}, we have
\begin{align}
    d_\mathrm{tr}(\rho_W,\rho_{\widehat{W}^{\mathrm{fin}}})
    &\leq {2\over\sqrt{Y}}\left(\sum_\lambda {d_\lambda m_\lambda\over d_\mcI v_\lambda}\|H_\lambda\|_2^2\right)^{1/2}+2\sqrt{1+{\|H\|_2^2\over Y}}\left(\sum_\lambda {d_\lambda m_\lambda\over d_\mcI v_\lambda}\|E_\lambda\|_2^2\right)^{1/2}.
    \label{eq:finite-choi-bound}
\end{align}
The first term is at most $\varepsilon_\mathrm{tr}/2$ except with probability $\eta/4$, and Eq.~\eqref{eq:s-bound} bounds the amplification factor by $A_{n,\eta}$ except with probability $\eta/4$.
Since $\max_\lambda {d_\lambda m_\lambda \over d_\mcI v_\lambda} \leq 2$ and $\sum_\lambda \|E_\lambda\|_2^2 \leq \tau^2$, we have
\begin{align}
    d_\mathrm{tr}(\rho_W,\rho_{\widehat{W}^{\mathrm{fin}}})\leq\frac{\varepsilon_\mathrm{tr}}{2}+2\sqrt2 A_{n,\eta}\tau=\varepsilon_\mathrm{tr},
\end{align}
on the corresponding three success events, when we choose
\begin{align}
    \tau\coloneqq\frac{\varepsilon_\mathrm{tr}}{4\sqrt2 A_{n,\eta}}.
\end{align}
For either distance, a union bound gives failure probability at most $\eta/4+\eta/4+\zeta=3\eta/4$ on the ideally prepared input.
Adding the preparation contribution $\eta/4$ proves the stated guarantee with failure probability at most $\eta$.
These choices change only constant factors in the stated query bounds.
Substituting the above choices of $\xi$, $\zeta$, and $\tau$ into the gate counts, and including the classical post-processing, provides
\begin{align}
    O\qty(n N_\mathrm{Sch}^{(G\curvearrowright\mcI)}\qty({\eta \over 16n})+n N_\mathrm{Sch}^{(G\curvearrowright\mcO)}\qty({\eta \over 16n})+\poly\qty(D_G,n,\log\eta^{-1},\log\varepsilon^{-1})).
\end{align}
\end{proof}

By combining the above theorem with the random dilation superchannel, we obtain the following corollary for learning covariant channels with high probability.

\begin{corollary}[Learning covariant channels in diamond distance and Choi trace distance with high probability]
\label{cor:covariant-channel-tomography-high-probability}
Suppose $\Lambda\in \CovChan_G(\mcI, \mcO)$ is an unknown covariant channel with respect to a compact group $G$, and define the parameters $D'_G, K_G, C'_G$ as in Eqs.~\eqref{eq:parameters-for-diamond-norm} and \eqref{eq:parameters-for-choi-trace-norm} with the multiplicities $m_\lambda, M_\lambda$ given in Eqs.~\eqref{eq:decomposition-multiplicity-1} and \eqref{eq:decomposition-multiplicity-2}.
For every $0<\varepsilon_\diamond \leq 1$ and $0<\eta\leq1/2$, there exists a parallel estimation protocol using
\begin{align}
    n = O\qty({D'_G + K_G\log(\eta^{-1})\over\varepsilon_\diamond^2})
\end{align}
queries of $\Lambda$ and outputting an estimate $\widehat{\Lambda}$ such that
\begin{align}
    \Prob\qty[d_\diamond(\Lambda,\widehat{\Lambda})>\varepsilon_\diamond]
    \leq\eta.
\end{align}
For every $0<\varepsilon_\mathrm{tr}\leq1$ and $0<\eta\leq1/2$, there exists a parallel protocol using
\begin{align}
    n = O\qty({C'_G + \log(\eta^{-1})\over\varepsilon_\mathrm{tr}^2})
\end{align}
queries of $\Lambda$ and outputting an estimate $\widehat{\Lambda}$ such that
\begin{align}
    \Prob\qty[d_\mathrm{tr}(\rho_\Lambda,\rho_{\widehat{\Lambda}})>\varepsilon_\mathrm{tr}]
    \leq\eta,
\end{align}
where $\rho_\Lambda$ and $\rho_{\widehat{\Lambda}}$ are the normalized Choi states of $\Lambda$ and $\widehat{\Lambda}$, respectively.
\end{corollary}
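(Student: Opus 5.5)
The plan is to reduce Cor.~\ref{cor:covariant-channel-tomography-high-probability} to Thm.~\ref{thm:covariant-isometry-tomography-high-probability} via the random dilation superchannel of Thm.~\ref{thm:random-covariant-dilation}, extended to compact groups as in Sec.~\ref{sec:compact-oracle-implementation}. Given $n$ parallel queries of $\Lambda$, I apply the circuit realizing $\Omega_{G,n}(\Lambda)=\mathbb{E}_{W\sim\Dil_G(\Lambda)}[W^{\otimes n}(\cdot)W^{\dagger\otimes n}]$. Any $n$-query parallel protocol for isometries then has the same output statistics as running it on $n$ uses of a single random $W\in\Dil_G(\Lambda)$. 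The protocol feeds the probe inputs of Thm.~\ref{thm:covariant-isometry-tomography-high-probability} through $\Omega_{G,n}(\Lambda)$, obtains $\widehat W\in\W_G$, and outputs $\widehat\Lambda(\cdot)\coloneqq\Tr_{\mcE}[\widehat W(\cdot)\widehat W^\dagger]$.

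First I check that the hypotheses of the isometry theorem hold. Every $W\in\Dil_G(\Lambda)$ is a $G$-covariant isometry $\mcI\to\mcE\otimes\mcO$ with $\mcE\simeq\mcI\otimes\overline{\mcO}$. By Eqs.~\eqref{eq:decomposition-multiplicity-1}, \eqref{eq:decomposition-multiplicity-2} and \eqref{eq:decomposition_W}, it has the block form $\bigoplus_\lambda\1_{\mcR_\lambda}\otimes W_\lambda$ with $W_\lambda:\CC^{m_\lambda}\to\CC^{M_\lambda}$. The parameters $L,D_G,K_G,D'_G,d_\mcI,C'_G$ are therefore exactly those appearing in the statement. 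Since the failure bounds of Thm.~\ref{thm:covariant-isometry-tomography-high-probability} hold uniformly for every fixed $W\in\W_G$, conditioning on the sampled $W$ and averaging over $\Dil_G(\Lambda)$ preserves them. This gives $\Prob[d_\diamond(W,\widehat W)>\varepsilon_\diamond]\le\eta$ and the analogous Choi-trace bound with the same $n$.

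Next I transfer the errors from dilations to channels using data processing under the partial trace $\Tr_{\mcE}$. For the diamond distance, $\Lambda=\Tr_{\mcE}\circ\mcW$ and $\widehat\Lambda=\Tr_{\mcE}\circ\widehat{\mcW}$, and post-composition with a channel is contractive in diamond norm, so $d_\diamond(\Lambda,\widehat\Lambda)\le d_\diamond(W,\widehat W)$. For the Choi trace distance, $\rho_\Lambda=\Tr_{\mcE}\rho_W$ and $\rho_{\widehat\Lambda}=\Tr_{\mcE}\rho_{\widehat W}$, so trace-norm contractivity gives $d_{\mathrm{tr}}(\rho_\Lambda,\rho_{\widehat\Lambda})\le d_{\mathrm{tr}}(\rho_W,\rho_{\widehat W})$. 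The normalization $d_\mcI$ is the same on both sides because the input space is unchanged. Combining these with the previous step yields both claims.

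The main subtlety I expect is making sure the random purification is genuinely invariant under the full $\U_G$ action, so that the output really is a mixture of $W^{\otimes n}$ for a single $W\in\W_G$. For compact groups this also requires the Clebsch--Gordan replacement of the Fourier step. Both facts are supplied by Thm.~\ref{thm:random-covariant-dilation} and Sec.~\ref{sec:compact-oracle-implementation}, so the remaining work is just the conditioning argument and the two contractivity inequalities.
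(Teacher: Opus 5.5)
Your proposal is correct and is essentially the paper's proof. You apply the random dilation superchannel to reduce to Thm.~\ref{thm:covariant-isometry-tomography-high-probability} for a single random covariant dilation $W$, whose guarantee holds uniformly in $W$. You then transfer both the diamond-distance and the Choi-trace-distance guarantees to $\Lambda$ by data processing under $\Tr_{\mcE}$. Your additional details spell out what the paper's two-line proof leaves implicit: the block structure of the dilation with multiplicities $m_\lambda, M_\lambda$, the conditioning argument, and $\rho_\Lambda=\Tr_{\mcE}\rho_W$ with the same normalization $d_\mcI$.
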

\begin{proof}
    The random dilation superchannel combined with the covariant isometry tomography protocol in Thm.~\ref{thm:covariant-isometry-tomography-high-probability} provides the high probability bounds in diamond distance or Choi trace distance with respect to the dilation isometries.
    The data-processing inequality transfers the error guarantees to the original covariant channel, which completes the proof.
\end{proof}

\subsection{Learning permutation-covariant channels}

We can combine the random dilation superchannel with the parallel estimation protocol of permutation-covariant isometry channels to obtain a learning protocol for permutation-covariant channels (Thm.~\ref{thm:permutation-covariant-channel-sql-scaling}).
This protocol is further improved for the Choi trace distance error by using the Choi state compression technique (Thm.~\ref{thm:learning-permutation-covchannel-compression}).

\begin{theorem}[Learning permutation-covariant channels in diamond distance and Choi trace distance via random dilation]
\label{thm:permutation-covariant-channel-sql-scaling}
Suppose $d\geq2$, $k\geq1$, $0<\varepsilon_\diamond,\varepsilon_\mathrm{tr}\leq1$, $0<\eta\leq1/2$, and $\Lambda$ is an unknown permutation-covariant channel $\Lambda\in \PCovChan_{k,d}$.
Then, there exists a parallel protocol that, given $n$ parallel calls to $\Lambda$, outputs a classical description of a channel $\widehat{\Lambda}$ such that
\begin{align}
    \Prob\qty[d_\diamond(\Lambda, \widehat{\Lambda})\geq \varepsilon_\diamond]\leq \eta
\end{align}
with
\begin{align}
    \label{eq:permutation-covariant-channel-sql-scaling}
    n = O_d\qty({k^{d^4-1} + k^{(d-1)(d+2)/2}\log \eta^{-1} \over \varepsilon_\diamond^2}).
\end{align}
Similarly, there exists a parallel protocol outputting a classical description of $\widehat\Lambda$ such that
\begin{align}
    \Prob\qty[d_\mathrm{tr}(\rho_\Lambda,\rho_{\widehat\Lambda})\geq \varepsilon_\mathrm{tr}]\leq \eta
\end{align}
with
\begin{align}
    \label{eq:permutation-covariant-channel-choi-trace-norm-scaling}
    n = O_d\qty({k^{d^4-(d^2+1)/2} + \log \eta^{-1} \over \varepsilon_\mathrm{tr}^2}).
\end{align}
The gate complexity of the protocol is given by
\begin{align}
    O_d\qty(\poly(k, \varepsilon^{-1}, \log \eta^{-1})),
\end{align}
where $\varepsilon = \varepsilon_\diamond$ for the diamond distance case and $\varepsilon = \varepsilon_\mathrm{tr}$ for the Choi trace distance case.
\end{theorem}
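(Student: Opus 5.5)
The plan is to specialize Cor.~\ref{cor:covariant-channel-tomography-high-probability} to $G=\mfS_k$ with $U_1=U_2=\pi_d$, and then evaluate the parameters $D'_G$, $K_G$, $C'_G$ asymptotically at fixed $d$. Here $m_\lambda=m_\lambda^{(d)}$ and $M_\lambda=m_\lambda^{(d^3)}$, where $\lambda$ ranges over $\lambda\vdash_d k$. Since only diagrams with at most $d$ rows occur, the number of irreps satisfies $L\leq\binom{k+d-1}{d-1}=O_d(k^{d-1})$, so $\log(4L)=O_d(\log k)$.

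\textbf{Diamond distance.} We need to bound $D_G=\sum_{\lambda\vdash_d k} m_\lambda^{(d)} m_\lambda^{(d^3)}$.
\begin{itemize}
\item The cleanest route is representation-theoretic. $D_G$ counts the dimension of the space of $\mfS_k$-intertwiners from $(\CC^d)^{\otimes k}$ to $(\CC^{d^3})^{\otimes k}$.
\item Equivalently, $D_G=\dim\big((\CC^d\otimes\CC^{d^3})^{\otimes k}\big)^{\mfS_k}=\dim\operatorname{Sym}^k(\CC^{d^4})=\binom{k+d^4-1}{d^4-1}=\Theta_d(k^{d^4-1})$. This is the same computation as for $D_{\mathrm{st}}$ in Eq.~\eqref{eq:state-permutation-dimension}.
\item Next, $K_G=\sum_\lambda m_\lambda^{(d)}=\Theta_d(k^{(d-1)(d+2)/2})$ by Proposition~\ref{prop:symmetric-group-parameters-asymptotics}.
\item Hence $K_G\log(4L)=O_d(k^{(d-1)(d+2)/2}\log k)$. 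This is dominated by $k^{d^4-1}$, since $(d-1)(d+2)/2<d^4-1$ for $d\geq2$ and the logarithm is absorbed by the polynomial gap.
\end{itemize}
Together these give $D'_G=O_d(k^{d^4-1})$, and Eq.~\eqref{eq:permutation-covariant-channel-sql-scaling} follows.

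\textbf{Choi trace distance.} Here $d_\mcI=d^k$, and we must estimate $C'_G=d^{-k}\big(\sum_\lambda m_\lambda\sqrt{d_\lambda M_\lambda}\big)^2$. This is the step I expect to be the main obstacle, because $d_\lambda$ is exponential in $k$ and must cancel against $d^{-k}$ precisely.
\begin{itemize}
\item By Cauchy--Schwarz,
\begin{align}
\Big(\sum_\lambda m_\lambda\sqrt{d_\lambda M_\lambda}\Big)^2\leq\Big(\sum_\lambda d_\lambda m_\lambda\Big)\Big(\sum_\lambda m_\lambda M_\lambda\Big)=d^k D_G.
\end{align}
This yields only $C'_G\leq D_G$, which is too weak.
\item To reach the exponent $d^4-(d^2+1)/2$, I would instead use the Schur--Weyl measure $p(\lambda)=d_\lambda m_\lambda/d^k$. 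This gives $C'_G=\big(\sum_\lambda p(\lambda)\sqrt{m_\lambda M_\lambda/d_\lambda}\big)^2\cdot d^k/d^k$, up to rewriting.
\item More usefully, $C'_G=\big(\mathbb{E}_{\lambda\sim p}\sqrt{m_\lambda M_\lambda d^k/d_\lambda}\big)^2$.
\item The measure $p$ concentrates on diagrams with $\lambda_i=k/d+O(\sqrt k)$. On that window, $M_\lambda\approx k^{(d^3-1)d^3/2-\dots}$ is evaluated by the Weyl dimension formula \eqref{eq:weyl-dimension}, and $d^k/d_\lambda$ is evaluated by the Frobenius formula \eqref{eq:hook-length} with Stirling's approximation.
\end{itemize}
Plugging in the uniform fixed-height Young estimates of Appendix~\ref{sec:uniform-fixed-height-young-estimates}, I expect the exponent bookkeeping to give $k^{d^4-(d^2+1)/2}$. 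This is the step to verify carefully, including tail contributions from atypical $\lambda$; those are exponentially suppressed in $p$ while the integrand grows only polynomially times $d^k/d_\lambda\cdot m_\lambda$.

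\textbf{Gate complexity.} This follows from Theorem~\ref{thm:covariant-isometry-tomography-high-probability} together with three earlier results:
\begin{itemize}
\item the efficient random dilation of Sec.~\ref{sec:random-purification-efficiency};
\item Schur transforms for $\pi_d$ and $\pi_{d^3}$ on $k$ sites, of size $\poly(k,\log\xi^{-1})$;
\item the bound $D_G=\poly_d(k)$.
\end{itemize}
The dilation circuit's implementation error is allotted a constant fraction of $\eta$, handled as in the proof of Theorem~\ref{thm:covariant-isometry-tomography-high-probability}.
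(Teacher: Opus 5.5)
Your plan is the paper's own proof. You specialize Cor.~\ref{cor:covariant-channel-tomography-high-probability} (random dilation followed by covariant-isometry tomography) to $G=\mfS_k$ with $m_\lambda=m_\lambda^{(d)}$ and $M_\lambda=m_\lambda^{(d^3)}$, insert the asymptotics of $D'_G,K_G,C'_G$, and obtain the gate count from the $\pi_d,\pi_{d^3}$ Schur transforms plus the efficient dilation circuit with an $\eta/2$ error budget. Your diamond-distance bookkeeping, including $D_{\mfS_k}=\binom{k+d^4-1}{d^4-1}$, is correct.

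The step you flag as the main obstacle needs no new work. The bound $C'_{\mfS_k}=\Theta_d(k^{d^4-(d^2+1)/2})$ is stated in Prop.~\ref{prop:symmetric-group-parameters-asymptotics}, the same proposition you already cite for $K_G$, and the paper simply invokes it.

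Your sketched re-derivation would fail as written, for three reasons.
\begin{enumerate}
\item With $p(\lambda)=d_\lambda m_\lambda/d^k$, the correct identity is $C'_{\mfS_k}=\bigl(\mathbb{E}_{\lambda\sim p}\sqrt{d^kM_\lambda/d_\lambda}\bigr)^2$. The extra $m_\lambda$ under your square root inflates the result by $\Theta_d(k^{d(d-1)/4})$, so it would not produce the claimed exponent.
\item On the central window, $M_\lambda=\Theta_d(k^{d^4-d^2+d(d-1)/4})$, because only the $d(d^3-d)$ pairs $i\leq d<j$ in the Weyl formula give factors of order $k$. Your heuristic $k^{(d^3-1)d^3/2-\cdots}$ is not this. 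Combined with $d^k/d_\lambda=\Theta_d(k^{(d^2+d-2)/4})$ there, this gives exactly $k^{(d^4-(d^2+1)/2)/2}$ before squaring.
\item The tail argument is misstated. The integrand does not grow only polynomially: $d^k/d_\lambda$ is exponentially large off the window, e.g.\ it equals $d^k$ at $\lambda=(k)$. What actually controls the tails is $p(\lambda)\sqrt{d^k/d_\lambda}=\sqrt{p(\lambda)\,m_\lambda}$, i.e.\ Gaussian decay of $\sqrt{d_\lambda/d^k}$ in $\bm x=(\lambda-k/d)/\sqrt k$. This is how Appendix~\ref{sec:uniform-fixed-height-young-estimates} bounds the sum of $m_\lambda\sqrt{d_\lambda M_\lambda}$ over the lattice.
\end{enumerate}
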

\begin{proof}
    The protocol is given as follows.
    \begin{enumerate}
        \item \emph{Random dilation.} Apply the $n$-slot random dilation superchannel to obtain the correlated channel $\int_{\Dil_{\mfS_k}(\Lambda)}\dd W\,W^{\otimes n}(\mathord\cdot)W^{\dagger\otimes n}$.
        \item \emph{Learning the isometry channel.} Apply the parallel estimation measurement in Thm.~\ref{thm:covariant-isometry-tomography-high-probability} to obtain $\widehat{W}$.
        \item \emph{Output the estimate.} Output $\widehat{\Lambda}\coloneqq \Tr_\mcE[\widehat{W}(\cdot)\widehat{W}^\dagger]$.
    \end{enumerate}
    The data-processing inequality transfers the diamond-distance and Choi trace-distance error guarantees from $W$ to $\Lambda$.
    For the finite-gate implementation, we apply Thm.~\ref{thm:covariant-isometry-tomography-high-probability} with failure probability $\eta/2$ and compile the entire $n$-slot random-dilation circuit so that its output channel, including the environment, is within diamond distance $\eta/2$ of the ideal channel.
    This does not change the scaling of the query complexity.
    By contractivity under the subsequent measurement and classical processing, the final output distribution changes by at most $\eta/2$ in total variation distance, giving total failure probability at most $\eta/2+\eta/2=\eta$ without changing the asymptotic query bounds.
    The query bounds in Eqs.~\eqref{eq:permutation-covariant-channel-sql-scaling} and \eqref{eq:permutation-covariant-channel-choi-trace-norm-scaling} follow from the query bounds in Cor.~\ref{cor:covariant-channel-tomography-high-probability} and the asymptotic estimates of $D'_G$, $K_G$, and $C'_G$ shown in Prop.~\ref{prop:symmetric-group-parameters-asymptotics}.
    The Schur transforms for $\mfS_k$ on $(\CC^d)^{\otimes k}$ and $(\CC^{d^3})^{\otimes k}$ can both be implemented with operator-norm error $\xi$ using $O_d(\poly(k,\log\xi^{-1}))$ gates~\cite{bacon2005quantum,bacon2006efficient}.
    Thus, the total gate complexity follows from the gate complexity in Thm.~\ref{thm:covariant-isometry-tomography-high-probability} and the gate complexity of the random dilation superchannel shown in Sec.~\ref{sec:random-purification-efficiency}, which is given by $O_d(\poly(k,\log\eta^{-1}))$.
\end{proof}

For the Choi trace distance error, we can further improve the sample complexity by compressing the Choi state of the unknown channel to a smaller support.
The Choi space decomposes as follows:
\begin{align}
    (\overline{\CC^d})^{\otimes k} \otimes (\CC^d)^{\otimes k}
    &\cong \bigoplus_{\mu, \alpha\vdash_d k} \overline{\mcS_\mu} \otimes \CC^{m_\mu^{(d)}} \otimes \mcS_\alpha \otimes \CC^{m_\alpha^{(d)}}\\
    \label{eq:choi-space-decomposition-2}
    &\cong \bigoplus_{\mu, \alpha\vdash_d k} \bigoplus_{\nu\vdash k} \mcS_\nu \otimes \CC^{g_{\mu\alpha\nu}} \otimes \CC^{m_\mu^{(d)}} \otimes \CC^{m_\alpha^{(d)}}\\
    &= \bigoplus_{\nu\vdash k} \mcS_\nu \otimes \bigoplus_{\mu\vdash_d k} \CC^{w_{\mu\nu}},
    \label{eq:choi-space-decomposition}
\end{align}
where $g_{\mu\alpha\nu}$ is the Kronecker coefficient, which is  the multiplicity of the irrep $\nu$ in the decomposition of the tensor product $\mcS_\mu \otimes \mcS_\alpha$, and $w_{\mu\nu}$ is defined by
\begin{align}
    w_{\mu\nu} \coloneqq \sum_{\alpha\vdash_d k} m_\mu^{(d)} m_\alpha^{(d)} g_{\mu\alpha\nu},
\end{align}
which quantifies the multiplicity of the irrep $\nu$ in the decomposition of the Choi space coming from the input Young sector $\mu$.
We then take $\Delta>0$ and define a centered Young window by
\begin{align}
    \label{eq:young-window}
    \young{d}{k}(\Delta)\coloneqq \qty{\lambda\vdash_d k \; \middle|\; \|\bm{x}\|_1\leq \sqrt{d(d+1)\Delta} \text{ with } x_i \coloneqq {\lambda_i - k/d \over \sqrt{k}}}.
\end{align}
Then, we consider the compressed Choi space
\begin{align}
    \label{eq:compressed-choi-space-decomposition}
    \bigoplus_{\mu\in \young{d}{k}(\Delta)} \overline{\mcS_\mu} \otimes \CC^{m_\mu^{(d)}} \otimes (\CC^d)^{\otimes k}
    &\cong \bigoplus_{\nu\vdash k} \mcS_\nu \otimes \CC^{c_\nu(\Delta)},
\end{align}
where $c_\nu(\Delta)$ is defined by
\begin{align}
    c_\nu(\Delta)\coloneqq \sum_{\mu\in \young{d}{k}(\Delta)} w_{\mu\nu}.
\end{align}
Let $P_{k,d}(\Delta)$ be the orthogonal projector onto the Schur sectors corresponding to $\mu\in\young{d}{k}(\Delta)$, given by
\begin{align}
    P_{k,d}(\Delta)\coloneqq \sum_{\mu\in\young{d}{k}(\Delta)}P_\mu,
\end{align}
using the Young projector $P_\mu$ defined in Eq.~\eqref{eq:def-young-projector}.
Let $\mcH_\Delta\coloneqq\operatorname{Im}(\overline{P_{k,d}(\Delta)}\otimes\1_{\mcO})$ denote the compressed Choi space in Eq.~\eqref{eq:compressed-choi-space-decomposition}, where $\operatorname{Im}$ denotes the image of the operator.
We restrict $\overline\pi_d\otimes\pi_d$ to $\mcH_\Delta$ and apply random purification with environment $\overline{\mcH_\Delta}$ and the commutant of this restricted representation.
The canonical purification, and every purification randomized by this restricted commutant, of a permutation-invariant state on $\mcH_\Delta$ then lies in
\begin{align}
    \bigoplus_{\nu\vdash k} \CC\dket{\1_{\mcS_\nu}} \otimes \overline{\CC^{c_\nu(\Delta)}} \otimes \CC^{c_\nu(\Delta)}.
    \label{eq:permutation-random-purification-support}
\end{align}
The dimension of the support of the purification is given by
\begin{align}
    D_\Delta\coloneqq \sum_{\nu\vdash k} c_\nu(\Delta)^2.
    \label{eq:permutation-fidelity-compressed-pure-estimation}
\end{align}
Then, we can show the following lemma:

\begin{lemma}[Compression of Schur sectors]
\label{lem:prefactor-free-schur-window}
Suppose $\lambda$ is a random Young diagram sampled by applying the weak Schur sampling to $k$ copies of a state $\rho$ with $\bm{p} = \spec(\rho)$, i.e.,
\begin{align}
\Prob[\lambda = \lambda'] = \Tr[P_{\lambda'}\rho^{\otimes k}],
\end{align}
where $\spec(\rho)$ denotes the vector of eigenvalues of $\rho$, arranged in nonincreasing order and padded with zeros to length $d$.
Then, we have
\begin{align}
\Prob\left[
\left\|{\lambda \over k}-\bm{p}\right\|_1\geq{d\over\sqrt k}+t
\right]
\leq\exp\!\left[-{kt^2\over2d^2}\right]
\qquad(t\geq0).
\label{eq:prefactor-free-schur-concentration}
\end{align}
Suppose $a_d\coloneqq \sqrt{d(d+1)}$ and define the constant
\begin{align}
c_d^{\mathrm{SW}}\coloneqq{(a_d-d)^2\over2d^2}>0.
\end{align}
Then, for any $\Delta\geq1$, we have
\begin{align}
1-{\Tr P_{k,d}(\Delta)\over d^k}
&\leq e^{-c_d^{\mathrm{SW}}\Delta},
\label{eq:prefactor-free-uniform-window-tail}\\
\Tr[P_{k,d}(\Delta)\rho^{\otimes k}]
&\leq\exp\left\{-{k\over2d^2}
\left[\left\|\rho-\frac{\1}{d}\right\|_1
-(a_d+d)\sqrt{\Delta\over k}\right]_+^2\right\},
\label{eq:prefactor-free-offcenter-kernel}
\end{align}
for any quantum state $\rho\in \mcL(\CC^d)$, where $[\cdot]_+ \coloneqq \max\{0, \cdot\}$.
The compressed dimension defined in Eq.~\eqref{eq:permutation-fidelity-compressed-pure-estimation} is bounded by
\begin{align}
D_\Delta = O_d(k^{d^4-d^2}\Delta^{d^2-1})
\qquad(k\geq1,\ \Delta\geq1).
\label{eq:prefactor-free-compressed-dimension}
\end{align}
\end{lemma}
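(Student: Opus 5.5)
The lemma has three parts; I prove them in order, since each later part uses the earlier ones.

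\emph{Concentration~\eqref{eq:prefactor-free-schur-concentration}.} Split the deviation into a mean term and a fluctuation term. Consider $f(\lambda)\coloneqq\|\lambda/k-\bm p\|_1$ as a function of the output of the RSK-type description of weak Schur sampling. In the Keyl--Werner / O'Donnell--Wright picture, $\lambda$ is distributed as the RSK shape of a word $x_1\cdots x_k$ with i.i.d.\ letters drawn from $\bm p$, after a suitable basis choice (O'Donnell--Wright, ``Efficient quantum tomography'').
\begin{itemize}
\item \emph{Bounded differences.} Changing one letter changes each partial sum $\lambda_1+\cdots+\lambda_j$ by at most one, by Greene's theorem. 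Hence each $\lambda_i$ changes by at most $2$, and $f$ changes by at most $c\le 2d/k$; more carefully, one can reach $d/k$ by bounding $\|\cdot\|_1$ via the $d$ partial sums.
\item \emph{McDiarmid.} This gives $\Prob[f\ge\mathbb E f+t]\le\exp(-2t^2/(kc^2))=\exp(-kt^2/(2d^2))$ with $c=2d/k$.
\item \emph{Mean.} Bound $\mathbb E f\le\sum_i\mathbb E|\lambda_i/k-p_i|\le d/\sqrt k$, using the known bound $\mathbb E\|\lambda/k-\bm p\|_2^2\le d/k$ (O'Donnell--Wright) together with Cauchy--Schwarz.
\end{itemize}
Getting exactly these constants is the main obstacle; if the RSK bounded-difference constant comes out worse, I adjust by working with the partial-sum vector, whose $\ell_\infty$ Lipschitz constant is $1/k$.

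\emph{Window bounds.} For~\eqref{eq:prefactor-free-uniform-window-tail}, note that $\Tr P_{k,d}(\Delta)/d^k=\Tr[P_{k,d}(\Delta)(\1/d)^{\otimes k}]$ with $\bm p=\bm 1/d$. The window condition reads $\|\lambda/k-\bm p\|_1\le a_d\sqrt{\Delta/k}$. Apply~\eqref{eq:prefactor-free-schur-concentration} with $t=(a_d\sqrt\Delta-d)/\sqrt k\ge(a_d-d)\sqrt{\Delta/k}$, which uses $\Delta\ge1$. This gives $e^{-(a_d-d)^2\Delta/(2d^2)}$.

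For~\eqref{eq:prefactor-free-offcenter-kernel}, use that inside the window
\[
\|\lambda/k-\bm p\|_1\ge\|\bm p-\bm 1/d\|_1-a_d\sqrt{\Delta/k},
\]
and that $\|\bm p-\bm1/d\|_1=\|\rho-\1/d\|_1$. Apply the concentration with $t=[\|\rho-\1/d\|_1-(a_d+d)\sqrt{\Delta/k}]_+$, which absorbs the $d/\sqrt k$ mean term.

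\emph{Dimension bound~\eqref{eq:prefactor-free-compressed-dimension}.} Write $c_\nu(\Delta)$ as the multiplicity of $\mcS_\nu$ in $\bigoplus_{\mu\in\young{d}{k}(\Delta)}\overline{\mcS_\mu}\otimes\CC^{m_\mu}\otimes(\CC^d)^{\otimes k}$. Then
\[
\sum_\nu c_\nu^2=\dim\Comm_{\mfS_k}=\frac{1}{k!}\sum_\sigma|\chi(\sigma)|^2 .
\]
Rather than computing this directly, bound $D_\Delta$ by the dimension of the permutation-invariant operator space on the compressed Choi space. That space sits inside $\operatorname{Sym}^k(\mcL(\CC^{d}\otimes\CC^d))$ restricted by the projector, and $\operatorname{Sym}^k(\mcL(\CC^{d^2}))$ alone has dimension $\Theta(k^{d^4-1})$. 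The saving $k^{-(d^2-1)}\Delta^{d^2-1}$ must come from the window.

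Concretely, $D_\Delta=\sum_\nu c_\nu^2\le\sum_\nu c_\nu\cdot\max_\nu c_\nu$ is too lossy. Instead I would write
\[
D_\Delta=\sum_{\mu,\mu'\in\text{window}}\dim\operatorname{Hom}_{\mfS_k}\bigl(\overline{\mcS_\mu}\otimes(\CC^d)^{\otimes k},\,\overline{\mcS_{\mu'}}\otimes(\CC^d)^{\otimes k}\bigr)\,m_\mu m_{\mu'} .
\]
Each Hom dimension equals the multiplicity-type count $\sum_{\alpha,\alpha'}m_\alpha m_{\alpha'}\,\dim\operatorname{Hom}(\mcS_\mu\otimes\mcS_\alpha,\mcS_{\mu'}\otimes\mcS_{\alpha'})$, and this is at most the corresponding unrestricted quantity $\dim\operatorname{Hom}$ in the commutant of $\U(d)^{\times 3}$-type actions. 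I would bound it via the $\U(d)\times\U(d)$ duality: $\bigoplus_\mu \overline{\mcS_\mu}\otimes\mcS_\mu$-structure makes the relevant commutant the $\U(d)^{\times}$-invariants of degree-$k$ polynomials, giving an $O_d(k^{d^4-d^2-(d^2-1)})$ contribution per pair $(\mu,\mu')$, while the window contains $O_d((\sqrt{k\Delta})^{d-1})$ diagrams.

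This last counting is the step I am least sure of. My first attempt is a Weyl-dimension estimate: in the window, $m_\mu=O_d(k^{d(d-1)/4}\Delta^{\cdot})$, and the Kronecker sums are controlled by Prop.~\ref{prop:symmetric-group-parameters-asymptotics}-style estimates from Appendix~\ref{sec:uniform-fixed-height-young-estimates}. I would then track the powers of $k$ and $\Delta$ so they combine to $k^{d^4-d^2}\Delta^{d^2-1}$.
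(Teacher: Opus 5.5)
Your first two parts follow the paper's proof essentially step for step, and they are fine. You use the O'Donnell--Wright bound $\mathbb{E}\|\lambda/k-\bm p\|_2^2\leq d/k$ with Cauchy--Schwarz for the mean, Greene's theorem for bounded differences $c_i=2d/k$ in McDiarmid, and choices of $t$ that rely on $\Delta\geq1$.

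The gap is in the dimension bound~\eqref{eq:prefactor-free-compressed-dimension}. Your decomposition $D_\Delta=\sum_{\mu,\mu'}m_\mu m_{\mu'}\dim\operatorname{Hom}_{\mfS_k}(\cdots)=\sum_{\mu,\mu'}\sum_\nu w_{\mu\nu}w_{\mu'\nu}$ is correct. However, the per-pair estimate $O_d(k^{d^4-d^2-(d^2-1)})$ is unjustified and in fact false. Combined with your count of $O_d((k\Delta)^{(d-1)/2})$ window diagrams, it would give $D_\Delta=O_d(k^{d^4-2d^2+d})=o(k^{d^4-d^2})$ for a fixed constant $\Delta$. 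That contradicts $D_\Delta\geq D_{\mathsf{Y}_{\mathrm{hard}}}=\Omega_d(k^{d^4-d^2})$ from Lemma~\ref{lem:permutation-central-window-recoupling}, whose set $\mathsf{Y}_{\mathrm{hard}}$ lies in a fixed window. On average the per-pair term is of order $k^{d^4-d^2-(d-1)}$, not $k^{d^4-2d^2+1}$.

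Moreover, pair counting alone yields only $\Delta^{d-1}$, and you do not say where the remaining $\Delta^{d^2-d}$ comes from. Prop.~\ref{prop:symmetric-group-parameters-asymptotics} cannot supply these estimates: it only handles linear sums such as $\sum_\lambda m^{(a)}_\lambda m^{(b)}_\lambda$, not quadratic Kronecker sums $\sum_\nu w_{\mu\nu}w_{\mu'\nu}$. Note also that your third part never uses the off-center kernel bound~\eqref{eq:prefactor-free-offcenter-kernel}, which is what drives the paper's argument.

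The missing idea, as in the paper, is to continue from the character formula you wrote down and abandoned:
$D_\Delta=\frac{1}{k!}\sum_{\sigma}\Tr[P_{k,d}(\Delta)\overline{\pi}_d(\sigma)]^2\Tr[\pi_d(\sigma)]^2=\binom{k+d^4-1}{d^4-1}\int\dd\psi\,\Tr[P_{k,d}(\Delta)^{\otimes2}\rho_{\mcA\mcB}^{\otimes k}]$, with $\psi$ Haar-random on $\mcA\otimes\mcB\otimes\mcC\otimes\mcD\simeq(\CC^d)^{\otimes4}$. The full commutant dimension $\Theta_d(k^{d^4-1})$ then appears as a prefactor, and the saving must come from the integral:
\begin{itemize}
\item Cauchy--Schwarz bounds the integrand by $\sqrt{\Tr[P_{k,d}(\Delta)\rho_\mcA^{\otimes k}]\Tr[P_{k,d}(\Delta)\rho_\mcB^{\otimes k}]}$.
\item The marginal pair $(\rho_\mcA,\rho_\mcB)$ has a bounded density on $\su(d)\times\su(d)$, by the Hilbert--Schmidt induced measure plus the coarea formula.
\item Eq.~\eqref{eq:prefactor-free-offcenter-kernel} confines each marginal to an $O(\sqrt{\Delta/k})$-neighbourhood of $\1/d$ in the $(d^2-1)$-dimensional space $\su(d)$. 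This gives $(\Delta/k)^{(d^2-1)/2}$ per marginal, i.e.\ exactly the saving $k^{-(d^2-1)}\Delta^{d^2-1}$.
\end{itemize}

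Alternatively, your pair decomposition can be rescued with the Cauchy-identity bound from the proof of Lemma~\ref{lem:permutation-central-window-recoupling}: $\sum_\nu w_{\mu\nu}w_{\mu'\nu}\leq t^{-k}(m_\mu m_{\mu'})^2 I_d(t)$ with $I_d(1-1/k)=O_d(k^{(d^2-1)^2})$. In the window $m_\mu=O_d((k\Delta)^{d(d-1)/4})$ and there are $O_d((k\Delta)^{d-1})$ pairs, which multiplies out to $O_d(k^{d^4-d^2}\Delta^{d^2-1})$. As written, though, your third part is a heuristic with an incorrect exponent, not a proof.
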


\begin{proof}
    The average of $\|\lambda/k-\bm{p}\|_1$ is bounded as
    \begin{align}
        \mathbb{E}\left\|{\lambda\over k}-\bm{p}\right\|_1 \leq \sqrt{d} \sqrt{\mathbb{E}\left\|{\lambda\over k}-\bm{p}\right\|_2^2} \leq {d\over \sqrt{k}},
    \end{align}
    where the first inequality uses the Cauchy--Schwarz inequality and the second inequality is from Ref.~\cite[Thm.~1.1]{odonnell2016efficient}.
    The probability distribution of $\lambda$ is the same as the distribution of the shape of the Young tableau produced by applying the RSK correspondence to a word $w$ of length $k$ with letters drawn i.i.d. from $\bm{p}$, i.e., $w_i\sim \bm{p}$ for $i\in \{1, \ldots, k\}$~\cite{kuperberg2002random,its2001random,alicki1988symmetry,feray2012asymptotics,odonnell2016efficient}.
    It is given by $\lambda$ such that $\lambda_1+\cdots+ \lambda_i$ is the length of the longest disjoint union of $i$ weakly increasing subsequences in the word $w$~\cite{greene1974extension}.
    Suppose $w'$ is obtained from $w$ by changing one letter at position $i$, and $\lambda'$ is the corresponding Young diagram.
    Then, we have
    \begin{align}
        \abs{\left\|{\lambda \over k}-\bm{p}\right\|_1 - \left\|{\lambda' \over k}-\bm{p}\right\|_1} \leq c_i
    \end{align}
    with $c_i = 2d/k$.
    Thus, the bounded-differences inequality~\cite[Thm.~6.7]{mcdiarmid1989method} provides
    \begin{align}
        \Prob\left[
        \left\|{\lambda \over k}-\bm{p}\right\|_1\geq \mathbb{E}\left\|{\lambda \over k}-\bm{p}\right\|_1+t
        \right]
        \leq \exp[-{2t^2\over \sum_{i=1}^k c_i^2}] = \exp[-{kt^2\over2d^2}],
    \end{align}
    which proves Eq.~\eqref{eq:prefactor-free-schur-concentration}.
    Since $\Prob[\lambda = \lambda'] = {1\over d^k} \Tr[P_{\lambda'}]$ holds for $\bm{p} = \bm{u}_d \coloneqq (1/d, \ldots, 1/d)$, we have
    \begin{align}
        1-{\Tr P_{k,d}(\Delta) \over d^k}
        &= \Prob\qty[\left\|{\lambda \over k}-\bm{u}_d\right\|_1>a_d \sqrt{\Delta \over k}]\\
        &\leq \exp[-{k\over 2d^2}\qty(a_d\sqrt{\Delta \over k}-{d\over \sqrt{k}})^2]\\
        &\leq \exp[-{(a_d-d)^2\over 2d^2}\Delta],
    \end{align}
    where we use Eq.~\eqref{eq:prefactor-free-schur-concentration} in the second line and $\Delta\geq 1$ in the third line, which proves Eq.~\eqref{eq:prefactor-free-uniform-window-tail}.
    For a general $\bm{p} = \spec(\rho)$, the triangle inequality implies
    \begin{align}
        \left\|{\lambda \over k}-\bm{p}\right\|_1 \geq \|\bm{p}-\bm{u}_d\|_1 - \left\|{\lambda \over k} - \bm{u}_d\right\|_1 = \left\|\rho-{\1\over d}\right\|_1 - \left\|{\lambda \over k} - \bm{u}_d\right\|_1.
    \end{align}
    Thus, we have
    \begin{align}
        \Tr[P_{k,d}(\Delta)\rho^{\otimes k}]
        &= \Prob\qty[\left\|{\lambda\over k} - \bm{u}_d\right\|_1\leq a_d \sqrt{\Delta \over k}]\\
        &\leq \Prob\qty[\left\|{\lambda \over k}-\bm{p}\right\|_1 \geq \left\|\rho-{\1\over d}\right\|_1-a_d\sqrt{\Delta \over k}]\\
        &\leq \exp[-{k\over2d^2}\qty(\|\bm p-\bm u_d\|_1-a_d\sqrt{\Delta\over k}-{d\over\sqrt{k}})_+^2],
    \end{align}
    which proves Eq.~\eqref{eq:prefactor-free-offcenter-kernel}.

    We conclude the proof by showing Eq.~\eqref{eq:prefactor-free-compressed-dimension}.
    Suppose $\chi_\Delta$ is the character of the representation $\overline{\pi_d}\otimes \pi_d$ restricted to the subspace projected by $P_{k,d}(\Delta)$.
    Then, for $\sigma\in \mfS_k$, we have
    \begin{align}
        \chi_\Delta(\sigma) = \sum_{\nu\vdash k} c_\nu(\Delta) \chi_\nu(\sigma).
    \end{align}
    Due to the orthogonality of characters, we have
    \begin{align}
        D_\Delta = \sum_{\nu\vdash k} c_\nu(\Delta)^2 = {1\over k!}\sum_{\sigma\in\mfS_k} \chi_\Delta(\sigma)^2.
    \end{align}
    The character $\chi_\Delta$ can be expressed as
    \begin{align}
        \chi_\Delta(\sigma)
        &= \Tr[P_{k,d}(\Delta)\overline\pi_d(\sigma)] \cdot \Tr[\pi_d(\sigma)].
    \end{align}
    Then, we have
    \begin{align}
        D_\Delta = {1\over k!}\sum_{\sigma\in\mfS_k} \Tr[P_{k,d}(\Delta)\overline\pi_d(\sigma)]^2 \Tr[\pi_d(\sigma)]^2.
    \end{align}
    On the other hand, by taking the Haar-random unit vector $\ket{\psi}$ on $\mcA\otimes\mcB\otimes\mcC\otimes\mcD$ with $\mcA\simeq \mcB\simeq \mcC\simeq \mcD\simeq \CC^d$, we have
    \begin{align}
        \int \dd \psi \ketbra{\psi}^{\otimes k} = {1\over k! \binom{k+d^4-1}{d^4-1}} \sum_{\sigma\in \mfS_k} \overline{\pi}_d(\sigma) \otimes \overline{\pi}_d(\sigma) \otimes \pi_d(\sigma) \otimes \pi_d(\sigma),
    \end{align}
    and thus
    \begin{align}
        D_\Delta = \binom{k+d^4-1}{d^4-1}\int\dd\psi\,
        \Tr[P_{k,d}(\Delta)^{\otimes2}\rho_{\mcA\mcB}^{\otimes k}].
        \label{eq:permutation-compressed-dimension-haar-average}
    \end{align}
    The Cauchy--Schwarz inequality applied to Eq.~\eqref{eq:permutation-compressed-dimension-haar-average} provides
    \begin{align}
        {D_\Delta\over\binom{k+d^4-1}{d^4-1}}
        &= \int\dd\psi \Tr[P_{k,d}(\Delta)^{\otimes2}\rho_{\mcA\mcB}^{\otimes k}]\\
        &= \int\dd\psi \Tr{\qty[(P_{k,d}(\Delta) \otimes \1)\sqrt{\rho_{\mcA\mcB}^{\otimes k}}]^\dagger \qty[(\1 \otimes P_{k,d}(\Delta))\sqrt{\rho_{\mcA\mcB}^{\otimes k}}]}\\
        &\leq \int\dd\psi \sqrt{\Tr[P_{k,d}(\Delta)\rho_\mcA^{\otimes k}]\Tr[P_{k,d}(\Delta)\rho_\mcB^{\otimes k}]}.
        \label{eq:prefactor-free-kernel-haar-integral}
    \end{align}
    For a Hilbert space $X$, we define $\St(X)\coloneqq\{\rho\in\mcL(X):\rho\succeq0,\ \Tr\rho=1\}$.
    As shown in Ref.~\cite[Eq.~(3.5)]{zyczkowski2001induced}, if $\ket{\psi}$ obeys the Haar measure on $\CC^{d^4}$, the reduced density operator $\rho_{\mcA\mcB}$ obeys the uniform distribution over $\St(\mcA\otimes \mcB)$ with respect to the Hilbert--Schmidt measure induced by the Hilbert--Schmidt norm $\|A\|_2\coloneqq\sqrt{\Tr(A^\dagger A)}$.
    Since $\St(\mcA\otimes \mcB)$ is compact, the total volume is a constant depending only on $d$.
    We define $\su(X)\coloneqq \{H\in \mcL(X)\mid H =H^\dagger, \Tr H = 0\}$ for a Hilbert space $X$.
    Then, $\St(X)$ can be embedded in $\su(X)$, by $\iota: \St(X) \ni \rho \mapsto \rho-\1_X/\dim X\in \su(X)$.
    We consider a surjective linear map $L: \su(\mcA\otimes \mcB) \to \su(\mcA) \times \su(\mcB)$ defined by
    \begin{align}
        L(H)\coloneqq (\Tr_{\mcB} H, \Tr_{\mcA} H).
    \end{align}
    For every Borel set $E\subseteq\su(\mcA)\times\su(\mcB)$, the coarea formula~\cite{federer1959curvature} provides
    \begin{align}
    \Prob\left[L(\iota(\rho_{\mcA\mcB}))\in E\right]
    =\frac{1}{J_L\operatorname{vol}(\St(\mcA\otimes\mcB))}
    \int_E \dd H_\mcA\dd H_\mcB\,
    \operatorname{vol}_{q_d}\!\left(
    L^{-1}(H_\mcA,H_\mcB)\cap\Im\iota\right),
    \end{align}
    where $\vol_{q_d}$ is the $q_d$-dimensional Hausdorff measure for $q_d\coloneqq \dim \ker L = (d^2-1)^2$, and $J_L$ is the coarea Jacobian given by
    \begin{align}
        J_L\coloneqq \sqrt{\det(LL^\dagger)} = d^{d^2-1},
    \end{align}
    using the adjoint map $L^\dagger: \su(\mcA)\times \su(\mcB) \to \su(\mcA\otimes \mcB)$ given by $L^\dagger(H_\mcA, H_\mcB) = H_\mcA\otimes \1_\mcB + \1_\mcA\otimes H_\mcB$.
    Since $\|\iota(\rho)\|_2^2 = \Tr \rho^2 - d^{-2}<1$ holds, the image $\Im \iota$ is in the unit ball of $\su(\mcA\otimes \mcB)$.
    Thus, the intersection $L^{-1}(H_\mcA, H_\mcB) \cap \Im \iota$ is contained in the $q_d$-dimensional unit ball, and its Hausdorff measure is bounded by the volume of the $q_d$-dimensional unit ball, which is a constant depending only on $d$.
    Thus, we have
    \begin{align}
        &\int\dd\psi \sqrt{\Tr[P_{k,d}(\Delta)\rho_\mcA^{\otimes k}]\Tr[P_{k,d}(\Delta)\rho_\mcB^{\otimes k}]}\notag\\
        &\leq O_d\qty(\int_{\iota(\St(\mcA))\times\iota(\St(\mcB))}\dd H_\mcA\dd H_\mcB
        \sqrt{\Tr[P_{k,d}(\Delta)(H_\mcA+\1/d)^{\otimes k}]\Tr[P_{k,d}(\Delta)(H_\mcB+\1/d)^{\otimes k}]})\\
        &\leq O_d\qty(\int_{\su(d)}\dd H
       \exp\left\{-{k\over 4d^2}
        \left[\left\|H\right\|_1
        -(a_d+d)\sqrt{\Delta\over k}\right]_+^2\right\})^2\\
        &\leq O_d\qty[\qty[(a_d+d)\sqrt{\Delta\over k}]^{d^2-1} + \int_0^\infty \dd r \exp[-kr^2 \over 4d^2] \qty[r+(a_d+d)\sqrt{\Delta\over k}]^{d^2-2}]^2\\
        &\leq O_d\qty({\Delta \over k})^{d^2-1},
        \label{eq:prefactor-free-offcenter-kernel-integral}
    \end{align}
    where we use Eq.~\eqref{eq:prefactor-free-offcenter-kernel} in the second line, and $\vol(\{H\in \su(d) \mid \|H\|_1\leq r\}) = \Theta_d(r^{d^2-1})$ in the third line.
    From Eqs.~\eqref{eq:prefactor-free-kernel-haar-integral} and \eqref{eq:prefactor-free-offcenter-kernel-integral}, we obtain
    \begin{align}
        D_\Delta\leq \binom{k+d^4-1}{d^4-1} \cdot O_d\qty(\qty({\Delta \over k})^{d^2-1}) = O_d(k^{d^4-d^2}\Delta^{d^2-1}),
    \end{align}
    which concludes the proof.
\end{proof}

\begin{theorem}[Learning permutation-covariant channels in Choi trace distance via Choi state compression]
\label{thm:learning-permutation-covchannel-compression}
Suppose $d\geq2$, $k\geq1$, $0<\varepsilon_\mathrm{tr}\leq1$, $0<\eta\leq1/2$, and $\Lambda$ is an unknown permutation-covariant channel $\Lambda\in \PCovChan_{k,d}$.
Then there exists a parallel protocol that, given $n$ calls to $\Lambda$, outputs a classical description of a channel $\widehat\Lambda$ such that
\begin{align}
    \label{eq:choi-trace-norm-compression-error-bound}
    \Prob\qty[d_\mathrm{tr}(\rho_\Lambda,\rho_{\widehat\Lambda})\geq \varepsilon_\mathrm{tr}]\leq \eta
\end{align}
with
\begin{align}
    n = O_d\qty({k^{d^4-d^2} + \log \eta^{-1} \over \varepsilon_\mathrm{tr}^2}).
\end{align}
The gate complexity of the protocol is given by $O_d(\poly(k, \varepsilon^{-1}_\mathrm{tr}, \log \eta^{-1}))$.
\end{theorem}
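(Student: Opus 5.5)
We plan to reduce the statement to covariant state tomography (Thm.~\ref{thm:state-learning-optimal}) applied to a compressed Choi state, with Lemma~\ref{lem:prefactor-free-schur-window} controlling the compression. Fix $\Delta\coloneqq C_d\log(8/\varepsilon_\mathrm{tr})$ for a suitable constant $C_d\geq1$. With $n$ parallel calls we prepare $n$ copies of the normalized Choi state $\rho_\Lambda$ by applying $\Lambda^{\otimes n}$ to halves of maximally entangled states. Each copy then undergoes the two-outcome measurement $\{\overline{P_{k,d}(\Delta)}\otimes\1_\mcO,\1-\overline{P_{k,d}(\Delta)}\otimes\1_\mcO\}$, which is implementable by the Schur transform on the reference register $\overline{\mcI}$. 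Since $\Tr_\mcO\rho_\Lambda=\1/d^k$, the rejection probability per copy is $p_{\mathrm{rej}}=1-\Tr P_{k,d}(\Delta)/d^k\leq e^{-c_d^{\mathrm{SW}}\Delta}$ by Eq.~\eqref{eq:prefactor-free-uniform-window-tail}. This is a known constant independent of $\Lambda$, so the number of accepted copies is $\mathrm{Bin}(n,1-p_{\mathrm{rej}})$. A Chernoff bound then guarantees at least $n/2$ accepted copies except with probability $\eta/4$, once $n=\Omega(\log\eta^{-1})$.

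The accepted copies are i.i.d.\ copies of $\sigma\coloneqq(\overline{P}\otimes\1)\rho_\Lambda(\overline{P}\otimes\1)/(1-p_{\mathrm{rej}})$, a state on $\mcH_\Delta$ invariant under the restriction of $\overline\pi_d\otimes\pi_d$. Its invariant purification space has dimension $D_\Delta$ by Eq.~\eqref{eq:permutation-random-purification-support}. We therefore apply Thm.~\ref{thm:state-learning-optimal} to this restricted representation, using the random purification of Sec.~\ref{sec:random-purification-dilation} together with the Hayashi measurement in dimension $D_\Delta$. This returns $\widehat\sigma$ with $d_\mathrm{tr}(\sigma,\widehat\sigma)\leq\varepsilon_\mathrm{tr}/4$ except with probability $\eta/4$, using $O((D_\Delta+\log\eta^{-1})/\varepsilon_\mathrm{tr}^2)$ copies. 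Lemma~\ref{lem:prefactor-free-schur-window} gives $D_\Delta=O_d(k^{d^4-d^2}\Delta^{d^2-1})$.

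Next we build the channel estimate. Set $\widetilde J\coloneqq(1-p_{\mathrm{rej}})\widehat\sigma$, which estimates the compressed Choi operator, and then project $\widetilde J$ back to a valid Choi state of a covariant channel. One option is to take $\widehat\rho=\widetilde J+(\1-\overline{P})/d^k\otimes\omega$ with a fixed covariant state $\omega$, followed by a trace-preservation correction. The cleaner option is to output the minimizer of the trace distance to $\widetilde J$ over covariant Choi states, computed classically; since $\rho_\Lambda$ is feasible, this costs at most a factor two. The projection error $\|\rho_\Lambda-(\overline P\otimes\1)\rho_\Lambda(\overline P\otimes\1)\|_1$ is bounded by gentle measurement, $O(\sqrt{p_{\mathrm{rej}}})\leq\varepsilon_\mathrm{tr}/8$ for our choice of $\Delta$. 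The triangle inequality then gives $d_\mathrm{tr}(\rho_\Lambda,\rho_{\widehat\Lambda})\leq\varepsilon_\mathrm{tr}$. The gate complexity follows from the efficient Schur transforms, the permutation random purification of Sec.~\ref{sec:random-purification-efficiency}, Thm.~\ref{thm:efficient-hayashi-measurement}, and a polynomial-size classical convex program over the $\poly(k)$ multiplicity blocks.

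The main obstacle is removing the $\Delta^{d^2-1}=\polylog(1/\varepsilon_\mathrm{tr})$ factor, which the claimed bound does not contain. A fixed $\Delta$ cannot make $p_{\mathrm{rej}}$ small relative to $\varepsilon_\mathrm{tr}$; the polylog can only be avoided by not paying gentle-measurement error on the discarded part at all. We would first try to estimate the rejected part separately at lower resolution. The tail carries weight $p_{\mathrm{rej}}$, so the required accuracy there is relaxed by this weight. We would then use a dyadic decomposition of windows $\Delta_j=2^j\Delta_0$ with accuracy budgets $\varepsilon_j\propto\varepsilon_\mathrm{tr}\,e^{c\Delta_j/2}\cdot2^{-j}$. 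This makes the total query cost $\sum_j D_{\Delta_j}/\varepsilon_j^2=O_d(k^{d^4-d^2}/\varepsilon_\mathrm{tr}^2)$, because $e^{-c\Delta_j}\Delta_j^{d^2-1}$ is summable. Verifying that these per-shell estimates combine without cross terms, which should follow from the block diagonality of $\rho_\Lambda$ in $\mu$ on the reference side, is the step we expect to require the most care.
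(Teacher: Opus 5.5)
Your first stage is sound, but on its own it proves only the weaker bound $n=O_d\bigl((k^{d^4-d^2}\log^{d^2-1}(8/\varepsilon_\mathrm{tr})+\log\eta^{-1})/\varepsilon_\mathrm{tr}^2\bigr)$. The step you propose for removing the polylogarithm rests on a false claim: $\rho_\Lambda$ is not block diagonal in the input Young label on $\overline{\mcI}$.

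Covariance only says that $J_\Lambda$ commutes with the diagonal action $\overline{\pi}_d(\sigma)\otimes\pi_d(\sigma)$. The projector $\overline{P}_\mu\otimes\1_\mcO$ is instead built from $\overline{\pi}_d(\sigma)\otimes\1_\mcO$, so only the marginal $\Tr_\mcO\rho_\Lambda=\1/d^k$ is forced to be block diagonal. The identity channel is a counterexample. Take shell projectors $E_1\coloneqq\overline{P_{k,d}(1)}$ and $E_j\coloneqq\overline{P_{k,d}(j)}-\overline{P_{k,d}(j-1)}$, weights $p_j\coloneqq\Tr E_j/d^k$, and $\ket{\Phi_j}\coloneqq p_j^{-1/2}(E_j\otimes\1)\ket{\Phi^+_{d^k}}$ for the normalized maximally entangled state $\ket{\Phi^+_{d^k}}$. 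The Choi state of the identity channel is the pure state $\ket{\Phi^+_{d^k}}\bra{\Phi^+_{d^k}}$ with $\ket{\Phi^+_{d^k}}=\sum_j\sqrt{p_j}\,\ket{\Phi_j}$. Hence every block $(E_i\otimes\1)\rho_\Lambda(E_j\otimes\1)$ with $p_ip_j>0$ is nonzero. The shell-pinched state is at trace distance at least $1-\sum_jp_j^2$ from $\rho_\Lambda$; this does not depend on $\varepsilon_\mathrm{tr}$ and is positive once two shells are occupied.

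In general the cross-shell blocks have trace norm up to $\sqrt{p_ip_j}$. Discarding those between your bulk window $\Delta_0$ and the outer shells costs of order $\sqrt{p_{\mathrm{rej}}(\Delta_0)}$, which forces $\Delta_0\gtrsim\log(1/\varepsilon_\mathrm{tr})$ and brings the polylogarithm back. Worse, a projective shell measurement on copies of $\rho_\Lambda$ destroys these coherences, so no processing of the post-measurement copies can recover them.

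The missing idea is to reach the coherences through coherent probe inputs whose accuracy requirement is relaxed by the coherence weight. The paper's proof never measures shells. It feeds the permutation-invariant inputs $\ket{\Phi_j}$ and $\ket{\Phi_{ij}^\pm}\coloneqq(\ket{\Phi_i}\pm\ket{\Phi_j})/\sqrt2$ into $\id\otimes\Lambda$. The outputs $\sigma_j$ and $\sigma_{ij}^\pm$ are invariant states supported in the window $\Delta=j$, with purification dimension $D_j$ since $E_i+E_j\preceq\overline{P_{k,d}(j)}$. The polarization identity $\rho_\Lambda=\sum_jp_j\sigma_j+\sum_{i<j}\sqrt{p_ip_j}\,(\sigma_{ij}^+-\sigma_{ij}^-)$ then reconstructs the full Choi state, cross-shell blocks included.

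Each probe state $\sigma_a$, with $a$ ranging over the set $\mathsf{A}_k$ of $O_d(k^2)$ labels and $p_a$ its coefficient in this identity, is learned by Thm.~\ref{thm:state-learning-optimal} to accuracy $t_a\propto(D_a+\ell)^{1/3}\abs{p_a}^{-1/3}$, where $\ell\coloneqq\log(\abs{\mathsf{A}_k}/\eta)$. The total cost is $\bigl[\sum_a(D_a+\ell)^{1/3}\abs{p_a}^{2/3}\bigr]^3/\varepsilon_\mathrm{tr}^2$. This is $O_d\bigl((k^{d^4-d^2}+\log\eta^{-1})/\varepsilon_\mathrm{tr}^2\bigr)$, because $D_j=O_d(k^{d^4-d^2}j^{d^2-1})$ and $p_j\leq e^{-c_d^{\mathrm{SW}}(j-1)}$ make $\sum_{i\leq j}j^{(d^2-1)/3}e^{-c_d^{\mathrm{SW}}(i+j)/3}$ converge.

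Your weighted multiscale budgeting is the right instinct, and the final trace-distance projection (an SDP over the commutant) finishes as you describe. Without superposition probes of this kind, however, the argument does not reach the stated bound.
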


\begin{proof}
We first provide the protocol, then analyze its error, query complexity, and gate complexity.
\paragraph{Protocol.}
We define a projector $E_j$ for $j\geq 1$ by
\begin{align}
    E_1 \coloneqq \overline{P_{k,d}}(1), \qquad E_j \coloneqq \overline{P_{k,d}}(j)-\overline{P_{k,d}}(j-1) \quad (j\geq 2),
\end{align}
and define $r_j\coloneqq\Tr E_j$.
Then $\sum_{j}E_j=\1$, and $r_j = 0$ for $j\geq 2+\lceil{4k\over d(d+1)}\rceil$.
For each $j$ satisfying $r_j>0$, define the normalized state
\begin{align}
    \ket{\Phi_j}
    \coloneqq {(E_j \otimes \1)\ket{\Phi_{d^k}^+} \over \|(E_j \otimes \1)\ket{\Phi_{d^k}^+}\|_2} = \sqrt{d^k \over r_j} (E_j \otimes \1)\ket{\Phi_{d^k}^+},
\end{align}
where $\ket{\Phi_{d^k}^+} \coloneqq {1\over \sqrt{d^k}}\sum_{i=1}^{d^k}\ket{i}\otimes\ket{i}$ is the maximally entangled state on $(\CC^d)^{\otimes k}\otimes(\CC^d)^{\otimes k}$.
The maximally entangled state then satisfies
\begin{align}
    \ket{\Phi_{d^k}^+} = \sum_{j} \sqrt{r_j \over d^k} \ket{\Phi_j}.
\end{align}
We also define the superposition states for $i<j$ by
\begin{align}
    \ket{\Phi_{ij}^{\pm}} \coloneqq {\ket{\Phi_i} \pm \ket{\Phi_j} \over \sqrt{2}}.
\end{align}
We define $\mathsf{A}_k \coloneqq \{i\mid r_i>0\}\cup \{(i,j,\pm), i<j, r_i>0, r_j>0\}$, and we make a slight abuse of notation to write $\ket{\Phi_a}$ for $a = (i,j,\pm)$ as $\ket{\Phi_{ij}^{\pm}}$, similarly for $\sigma_a$, $\widehat{\sigma}_a$, and $D_a$ below.
The protocol consists of the following four steps:
\begin{enumerate}
    \item \emph{Preparation of the probe states.}
    We define the probe states
    \begin{align}
        \sigma_j \coloneqq \qty(\id \otimes \Lambda)\qty(\ketbra{\Phi_j}),\qquad
        \sigma_{ij}^{\pm} \coloneqq \qty(\id \otimes \Lambda)\qty(\ketbra{\Phi_{ij}^{\pm}}).
    \end{align}
    For each $a\in\mathsf{A}_k$, prepare $n_a$ copies of $\sigma_a$, where $n_a$ is specified below; all preparations use parallel calls to $\Lambda$.
    \item \emph{Estimation of the probe states.} For every $a$, perform quantum state tomography on the $n_a$ copies of $\sigma_a$ to obtain an estimate $\widehat{\sigma}_a$ satisfying
    \begin{align}
        \label{eq:permutation-multiscale-probe-state-estimation}
        \Prob\left[d_\mathrm{tr}(\sigma_a, \widehat{\sigma}_a) \leq \varepsilon_a\right] \geq 1 - \eta_a,
    \end{align}
    where $\varepsilon_a$ and $\eta_a$ are specified below.
    \item \emph{Reconstruction of the Choi state.} By definition of $\sigma_a$, we have
    \begin{align}
        \rho_\Lambda = \sum_{a\in \mathsf{A}_k} p_a \sigma_a,
    \end{align}
    with $p_j\coloneqq {r_j \over d^k}$ and $p_{ij}^{\pm}\coloneqq \pm {\sqrt{r_i r_j} \over d^k}$.
    We obtain an estimate $\widehat{\rho}_\Lambda'$ of the Choi state $\rho_\Lambda$ by replacing each $\sigma_a$ with its estimate $\widehat{\sigma}_a$:
    \begin{align}
        \widehat{\rho}_\Lambda' = \sum_{a\in \mathsf{A}_k} p_a \widehat{\sigma}_a.
    \end{align}
    \item \emph{CPTP completion.} We obtain the final estimate $\widehat{\Lambda}$ by
    \begin{align}
        \label{eq:permutation-multiscale-cptp-completion}
        \widehat{\Lambda} \in \argmin_{\widehat{\Lambda}\in\PCovChan_{k,d}} d_\mathrm{tr}(\rho_{\widehat{\Lambda}}, \widehat{\rho}_\Lambda').
    \end{align}
\end{enumerate}

\paragraph{Error analysis.}
For each $j$, the state $\sigma_j$ is permutation invariant and its canonical purification lies in the support~\eqref{eq:permutation-random-purification-support} with $\Delta=j$.
For $i<j$, the same holds for $\sigma_{ij}^{\pm}$ since $E_i+E_j\preceq P_{k,d}(j)$ holds.
Applying Thm.~\ref{thm:state-learning-optimal} to the representation restricted to $\mcH_j$, as in the known-support reduction in Sec.~\ref{sec:state-optimal-learning}, therefore provides
\begin{align}
    \label{eq:permutation-multiscale-probe-state-sample-complexity}
    n_a = O_d\qty[{D_a + \log \eta_a^{-1} \over \varepsilon_a^2}]
\end{align}
to achieve Eq.~\eqref{eq:permutation-multiscale-probe-state-estimation}, where $D_a$ is defined using Eq.~\eqref{eq:permutation-fidelity-compressed-pure-estimation} by
\begin{align}
    D_a\coloneqq D_{\Delta =j}
\end{align}
for $a=j$ and $a=(i,j,\pm)$ with $i<j$.
By the triangle inequality, we have
\begin{align}
    d_\mathrm{tr}(\rho_\Lambda, \widehat{\rho}_\Lambda')
    \leq \sum_{a\in\mathsf{A}_k} \abs{p_a} d_\mathrm{tr}(\sigma_a, \widehat{\sigma}_a),
\end{align}
where the number of summands is $\abs{\mathsf{A}_k} \leq (1+\lfloor 4k/a_d^2 \rfloor)^2 = O_d(k^2)$.
We define
\begin{align}
    t_a\coloneqq{\varepsilon_\mathrm{tr}\over 4}
    {\qty[D_a+\log\qty(\abs{\mathsf{A}_k}\over\eta)]^{1/3}\abs{p_a}^{-1/3}\over \sum_{b\in\mathsf{A}_k} \qty[D_b+\log\qty(\abs{\mathsf{A}_k}\over\eta)]^{1/3}\abs{p_b}^{2/3}},
\end{align}
and we take $\varepsilon_a = t_a$, $\eta_a = \eta/\abs{\mathsf{A}_k}$, and choose $n_a$ by Eq.~\eqref{eq:permutation-multiscale-probe-state-sample-complexity} when $t_a\leq 1/2$.
Otherwise, we take $n_a = 0$ and $\widehat{\sigma}_a$ is taken to be any fixed $G$-covariant quantum state.
When $d_\mathrm{tr}(\sigma_a, \widehat{\sigma}_a)\leq t_a$ for all $a\in \mathsf{A}_k$ satisfying $t_a\leq 1/2$, we have
\begin{align}
    d_\mathrm{tr}(\rho_\Lambda, \widehat{\rho}_\Lambda')
    \leq \sum_{t_a\leq 1/2}\abs{p_a}t_a
    +\sum_{t_a>1/2}\abs{p_a}
    \leq 2\sum_{a\in\mathsf{A}_k}\abs{p_a} t_a
    ={\varepsilon_\mathrm{tr}\over 2}.
\end{align}
Thus, due to the union bound, we have
\begin{align}
    \Prob\qty[d_\mathrm{tr}(\rho_\Lambda, \widehat{\rho}_\Lambda')>{\varepsilon_\mathrm{tr}\over 2}]
    \leq\sum_{t_a\leq 1/2}\Prob\qty[d_\mathrm{tr}(\sigma_a,\widehat\sigma_a)>t_a]
    \leq \sum_{t_a\leq 1/2} \eta_a \leq \eta.
\end{align}
Since the true channel $\Lambda$ satisfies $\Lambda\in\PCovChan_{k,d}$, the triangle inequality and Eq.~\eqref{eq:permutation-multiscale-cptp-completion} yield
\begin{align}
    d_\mathrm{tr}(\rho_\Lambda, \rho_{\widehat\Lambda})
    \leq d_\mathrm{tr}(\rho_\Lambda, \widehat{\rho}_\Lambda')
    + d_\mathrm{tr}(\widehat{\rho}_\Lambda', \rho_{\widehat\Lambda})
    \leq 2 d_\mathrm{tr}(\rho_\Lambda, \widehat{\rho}_\Lambda').
\end{align}
Thus, we obtain the guarantee of the error probability shown in Eq.~\eqref{eq:choi-trace-norm-compression-error-bound}.
The query complexity is given by
\begin{align}
    n
    &= \sum_{t_a\leq 1/2} n_a \\
    &\leq O_d\qty[{\qty[\sum_{a\in\mathsf{A}_k}
    \qty[D_a+\log(\abs{\mathsf{A}_k}/\eta)]^{1/3}\abs{p_a}^{2/3}]^3
    \over\varepsilon_\mathrm{tr}^2}]\\
    &\leq O_d\qty[{\qty[\sum_{a\in\mathsf{A}_k}
    \qty[D_a^{1/3}+\log^{1/3}(\abs{\mathsf{A}_k}/\eta)]\abs{p_a}^{2/3}]^3
    \over\varepsilon_\mathrm{tr}^2}],
\end{align}
where we use $(x+y)^{1/3}\leq x^{1/3}+y^{1/3}$ for $x,y\geq 0$ in the last line.
Lemma~\ref{lem:prefactor-free-schur-window} shows
\begin{align}
    \sum_{a\in \mathsf{A}_k}D_a^{1/3}\abs{p_a}^{2/3}
    &= \sum_{j} D_j^{1/3} p_j^{2/3} + 2 \sum_{i<j} D_j^{1/3} p_i^{1/3} p_j^{1/3}\\
    &\leq O_d\left(k^{(d^4-d^2)/3}\right)
    \left[\sum_{j\geq1}j^{(d^2-1)/3}e^{-2c_d^{\mathrm{SW}}j/3}
    +2\sum_{1\leq i<j}j^{(d^2-1)/3}e^{-c_d^{\mathrm{SW}}(i+j)/3}\right]\\
    &=O_d\!\left(k^{(d^4-d^2)/3}\right),\\
    \sum_{a\in\mathsf{A}_k}\abs{p_a}^{2/3}
    &= \sum_{j} p_j^{2/3} + 2 \sum_{i<j} p_i^{1/3} p_j^{1/3}\\
    &\leq \sum_j O_d(e^{-2c_d^{\mathrm{SW}}j/3}) + 2 \sum_{i<j} O_d(e^{-c_d^{\mathrm{SW}}(i+j)/3})\\
    &= O_d(1).
\end{align}
Thus, we have
\begin{align}
    n
    =O_d\qty[{k^{d^4-d^2}+\log\eta^{-1}\over\varepsilon_\mathrm{tr}^2}].
\end{align}

\paragraph{Gate complexity.}
Let $\delta$ denote the total compilation error, with preparation errors measured in trace distance and unitary-approximation errors in operator norm.
Similarly to the probe state preparation in Thm.~\ref{thm:covariant-isometry-tomography-high-probability}, preparing each copy of $\sigma_a$ to error $\delta/(2n)$ requires $O_d(\poly(k,\log(n/\delta)))$ gates uniformly over $a\in\mathsf{A}_k$, since $r_j\leq d^k$, the number of sectors $L=\abs{\young{d}{k}}$ is polynomial in $k$ at fixed $d$, and the Schur transform has gate complexity $N_\mathrm{Sch}=O_d(\poly(k))$.
In the Schur basis for paired input--output sites of local dimension $d^2$, the projector $\overline{P_{k,d}(\Delta)}\otimes\1_{\mcO}$ acts only on multiplicity spaces of polynomial dimension at fixed $d$.
Its multiplicity-block entries can be computed by contracting the sequential Clebsch--Gordan formulas~\cite{bacon2005quantum,bacon2006efficient}, fixing a paired-site tableau path and summing matching input tableau paths in the ket and bra.
Each step retains only a constant number of intermediate Young labels and $\mathrm{U}(d)$ or $\mathrm{U}(d^2)$ multiplicity indices, all with polynomial ranges, so these entries can be computed to precision $\delta$ in $O_d(\poly(k,\log\delta^{-1}))$ time.
Using the spectral gap between $0$ and $1$ to identify each projector block's range stably from sufficiently accurate entries, we can compute an orthonormal range basis, complete it to a unitary, and synthesize the corresponding restricted-support basis change to operator-norm error $\delta$ using gate complexity $O_d(\poly(k,\log\delta^{-1}))$ including classical computation.
Together with Prop.~\ref{prop:coherent-wreath-schur-complexity} and the implementation following Thm.~\ref{thm:state-learning-optimal} in Sec.~\ref{sec:state-optimal-learning}, this gives gate complexity $O_d(\poly(k,n_a,\log(|\mathsf{A}_k|/\delta)))$ for estimation of each $\sigma_a$, with total unitary-approximation error at most $\delta/(2|\mathsf{A}_k|)$ in its subsequent circuit.
Summing these errors and the $n$ preparation errors bounds the accumulated compilation error by $\delta$.
For each $a$, use statistical accuracy $\varepsilon_a/2$ and failure probability $\eta_a/4$, and finite-readout vector error $\varepsilon_a/2$ with failure probability $\eta_a/4$, as in Eq.~\eqref{eq:state-learning-finite-implementation}.
Taking $\delta\leq\eta/4$, the union bound gives total failure probability at most $\eta/2+\delta\leq\eta$, while these constant-factor changes in the statistical targets and logarithmic compilation overheads preserve the stated sample and gate complexities and the accuracy in Eq.~\eqref{eq:choi-trace-norm-compression-error-bound}.
The CPTP completion in Eq.~\eqref{eq:permutation-multiscale-cptp-completion} can be expressed as the following semidefinite program (SDP):
\begin{align}
\begin{split}
    \min \; & {1\over 2} \Tr(X_+ +X_-)\\
    \text{s.t.} \; & 0 \preceq \rho_{\widehat{\Lambda}}, X_+, X_- \in \Comm(\overline{\pi}_d \otimes \pi_d) \cong \Comm(\pi_{d^2}),\\
    & \rho_{\widehat{\Lambda}}-\widehat{\rho}_\Lambda' = X_+ - X_-, \qquad \Tr_\mcO \rho_{\widehat{\Lambda}} = {\1_{\overline{\mcI}} \over d^k}.
\end{split}
\end{align}
This SDP can be efficiently constructed as shown in Ref.~\cite{bergh2026permutation}.
Since
\begin{align}
    \dim \Comm(\overline{\pi}_d \otimes \pi_d) = \sum_{\lambda\vdash_{d^2} k} \qty[m_\lambda^{(d^2)}]^2 = O_d(k^{d^4-1})
\end{align}
holds, the SDP has $O_d(\poly(k))$ real scalar variables.
Thus, the SDP can be solved in $O_d(\poly(k)\log \varepsilon^{-1})$ arithmetic operations with additive accuracy $\varepsilon$ using the interior-point method~\cite{vandenberghe1996semidefinite}.
Running the preceding probe-estimation and reconstruction steps with target accuracy $\varepsilon_\mathrm{tr}/2$ and taking the SDP objective error to be at most $\varepsilon=\varepsilon_\mathrm{tr}/2$ gives $d_\mathrm{tr}(\rho_\Lambda,\rho_{\widehat\Lambda})\leq 2d_\mathrm{tr}(\rho_\Lambda,\widehat{\rho}_\Lambda')+\varepsilon\leq\varepsilon_\mathrm{tr}$, preserving the guarantee in Eq.~\eqref{eq:choi-trace-norm-compression-error-bound} and the stated asymptotic complexities.
Thus, the total gate complexity including the classical post-processing is given by $O_d(\poly(k, \varepsilon^{-1}_\mathrm{tr}, \log \eta^{-1}))$.
\end{proof}

\subsection{Lower bounds}
\label{sec:lower-bound}

We show the following theorem on the lower bound for learning $G$-covariant channels (see Appendix~\ref{appendix:lower-bound-channel} for the proof).

\begin{theorem}[Lower bound for learning covariant quantum channels in diamond distance]
\label{thm:permutation-channel-strong-diamond-lower}
For $m_\lambda$ and $M_\lambda$ defined in Eqs.~\eqref{eq:decomposition-multiplicity-1} and \eqref{eq:decomposition-multiplicity-2}, we define
\begin{align}
    \label{eq:parameters-for-diamond-lower-bound}
    D_G \coloneqq \sum_\lambda m_\lambda M_\lambda,
    \qquad
    D''_G \coloneqq \sum_\lambda m_\lambda(M_\lambda-m_\lambda),
\end{align}
which are the dimension of the commutant $\Comm(\overline{U}_1 \otimes U_2)$ and the real affine dimension of the set of $G$-covariant channels, respectively.
Suppose $D_G>1$ and $D''_G>0$, and we take $\kappa\in (0,1)$ such that $D''_G\geq \kappa D_G$.
Then, there exists a constant $\varepsilon_0(\kappa)>0$, depending only on $\kappa$, with the following property.
For any $0<\varepsilon_\diamond\leq\varepsilon_0(\kappa)$, if there exists a quantum circuit using $n$ queries to an unknown $\Lambda\in\CovChan_G(\mcI, \mcO)$ that outputs an estimate $\widehat\Lambda$ satisfying
\begin{align}
    \Prob\qty[d_\diamond(\Lambda,\widehat{\Lambda})\leq\varepsilon_\diamond]\geq {2\over3}
\end{align}
for all $\Lambda\in\CovChan_G(\mcI, \mcO)$, then 
\begin{align}
    n = \Omega_\kappa\qty(D_G \over \varepsilon_\diamond^{\kappa/2})
\end{align}
holds.
\end{theorem}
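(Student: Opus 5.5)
The plan is to combine a volumetric packing of $\CovChan_G(\mcI,\mcO)$ in diamond distance with a dimension-counting bound on how many channels any $n$-query circuit can tell apart. The exponent $\kappa/2$ should then come from comparing the two logarithms. Throughout I use that $\CovChan_G(\mcI,\mcO)$ is a convex body of real affine dimension $D''_G$ inside the commutant $\Comm(\overline U_1\otimes U_2)$, whose complex dimension is $D_G$.

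\emph{Step 1 (packing).} I would construct $N$ channels $\Lambda_1,\dots,\Lambda_N\in\CovChan_G(\mcI,\mcO)$ that are pairwise at diamond distance $>2\varepsilon_\diamond$, with
\begin{align}
\log N\geq \tfrac{1}{2}D''_G\log(c/\varepsilon_\diamond)
\end{align}
for a universal constant $c$. To do this I pick an interior reference point $\Lambda_0$, namely the covariant channel whose Choi operator is proportional to $\1$; it is covariant and lies in the relative interior. Around $\Lambda_0$ I parametrise the channels by a ball of Hermitian, trace-preservation-compatible perturbations in the commutant. The standard volumetric argument then gives roughly $(r/\varepsilon')^{D''_G}$ points at Euclidean separation $\varepsilon'$ inside a ball of radius $r$. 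I then have to convert Euclidean separation into diamond separation. Diamond distance is at least Choi trace distance, but the conversion costs dimension-dependent factors (powers of $d_\mcI$ and of $\min_\lambda m_\lambda$) and also depends on the radius $r$. My plan is to take the perturbation ball of radius about $\sqrt{\varepsilon_\diamond}$. This absorbs all of these factors into the logarithm once $\varepsilon_\diamond\leq\varepsilon_0(\kappa)$, at the price of halving it, which is where the $1/2$ comes from. This is the step I expect to be the main obstacle. If the dimension factors are too large to be absorbed this way, I would instead build the packing blockwise. In each $\lambda$-sector I would perturb a fixed covariant Stinespring isometry $W_\lambda$ along the $m_\lambda(M_\lambda-m_\lambda)$ directions orthogonal to its range, and use the operator-norm bound of \cite[Prop.~1.6]{haah2023query} in reverse on each block, which involves no dimension factor.

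\emph{Step 2 (distinguishing power of $n$ queries).} Take any $n$-query circuit, allowing adaptivity, and any finite coarse-graining of its output. The probability of each outcome is a multilinear function of $J_\Lambda^{\otimes n}$ and is invariant under permuting the query slots only after symmetrisation; I would instead use the weaker fact that it is a polynomial of degree $n$ in the $D_G$ complex coordinates of $J_\Lambda$ in $\Comm(\overline U_1\otimes U_2)$. Consequently the vectors $(\Prob[\text{outcome }j\mid\Lambda_i])_j$ are images of vectors in a space of dimension at most
\begin{align}
K\leq\binom{n+D_G}{D_G}^2 .
\end{align}
For the success event I assign to each output $\widehat\Lambda$ the nearest $\Lambda_j$, which is unique because of the $2\varepsilon_\diamond$ separation. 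This defines an $N\times N$ matrix $M_{ij}$ of rank at most $K$, with row sums at most $1$ and diagonal entries at least $2/3$. The rank--trace inequality $\operatorname{rank}M\geq(\Tr M)^2/\|M\|_F^2$ then gives $N\leq\tfrac94 K$.

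\emph{Step 3 (combine).} Putting the two steps together and using $D''_G\geq\kappa D_G$, I get
\begin{align}
\tfrac{\kappa}{2}D_G\log(c/\varepsilon_\diamond)\leq 2D_G\log\!\bigl(e(n+D_G)/D_G\bigr)+O(1).
\end{align}
This rearranges to $n\geq c' D_G\,\varepsilon_\diamond^{-\kappa/4}$. The constant can be upgraded to the stated $\varepsilon_\diamond^{-\kappa/2}$ by sharpening Step 2 to $K\leq\binom{n+D_G}{D_G}$; the sharpening uses the fact that outcome probabilities are linear in the real span of $J_\Lambda^{\otimes n}$ restricted to the symmetric part, rather than quadratic. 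The condition $\varepsilon_\diamond\leq\varepsilon_0(\kappa)$ is used both in Step 1 and to ensure that the right-hand side dominates the $O(1)$ and $D_G$ terms, so that the bound becomes $n=\Omega_\kappa(D_G/\varepsilon_\diamond^{\kappa/2})$.
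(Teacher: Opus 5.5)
Your Step~1 has a genuine gap. The radius of a ball around $\Lambda_0$ that stays inside $\CovChan_G(\mcI,\mcO)$ is fixed by positivity and depends on the dimensions. It is not something you can set to $\sqrt{\varepsilon_\diamond}$. For instance, for permutation covariance, take single-qudit unitaries $U,V$ with $\Tr(U^\dagger V)=0$, and let $J_{U^{\otimes k}},J_{V^{\otimes k}}$ be the Choi operators of the unitary channels of $U^{\otimes k},V^{\otimes k}$. The direction $t(J_{U^{\otimes k}}-J_{V^{\otimes k}})$ is covariant, trace-annihilating, and has diamond norm $2t$. Yet $\1/d_\mcO+t(J_{U^{\otimes k}}-J_{V^{\otimes k}})\succeq0$ forces $t\le 1/(d_\mcI d_\mcO)$. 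Similarly, Hilbert--Schmidt separation $s$ of Choi operators only certifies $d_\diamond\ge s/(2d_\mcI)$.

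The inscribed-ball volumetric argument therefore yields a packing of size roughly $\bigl(c/(d_\mcI d_\mcO\varepsilon_\diamond)\bigr)^{D''_G}$. Turning this into $(c/\varepsilon_\diamond)^{D''_G/2}$ requires $\varepsilon_\diamond\lesssim(d_\mcI d_\mcO)^{-2}$. That threshold depends on the dimensions (it is $d^{-4k}$ in Corollary~\ref{cor:permutation-channel-strong-diamond-lower}), not only on $\kappa$.

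Your fallback does not repair this. \cite[Prop.~1.6]{haah2023query} bounds $d_\diamond$ only from above by the operator norm. No reverse inequality holds for $\Lambda_W=\Tr_\mcE[W(\cdot)W^\dagger]$, because $(V\otimes\1_\mcO)W$ with $V\in\U_G(\mcE)$ gives the same channel while being far from $W$ in operator norm. A real-dimension count makes this concrete: there are $2D''_G$ perturbation directions orthogonal to the ranges, but the channel set is only $D''_G$-dimensional. So many of your perturbations do not move the channel even to first order, and neither your count nor the claimed separation follows.

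The missing ingredient is what the paper's proof supplies, in two parts:
\begin{itemize}
\item A covariant Kretschmann--Schlingemann--Werner inequality, $\inf_{V\in\U_G(\mcE)}\|W_1-(V\otimes\1_\mcO)W_2\|_\infty^2\le 4d_\diamond(\Lambda_1,\Lambda_2)$. It is obtained by twirling the contraction $C$ in the root-channel-fidelity formula into $\Comm(U_1\otimes\overline{U}_2)$, so that its polar unitary lies in $\U_G(\mcE)$.
\item An operator-norm packing of the gauge quotient, $M(\W_G/{\sim},r)\ge N(\W_G,2r)/N(\U_G(\mcE),r)\ge (c/4r)^{D_G+D''_G}(C/r)^{-D_G}$. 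This uses dimension-free covering bounds for Stiefel manifolds and unitary groups, together with $\sum_\lambda\mu_\lambda^2=D_G$.
\end{itemize}
Choosing $r=\sqrt{8\varepsilon_\diamond}$ gives a $2\varepsilon_\diamond$-packing in diamond distance of size $(c_\kappa/\sqrt{8\varepsilon_\diamond})^{D''_G}$. So the factor $\tfrac12$ in your $\log N$ comes from the square in the KSW bound, not from a choice of radius. The constant $c_\kappa=(c/4)^{1+1/\kappa}C^{-1/\kappa}$ is $\kappa$-dependent because the gauge exponent $D_G$ must be absorbed via $D_G\le D''_G/\kappa$; this is why $\varepsilon_0$ depends on $\kappa$.

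Your Step~2, by contrast, is sound in its sharpened form. The outcome probabilities $\Tr[T_j^\top J_\Lambda^{\otimes n}]$ are linear in $J_\Lambda^{\otimes n}$, which lies in the $n$-th symmetric power of the $D_G$-dimensional real space of Hermitian commutant elements. Hence $K\le\binom{n+D_G-1}{D_G-1}$ holds directly, and the squared bound is never needed. Your rank--trace argument is then a valid substitute for the discrimination bound of \cite[Prop.~IV.4]{mele2025optimal} used in the paper.
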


As a corollary, we obtain the following lower bound for learning permutation-covariant channels in diamond distance.

\begin{corollary}[Lower bound for learning permutation-covariant quantum channels in diamond distance]
\label{cor:permutation-channel-strong-diamond-lower}
Fix $d\geq2$.
There exist $k_0(d)\in\mathbb{N}$ and $\varepsilon_0>0$ such that, for any $k\geq k_0(d)$ and $0<\varepsilon_\diamond\leq\varepsilon_0$, if a quantum circuit using $n$ queries to an unknown $\Lambda\in\PCovChan_{k,d}$ outputs an estimate $\widehat\Lambda$ satisfying
\begin{align}
    \Prob\qty[d_\diamond(\Lambda,\widehat{\Lambda})\leq\varepsilon_\diamond]\geq {2\over3}
\end{align}
for all $\Lambda\in\PCovChan_{k,d}$, then
\begin{align}
    n = \Omega_d\qty({k^{d^4-1 }\over\varepsilon_\diamond^{1/4}}).
\end{align}
\end{corollary}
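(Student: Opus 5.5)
This corollary follows from Thm.~\ref{thm:permutation-channel-strong-diamond-lower} specialized to $G=\mfS_k$ with $U_1=U_2=\pi_d$ on $(\CC^d)^{\otimes k}$. The plan has two parts. First, show that $D_G=\Theta_d(k^{d^4-1})$. Second, show that $D''_G\geq\kappa D_G$ holds with a $\kappa$ depending on nothing at all, so that it can be fixed universally; we aim for $\kappa=1/2$, which produces exactly the exponent $\varepsilon_\diamond^{-\kappa/2}=\varepsilon_\diamond^{-1/4}$ in the statement. Once these are in hand, $\varepsilon_0\coloneqq\varepsilon_0(1/2)$ from the theorem is a universal constant. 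The theorem then gives $n=\Omega(D_G/\varepsilon_\diamond^{1/4})=\Omega_d(k^{d^4-1}/\varepsilon_\diamond^{1/4})$ for all $k\geq k_0(d)$. The hypotheses $D_G>1$ and $D''_G>0$ are automatic for large $k$.

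\emph{Computing $D_G$.} Here $m_\lambda=m_\lambda^{(d)}$ and $M_\lambda=m_\lambda^{(d^3)}$. By the theorem's own identification, $D_G=\dim\Comm(\overline{\pi}_d\otimes\pi_d)$. We compute this commutant dimension in the same way as in the proof of Thm.~\ref{thm:learning-permutation-covchannel-compression}, via $\overline{\pi}_d\otimes\pi_d\cong\pi_{d^2}$:
\begin{align}
D_G=\sum_{\lambda\vdash_{d^2}k}\bigl(m_\lambda^{(d^2)}\bigr)^2=\binom{k+d^4-1}{d^4-1}=\Theta_d(k^{d^4-1}).
\end{align}
The second equality is the same $\operatorname{Sym}^k$ count used in Eq.~\eqref{eq:state-permutation-dimension}, now with local dimension $d^2$. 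As a cross-check, one can also evaluate $\sum_\lambda m_\lambda^{(d)}m_\lambda^{(d^3)}$ directly: it equals the number of $\mfS_k$-invariants in $(\CC^d\otimes\CC^{d^3})^{\otimes k}$ up to conjugation, namely $\dim\operatorname{Sym}^k(\CC^{d^4})$, which gives the same value.

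\emph{Bounding $D''_G$.} We need $\sum_\lambda (m_\lambda^{(d)})^2\leq\frac12 D_G$ for large $k$. The left-hand side is the state dimension $\binom{k+d^2-1}{d^2-1}=\Theta_d(k^{d^2-1})$ from Eq.~\eqref{eq:state-permutation-dimension}. Since $d^2-1<d^4-1$ for $d\geq2$, this ratio tends to zero as $k\to\infty$. Hence there is a $k_0(d)$ such that for all $k\geq k_0(d)$,
\begin{align}
D''_G=D_G-\sum_\lambda m_\lambda^2\geq\tfrac12 D_G.
\end{align}
Because $\kappa=1/2$ is fixed, the constants hidden in $\Omega_\kappa$ become universal; the only dependence on $d$ enters through the binomial asymptotics. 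The main point requiring care is this check that $\kappa$ stays bounded away from zero uniformly in $k$. Without it, both $\varepsilon_0(\kappa)$ and the exponent $\kappa/2$ could degrade as $k$ grows. The polynomial gap $k^{d^2-1}\ll k^{d^4-1}$ resolves this directly.
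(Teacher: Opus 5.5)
Your proposal is correct and follows the paper's proof: specialize Thm.~\ref{thm:permutation-channel-strong-diamond-lower} to $G=\mfS_k$, use $D_{\mfS_k}=\binom{k+d^4-1}{d^4-1}=\Theta_d(k^{d^4-1})$ and $D''_{\mfS_k}=D_{\mfS_k}-\binom{k+d^2-1}{d^2-1}$ to take $\kappa=1/2$ for $k\geq k_0(d)$, which yields the $\varepsilon_\diamond^{-1/4}$ exponent with the universal threshold $\varepsilon_0=\varepsilon_0(1/2)$. The only difference is cosmetic: you obtain $D_{\mfS_k}$ as $\dim\Comm(\pi_{d^2})$, while the paper evaluates $\sum_\lambda m_\lambda^{(d)}m_\lambda^{(d^3)}$ directly, and these are the same count.
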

\begin{proof}
    We apply Thm.~\ref{thm:permutation-channel-strong-diamond-lower} to the case of $G=\mfS_k$.
    Since
    \begin{align}
        D_{\mfS_k} &= \sum_{\lambda\in \irrep{\mfS_k}} m_\lambda^{(d)} m_\lambda^{(d^3)} = \binom{k+d^4-1}{d^4-1} = \Theta_d(k^{d^4-1}),\\
        D''_{\mfS_k} &= D_{\mfS_k} - \sum_{\lambda\in\irrep{\mfS_k}} [m_\lambda^{(d)}]^2 = D_{\mfS_k} - \binom{k+d^2-1}{d^2-1} = D_{\mfS_k} - \Theta_d(k^{d^2-1})
    \end{align}
    hold (see also Prop.~\ref{prop:symmetric-group-parameters-asymptotics}), we can take $\kappa = 1/2$ for sufficiently large $k\geq k_0(d)$ to conclude the proof.
\end{proof}

We then show the following theorem on the lower bound for learning permutation-covariant channels in the Choi trace distance (see Appendix~\ref{appendix:lower-bound-channel} for the proof).

\begin{theorem}[Matching lower bound for learning permutation-covariant channels in Choi trace distance]
\label{thm:permutation-channel-trace-matching-lower}
We fix $d\geq 2$.
There exist $k_0(d)\in\mathbb{N}$ and $\varepsilon_{0}(d)>0$ such that, for any $k\geq k_0(d)$ and $0<\varepsilon_\mathrm{tr} \leq \varepsilon_0(d)$, if a quantum circuit using $n$ queries to an unknown $\Lambda\in\PCovChan_{k,d}$ outputs an estimate $\widehat\Lambda$ satisfying
\begin{align}
    \Prob\qty[d_\mathrm{tr}(\rho_\Lambda,\rho_{\widehat\Lambda})\leq\varepsilon_\mathrm{tr}]\geq {2\over3}
\end{align}
for all $\Lambda\in\PCovChan_{k,d}$, then
\begin{align}
    n = \Omega_d\qty({k^{d^4-d^2}\over\varepsilon_\mathrm{tr}^2})
\end{align}
holds.
\end{theorem}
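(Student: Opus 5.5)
We plan a Holevo--Fano (packing) argument in the style of Ref.~\cite{haah2017sample}, lifted from states to channels through a family of permutation-covariant channels whose Choi states are close to the maximally mixed state. The target exponent $d^4-d^2$ is the real dimension of permutation-invariant Choi operators, $\binom{k+d^4-1}{d^4-1}$, minus the trace-preservation constraints, $\binom{k+d^2-1}{d^2-1}$. The error $\varepsilon_\mathrm{tr}^{-2}$ should come from a standard-quantum-limit information bound per query, since the perturbations are mixed-state perturbations.

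\textbf{Step 1: construct the packing.} Start from the completely depolarizing channel, whose Choi state is $\1/d^{2k}$, and perturb it as
\begin{align}
J_\Lambda=\frac{\1}{d^k}+\frac{\varepsilon}{\sqrt{D}}\,\frac{X}{\|X\|}.
\end{align}
Here $X$ ranges over a packing of Hermitian operators in $\Comm(\overline\pi_d\otimes\pi_d)$ satisfying $\Tr_\mcO X=0$. We restrict $X$ to the centered Young window $\young{d}{k}(\Delta)$ of Lemma~\ref{lem:prefactor-free-schur-window} with $\Delta$ of order one, so that both input and output sectors are near the typical shape. Inside the window the multiplicity blocks still carry $\Theta_d(k^{d^4-d^2})$ free real parameters after trace preservation; we will check this with a volume/character computation like the one for $D_\Delta$, but now as a lower bound. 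Choosing random Gaussian-like $X$ normalized in the right norm gives $\exp(\Omega(D))$ channels, pairwise at Choi trace distance at least $3\varepsilon$, with $D=\Theta_d(k^{d^4-d^2})$. Two things must hold: positivity, which follows from the smallness of the operator norm once the perturbation is normalized by $d^{-k}$ times a factor of $O(\sqrt{D})$ in spectral spread, and permutation covariance.

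\textbf{Step 2: bound the information.} For any $n$-query strategy, including adaptive ones, the mutual information between the uniformly chosen index and the output is at most $n$ times the maximal per-query Holevo information. For each query we bound the Holevo information by a chi-square divergence from the depolarizing reference, giving $O(\varepsilon^2/D)\cdot\mathrm{poly}$. The key point is that the output of each perturbed channel is $\1/d^k$ plus a traceless covariant term, and averaged over the random packing its second moment is $O(\varepsilon^2/D)$ per query, uniformly over the input state. Here we would use that a random $X$ isotropic in the $D$-dimensional space has $\mathbb{E}\,\Tr[(\Lambda_X(\sigma)-\1/d^k)^2 d^k]\lesssim \varepsilon^2/D$ times a factor controlled by the window. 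Fano then gives $n\cdot\varepsilon^2/D\gtrsim D$, that is, $n=\Omega(D^2/\ldots)$. This is the wrong power, so we expect instead to use the averaged bound $n\,\varepsilon^2\gtrsim D$, obtained with perturbation size $\varepsilon$, not $\varepsilon/\sqrt D$, per direction. That yields $n=\Omega_d(k^{d^4-d^2}/\varepsilon_\mathrm{tr}^2)$.

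\textbf{Main obstacle.} We expect the hardest part to be uniformity over adaptive, entangled inputs. An input concentrated in atypical Schur sectors could amplify the perturbation, because the restricted Choi-space dimension is smaller there. The window restriction handles this: we place the perturbation only on atypical-free sectors, and use the kernel bound \eqref{eq:prefactor-free-offcenter-kernel} so that an input can gain at most an $O_d(1)$ factor. Controlling the per-query divergence by the operator norm of $\mathbb{E}_X[X\otimes X]$ restricted to the window, instead of by trace norms, is where we would concentrate effort. Finally, $k\geq k_0(d)$ and $\varepsilon\leq\varepsilon_0(d)$ are used to make the parameter counts asymptotic and to keep the packed Choi operators positive.
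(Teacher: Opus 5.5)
\textbf{There is a genuine gap: the completely depolarizing channel is the wrong base point for the packing.} You measure the per-query information by a $\chi^2$-divergence from the depolarizing reference. The maximally entangled probe is an admissible input, so your family must then satisfy $\chi^2(\rho_z\|\1/d^{2k})=d^{2k}\|\rho_z-\1/d^{2k}\|_2^2\lesssim\varepsilon_{\mathrm{tr}}^2$ on the Choi states (on average over $z$, which suffices by Markov).

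Decompose $\rho_z-\1/d^{2k}$ into the Schur sectors $\nu\vdash_{d^2}k$ of $\overline{\pi}_d\otimes\pi_d$, and let $c_\nu$ be the multiplicity of $\nu$ in the windowed Choi space. Cauchy--Schwarz shows that the part of the Choi trace distance carried by a set $S$ of sectors is at most $\tfrac12\sqrt{\omega_S\,\chi^2}$, where $\omega_S=\sum_{\nu\in S}d_\nu c_\nu/d^{2k}$ is the weight of $S$ in the maximally mixed Choi state. By Schur--Weyl concentration (Lemma~\ref{lem:prefactor-free-schur-window} with $d\to d^2$), this weight sits on $\nu$ within $O(\sqrt{k})$ of the uniform $d^2$-row shape. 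There are $O_d(k^{(d^2-1)/2})$ such shapes, each with $c_\nu\le m_\nu^{(d^2)}=O_d(k^{d^2(d^2-1)/4})$.

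A volumetric bound in these typical sectors then gives $\log|\mathsf{Z}|=O_d(k^{(d^4-1)/2})$ for any $3\varepsilon_{\mathrm{tr}}$-separated family. So your route tops out at roughly $k^{(d^4-1)/2}/\varepsilon_{\mathrm{tr}}^2$, which falls short because $d^4-d^2-(d^4-1)/2=(d^2-1)^2/2>0$. The $\Theta_d(k^{d^4-d^2})$ parameters you count via $D_\Delta$ do exist. However, most lie in sectors whose depolarizing weight is exponentially small, so at relative perturbation at most one (forced by positivity) they are invisible in trace distance. Your Step~2 trouble is a symptom of this. Pairwise distance $3\varepsilon$ and per-element $\chi^2=O(\varepsilon^2/D)$ are incompatible, since $d_{\mathrm{tr}}(\rho,\sigma)\le\tfrac12\sqrt{\chi^2(\rho\|\sigma)}$, and switching to ``$\varepsilon$ per direction'' does not fix the weighting.

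\textbf{What the paper does instead.} It uses a reweighted reference $J_\star=J+(\1-\overline{P}_{\mathsf{Y}_{\mathrm{hard}}})\otimes\1/d^k$ with $J=\sum_{\mu\in\mathsf{Y}_{\mathrm{hard}},\,\nu}\tfrac{d_\mu m_\mu^{(d)}c_\nu}{d_\nu s_\mu}Q_{\mu\nu}$. Its sector weights are about $c_\nu^2/D_{\mathsf{Y}_{\mathrm{hard}}}$, proportional to the parameter count of each block; this is the same principle as $p_\lambda=s_\lambda^2/R_*$ in the state lower bound.

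The perturbations are multiplicative, $J^{1/2}(\1+\alpha X_z)J^{1/2}$. Each $X_z$ is off-diagonal between disjoint sets $U_\nu,V_\nu$ of input Young sectors, so Schur's lemma gives $\Tr_{\mcO}[J^{1/2}X_zJ^{1/2}]=0$. This single condition yields trace preservation and cancels the first-order term of the divergence. The recoupling bound of Lemma~\ref{lem:permutation-central-window-recoupling} ensures a random bipartition keeps $\Theta_d(k^{d^4-d^2})$ cross parameters. Stiefel packings plus Gilbert--Varshamov then turn these into a uniformly separated code.

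\textbf{Adaptivity is also not handled.} Your claim that the mutual information is at most $n$ times the per-query Holevo information is not a general fact for adaptive, entangled strategies; the Umegaki chain rule needs a regularized channel divergence. The paper instead uses the chain rule for the geometric R\'enyi-2 divergence, which gives $D_{\mathrm{rel}}(\rho_z^{(n)}\|\rho_\star^{(n)})\le n\,\widehat{D}_2(\Lambda_z\|\Lambda_\star)$ for every strategy. The bound $\widehat{D}_2=\log\|\Tr_{\mathrm{out}}(J_zJ_\star^{-1}J_z)\|_\infty\le\log(1+\alpha^2)$ then follows directly from $\Tr_{\mcO}J=\overline{P}_{\mathsf{Y}_{\mathrm{hard}}}$ and $X_z^2\preceq\1$. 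This is already a worst case over inputs, so your concern about atypical inputs needs no separate window argument.
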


\subsection{Comparison with direct Choi-state tomography}
\label{sec:channel-choi-comparison}

For the permutation-covariant channel learning, we can also consider a direct approach that prepares the normalized Choi state $\rho_\Lambda$ of the unknown channel $\Lambda \in \PCovChan_{k,d}$, applies the random purification, and then performs the optimal pure-state tomography on the purified Choi state.
This strategy is utilized in the optimal channel tomography in Ref.~\cite{mele2025optimal}.
Since the purified Choi state is in the symmetric subspace of $(\CC^{d^4})^{\otimes k}$, the dimension of the canonical invariant-purification space is given by $D_{\mathrm{Choi}}(k,d) = \binom{k+d^4-1}{d^4-1} = \Theta_d(k^{d^4-1})$.
Thus, by applying the optimal pure-state tomography, we can achieve the Choi trace distance error $\varepsilon_\mathrm{tr}$ with
\begin{align}
    O_d\qty({k^{d^4-1}+\log(1/\eta)\over \varepsilon_\mathrm{tr}^2})
\end{align}
queries at failure probability $\eta$ (see also Thm.~\ref{thm:state-learning-optimal}).
On the other hand, our protocol in Thm.~\ref{thm:learning-permutation-covchannel-compression} achieves the same error and failure probability with
\begin{align}
    O_d\qty({k^{d^4-d^2}+\log(1/\eta)\over \varepsilon_\mathrm{tr}^2})
\end{align}
queries, yielding a polynomial improvement in $k$ over the direct Choi-state tomography.

\section{Efficient implementation of the Hayashi measurement}
\label{sec:hayashi-qubit-implementation}

We present an efficient finite-qubit implementation of the Hayashi measurement defined in Section~\ref{subsec:prelim-hayashi}.
First, we show that encoding a symmetric qudit state in bosonic occupation modes, performing multimode heterodyne measurement, and normalizing the outcome realizes the Hayashi POVM exactly.
We then approximate this continuous-variable measurement by a qubit circuit using occupation registers and the quantum Hermite transform (QHT)~\cite{jain2025hermite}.
Figure~\ref{fig:hayashi-heterodyne-correspondence} summarizes how these two steps compose into a finite-qubit circuit.

\begin{figure}
\centering
\includegraphics[width=\linewidth]{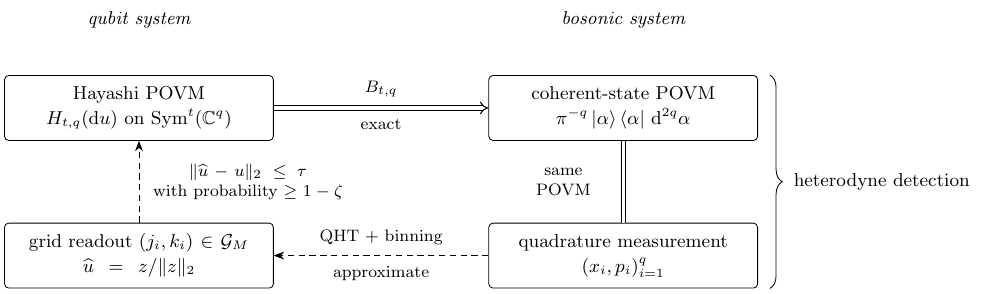}
\caption{Schematic picture of the relation between the qubit and bosonic descriptions of the Hayashi measurement.
}
\label{fig:hayashi-heterodyne-correspondence}
\end{figure}

\subsection{Hayashi measurement via heterodyne measurement}
\label{subsec:hayashi-exact-realization}

\begin{theorem}[Exact realization of the Hayashi measurement via heterodyne measurement]
    \label{thm:hayashi-exact-heterodyne}
        Let $B_{t,q}$ be the isometry mapping the normalized Dicke basis of $\operatorname{Sym}^t(\CC^q)$ to the occupation basis of the $q$-mode bosonic system with total occupation $t$, as defined in Eq.~\eqref{eq:hayashi-occupation-encoding}.
        Applying $B_{t,q}$ and then performing heterodyne measurement on each mode induces a POVM $E_{t,q}(\mathrm{d}\alpha)$ on $\operatorname{Sym}^t(\CC^q)$ with outcomes $\alpha\in\CC^q$.
        In the polar coordinates $s=\norm{\alpha}_2^2$ and $u=\alpha/\sqrt{s}$, defined for $\alpha\ne0$, $E_{t,q}(\mathrm{d}\alpha)$ is given by
    \begin{align}
    \label{eq:hayashi-exact-povm-factorization}
        E_{t,q}(\mathrm{d}s,\mathrm{d}u)=g_{t+q}(s)\,\mathrm{d}s\,H_{t,q}(\mathrm{d}u),\qquad
        g_k(s)\coloneqq\frac{e^{-s}s^{k-1}}{\Gamma(k)}\quad(s>0).
    \end{align}
    Thus, for every density operator supported on $\operatorname{Sym}^t(\CC^q)$, computing $u$ realizes the Hayashi POVM exactly, while the squared radius $s$ is independent of $u$ and follows the Gamma distribution with rate $1$ regardless of the input state.
    The outcome $\alpha=0$ has probability zero, so $u$ may be assigned arbitrarily there.
\end{theorem}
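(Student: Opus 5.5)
The plan is to compute the heterodyne POVM pulled back through $B_{t,q}$ explicitly and recognize it as a product of a Gamma density in $s$ and the Hayashi POVM in $u$.

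\emph{Heterodyne POVM in the $t$-photon sector.} Recall that $q$-mode heterodyne detection has POVM $\pi^{-q}\ketbra{\alpha}\,\mathrm{d}^{2q}\alpha$, with multimode coherent states $\ket{\alpha}=e^{-\|\alpha\|_2^2/2}\sum_{\bm n}\prod_j \alpha_j^{n_j}(n_j!)^{-1/2}\ket{\bm n}$. Since the encoded state lies in the total-occupation-$t$ subspace, only the compression $P_t\ketbra{\alpha}P_t$ matters. The $t$-photon component of $\ket{\alpha}$ is $e^{-s/2}\sum_{|\bm n|=t}\prod_j\alpha_j^{n_j}(n_j!)^{-1/2}\ket{\bm n}$.

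\emph{Matching with $\ket{u}^{\otimes t}$.} Expand $\ket{u}^{\otimes t}$ in the normalized Dicke basis $\ket{D_{\bm n}}$. Its coefficient on $\ket{D_{\bm n}}$ is $\sqrt{t!/\prod_j n_j!}\,\prod_j u_j^{n_j}$. Because $B_{t,q}\ket{D_{\bm n}}=\ket{\bm n}$, and writing $\alpha=\sqrt{s}\,u$, this gives
\[
P_t\ket{\alpha}=e^{-s/2}\frac{s^{t/2}}{\sqrt{t!}}\,B_{t,q}\ket{u}^{\otimes t}.
\]
Hence the pulled-back POVM is
\[
E_{t,q}(\mathrm{d}\alpha)=\pi^{-q}e^{-s}\frac{s^t}{t!}\,\ketbra{u}^{\otimes t}\,\mathrm{d}^{2q}\alpha .
\]

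\emph{Change to polar coordinates.} In $\RR^{2q}$ one has $\mathrm{d}^{2q}\alpha=\tfrac12 s^{q-1}\,\mathrm{d}s\,\mathrm{d}\Omega$, where $\mathrm{d}\Omega$ is the unnormalized surface measure with total area $2\pi^q/\Gamma(q)$. Rewriting in terms of the normalized measure, $\mathrm{d}^{2q}\alpha=\frac{\pi^q}{\Gamma(q)}s^{q-1}\,\mathrm{d}s\,\mathrm{d}u$. Substituting, the POVM becomes
\[
\frac{e^{-s}s^{t+q-1}}{t!\,\Gamma(q)}\,\ketbra{u}^{\otimes t}\,\mathrm{d}s\,\mathrm{d}u .
\]
Writing $\frac{1}{t!\,\Gamma(q)}=\frac{1}{\Gamma(t+q)}\binom{t+q-1}{t}$ and using $D_{t,q}=\binom{t+q-1}{t}$, this equals $g_{t+q}(s)\,\mathrm{d}s\,H_{t,q}(\mathrm{d}u)$, which is Eq.~\eqref{eq:hayashi-exact-povm-factorization}.

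\emph{Consequences.} The joint outcome probability for a state $\rho$ on $\operatorname{Sym}^t(\CC^q)$ factorizes as $g_{t+q}(s)\,\mathrm{d}s\cdot\Tr[\rho H_{t,q}(\mathrm{d}u)]$. Three facts follow:
\begin{itemize}
\item Marginalizing over $s$ gives exactly the Hayashi POVM.
\item $s$ is independent of $u$, with law $\mathrm{Gamma}(t+q,1)$, independent of $\rho$.
\item $\{\alpha=0\}$ is Lebesgue-null, so it has probability zero.
\end{itemize}
As a consistency check, $\int H_{t,q}(\mathrm{d}u)=\1_{\operatorname{Sym}^t}$ by Schur's lemma, so the result is a valid POVM.

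\emph{Main obstacle.} The delicate step is bookkeeping of normalizations: the $\pi^{-q}$ in the heterodyne POVM, the Jacobian of the polar map, and the multinomial factors linking Dicke states to occupation states. These must combine exactly into $g_{t+q}$ and $D_{t,q}$. I would verify this by checking total mass on a single Dicke state, which should give $1$.
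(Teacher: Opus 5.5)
Your proposal is correct and follows essentially the same route as the paper's proof. Both project the multimode coherent state onto the total-occupation-$t$ sector, identify it with $e^{-s/2}s^{t/2}(t!)^{-1/2}B_{t,q}\ket{u}^{\otimes t}$, and pass to polar coordinates via $\mathrm{d}^{2q}\alpha=\frac{\pi^q}{\Gamma(q)}s^{q-1}\,\mathrm{d}s\,\mathrm{d}u$; the constants then combine through $1/(t!\,\Gamma(q))=D_{t,q}/\Gamma(t+q)$ into $g_{t+q}(s)\,\mathrm{d}s\,H_{t,q}(\mathrm{d}u)$, exactly as in the paper.
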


\paragraph{Heterodyne measurement and its coherent-state POVM.}
For a bosonic mode with annihilation operator $a$ and creation operator $a^\dagger$, the quadratures
\begin{align}
    X=\frac{a+a^\dagger}{\sqrt2},\qquad
    P=\frac{a-a^\dagger}{\mathrm{i}\sqrt2}
\end{align}
are the analogues of position and momentum and satisfy $[X,P]=\mathrm{i}$, with $\hbar=1$.
Since they do not commute, they do not allow for simultaneous measurements.
Heterodyne measurement, which is a standard measurement technique in quantum optics, instead mixes the signal with an auxiliary vacuum mode $b$ on a half beam splitter $U_{\mathrm{BS}}$ and measures one quadrature in each output mode, at the cost of the vacuum noise contributed by $b$.
We use the convention $A=(a+b)/\sqrt2$ and $B=(a-b)/\sqrt2$ for the output annihilation operators, so that the observables $X_A$ and $P_B$ commute, and their outcomes $x,p\in\RR$ are combined into the complex outcome $\alpha=x+\mathrm{i}p$.
For $\alpha\in\CC$, we define the normalized coherent state by
\begin{align}
    \ket{\alpha}\coloneqq
    e^{-|\alpha|^2/2}\sum_{n=0}^{\infty}
    \frac{\alpha^n}{\sqrt{n!}}\ket n,
\end{align}
where $\ket n$ denotes the occupation-number state.

\begin{lemma}[Heterodyne measurement amplitude~\cite{lvovsky2009continuous}]
\label{lem:hayashi-heterodyne-amplitude}
    For $\alpha=x+\mathrm{i}p$, with the input state $\ket{\chi}$, the measurement functional $K_{x,p}\ket{\chi}\coloneqq (\bra{x}_A\bra{p}_B) U_{\mathrm{BS}}(\ket{\chi}\ket0)$ satisfies
    $K_{x,p}=\pi^{-1/2}\bra{\alpha}$.
    Hence the single-mode heterodyne POVM is $\pi^{-1}\ketbra{\alpha}\,\mathrm{d}^2\alpha$, where $\mathrm{d}^2\alpha=\mathrm{d}x\,\mathrm{d}p$.
\end{lemma}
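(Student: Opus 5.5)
The plan is to compute the Kraus-type functional $K_{x,p}$ directly in the position/momentum representation, reducing everything to a Gaussian overlap with the Fock wavefunctions of $\ket{\chi}$. By linearity it suffices to check $K_{x,p}\ket{\chi}=\pi^{-1/2}\braket{\alpha}{\chi}$ for $\ket{\chi}$ ranging over a convenient total set. I will take coherent states $\ket{\beta}$ as that set and extend by continuity to all of $\ell^2$, since the functional turns out bounded.

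\textbf{Beam splitter step.} With the convention $A=(a+b)/\sqrt2$, $B=(a-b)/\sqrt2$, the passive Gaussian unitary $U_{\mathrm{BS}}$ maps a product of coherent states $\ket{\beta}_a\ket{0}_b$ to $\ket{\beta/\sqrt2}_A\ket{\beta/\sqrt2}_B$. This holds because $U_{\mathrm{BS}}$ preserves the vacuum and acts linearly on the displacement amplitudes.

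\textbf{Overlaps.} I then use the standard wavefunctions:
\begin{align}
\braket{x}{\gamma}&=\pi^{-1/4}\exp\!\bigl[-\tfrac12 x^2+\sqrt2\gamma x-\tfrac12\gamma^2-\tfrac12|\gamma|^2\bigr],\\
\braket{p}{\gamma}&=\pi^{-1/4}\exp\!\bigl[-\tfrac12 p^2-\mathrm{i}\sqrt2\gamma p+\tfrac12\gamma^2-\tfrac12|\gamma|^2\bigr],
\end{align}
the latter obtained by Fourier transform (up to a fixed phase convention for $\ket{p}$). With $\gamma=\beta/\sqrt2$, the product of the two overlaps becomes
\begin{align}
\pi^{-1/2}\exp\!\bigl[-\tfrac12(x^2+p^2)+\beta(x-\mathrm{i}p)-\tfrac12|\beta|^2\bigr]
=\pi^{-1/2}\exp\!\bigl[-\tfrac12|\alpha|^2+\bar\alpha\beta-\tfrac12|\beta|^2\bigr].
\end{align}
The $\pm\gamma^2/2$ terms cancel, and the right-hand side is exactly $\pi^{-1/2}\braket{\alpha}{\beta}$.

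\textbf{Main obstacle.} The delicate point is the sign and phase conventions. The sign of $b$ in $B$ and the sign of the momentum eigenfunction must be chosen consistently so that the exponent reads $\bar\alpha\beta$ rather than $\alpha\beta$ or a conjugate variant. I expect to fix conventions so that the $\gamma^2$ cancellation and the $x-\mathrm{i}p=\bar\alpha$ combination both work; if a global phase in $\ket{p}$ appears, it is irrelevant for the POVM.

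\textbf{POVM.} The POVM element is then $K_{x,p}^\dagger K_{x,p}\,\mathrm{d}x\,\mathrm{d}p=\pi^{-1}\ketbra{\alpha}\,\mathrm{d}^2\alpha$. Completeness follows from the standard resolution $\pi^{-1}\int\ketbra{\alpha}\,\mathrm{d}^2\alpha=\1$, verified on Fock states by a polar-coordinates Gamma integral.
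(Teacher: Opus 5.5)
Your proof is correct, but it takes a different route from the paper's.

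\textbf{The paper's route.} The paper never computes coherent-state wavefunctions. It uses the operator identity $X_A-\mathrm{i}P_B=a^\dagger+b$ in the input representation. Since $\ket{x}_A\ket{p}_B$ is an eigenvector of this operator with eigenvalue $x-\mathrm{i}p=\overline{\alpha}$, and $b\ket0=0$, it deduces $K_{x,p}a^\dagger=\overline{\alpha}K_{x,p}$. This gives the recursion $K_{x,p}\ket{n+1}=\overline{\alpha}(n+1)^{-1/2}K_{x,p}\ket{n}$. Only the vacuum overlap $K_{x,p}\ket0=\pi^{-1/2}e^{-|\alpha|^2/2}$ then has to be evaluated, and the Fock matrix elements $K_{x,p}\ket{n}=\pi^{-1/2}\braket{\alpha}{n}$ follow directly. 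That argument is purely algebraic and keeps all convention dependence inside one operator identity. Because it works on the Fock basis, it needs no totality or extension step.

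\textbf{Your route.} You split the coherent amplitude at the beam splitter and multiply explicit $x$- and $p$-wavefunctions. Your formulas, including the $\gamma^2$ cancellation, check out with the convention $\braket{p}{n}=(-\mathrm{i})^n\varphi_n(p)$, which the paper itself uses in Sec.~\ref{subsec:hayashi-finite-guarantee}. This is more concrete and makes the phase-space picture (the outcome is the rescaled coherent amplitude) transparent. The price is the density step, which is cleanest done as follows. Multiply your identity by $e^{|\beta|^2/2}$, so that $\sum_n\beta^n(n!)^{-1/2}\ket{n}$ becomes a generating function, and compare Taylor coefficients in $\beta$. This recovers exactly the Fock elements the paper gets by recursion, without any continuity argument on $L^2$.

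\textbf{Conventions.} On your ``main obstacle'': because mode $b$ starts in vacuum, the sign of $b$ in $B$ plays no role in your computation. What matters is that $a$ enters both $A$ and $B$ with a plus sign, i.e.\ $U_{\mathrm{BS}}a^\dagger U_{\mathrm{BS}}^\dagger=(A^\dagger+B^\dagger)/\sqrt2$ as in the paper, together with the momentum phase convention. Flipping either of these would only replace $\bra{\alpha}$ by $\bra{\overline{\alpha}}$, a relabeling of outcomes, not a failure of the lemma.
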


\begin{proof}
    In the input representation, $X_A-\mathrm{i}P_B=a^\dagger+b$.
    Since $\ket{x}_A\ket{p}_B$ is an eigenvector of $X_A-\mathrm{i}P_B$ with eigenvalue $x-\mathrm{i}p=\overline{\alpha}$ and $b\ket0=0$, we obtain $K_{x,p}a^\dagger=\overline{\alpha}K_{x,p}$ and therefore
    \begin{align}
        K_{x,p}\ket{n+1}=\frac{\overline{\alpha}}{\sqrt{n+1}}K_{x,p}\ket n.
    \end{align}
    Writing the vacuum quadrature wavefunctions as $\braket{x}{0}=\pi^{-1/4}e^{-x^2/2}$ and $\braket{p}{0}=\pi^{-1/4}e^{-p^2/2}$ gives $K_{x,p}\ket0=\pi^{-1/2}e^{-|\alpha|^2/2}$.
    Solving the recursion yields
    \begin{align}
        K_{x,p}\ket n
        =\frac{e^{-|\alpha|^2/2}(\overline{\alpha})^n}{\sqrt{\pi n!}}
        =\pi^{-1/2}\braket{\alpha}{n}.
    \end{align}
    Consequently, $K_{x,p}^\dagger K_{x,p}=\pi^{-1}\ketbra{\alpha}$, which gives the claimed POVM.
\end{proof}

\paragraph{Coherent state occupying a common mode.}
Consider $q$ bosonic modes with annihilation operators $a_1,\ldots,a_q$ satisfying $[a_i,a_j^\dagger]=\delta_{ij}$ and $[a_i,a_j]=0$, and let $\ket{\mathrm{vac}}$ be a state with vacuum on all modes.
A unit vector $u=(u_1,\ldots,u_q)\in\CC^q$ defines a collective mode by
\begin{align}
\label{eq:hayashi-common-mode}
    b_u^\dagger\coloneqq\sum_{i=1}^q u_i a_i^\dagger,\qquad
    b_u\coloneqq\sum_{i=1}^q \overline{u_i}a_i,\qquad
    [b_u,b_u^\dagger]=1.
\end{align}
The normalized state of $k\geq0$ bosons occupying this mode is
\begin{align}
\label{eq:hayashi-common-mode-number-state}
    \ket{k;u}\coloneqq\frac{(b_u^\dagger)^k}{\sqrt{k!}}\ket{\mathrm{vac}},\qquad
    \braket{k;u}{\ell;u}=\delta_{k\ell}.
\end{align}
All $k$ bosons occupy the same one-particle state $u$, which is a superposition of the original modes. 

For $\alpha\in\CC^q$, the multimode coherent state is
\begin{align}
\label{eq:hayashi-coherent-state}
    \ket{\alpha}\coloneqq\bigotimes_{i=1}^q\ket{\alpha_i}
    =e^{-\norm{\alpha}_2^2/2}\exp\!\left(\sum_{i=1}^q\alpha_i a_i^\dagger\right)\ket{\mathrm{vac}}.
\end{align}
Applying Lemma~\ref{lem:hayashi-heterodyne-amplitude} to each mode gives the $q$-mode heterodyne POVM $\pi^{-q}\ketbra{\alpha}\,\mathrm{d}^{2q}\alpha$.
For $\alpha\ne0$, write $s=\norm{\alpha}_2^2$ and $u=\alpha/\sqrt{s}$ as in Theorem~\ref{thm:hayashi-exact-heterodyne}.
Then $\sum_i\alpha_i a_i^\dagger=\sqrt{s}\,b_u^\dagger$, and expanding the exponential gives
\begin{align}
\label{eq:hayashi-common-mode-coherent-expansion}
    \ket{\alpha}
    =e^{-s/2}\sum_{k=0}^{\infty}\frac{s^{k/2}}{\sqrt{k!}}\ket{k;u}.
\end{align}
Let $\Pi_k$ denote the projector onto the sector of total occupation $k$.
Then
\begin{align}
    \label{eq:hayashi-number-projection}
    \Pi_k\ket{\alpha}=e^{-s/2}\frac{s^{k/2}}{\sqrt{k!}}\ket{k;u},\qquad
    \norm{\Pi_k\ket{\alpha}}_2^2=e^{-s}\frac{s^k}{k!}.
\end{align}
This separation of the radius-dependent amplitude from the mode direction is what yields Eq.~\eqref{eq:hayashi-exact-povm-factorization} after restriction to $k=t$.

\paragraph{Mapping symmetric qudit states to bosonic occupation states.}
\label{paragraph:hayashi-occupation-encoding}

For $\bm n=(n_1,\ldots,n_q)\in\ZZ_{\geq0}^q$ with $|\bm n|\coloneqq\sum_i n_i=t$, write $\bm n!\coloneqq\prod_i n_i!$ and $u^{\bm n}\coloneqq\prod_i u_i^{n_i}$.
Define
\begin{align}
    \mathcal W_{\bm n}\coloneqq
    \left\{(i_1,\ldots,i_t)\in\{1,\ldots,q\}^t:
    \#\{r:i_r=j\}=n_j\ \text{for every }j\right\}.
\end{align}
The set $\mathcal W_{\bm n}$ consists of all distinct permutations of a string containing $n_j$ copies of $j$.
Thus $|\mathcal W_{\bm n}|=t!/\bm n!$, and the normalized Dicke state with occupation vector $\bm n$ is
\begin{align}
    \ket{D^{(t)}_{\bm n}}
    =\sqrt{\frac{\bm{n}!}{t!}}
    \sum_{(i_1,\ldots,i_t)\in\mathcal W_{\bm n}}
    \ket{i_1\cdots i_t}.
\end{align}
These Dicke states form an orthonormal basis of $\operatorname{Sym}^t(\CC^q)$.
Let $\mcF_q^{(t)}\coloneqq\operatorname{span}\{\ket{n_1,\ldots,n_q}:|\bm n|=t\}$ be the total-occupation-$t$ sector of the $q$-mode Fock space, whose occupation basis is also orthonormal.
The occupation encoding $B_{t,q}$ is the map
\begin{align}
    \label{eq:hayashi-occupation-encoding}
    B_{t,q}:\operatorname{Sym}^t(\CC^q)\longrightarrow\mcF_q^{(t)},\qquad
    B_{t,q}\ket{D^{(t)}_{\bm n}}=\ket{n_1,\ldots,n_q},
\end{align}
which maps an orthonormal basis onto another orthonormal basis.

For a unit vector $u\in\CC^q$, expand $\ket{u}^{\otimes t}$ in the computational basis and group the terms with the same occupation vector $\bm n$. Each of the $t!/\bm n!$ terms in such a group has amplitude $u^{\bm n}$, so the definition of the normalized Dicke states gives
\begin{align}
    \ket{u}^{\otimes t}
    =\sum_{|\bm n|=t}\sqrt{\frac{t!}{\bm n!}}u^{\bm n}\ket{D^{(t)}_{\bm n}}.
\end{align}
Applying $B_{t,q}$ term by term using Eq.~\eqref{eq:hayashi-occupation-encoding}, and then using the multinomial expansion of the creation operators, yields
\begin{align}
    \label{eq:hayashi-bosonic-product-state}
    B_{t,q}\ket{u}^{\otimes t}
    =\sum_{|\bm n|=t}\sqrt{\frac{t!}{\bm n!}}u^{\bm n}\ket{\bm n}
    =\frac{(\sum_i u_i a_i^\dagger)^t}{\sqrt{t!}}\ket{\mathrm{vac}}
    =\ket{t;u},
\end{align}
which links to the fixed-occupation components of coherent states in Eq.~\eqref{eq:hayashi-number-projection}.

\paragraph{Proof of the exact realization.}
\begin{proof}[Proof of Theorem~\ref{thm:hayashi-exact-heterodyne}]
    Equations~\eqref{eq:hayashi-number-projection} and~\eqref{eq:hayashi-bosonic-product-state} give the first identity below, and polar coordinates in $\CC^q\simeq\RR^{2q}$ give the second:
    \begin{align}
        \label{eq:hayashi-coherent-restriction}
        B_{t,q}^{\dagger}\Pi_t\ket{\alpha}
        =e^{-s/2}\frac{s^{t/2}}{\sqrt{t!}}\ket{u}^{\otimes t},\qquad
        \mathrm{d}^{2q}\alpha=\frac{\pi^q}{\Gamma(q)}s^{q-1}\mathrm{d}s\,\dd u,
    \end{align}
    where $\dd u$ is the normalized Haar measure on the unit sphere in $\CC^q \simeq \RR^{2q}$.
    The unit sphere in $\RR^{2q}$ has surface area $2\pi^q/\Gamma(q)$, and the substitution $s=r^2$ for $r=\norm{\alpha}_2$ gives $r^{2q-1}\mathrm{d}r=\tfrac12s^{q-1}\mathrm{d}s$.
    Implementing the multimode heterodyne POVM through $B_{t,q}$ therefore yields
    \begin{align}
        \label{eq:hayashi-radial-factorization}
        E_{t,q}(\mathrm{d}\alpha)
        &=\pi^{-q}B_{t,q}^{\dagger}\Pi_t\ketbra{\alpha}\Pi_tB_{t,q}\,\mathrm{d}^{2q}\alpha\notag\\
        &=\frac{e^{-s}s^{t+q-1}}{t!\,\Gamma(q)} (\ketbra{u}{u})^{\otimes t}\,\mathrm{d}s\,\dd u\notag\\
        &=\frac{e^{-s}s^{t+q-1}}{\Gamma(t+q)}\,\mathrm{d}s\,H_{t,q}(\mathrm{d}u),
    \end{align}
    where the last step uses $D_{t,q}=\binom{t+q-1}{t}=\Gamma(t+q)/(t!\,\Gamma(q))$ and the definition of $H_{t,q}$ in Eq.~\eqref{eq:prelim-hayashi-povm}.
    The prefactor is the Gamma density $g_{t+q}(s)$, which integrates to one, so discarding $s$ from the result leaves the Hayashi POVM.
\end{proof}

It is noted that the projector $\Pi_t$ in Eq.~\eqref{eq:hayashi-radial-factorization} simply expresses that the encoded input lies in $\mcF_q^{(t)}$, which involves no postselection.
Because Eq.~\eqref{eq:hayashi-radial-factorization} is an operator identity, the equivalence also holds for the joint distribution of the classical outcome and the retained reference system when the input is entangled with a reference.

\subsection{Efficient implementation on a finite-qubit system}
\label{subsec:hayashi-finite-guarantee}

Although we have shown that the heterodyne measurement can implement the Hayashi measurement exactly, realizing this on hardware is not easy since we have to combine a qubit system and an optical system to perform heterodyne measurement on the optical system.
Therefore, we now approximate the continuous-variable measurement presented in the previous section solely by a qubit circuit with finite-dimensional classical output.
The discretization is the mapping given by
\begin{align}
    \ket{\chi} = \sum_{n=0}^{t} c_n \ket{n}
    \longmapsto
    \ket{\chi}_{\mathrm{encoded}} = \sum_{n=0}^{t} c_n \ket{\mathrm{bin}(n)},
\end{align}
and the quadrature coordinates in a discretized grid representation are
\begin{align}
    x_j = jh,
    \qquad
    p_k = kh,
\end{align}
where $h$ is a grid spacing.
Since the occupation number of each mode is at most $t$, the sector $\mcF_q^{(t)}$ can also be represented as a subspace of $(\CC^{t+1})^{\otimes q}$ by identifying each Fock state $\ket{n_i}$ with an occupation basis state.
Although the exact optical realization uses the Fock-space interpretation, the finite-qubit approximate implementation uses these factors as occupation registers, each of dimension $t+1$ and realized by $\lceil\log_2(t+1)\rceil$ qubits.

\begin{theorem}[Efficient approximate Hayashi measurement on a qubit system]
    \label{thm:efficient-hayashi-measurement}
    For every $0<\tau,\zeta<1/2$, there exists a qubit circuit which, on any density operator $\rho$ supported on $\operatorname{Sym}^t(\CC^q)$, outputs a discretized unit vector $\widehat u\in\CC^q$ approximating the outcome of the Hayashi measurement.

    More precisely, the ideal Hayashi outcome $u$ and the circuit output $\widehat u$ admit a coupling, with their respective output distributions as marginals, satisfying the following bound.
    \begin{align}
        \Prob\!\left[
        \norm{\widehat u-u}_2>\tau
        \right]\leq\zeta.
    \end{align}
    The gate complexity over a fixed finite universal gate set is given by
    \begin{align}
        G_H(t,q;\tau,\zeta) = O((t+q) \polylog (t, q, \tau^{-1}, \zeta^{-1})).
    \end{align}
\end{theorem}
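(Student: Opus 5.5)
We simulate, mode by mode, the heterodyne realization of Theorem~\ref{thm:hayashi-exact-heterodyne} on occupation registers, using the quantum Hermite transform (QHT) of Ref.~\cite{jain2025hermite} as the discrete analogue of the beam-splitter-plus-quadrature measurement. The coupling is the one given by that theorem. The ideal outcome is $u=\alpha/\|\alpha\|_2$, where $\alpha$ is the heterodyne outcome and $s=\|\alpha\|_2^2\sim\mathrm{Gamma}(t+q,1)$ is independent of $u$. The circuit produces a discretized $\widehat\alpha$, and we output $\widehat u=\widehat\alpha/\|\widehat\alpha\|_2$.

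\emph{Step 1 (encoding).} Apply $B_{t,q}$, mapping the Dicke basis to $q$ occupation registers of $\lceil\log_2(t+1)\rceil$ qubits each. We would implement it coherently with $O((t+q)\polylog(t,q))$ gates, for instance via a sequential ``counting'' circuit that converts the $t$ qudit copies into occupation numbers while uncomputing the ordering information. This is possible because the input is symmetric, so the ordering register is in a known uniform state and can be reversibly removed (an inverse of Dicke-state preparation).

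\emph{Step 2 (per-mode heterodyne).} For each mode, the heterodyne POVM is $\pi^{-1}\ketbra{\alpha}\,\mathrm d^2\alpha$ (Lemma~\ref{lem:hayashi-heterodyne-amplitude}). Equivalently, one can adjoin a vacuum ancilla mode, apply a balanced beam splitter, and measure $X_A$ and $P_B$. On occupation registers truncated at $N=\poly(t,\log\zeta^{-1})$, the ancilla is $\ket0$. The beam splitter preserves total photon number $\le t$, so truncation is exact. For this step we would either implement the beam splitter in the number basis, through the $\mathrm{SU}(2)$ Wigner-$d$ matrix on the fixed-$n$ sector, or use a direct construction based on the QHT. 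The QHT maps $\ket{\mathrm{bin}(n)}$ to the sampled Hermite functions $\psi_n(x_j)\sqrt h$ on a grid of spacing $h$ over $[-L,L]$, with $L=O(\sqrt{t+\log(q/\zeta)})$. Measuring the QHT outputs gives $\widehat x,\widehat p$ whose distribution is close in total variation to a discretized version of the exact density. The QHT costs $\polylog(t,1/h)$ per mode, and the $q$ modes cost $O(q\polylog)$. Together with Step~1, this gives the claimed $O((t+q)\polylog(t,q,\tau^{-1},\zeta^{-1}))$.

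\emph{Step 3 (coupling and error).} We build the coupling in two stages. First, we couple the exact continuous outcome $\alpha$ with its grid rounding, which gives $\|\widehat\alpha-\alpha\|_2\le h\sqrt{2q}$, except on the tail event where some $|\alpha_i|>L$. Second, the gap between the implemented discrete distribution and the rounded exact one is absorbed by a maximal coupling, which costs its total-variation distance in failure probability. That distance is bounded by the QHT approximation error plus the Hermite-function truncation tails outside $[-L,L]$, which are Gaussian in $L$. Next, with probability $1-\zeta/4$ we have $s\ge (t+q)/2$ by the Gamma Chernoff bound used in Theorem~\ref{thm:covariant-isometry-tomography-high-probability}. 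Since the normalization map is Lipschitz away from $0$,
\[
\|\widehat u-u\|_2\le \frac{2\|\widehat\alpha-\alpha\|_2}{\sqrt s}\le \frac{2h\sqrt{2q}}{\sqrt{(t+q)/2}}.
\]
Choosing $h=\Theta(\tau)$ therefore suffices. This keeps every register polylogarithmic in size.

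\emph{Main obstacle.} The delicate step is Step~2's error accounting. The QHT guarantees an operator-norm error, but we need the measured discrete distribution to be close in total variation to the discretized continuous heterodyne density, uniformly over all inputs in $\mcF_q^{(t)}$ and across all $q$ modes. I would first bound the per-mode total-variation error by $\xi$ for every number state with $n\le t$, and then invoke a union bound over the $q$ modes and the tail events, setting $\xi=\zeta/(4q)$. Because everything scales polylogarithmically in $1/\xi$, the gate bound is preserved.
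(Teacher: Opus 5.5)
Your overall architecture matches the paper's: occupation encoding, a QHT-based quadrature measurement on each register, and normalization, coupled through Theorem~\ref{thm:hayashi-exact-heterodyne}. The error accounting in Steps~2--3, however, has a genuine gap.

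\textbf{Sampling versus binning.} The QHT guarantee~\eqref{eq:hayashi-qht-sampling-guarantee} approximates the \emph{sampling} map $S_MV_t$. So even an exact QHT produces bin probabilities $h\norm{f(jh)}^2$, whereas rounding the exact heterodyne outcome produces $\int_{I_j}\norm{f(x)}^2\,\mathrm{d}x$. This point-sampling versus cell-integration discrepancy is neither QHT approximation error nor a tail effect. It is not small unless $h$ is well below the oscillation scale $1/\sqrt{t}$ of $\varphi_t$. The paper controls it with Lemma~\ref{lem:hayashi-grid-comparison}, $\norm{J_MS_Mf-f}_2\leq h\sqrt{2t+1}\norm{f}_2$. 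Summed over the $2q$ registers, this leads the paper to take $h\leq\zeta/(8q\sqrt{2t+1})$, so $h=\Theta(\tau)$ is not enough.

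\textbf{Your proposed remedy does not work.} You suggest bounding the per-mode total-variation error on each number state $\ket{n}$, $n\leq t$, and union-bounding over modes. But the encoded input is entangled across modes, and the relative phases of $u$ live entirely in coherences between different occupation-number configurations. Consider a register measurement that first dephases in the number basis and then performs exact heterodyne. It agrees with heterodyne on every number state, yet outputs $\alpha_i$ with independent uniformly random phases. What you need instead is an operator-norm bound on each register's isometry, $\norm{(J_M\otimes\1)\widetilde V-V_t\otimes\ket{0}}_{\mathrm{op},\mcE_t}\leq\delta+h\sqrt{2t+1}$, valid for inputs entangled with the other registers. A hybrid argument that swaps the $2q$ registers one at a time then gives the diamond-norm bound $E$ of Lemma~\ref{lem:hayashi-binned-channel}.

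\textbf{The radius bound fails as stated.} The claim that $s\geq(t+q)/2$ with probability $1-\zeta/4$ is false in general. The Gamma Chernoff bound gives $\Prob[s<(t+q)/2]\leq e^{-(\log2-1/2)(t+q)}$, which is at most $\zeta/4$ only when $t+q\gtrsim\log(1/\zeta)$. The theorem must hold for every $\zeta$ at fixed $t,q$. You need a $\zeta$-dependent threshold, e.g.\ $s_0=\zeta/4$, using $\Prob[s<s_0]\leq s_0^{t+q}/\Gamma(t+q+1)\leq s_0$. The normalization error then becomes $h\sqrt{2q}/\sqrt{s_0}$, which requires $h\lesssim\tau\sqrt{\zeta/q}$. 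Since the cost is polylogarithmic in $1/h$, both corrections preserve the gate bound once they are made.

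\textbf{Encoding and beam-splitter costs are not established.} A sequential counter that, for each copy, increments the occupation register addressed by that copy touches all $q$ registers per copy. As sketched, it costs order $tq$ gates, not $O(t+q)$. Your Wigner-$d$ route needs a controlled preparation of the binomial amplitudes $\sqrt{2^{-n}\binom{n}{m}}$ at polylogarithmic cost per mode, which you assert but do not justify. The paper avoids the beam splitter entirely. Appending $\ket{+}$ to each input qudit and encoding into $2q$ modes equals $U_{\mathrm{BS}}^{\otimes q}V_{\mathrm{vac}}B_{t,q}$ (Lemma~\ref{lem:hayashi-combined-encoding}). The encoding itself is built from coherent symmetrization and a sorting network on $t+q$ records (Prop.~\ref{prop:hayashi-occupation-encoder}).
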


\paragraph{Protocol.}
\label{subsec:hayashi-finite-circuit}
The finite-qubit implementation of the Hayashi measurement is summarized in Fig.~\ref{fig:hayashi-finite-qubit-protocol}, which follows the three steps below.

\begin{figure}
\centering
\includegraphics{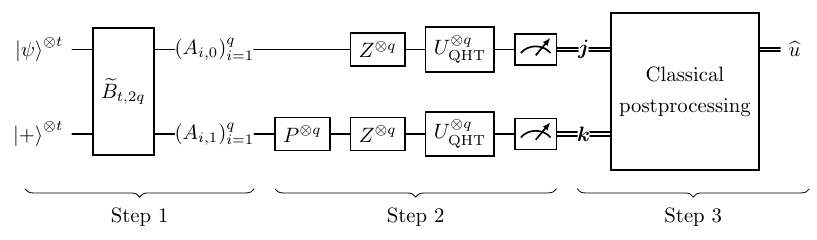}
\caption{
    The efficient implementation of the Hayashi measurement on an input state $\ket{\psi}^{\otimes t} \in \operatorname{Sym}^t(\CC^q)$.
    Step 1 implements the occupation encoding, which realizes the beam-splitter operation.
    Step 2 performs quadrature measurements on each occupation register to obtain the heterodyne measurement outcomes.
    Each $Z$ gate acts on the least significant occupation bit to match the Hermite-function convention in Eq.~\eqref{eq:hayashi-hermite-function}.
    By postprocessing the measurement outcomes, Step 3 outputs a unit vector $\widehat u$ that approximates the Hayashi measurement outcome $u$.
}
\label{fig:hayashi-finite-qubit-protocol}
\end{figure}

\begin{enumerate}
    \item \emph{Occupation encoding realizing the beam-splitter operation}.
    Append a qubit in the state $\ket{+} \coloneqq \frac{1}{\sqrt{2}}(\ket{0} + \ket{1})$ to each input qudit.
    For each resulting system, use the product basis $\ket{i,b}\coloneqq\ket i\otimes\ket b$ of $\CC^q\otimes\CC^2$, where $i=1,\ldots,q$ and $b\in\{0,1\}$, and apply the approximate occupation encoding $\widetilde B_{t,2q}$ of Prop.~\ref{prop:hayashi-occupation-encoder}.
    This produces $2q$ occupation registers $A_{i,b}$, grouped into the $q$ ordered pairs $(A_{i,0},A_{i,1})$, with total error $\varepsilon_B$.

    \item \emph{Quadrature measurements}.
    For every $i=1,\ldots,q$, apply $P\coloneqq \sum_n (-\mathrm{i})^n \ketbra{n}$ to the second register $A_{i,1}$ as preparation for the momentum measurement. Apply a position-type QHT to every register $A_{i,b}$ using the Hermite functions in Eq.~\eqref{eq:hayashi-hermite-function}, and measure the resulting grid registers in the computational basis. Denote the outcomes from $A_{i,0}$ and $A_{i,1}$ by $j_i,k_i$, respectively.
    The parameters of the QHT are chosen later.

    \item \emph{Classical postprocessing}. Obtain $z_i=j_i+\mathrm{i}k_i$ from the measurement results and compute
    \begin{align}
        \label{eq:hayashi-finite-output}
        \widehat u=\frac{z}{\sqrt{\sum_{i=1}^{q}(j_i^2+k_i^2)}}.
    \end{align}
    When $z=0$, we assign $\widehat u$ to an arbitrary unit vector.
\end{enumerate}

\paragraph{Proof strategy.}
The ideal versions implement occupation-basis encoding and heterodyne detection; the final normalization then returns the Hayashi measurement result by Theorem~\ref{thm:hayashi-exact-heterodyne}.
We compare the qubit-circuit implementation with the ideal measurement after binning its outcomes by evaluating the distance between their probability distributions.
After establishing error bounds for these three processes, we choose the circuit parameters and evaluate the gate complexity.
For each process, we describe its approximate implementation together with the corresponding error analysis.

\paragraph{Step 1: Occupation encoding and beam-splitter operation.}
The first step converts the symmetric qudit input into occupation registers and realizes the beam-splitter operation required for the heterodyne measurement.
The following lemma identifies this combined digital operation with the corresponding ideal optical operation.

\begin{lemma}[Combined occupation encoding and beam splitters]
    \label{lem:hayashi-combined-encoding}
    Let $V_{\mathrm{vac}}$ add one vacuum to each of the $q$ modes, and let $U_{\mathrm{BS}}^{\otimes q}$ act on the resulting pairs.
    On $\operatorname{Sym}^t(\CC^q)$,
    $B_{t,2q}V_+^{\otimes t}=U_{\mathrm{BS}}^{\otimes q}V_{\mathrm{vac}}B_{t,q}$ holds, where $V_+(\cdot)\coloneqq (\cdot) \otimes \ket{+}$.
    In particular,
    \begin{align}
        B_{t,2q}V_+^{\otimes t}\ket{D^{(t)}_{\bm n}}
        =\sum_{\substack{0\leq m_i\leq n_i\\1\leq i\leq q}}
        \sqrt{2^{-t}\prod_{i=1}^{q}\binom{n_i}{m_i}}
        \bigotimes_{i=1}^{q}\ket{m_i,n_i-m_i}.
        \label{eq:hayashi-combined-encoding}
    \end{align}
\end{lemma}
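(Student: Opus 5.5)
The plan is to prove the operator identity by checking it on the Dicke basis. Both sides are linear maps from $\operatorname{Sym}^t(\CC^q)$ into $\mcF_{2q}^{(t)}$, so it suffices to compute each side on $\ket{D^{(t)}_{\bm n}}$ and compare with Eq.~\eqref{eq:hayashi-combined-encoding}. Alternatively, and perhaps more cleanly, we can compare them on the spanning set of product states $\ket{u}^{\otimes t}$. These span $\operatorname{Sym}^t(\CC^q)$ by polarization, and the comparison then reduces to Eq.~\eqref{eq:hayashi-bosonic-product-state}. We carry out the product-state comparison first and then extract the explicit formula.

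\emph{Left-hand side.} For a unit vector $u\in\CC^q$, we have $V_+^{\otimes t}\ket{u}^{\otimes t}=(\ket{u}\otimes\ket{+})^{\otimes t}$. The vector $u'\coloneqq u\otimes\ket{+}\in\CC^{2q}$ is a unit vector with components $u'_{i,b}=u_i/\sqrt2$. The image $(\ket{u}\otimes\ket{+})^{\otimes t}$ lies in $\operatorname{Sym}^t(\CC^{2q})$, so $B_{t,2q}$ is defined on it. Applying Eq.~\eqref{eq:hayashi-bosonic-product-state} with $q$ replaced by $2q$ gives
\[
B_{t,2q}V_+^{\otimes t}\ket{u}^{\otimes t}=\frac{1}{\sqrt{t!}}\Bigl(\sum_i u_i\,\frac{a_{i,0}^\dagger+a_{i,1}^\dagger}{\sqrt2}\Bigr)^t\ket{\mathrm{vac}}.
\]

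\emph{Right-hand side.} We have $B_{t,q}\ket{u}^{\otimes t}=(\sum_iu_ia_i^\dagger)^t\ket{\mathrm{vac}}/\sqrt{t!}$, and $V_{\mathrm{vac}}$ adjoins vacuum modes $b_i$. The beam splitter fixes the joint vacuum. With the output convention $A=(a+b)/\sqrt2$, $B=(a-b)/\sqrt2$, the Heisenberg map sends $a_i^\dagger\mapsto(A_i^\dagger+B_i^\dagger)/\sqrt2$. Since $U_{\mathrm{BS}}$ is a Gaussian unitary, we can write $U_{\mathrm{BS}}a_i^\dagger U_{\mathrm{BS}}^\dagger=(A_i^\dagger+B_i^\dagger)/\sqrt2$, after identifying $A_i,B_i$ with the registers $A_{i,0},A_{i,1}$. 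The right-hand side is then exactly the same polynomial applied to the vacuum.

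This sign and ordering convention is the main point to check. We must verify that the chosen labeling of output modes matches $b=0\leftrightarrow A$ and $b=1\leftrightarrow B$, and that the coefficient on the vacuum-mode side does not introduce a minus sign on $a^\dagger$. A minus sign would appear only on $b^\dagger$, which never acts because $b$ starts in vacuum. If the paper's $U_{\mathrm{BS}}$ instead transforms in the inverse direction, the identity still holds because the $b$-contribution is irrelevant and $a^\dagger$ maps to $(A^\dagger\pm B^\dagger)/\sqrt2$. In that case we would absorb any sign into the definition of $B$, consistent with $P_B$ as used later. Since the two sides agree on all $\ket{u}^{\otimes t}$, they agree on $\operatorname{Sym}^t(\CC^q)$.

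For the explicit formula, apply the right-hand side to $\ket{D^{(t)}_{\bm n}}=\ket{\bm n}=\prod_i (a_i^\dagger)^{n_i}/\sqrt{n_i!}\,\ket{\mathrm{vac}}$. Expand each factor binomially:
\[
\bigl((A_i^\dagger+B_i^\dagger)/\sqrt2\bigr)^{n_i}=2^{-n_i/2}\sum_{m_i}\binom{n_i}{m_i}(A_i^\dagger)^{m_i}(B_i^\dagger)^{n_i-m_i}.
\]
Using $(A^\dagger)^m(B^\dagger)^{n-m}\ket{0,0}=\sqrt{m!(n-m)!}\ket{m,n-m}$, the coefficient of $\ket{m_i,n_i-m_i}$ becomes $2^{-n_i/2}\binom{n_i}{m_i}\sqrt{m_i!(n_i-m_i)!/n_i!}=\sqrt{2^{-n_i}\binom{n_i}{m_i}}$. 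Taking the product over $i$ and using $\sum_in_i=t$ gives Eq.~\eqref{eq:hayashi-combined-encoding}.
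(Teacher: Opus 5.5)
Your proof is correct, but it runs in the opposite direction from the paper's. The paper works entirely on the Dicke basis. It first computes the left-hand side combinatorially: every computational-basis string in $V_+^{\otimes t}\ket{D^{(t)}_{\bm n}}$ has amplitude $\sqrt{\bm n!/(2^t t!)}$, and the refined type class with occupations $(m_i,n_i-m_i)_i$ has $t!/\prod_i m_i!(n_i-m_i)!$ elements. This gives Eq.~\eqref{eq:hayashi-combined-encoding} directly. It then checks that $U_{\mathrm{BS}}\ket{n,0}$ yields the same binomial coefficients, so the operator identity follows from agreement on a basis.

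You instead prove the operator identity first, on the spanning set $\{\ket{u}^{\otimes t}\}$. You reduce both sides to Eq.~\eqref{eq:hayashi-bosonic-product-state}, applied once with $2q$ modes to $u\otimes\ket{+}$, together with the linear action of the beam splitter on creation operators. Only afterwards do you extract the coefficients from the right-hand side.

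What your route buys:
\begin{itemize}
\item It avoids the refined type-class count.
\item It makes the structure transparent: $V_+^{\otimes t}$ is the second quantization of the one-particle isometry $u\mapsto u\otimes\ket{+}$, which is exactly what a 50:50 beam splitter with vacuum in the second port does to $a^\dagger$.
\item The same argument works verbatim for any ancilla state or passive linear-optical map.
\end{itemize}
The paper's route is more pedestrian, but it produces the explicit left-hand-side coefficients without going through the multinomial generating-function identity.

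Three small points:
\begin{itemize}
\item Your sign hedge is unnecessary. The mode matrix $\frac{1}{\sqrt2}\bigl(\begin{smallmatrix}1&1\\1&-1\end{smallmatrix}\bigr)$ is self-inverse, so $a^\dagger\mapsto(A^\dagger+B^\dagger)/\sqrt2$ in either direction. Also, redefining $B\to-B$ would negate $P_B$, so it is not a free relabeling.
\item That transformation rule is the beam-splitter convention itself; it does not follow from $U_{\mathrm{BS}}$ being Gaussian.
\item In the last paragraph you mean $B_{t,q}\ket{D^{(t)}_{\bm n}}=\ket{\bm n}$, not $\ket{D^{(t)}_{\bm n}}=\ket{\bm n}$.
\end{itemize}
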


\begin{proof}
    Each basis state after $V_+^{\otimes t}$ has amplitude
    $\sqrt{\bm n!/(2^t t!)}$.
    There are $t!/\prod_i[m_i!(n_i-m_i)!]$ basis states with occupation $(m_i,n_i-m_i)_i$.
    Their normalized Dicke component therefore has the coefficient in Eq.~\eqref{eq:hayashi-combined-encoding}.
    Indeed, multiplying the common basis amplitude by the square root of the number of basis states in the corresponding refined type class gives
    \begin{align}
        \sqrt{\frac{\bm n!}{2^t t!}}
        \sqrt{\frac{t!}{\prod_i m_i!(n_i-m_i)!}}
        =
        \sqrt{2^{-t}\prod_{i=1}^q \binom{n_i}{m_i}}.
    \end{align}
    Applying \(B_{t,2q}\) maps each normalized Dicke state of occupation \((m_i,n_i-m_i)_i\) to the corresponding occupation-number basis state \(\bigotimes_i \ket{m_i,n_i-m_i}\), which proves Eq.~\eqref{eq:hayashi-combined-encoding}.
    For the optical side, place the vacuum in the second input port and use the convention $U_{\mathrm{BS}}a^\dagger U_{\mathrm{BS}}^\dagger=(A^\dagger+B^\dagger)/\sqrt2$ and $U_{\mathrm{BS}}\ket{0,0}=\ket{0,0}$ to obtain
    \begin{align}
        U_{\mathrm{BS}}\ket{n,0}
        =\frac{(A^\dagger+B^\dagger)^n}{\sqrt{2^n n!}}\ket{0,0}
        =\sum_{m=0}^{n}\sqrt{2^{-n}\binom{n}{m}}\ket{m,n-m}.
    \end{align}
    Taking the tensor product over the $q$ pairs gives the right-hand side of Eq.~\eqref{eq:hayashi-combined-encoding}, proving the claimed identity.
\end{proof}

We provide the gate complexity of this operation below.

\begin{proposition}[Coherent occupation encoder]
    \label{prop:hayashi-occupation-encoder}
    For $0<\epsilon<1/2$ there is a circuit $\widetilde B_{t,q}$ over a fixed universal gate set with
    \begin{align}
        \norm{\widetilde B_{t,q}-B_{t,q}}_{\mathrm{op},\operatorname{Sym}^t(\CC^q)}\leq\epsilon,\qquad
        G_B(t,q;\epsilon)=O\!\left((t+q)\polylog\frac{t+q+2}{\epsilon}\right),
        \label{eq:hayashi-occupation-cost}
    \end{align}
    where both maps act on the input, the occupation registers, and clean work registers, and the ideal map returns every register other than the occupation output to a fixed zero state.
\end{proposition}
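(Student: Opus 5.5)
The plan is to realize $B_{t,q}$ as a counting procedure on the $t$ input qudits. The encoder processes the qudits one at a time and coherently increments an occupation register; afterwards the qudit registers must be returned to a fixed zero state. Uncomputing the qudits is where the symmetric-subspace promise enters, since the information about which qudit held which label must be erased. I will do this with a recursive Dicke-state unpreparation, the standard Bärtschi--Eidenbenz-style construction generalized to $q$ levels.

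Concretely, I would write
\begin{align}
    \ket{D^{(t)}_{\bm n}}=\sum_{j:\,n_j>0}\sqrt{\frac{n_j}{t}}\;\ket{D^{(t-1)}_{\bm n-e_j}}\otimes\ket{j},
\end{align}
which follows directly from the definition of the normalized Dicke states. The step that peels off the last qudit then proceeds in three moves. First, controlled on the last qudit value $j$, increment occupation register $j$; this is a controlled adder. Second, use the remaining $t-1$ qudits, whose occupation is still to be counted, together with the running counts, to compute the amplitudes $\sqrt{n_j/t}$. Third, rotate the last qudit to $\ket{0}$ via the inverse of the state preparation $\sum_j \sqrt{n_j/t}\ket{j}$.

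A naive implementation of the third move costs $O(q)$ gates per qudit, which gives $O(tq)$ overall. To reach $O((t+q)\polylog)$, I would instead use a binary-tree organization. The $q$ labels are split by their bits. Each qudit is unprepared through $\lceil\log_2 q\rceil$ binary decisions, each controlled on subtree totals held in $O(\log q)$ tree-node counters of $O(\log t)$ bits. This gives $\polylog(t,q,1/\epsilon)$ per qudit, including the arithmetic for $\arcsin\sqrt{\cdot}$ and the Solovay--Kitaev compilation of the angle rotations at precision $\epsilon/t$. The $O(q\log t)$ term accounts for initializing and laying out the $q$ leaf registers of width $\lceil\log_2(t+1)\rceil$, and for finally uncomputing the internal tree counters, which are functions of the leaf occupations and can be cleared by reversible summation.

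The main obstacle is controlling, in each step, only the $O(\log q)$ tree counters touched by the current qudit's path. Addressing a counter by the qudit's label must be done in polylog cost, not $O(q)$. I would try a tree-structured data layout: the counter registers are arranged so that the path is determined bit by bit, and a quantum-read/write (QRAM-like) swap network is amortized across all qudits. If this is too costly per step, a fallback is to first sort the qudits with an $O(t\log t\cdot\log q)$ sorting network. Sorting turns the configuration into a run-length form from which occupations are read by a compare-and-count pass. The sorting garbage is then removed using the symmetric-subspace promise, by unpreparing a uniform superposition over permutations weighted as above.

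The error analysis is a union over $O(t+q)$ compiled rotations and arithmetic blocks, each at accuracy $\epsilon/(t+q)$. By the triangle inequality in operator norm this gives the total $\epsilon$ bound in Eq.~\eqref{eq:hayashi-occupation-cost}.
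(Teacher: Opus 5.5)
\textbf{Verdict: genuine gap.} The main route cannot reach the stated gate count, and the fallback leaves unspecified exactly the step the proof depends on.

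\textbf{The peeling step is wrong as written, and the fixed version is too expensive.} Before the other $t-1$ qudits have been cleared, the last qudit of $\ket{D^{(t)}_{\bm n}}$ has reduced state $\sum_j (n_j/t)\ketbra{j}$. This is mixed even when $\bm n$ is known, so no rotation controlled on $\bm n$ sends it to $\ket{0}$.

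The amplitude has to come from the running counts $\bm a$ of the $r$ qudits already processed, including the current one: you undo $\ket{0}\mapsto\sum_j\sqrt{a_j/r}\ket{j}$. The global amplitudes $\sqrt{n_j/t}$ are correct only for the qudit processed last. With this order the unprocessed qudits are never needed.

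That is fixable, but the fixed scheme runs into the cost obstacle you identified, and your proposed remedy does not work. Every increment and every angle evaluation reads or writes a counter chosen by a label held in superposition. In the gate model, a unitary that acts nontrivially on each of $2^\ell$ registers needs at least one gate touching each of them. So the level-$\ell$ access in your tree costs $\Omega(2^\ell)$ gates however the counters are laid out. A QRAM swap network reduces depth, not gate count, and $t$ sequential data-dependent accesses cannot be amortized. The tree therefore costs $\Theta(q\log t)$ gates per qudit and $O(tq\,\polylog)$ overall, not $O((t+q)\polylog)$.

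\textbf{The fallback is essentially the paper's route, but its key step is missing.} The paper sets $\widetilde B_{t,q}=R\widetilde S^\dagger$, with two components:
\begin{itemize}
\item $S:\ket{\ell(\bm n),0}\mapsto\ket{D^{(t)}_{\bm n},0}$ is the coherent symmetrization circuit of Ref.~\cite{liu2025symmetrization}. It consists of reversible sorting and merging of $O(\log(t+2))$-bit records plus $O(t)$ interval-uniform register preparations, each synthesized to error $O(\epsilon/t)$, for $O(t\polylog((tq+2)/\epsilon))$ gates.
\item $R:\ket{\ell(\bm n),0}\mapsto\ket{0,\bm n}$ is an exact Batcher network on $t+q$ records, using $O((t+q)\log^3(t+q+2))$ gates.
\end{itemize}

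The step you describe only as ``unpreparing a uniform superposition over permutations'' is $\widetilde S^\dagger$, and it is not routine when labels repeat:
\begin{itemize}
\item The map $\sigma\mapsto\sigma\cdot\ell(\bm n)$ is $\bm n!$-to-one, so a permutation register cannot be uncomputed from the permuted word; the distinct-key trick fails at ties.
\item The comparator record left by sorting a Dicke state is an $\bm n$-dependent superposition that needs its own preparation circuit. The merging construction is what supplies it.
\end{itemize}

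Your compare-and-count pass has the same problem in another form. It must not be a scan that writes into label-addressed registers, since that brings back the $\Omega(tq)$ cost. The paper batches all writes by sorting the $t$ word records together with $q$ label records.

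Once you specify the multiset symmetrization circuit and this $(t+q)$-record sorting network, the fallback gives the claimed bound.
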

\begin{proof}
    Let $\ell(\bm n)=1^{n_1}\cdots q^{n_q}$ be the sorted word with occupation $\bm n$.
    The coherent symmetrization circuit of Ref.~\cite{liu2025symmetrization} implements $S:\ket{\ell(\bm n),0}\mapsto\ket{D^{(t)}_{\bm n},0}$ on superpositions of sorted words; it consists of reversible sorting and merging steps on $O(\log(t+2))$-bit records and of the preparation of $O(t)$ interval-uniform registers, and synthesizing each preparation to error $O(\epsilon/t)$ gives an approximation $\widetilde S$ with $\norm{(\widetilde S-S)P_{\mathrm{sort}}}_{\mathrm{op}}\leq\epsilon$ on the sorted-word subspace and $O(t\polylog((tq+2)/\epsilon))$ gates.
    A sorted word and its occupation vector determine each other, and the conversion $R:\ket{\ell(\bm n),0}\mapsto\ket{0,\bm n}$ is implemented exactly by a fixed sorting network on $t+q$ records of $O(\log(t+q+2))$ bits~\cite{liu2025symmetrization}.
    By using Batcher's odd-even mergesort network~\cite{batcher1968sorting}, the conversion uses $O((t+q)\log^3(t+q+2))$ gates.
    Thus, the total gate complexity of $\widetilde B_{t,q} = R \widetilde{S}^{\dagger}$ is given by $O((t+q)\polylog((t+q+2)/\epsilon))$.
\end{proof}

\paragraph{Step 2: Quadrature measurements via the quantum Hermite transform.}
To implement the quadrature measurements with QHTs, we use $\braket{x}{n}=\varphi_n(x)$ and $\braket{p}{n}=(-\mathrm{i})^n\varphi_n(p)$~\cite{lvovsky2009continuous}, where $\varphi_n$ is defined by
\begin{align}
    \varphi_n(x)\coloneqq\frac{H_n(x)e^{-x^2/2}}{\pi^{1/4}\sqrt{2^n n!}},
    \label{eq:hayashi-hermite-function}
\end{align}
using the Hermite polynomials $H_n$.
A position-type QHT approximates the map from occupation-basis amplitudes to sampled position wavefunctions $\varphi_n$ in the grid basis, and $P$ followed by a position-type QHT approximates the map to sampled momentum wavefunctions $(-\mathrm{i})^n\varphi_n$.
Thus, the QHTs followed by computational-basis measurements implement the heterodyne measurement on the occupation registers.

To describe the above procedure more precisely, we introduce the occupation register $\mcE_t\coloneqq\operatorname{span}\{\ket0,\ldots,\ket t\}$, and define the ideal position representation $V_t\ket n\coloneqq\varphi_n$.
For a power-of-two grid size $M\geq2$, let $\mcG_M\coloneqq\{-M/2,\ldots,M/2-1\}$ denote the centered signed grid indices.
We take a grid of size $M$ with spacing $h\coloneqq\sqrt{2\pi/M}$ and cells $I_j\coloneqq[jh-h/2,jh+h/2)$, $j\in\mcG_M$, and define the sampling map
\begin{align}
    S_Mf\coloneqq\sqrt h\sum_{j\in\mcG_M}f(jh)\ket j
\end{align}
and the isometric embedding of grid vectors into the space of piecewise constant functions
\begin{align}
    J_M\ket j\coloneqq h^{-1/2}\mathbf 1_{I_j}
\end{align}
so that $S_M J_M = \1$.
For $0<\eta<1/2$, choose an integer $N\geq t+1$ with $N>\log(1/\eta)$ as the cutoff of the QHT and pad the occupation registers with zeros to accommodate the levels $0,\ldots,N-1$.
To match Eq.~\eqref{eq:hayashi-hermite-function}, we precede the QHT of Ref.~\cite{jain2025hermite}, whose Hermite functions include a factor $(-1)^n$, by the parity phase $\sum_{n=0}^{N-1}(-1)^n\ketbra{n}$, implemented by a $Z$ gate on the least significant occupation bit.
Before fixed-gate synthesis, the QHT~\cite{jain2025hermite} provides an isometry $V_{\mathrm{QHT}}$ satisfying
\begin{align}
    \norm{V_{\mathrm{QHT}}-(S_MV_t)\otimes\ket0_{\mathrm{work}}}_{\mathrm{op},\mcE_t}\leq\eta.
\label{eq:hayashi-qht-sampling-guarantee}
\end{align}
The gate complexity is $G_Q(N;\eta)=O((\log N+\log(1/\eta))^3\log(1/\eta))$, with a grid size $M\geq N$ that can be chosen to be polynomially bounded as $M=\poly(N,1/\eta)\geq N$.

\begin{lemma}[Grid comparison]
\label{lem:hayashi-grid-comparison}
For $f\in\operatorname{span}\{\varphi_0,\ldots,\varphi_t\}$, possibly with Hilbert-space-valued coefficients,
$\norm{J_MS_Mf-f}_2\leq h\sqrt{2t+1}\,\norm{f}_2$.
\end{lemma}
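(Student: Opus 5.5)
The plan is to write $J_MS_Mf$ explicitly and split the error into an in-grid discretization part and an out-of-grid tail. Both parts will be controlled by the harmonic-oscillator bound $\|f'\|_2^2+\|xf\|_2^2\le(2t+1)\|f\|_2^2$. By definition,
\begin{align}
J_MS_Mf=\sqrt h\sum_{j\in\mcG_M}f(jh)\,h^{-1/2}\mathbf 1_{I_j}=\sum_{j\in\mcG_M}f(jh)\mathbf 1_{I_j}.
\end{align}
So on each cell $I_j$ the approximation is the value of $f$ at the cell midpoint $jh$. Outside $\Omega\coloneqq\bigcup_{j\in\mcG_M}I_j=[-h(M+1)/2,\,h(M-1)/2)$ it is zero. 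Hence
\begin{align}
\|J_MS_Mf-f\|_2^2=\sum_{j\in\mcG_M}\int_{I_j}\|f(x)-f(jh)\|^2\dd x+\int_{\RR\setminus\Omega}\|f(x)\|^2\dd x .
\end{align}
Everything below uses only pointwise norms, Cauchy--Schwarz and integration by parts. These hold verbatim for Hilbert-space-valued $f=\sum_{n\le t}\varphi_n\otimes v_n$, so the vector-valued case needs no separate treatment.

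\textbf{In-grid term.} For $x\in I_j$, Cauchy--Schwarz on $f(x)-f(jh)=\int_{jh}^x f'$ gives
\begin{align}
\|f(x)-f(jh)\|^2\le|x-jh|\int_{I_j}\|f'\|^2 .
\end{align}
Integrating over $I_j$ uses $\int_{I_j}|x-jh|\dd x=h^2/4$, so the in-grid term is at most $(h^2/4)\|f'\|_2^2$.

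\textbf{Tail term.} Since $\RR\setminus\Omega\subseteq\{|x|\ge L\}$ with $L\coloneqq h(M-1)/2$, the tail term is at most $\|xf\|_2^2/L^2$.

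\textbf{Oscillator bound.} I will then use that $\varphi_n$ are the eigenfunctions of $-\partial_x^2+x^2$ with eigenvalue $2n+1$. After integration by parts, which is justified by Gaussian decay,
\begin{align}
\|f'\|_2^2+\|xf\|_2^2=\langle f,(-\partial_x^2+x^2)f\rangle=\sum_{n\le t}(2n+1)\|v_n\|^2\le(2t+1)\|f\|_2^2 .
\end{align}

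\textbf{Combining.} This gives
\begin{align}
\|J_MS_Mf-f\|_2^2\le\max\{h^2/4,\,L^{-2}\}\,(2t+1)\|f\|_2^2 .
\end{align}
It therefore suffices to check $L^{-1}\le h$, i.e.\ $h^2(M-1)/2\ge1$. With $h^2=2\pi/M$ this reads $\pi(M-1)/M\ge1$, which holds for every $M\ge2$. Taking square roots then gives the claimed bound $h\sqrt{2t+1}\,\|f\|_2$.

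The step needing the most care is the tail. One has to get the asymmetric grid endpoints right; $L=h(M-1)/2$ is the smaller half-width. The key observation is that the specific spacing $h=\sqrt{2\pi/M}$ makes the grid half-width of order $1/h$. This is what lets the Chebyshev-type bound $\|xf\|_2^2/L^2$ be absorbed into the same $h^2(2t+1)$ factor. The remaining steps are routine.
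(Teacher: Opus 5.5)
Your proof is correct and follows the paper's argument essentially step for step. You use the same split into an in-cell discretization error, handled by Cauchy--Schwarz on $f(x)-f(jh)=\int_{jh}^x f'$, and an out-of-grid tail bounded by $\|xf\|_2^2/R^2$ with $R=(M-1)h/2$. Both pieces are absorbed by the oscillator bound $\|f'\|_2^2+\|xf\|_2^2\le(2t+1)\|f\|_2^2$ together with the check $R^{-1}\le h$. The only difference is that you integrate $|x-jh|$ exactly over each cell, which gives the sharper in-cell constant $h^2/4$ where the paper uses $h^2$; this does not change the stated bound.
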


\begin{proof}
    The harmonic oscillator wave function $\varphi_n(x)$ satisfies the eigenvalue equation
    \begin{align}
        \left(-\frac{d^2}{dx^2}+x^2\right) \varphi_n(x)
        = (2n + 1) \varphi_n(x).
    \end{align}
    Thus, write
    \begin{align}
        f(x)=\sum_{n=0}^{t}c_n\varphi_n(x)\ket{r_n},
    \end{align}
    where $c_n\in\CC$ and each $\ket{r_n}$ is a unit vector in an auxiliary Hilbert space $\mcR$; when $c_n=0$, the choice of $\ket{r_n}$ is arbitrary.  The harmonic-oscillator operator acts trivially on $\mcR$, and hence
    \begin{align}
        \norm{f'}_2^2 + \norm{x f}_2^2
        &= \left\langle f, \qty(-\frac{d^2}{dx^2}+x^2)\otimes\1_{\mcR} f \right\rangle\\
        &= \sum_{n=0}^{t} (2n + 1) |c_n|^2\\
        &\leq (2t+1) \sum_{n=0}^{t} |c_n|^2\\
        &= (2t+1) \norm{f}_2^2,
    \end{align}
    where $\langle f,g\rangle=\int\langle f(x),g(x)\rangle\,\mathrm{d}x$.  We used integration by parts in the first line and the orthogonality of the Hermite functions in the second and final lines.
    For $x\in I_j$, since $f(x)-f(jh)=\int_{jh}^{x}f'(y)\,\mathrm{d}y$ holds, the Cauchy--Schwarz inequality gives
    \begin{align}
    \norm{f(x)-f(jh)}^2 \leq \abs{x-jh}\abs{\int_{jh}^{x}\norm{f'(y)}^2\,\mathrm{d}y} \leq h \int_{I_j}\norm{f'(y)}^2\,\mathrm{d}y,
    \end{align}
    i.e.,
    \begin{align}
    \int_{I_j}\norm{f(x)-f(jh)}^2\,\mathrm{d}x \leq h^2 \int_{I_j}\norm{f'(x)}^2\,\mathrm{d}x.
    \end{align}
    Since $\bigcup_j I_j$ contains $[-R,R]$ with $R=(M-1)h/2$, we have
    \begin{align}
        \norm{J_MS_Mf-f}_2^2
        &\leq \sum_{j} \int_{I_j} \norm{f(x)-f(jh)}^2 \dd x + \int_{|x|>R} \norm{f(x)}^2 \dd x\\
        &\leq h^2 \sum_j \int_{I_j} \norm{f'(x)}^2 \dd x + R^{-2} \|xf\|_2^2\\
        &\leq h^2 (\norm{f'}^2_2 + \norm{xf}^2_2)\\
        &\leq h^2 (2t+1) \norm{f}_2^2,
    \end{align}
    where we use
    \begin{align}
        R^{-1} = {M \over M-1} {h \over \pi} \leq h
    \end{align}
    using $h = \sqrt{2\pi/M}$ and $M\geq 2$.
\end{proof}

Let $\eta<\delta<1/2$ and let $\widetilde V$ be a fixed-gate implementation of $V_{\mathrm{QHT}}$ with $\norm{\widetilde V-V_{\mathrm{QHT}}}_{\mathrm{op},\mcE_t}\leq\delta-\eta$, so that $\delta$ bounds the total QHT implementation error.
Combining Eq.~\eqref{eq:hayashi-qht-sampling-guarantee} with Lemma~\ref{lem:hayashi-grid-comparison} and the triangle inequality, we find that the implemented and ideal position representations of one register satisfy
\begin{align}
    \label{eq:hayashi-isometry-error}
    \norm{(J_M\otimes\1_{\mathrm{work}})\widetilde V-V_t\otimes\ket0_{\mathrm{work}}}_{\mathrm{op},\mcE_t}\leq\delta+h\sqrt{2t+1}.
\end{align}
Thus, we can bound the diamond distance error of the implemented measurement from the ideal binned heterodyne measurement as follows.

\begin{lemma}[Error before normalization]
    \label{lem:hayashi-binned-channel}
    Let $\mcM_{\mathrm{het,bin}}$ be the heterodyne measurement of Theorem~\ref{thm:hayashi-exact-heterodyne} with each of the $2q$ real quadrature outcomes binned into the cells $I_j$ and all outcomes with an overflowing coordinate merged into one label $\perp$, and let $\widetilde{\mcM}$ be the measurement implemented by steps~1--2.
    Then
    \begin{align}
        \label{eq:hayashi-total-error-budget}
        \frac12\norm{\widetilde{\mcM}-\mcM_{\mathrm{het,bin}}}_\diamond\leq\varepsilon_B+2q\bigl(\delta+h\sqrt{2t+1}\bigr)\eqqcolon E.
    \end{align}
\end{lemma}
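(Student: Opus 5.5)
The plan is a hybrid argument. Write the implemented measurement $\widetilde{\mcM}$ as the composition of three pieces: the encoder $\widetilde B_{t,2q}$ acting after $V_+^{\otimes t}$, the $2q$ single-register unitaries (the phase $P$ on each $A_{i,1}$, the parity $Z$, and the synthesized QHT $\widetilde V$), and a final computational-basis measurement on the grid registers. The binned ideal measurement $\mcM_{\mathrm{het,bin}}$ has the same three-stage structure. By Lemma~\ref{lem:hayashi-combined-encoding}, $B_{t,2q}V_+^{\otimes t}=U_{\mathrm{BS}}^{\otimes q}V_{\mathrm{vac}}B_{t,q}$. Then on each register one applies the ideal position representation $V_t$ (preceded by $P$ for the momentum registers, using $\braket{p}{n}=(-\mathrm i)^n\varphi_n(p)$), followed by the projection-valued measure $\{\mathbf 1_{I_j}\}_{j\in\mcG_M}\cup\{\perp\}$ on $L^2(\RR)^{\otimes 2q}$. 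Under this identification, and by Lemma~\ref{lem:hayashi-heterodyne-amplitude} with Theorem~\ref{thm:hayashi-exact-heterodyne}, $\mcM_{\mathrm{het,bin}}$ is exactly the binned multimode heterodyne measurement.

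\emph{Step one (encoder replacement).} Replace $\widetilde B_{t,2q}$ by $B_{t,2q}$. Proposition~\ref{prop:hayashi-occupation-encoder} gives an operator-norm error at most $\varepsilon_B$ on the relevant input subspace; note that $V_+^{\otimes t}$ maps $\operatorname{Sym}^t(\CC^q)$ into $\operatorname{Sym}^t(\CC^{2q})$, where the proposition applies. A standard fact then controls the channel distance: if $\norm{U-U'}_{\mathrm{op}}\leq\epsilon$ on the input subspace, then the diamond distance of the channels, restricted to inputs on that subspace (with a reference system allowed), is at most $\epsilon$ in the normalization $\tfrac12\norm{\cdot}_\diamond$. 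The subsequent operations are a fixed channel, so by contractivity this replacement costs at most $\varepsilon_B$.

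\emph{Step two (register-by-register replacement).} After the exact encoder, every occupation register lies in $\mcE_t$, since each register's occupation is at most $t$. The padding to level $N-1$ does not matter, because the states carry no weight there. The phases $P$ and $Z$ preserve $\mcE_t$. For each of the $2q$ registers in turn, replace the measurement ``apply $\widetilde V$, then measure the grid'' by ``apply $V_t$, then measure the bins $I_j$''.

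The key observation is that measuring $\ket j$ on the grid equals the measurement $\{J_M\ketbra{j}J_M^\dagger\}$ composed with the isometry $J_M$. Since $J_M\ket j=h^{-1/2}\mathbf 1_{I_j}$, this agrees with the projective measurement $\{\mathbf 1_{I_j}\}$ on the range of $J_M$. Both the implemented register map $(J_M\otimes\1)\widetilde V$ and the ideal map $V_t\otimes\ket0$ can therefore be followed by the same projective measurement $\{\mathbf 1_{I_j}\}_{j}\cup\{\mathbf 1_{\RR\setminus\bigcup I_j}\}$, discarding the work register. The implemented side never produces $\perp$, and the ideal side produces $\perp$ exactly on the overflow.

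Equation~\eqref{eq:hayashi-isometry-error} bounds the operator-norm distance between these two isometries on $\mcE_t$ by $\delta+h\sqrt{2t+1}$, and this bound holds even when the register is entangled with all the other registers and the reference. The same isometry-to-channel bound therefore gives diamond error at most $\delta+h\sqrt{2t+1}$ for each of the $2q$ replacements.

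\emph{Step three (summing).} The replacements act on different tensor factors, and each intermediate hybrid keeps the untouched registers inside $\mcE_t$. The triangle inequality then gives the total bound $E$. The main obstacle is the bookkeeping in Step two. Each hybrid must remain a valid channel into a common classical outcome space, which requires the merged $\perp$ label and the range of $J_M$ to line up. The error must also be measured only on $\mcE_t$ rather than on the padded space. For the first of these I would define the output alphabet as $\mcG_M^{2q}\cup\{\perp\}$ from the start and treat a register already converted to continuous form via the bin projectors.
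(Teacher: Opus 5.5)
Your proposal is correct and follows the same hybrid argument as the paper. You replace the encoder by $B_{t,2q}V_+^{\otimes t}$ at cost $\varepsilon_B$, then exchange the $2q$ QHTs one register at a time for $V_t$ at cost $\delta+h\sqrt{2t+1}$ each via Eq.~\eqref{eq:hayashi-isometry-error}, using that the grid measurement after $\widetilde V$ is the binned position measurement after $J_M\widetilde V$, and finally identify the ideal binned measurement with $\mcM_{\mathrm{het,bin}}$ through Lemmas~\ref{lem:hayashi-combined-encoding} and~\ref{lem:hayashi-heterodyne-amplitude}; your remark that $V_+^{\otimes t}$ maps $\operatorname{Sym}^t(\CC^q)$ into $\operatorname{Sym}^t(\CC^{2q})$, so that Proposition~\ref{prop:hayashi-occupation-encoder} applies, is a detail the paper leaves implicit.
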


\begin{proof}
    Replacing the implemented preprocessing by the ideal $B_{t,2q}V_+^{\otimes t}$ costs at most $\varepsilon_B$.
    Afterwards the state is supported on $\mcE_t^{\otimes2q}$, because every occupation is at most $t$.
    The phase gates $P$ applied to the registers $A_{i,1}$ are diagonal in the occupation basis and hence preserve this subspace.
    On this subspace, the $2q$ local QHTs can be exchanged one at a time for $V_t$ using Eq.~\eqref{eq:hayashi-isometry-error}, since for purified inputs the output vectors change in norm by at most $\delta+h\sqrt{2t+1}$ per register and the trace distance of pure states is at most the norm distance of their vectors.
    Measuring the grid register after $\widetilde V$ is equivalent to measuring position and binning after its embedding $J_M\widetilde V$.
    Applying this same binned measurement to the ideal representation $V_t$, and discarding the work registers, cannot increase the trace distance.
    Finally, Lemma~\ref{lem:hayashi-heterodyne-amplitude} identifies the ideal binned measurement with $\mcM_{\mathrm{het,bin}}$.
\end{proof}

\paragraph{Step 3: Classical normalization of finite output.}
The measured grid indices indicate the complex vector $z=(j_i+\mathrm{i}k_i)_{i=1}^q$.
Since computing $\hat{u}$ in Eq.~\eqref{eq:hayashi-finite-output} involves normalizing $z$, a small value of $\lVert z\rVert$ can amplify even a small error in $z$ into a considerable error in its direction.
We bound the probability of a small radius to control the error introduced by normalization.

\begin{lemma}[From bins to directions]
    \label{lem:hayashi-radial-coupling}
    For every input on $\operatorname{Sym}^t(\CC^q)$ and every $0<s_0\leq1$ with $h\sqrt{2q}<2\sqrt{s_0}$, the ideal Hayashi outcome $u$ and the protocol output $\widehat u$ admit a coupling satisfying the following bound.
    \begin{align}
        \label{eq:hayashi-strassen-neighborhood}
        \Prob\!\left[\norm{\widehat u-u}_2>\frac{h\sqrt{2q}}{\sqrt{s_0}}\right]\leq E+s_0.
    \end{align}
\end{lemma}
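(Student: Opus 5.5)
We build the coupling in two layers and take a union bound. First, we couple the implemented measurement with the ideal binned heterodyne measurement. Second, we couple the binned heterodyne measurement with the unbinned heterodyne outcome $\alpha$. By Theorem~\ref{thm:hayashi-exact-heterodyne}, the normalized outcome $u=\alpha/\norm{\alpha}_2$ is exactly the Hayashi outcome, and $s=\norm{\alpha}_2^2\sim\mathrm{Gamma}(t+q,1)$ independently of the input.

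\emph{Layer 1.} By Lemma~\ref{lem:hayashi-binned-channel}, the classical output distributions of $\widetilde{\mcM}$ and $\mcM_{\mathrm{het,bin}}$ on the given input are within total variation $E$. The maximal coupling then gives outcome labels that coincide except with probability at most $E$.

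\emph{Layer 2.} We sample $\alpha$ from the ideal heterodyne POVM and define its bin label deterministically, sending any overflowing coordinate to $\perp$. This realizes $\mcM_{\mathrm{het,bin}}$ jointly with $u$. Gluing the two couplings along the common binned label yields a coupling of $(u,\widehat u)$ with the correct marginals. Whenever the labels agree and are not $\perp$, the protocol output is the normalization of the grid vector $z$, rescaled by $h$ (normalization is scale invariant). In that case each real coordinate of $hz$ lies within $h/2$ of the corresponding coordinate of $\alpha$, so $\norm{hz-\alpha}_2\leq h\sqrt{2q}/2$.

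\emph{Normalization step.} We use the elementary bound
\begin{align}
\norm{\frac{a}{\norm{a}_2}-\frac{b}{\norm{b}_2}}_2\leq\frac{2\norm{a-b}_2}{\norm{b}_2},
\end{align}
valid for $b\neq0$ and any $a$; when $a=0$ the right-hand side is at least $2$, which dominates the distance between any two unit vectors. Taking $a=hz$ and $b=\alpha$ gives $\norm{\widehat u-u}_2\leq h\sqrt{2q}/\sqrt{s}$. On the event $s\geq s_0$ this is at most $h\sqrt{2q}/\sqrt{s_0}$.

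\emph{Remaining bad events.} Three events remain: $s<s_0$, label disagreement, and overflow $\perp$.

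The first is controlled by the radial law. Since $t+q\geq2$,
\begin{align}
\Prob[s<s_0]=\int_0^{s_0}g_{t+q}(s)\,\mathrm{d}s\leq\int_0^{s_0}s\,\mathrm{d}s\leq s_0 .
\end{align}

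The overflow event is the main obstacle. If the binned label is $\perp$, both implemented and ideal outputs might be $\perp$ and still "agree", while $\widehat u$ is arbitrary. I would handle this by noting that the ideal grid covers $[-R,R]$ with $R=(M-1)h/2\geq1/h$. The overflow probability is then bounded by the tail term $R^{-2}\norm{xf}_2^2$ already estimated in Lemma~\ref{lem:hayashi-grid-comparison}, which is at most $h^2(2t+1)$ per coordinate. Summed over the $2q$ coordinates, this is already dominated by the $2q\,h\sqrt{2t+1}$ contribution inside $E$ (since $h\sqrt{2t+1}\leq1$ in the relevant regime).

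If the claim must hold without assuming that regime, I would instead refine the argument. The idea is to let the implemented circuit report $\perp$ whenever a grid index sits at the boundary. The truncation part of $J_MS_M f-f$ then already accounts for overflow mass in the diamond bound of Lemma~\ref{lem:hayashi-binned-channel}, so the overflow event can be absorbed into $E$. A union bound then gives total failure probability at most $E+s_0$, as stated.
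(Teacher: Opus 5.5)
Your skeleton is the paper's: a total-variation coupling from Lemma~\ref{lem:hayashi-binned-channel}, glued to the deterministic binning of the ideal heterodyne outcome $\alpha$. Then come the normalization bound, the radial tail $\Prob[s<s_0]\leq s_0$, and a union bound. Your handling of $a=0$ even makes the hypothesis $h\sqrt{2q}<2\sqrt{s_0}$ unnecessary.

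The gap is the overflow step, which you flag as the main obstacle but do not close. Your first route bounds $\Prob[\perp]$ by $2qh^2(2t+1)$ and calls it ``dominated by'' the $2qh\sqrt{2t+1}$ term inside $E$. Being dominated by a term of $E$ does not place it inside $E$. A union bound with this extra event gives $E+s_0+2qh^2(2t+1)$, not $E+s_0$, and it also needs a regime assumption ($h\sqrt{2t+1}\leq1$) that is not in the statement. Your second route modifies the circuit to report $\perp$ at the boundary. The lemma would then no longer be about the protocol output $\widehat u$. Moreover, a circuit that can output $\perp$ makes agreement on $\perp$ possible again, which is exactly the case you wanted to exclude.

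The missing observation is that your worry cannot occur. The implemented measurement $\widetilde{\mcM}$ reads the grid registers in the computational basis, so its outcome always lies in $\mcG_M^{2q}$ and is never $\perp$. Hence, under any coupling, ``labels agree'' already forces the common label to be a genuine grid cell. In particular, $\Prob_{\mathrm{het,bin}}[\perp]$ is at most the disagreement probability, hence at most $E$. This is consistent with Lemma~\ref{lem:hayashi-binned-channel}: there $J_M\widetilde V$ places no mass outside $\bigcup_j I_j$, while $V_t$ may, and both are compared under the same binned POVM that includes $\perp$. So the truncation mass is already charged to $E$. The only bad events are then label disagreement (probability at most $E$) and $s<s_0$ (probability at most $s_0$), which gives exactly $E+s_0$, as in the paper.
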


\begin{proof}
    Due to Lem.~\ref{lem:hayashi-binned-channel}, we can choose a joint distribution of the measurement outcomes of $\widetilde{\mcM}$ and $\mcM_{\mathrm{het,bin}}$ so that they differ with probability at most $E$.
    When the labels agree and $\norm{\alpha}_2^2\geq s_0$ holds, the label is not $\perp$ and the bin center $c(\alpha)$ satisfies $\norm{c(\alpha)-\alpha}_2\leq h\sqrt{2q}/2<\sqrt{s_0}\leq\norm{\alpha}_2$.
    Thus, we have $c(\alpha)\neq0$ and
    \begin{align}
        \label{eq:hayashi-normalization-stability}
        \left\|\frac{c(\alpha)}{\norm{c(\alpha)}_2}-\frac{\alpha}{\norm{\alpha}_2}\right\|_2\leq\frac{2\norm{c(\alpha)-\alpha}_2}{\norm{\alpha}_2}\leq\frac{h\sqrt{2q}}{\sqrt{s_0}},
    \end{align}
    where $u=\alpha/\norm{\alpha}_2$ by Thm.~\ref{thm:hayashi-exact-heterodyne} and $\widehat u=c(\alpha)/\norm{c(\alpha)}_2$ by step~3.
    By Thm.~\ref{thm:hayashi-exact-heterodyne}, $\|\alpha\|_2^2 \sim\operatorname{Gamma}(t+q,1)$ for every input, and thus we have
    \begin{align}
        \Prob\!\left[\|\alpha\|_2^2<s_0\right]\leq\int_0^{s_0}\frac{s^{t+q-1}}{\Gamma(t+q)}\,\mathrm ds=\frac{s_0^{t+q}}{\Gamma(t+q+1)}\leq s_0.
        \label{eq:hayashi-gamma-cdf}
    \end{align}
    A union bound yields
    \begin{align}
        \Prob\!\left[\norm{\widehat u-u}_2>\frac{h\sqrt{2q}}{\sqrt{s_0}}\right]
        &\leq\Prob[\|\alpha\|_2^2\geq s_0 \land \text{labels differ}]+\Prob[\|\alpha\|_2^2<s_0]\notag
        \leq E+s_0,
    \end{align}
    which completes the proof.
\end{proof}

We combine the above lemmas to prove Theorem~\ref{thm:efficient-hayashi-measurement}.

\begin{proof}[Proof of Theorem~\ref{thm:efficient-hayashi-measurement}]
    As shown in Thm.~\ref{thm:hayashi-exact-heterodyne}, the Hayashi measurement is realized by the heterodyne measurement on the $q$-mode bosonic system, and the implemented circuit approximates the heterodyne measurement by the three-step protocol.
    We conclude the proof by choosing the parameters of the protocol to satisfy the error bound in the theorem statement.
    We choose the parameters as
    \begin{align}
        \varepsilon_B=\frac{\zeta}{4},\qquad
        \delta=\frac{\zeta}{8q},\qquad s_0=\frac{\zeta}{4},\qquad
        h\leq\min\left\{\frac{\zeta}{8q\sqrt{2t+1}},
        \frac{\tau\sqrt\zeta}{2\sqrt{2q}}\right\}.
    \end{align}
    Set $\eta=\delta/2$, reserve the remaining $\delta/2$ for fixed-gate synthesis, and choose
    \begin{align}
        N=\left\lceil\max\left\{t+1,\ 1+\log(2/\delta),
        \frac{128\pi q^2(2t+1)}{\zeta^2},
        \frac{16\pi q}{\tau^2\zeta}\right\}\right\rceil.
    \end{align}
    The QHT supplies a power-of-two $M=\poly(N,1/\eta)\geq N$, which ensures the required bounds on $h=\sqrt{2\pi/M}$.
    These choices give
    \begin{align}
        E+s_0
        &\leq\frac{\zeta}{4}+2q\frac{\zeta}{8q}
        +2q\frac{\zeta}{8q}+\frac{\zeta}{4}=\zeta,\\
        \frac{h\sqrt{2q}}{\sqrt{s_0}}&\leq\tau.
    \end{align}
    In particular, $h\sqrt{2q}\leq\tau\sqrt{s_0}<2\sqrt{s_0}$, so Lemma~\ref{lem:hayashi-radial-coupling} applies and yields
    \begin{align}
        \Prob\qty[\|\widehat{u}-u\|_2 > \tau] \leq \zeta.
    \end{align}
    The chosen $N$ and the resulting $M$ are polynomially bounded in $t,q,\tau^{-1},\zeta^{-1}$. The occupation encoding and the $2q$ QHTs therefore still require $O((t+q)\polylog(t,q,\tau^{-1},\zeta^{-1}))$ gates.
\end{proof}

\section*{AI disclosure}

Part of this work is supported by GPT 5.6-sol Ultra and GPT-6 Astra Ultra as follows.
The main idea of constructing random purification and dilation of covariant quantum channels is given by human authors.
The initial ideas for learning protocols and construction of the Hayashi measurement are due to human authors, and the details of the proofs are supported by AI.
The proof of lower bounds is mainly generated by AI under instructions from human authors.
The main text is written by human authors with the assistance of AI to improve the clarity and correctness of the text.
Human authors have verified the correctness of the proofs.

\section*{Acknowledgments}
We thank Ryotaro Niwa, Takeru Utsumi, Ryuji Takagi, and Mio Murao for the collaboration on related work.
This work was supported by JSPS KAKENHI Grant No.~26K25550, NWO grant NGF.1623.23.025 (“Qudits in theory and experiment”), and EPSRC grant EP/Z000580/1.
D.G. and P.M.P. would like to thank the Isaac Newton Institute for Mathematical Sciences, Cambridge, for support and hospitality during the programme Mathematics of many-body entanglement, where work on this paper was undertaken.

\bibliographystyle{alphaurl}
\bibliography{main}

\clearpage

\appendix

\section{Quantum circuit of the Schur transform for the wreath product group}
\label{appendix:efficient-wreath-schur}

We present and analyze a quantum circuit for the generalized Schur transform
$U_{\mathrm{Sch}}^{(\Wr_n)}$ introduced in
Section~\ref{sec:random-purification-dilation}, using the single-copy
decomposition~\eqref{eq:schur-G} and the notation from the proof of
Proposition~\ref{prop:double-centralizer}.
The circuit concatenates the basis changes in
Eqs.~\eqref{eq:basis-change-1}, \eqref{eq:basis-change-2}, and
\eqref{eq:generalized-schur-decomposition}.

For the circuit description, we restrict the irrep labels to the finite set
\[
    \Lambda:=\{\lambda\in\irrep G:m_\lambda>0\}.
\]
Accordingly, the occupation vectors $\bm n$ and multipartitions $\bm\alpha$
are indexed by $\Lambda$, with all other components omitted.
Using the spaces $\mcR_{\bm n}$, $\mcS_{\bm\alpha}$,
$\mcG_{\bm\alpha}$, and $\mcU_{\bm\alpha}$ defined in the proof of Proposition~\ref{prop:double-centralizer},
the circuit implements the isomorphism
\begin{align}
    \mcH^{\otimes n}
    \overset{U_{\mathrm{Sch}}^{(\Wr_n)}}{\cong}
    \bigoplus_{\bm\alpha}
    \underbrace{
        \left(
            \CC[\mathsf P_{\bm n}]
            \otimes
            \mcR_{\bm n}
            \otimes
            \mcS_{\bm\alpha}
        \right)
    }_{\mcG_{\bm\alpha}}
    \otimes
    \underbrace{
        \left(
            \bigotimes_{\lambda\in\Lambda}
            \mcU_{\alpha^\lambda}^{(m_\lambda)}
        \right)
    }_{\mcU_{\bm\alpha}}.
    \label{eq:wreath-schur-abstract-isometry}
\end{align}

The circuit has three steps. First, decompose every copy of $\mcH$ into its
$G$-irrep and multiplicity spaces. Second, stably sort the $G$-irrep labels
while carrying the corresponding representation and multiplicity registers
through the same sort; the stable sorting permutation is retained as part
of the output. Third, apply a Schur transform recursively to the
multiplicity registers inside each block of equal labels.
Figure~\ref{fig:wreath-schur-transform} shows the quantum circuit implementing this procedure, whose gate complexity is analyzed in Proposition~\ref{prop:coherent-wreath-schur-complexity}.

\begin{figure}[htbp]
    \centering
    \includegraphics[width=0.9\linewidth]{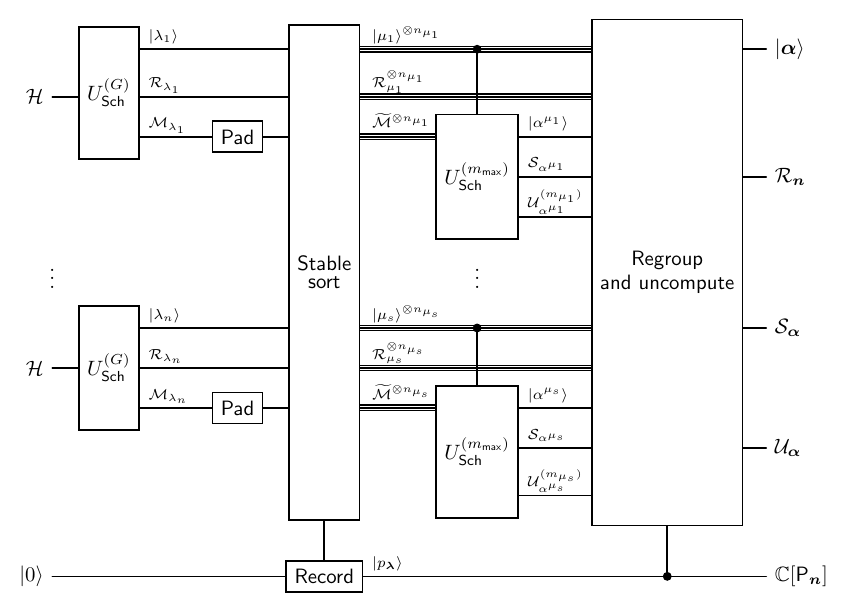}
    \caption{Quantum circuit for the generalized Schur transform $U_{\mathrm{Sch}}^{(\Wr_n)}$.
  The displayed block structure represents one branch of the coherent computation, with block boundaries controlled by the irrep labels.
    Here, $\mu_1,\ldots,\mu_s$ are the distinct labels appearing in $(\lambda_1,\ldots,\lambda_n)$ in sorted order, and $s=\#\left\{\lambda\in\Lambda\;\middle|\;n_\lambda>0\right\}$.
    $\mathsf{Pad}$ implements the inclusion $\iota_\lambda$ into $\widetilde{\mcM}$.
    $\mathsf{Stable\ sort}$ coherently sorts the labels together with their representation and multiplicity registers, recording $p_{\bm\lambda}$.
    The label registers control the recursive Schur transforms: comparison of adjacent sorted labels determines whether to apply the next Clebsch--Gordan step or initialize a new block.
    The final block uncomputes redundant labels and temporary sorting data and regroups the remaining registers into the decomposition in Eq.~\eqref{eq:wreath-schur-abstract-isometry}.}
    \label{fig:wreath-schur-transform}
\end{figure}

\paragraph{Parameters and complexity convention.}

Define
\begin{align}
    m_{\max}
    :=
    \max_{\lambda\in\Lambda}m_\lambda,
    \qquad
    \widetilde{\mcM}
    :=
    \CC^{m_{\max}}.
\end{align}
For every $\lambda\in\Lambda$, fix an isometric inclusion
\begin{align}\label{eq: iota definition}
    \iota_\lambda:
    \mcM_\lambda
    \hookrightarrow
    \widetilde{\mcM}
\end{align}
that maps a chosen basis of $\mcM_\lambda$ to the first $m_\lambda$
basis vectors of $\widetilde{\mcM}$. We write $N_{\mathrm{Sch}}^{(G)}(\delta)$ for the one- and two-qubit gate count of the local transformation
\begin{align}\label{eq: H embedded}
    \mcH
    \longrightarrow
    \bigoplus_{\lambda\in\Lambda}
    \mcR_\lambda\otimes\widetilde{\mcM},
\end{align}
obtained by composing the local $G$-irrep decomposition with the inclusions $\iota_\lambda$, with operator-norm error at most $\delta$.

Let $b_\Lambda$ and $b_R$ be the maximum number of qubits used to encode an irrep label in $\Lambda$ and a basis vector of any $\mcR_\lambda$ respectively.

\begin{proposition}[Coherent wreath-product Schur transform]
\label{prop:coherent-wreath-schur-complexity}
For $n\geq1$ and $0<\varepsilon<1$, the generalized Schur transform
$U_{\mathrm{Sch}}^{(\Wr_n)}$ admits a circuit with operator-norm error
at most $\varepsilon$ and one- and two-qubit gate count
\begin{align}
    N_{\mathrm{Sch}}^{(\Wr_n)}(\varepsilon)
    \leq
    nN_{\mathrm{Sch}}^{(G)}
    \!\left(\frac{\varepsilon}{2n}\right)
    +
    O\!\left(
        n\bigl(b_\Lambda+b_R+m_{\max}^4\bigr)
        \polylog(n,m_{\max},1/\varepsilon)
    \right).
    \label{eq:wreath-schur-gate-bound}
\end{align}
\end{proposition}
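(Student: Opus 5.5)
We build the circuit by following the three basis changes \eqref{eq:basis-change-1}, \eqref{eq:basis-change-2}, and \eqref{eq:generalized-schur-decomposition}. For each step we account for its gate cost and its operator-norm error, and the errors are combined at the end by subadditivity under composition. The target is a total error of $\varepsilon$. We give $\varepsilon/2$ to the $n$ local transforms, which costs $\varepsilon/(2n)$ each, and the remaining $\varepsilon/2$ to all other steps.

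\emph{Step 1 (local decomposition).} On each copy we apply the local transformation \eqref{eq: H embedded} with precision $\varepsilon/(2n)$. This writes a label register $\ket{\lambda_i}$ of $b_\Lambda$ qubits, an irrep register of $b_R$ qubits, and a padded multiplicity register in $\widetilde{\mcM}$. It accounts for the first term $nN_{\mathrm{Sch}}^{(G)}(\varepsilon/(2n))$ in \eqref{eq:wreath-schur-gate-bound}.

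\emph{Step 2 (stable sort).} We coherently stable-sort the $n$ records. Each record is $(\lambda_i,\text{irrep register},\text{multiplicity register})$, tagged with its original position $i$ on $O(\log n)$ bits, and the key is $(\lambda_i,i)$. A Batcher sorting network does this reversibly with $O(n\log^2 n)$ comparators. Each comparator compares $O(b_\Lambda+\log n)$ bits and conditionally swaps $O(b_\Lambda+b_R+\log m_{\max})$ qubits. After sorting, the position tags read $p_{\bm\lambda}^{-1}$ in sorted order. We keep this register and reversibly re-encode it as an index in $\mathsf P_{\bm n}$, which gives the $\CC[\mathsf P_{\bm n}]$ factor. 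The comparator garbage is uncomputed in the standard way: we copy out the tags and run the network backwards on auxiliary registers. We also compute the occupation vector $\bm n$, or equivalently the block boundaries, by comparing adjacent sorted labels. This implements \eqref{eq:basis-change-2} exactly, at cost $O(n(b_\Lambda+b_R)\polylog(n,m_{\max}))$.

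\emph{Step 3 (blockwise Schur transforms).} Within each block of equal labels we need the Schur transform on $\mcM_\lambda^{\otimes n_\lambda}$. We implement it as a sequential Bacon--Chuang--Harrow cascade of $\U(m_\lambda)$ Clebsch--Gordan transforms running along the sorted list. Position $i$ is controlled on whether $\lambda_i=\lambda_{i-1}$: if so, it applies the next CG step to the running Young-diagram, Gelfand--Tsetlin, and Yamanouchi registers; if not, it closes the previous block and initializes a new one. To keep a single uniform circuit we work in $\widetilde{\mcM}=\CC^{m_{\max}}$ and control on $m_\lambda$, since $\iota_\lambda$ embeds $\mcM_\lambda$ as the first $m_\lambda$ coordinates. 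On this subspace the $\U(m_{\max})$ CG transform restricted to tableaux with entries at most $m_\lambda$ coincides with the $\U(m_\lambda)$ one.

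Each CG step costs $\poly(m_{\max})\polylog(n,1/\varepsilon)$ gates. The $m_{\max}^4$ in \eqref{eq:wreath-schur-gate-bound} should come from the known $O(m^4\polylog)$-type bound per step (Harrow's thesis gives roughly $m^3$ to $m^4$), and we would quote that bound. Allotting precision $\varepsilon/(2n)$ to each of the $n$ steps keeps this part's error at most $\varepsilon/2$.

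Finally we permute the Young-Yamanouchi registers into the basis of $\mcS_{\bm\alpha}$ and the Gelfand--Tsetlin registers into $\mcU_{\bm\alpha}$. Redundant labels are uncomputed, since $\bm n$ is determined by $\bm\alpha$ and the sorted labels are determined by $\bm n$, and the result is regrouped as in \eqref{eq:wreath-schur-abstract-isometry}.

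\emph{Main obstacle.} The counting is routine. The delicate step is showing that the circuit realizes precisely the fixed basis in which $\gamma_{\bm\alpha}$ takes the matrix form \eqref{eq:wreath-irrep-matrices}, with the identification $\gamma_{\bm\alpha}(t_p)(\ket e\otimes\ket v)\leftrightarrow\ket p\otimes\ket v$. To check this, we verify on basis states that applying $R(\bm g,\sigma)$ before the circuit equals applying \eqref{eq:wreath-irrep-matrices} after it. The argument goes as follows. Stable sorting sends the word $\sigma\cdot\bm\lambda$ to the tag $q$ and produces a residual within-block permutation $\eta=q\sigma p^{-1}\in H_{\bm n}$. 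This $\eta$ acts as $P_{\mcR_{\bm n}}(\eta)$ on the irrep registers, which were carried through the sort. On the multiplicity registers it acts as $g_{\bm\alpha}(\eta)$ in Young--Yamanouchi form, by Schur--Weyl covariance of the blockwise transforms. The local $G$-action becomes $U_{\bm n}(q\cdot\bm g)$ because it travels with the records.
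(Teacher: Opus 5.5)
Your proposal is correct and follows the paper's proof essentially step for step. It uses local isotypic decomposition at precision $\varepsilon/(2n)$ per copy, then a reversible stable sort that retains $p_{\bm\lambda}$ as the $\CC[\mathsf P_{\bm n}]$ register, then a blockwise BCH Clebsch--Gordan cascade in the padded space $\widetilde{\mcM}$ with the quoted $O(m_{\max}^4\polylog)$ per-step cost and total error $\varepsilon/2$, and finally the same basis-compatibility check against Eq.~\eqref{eq:wreath-irrep-matrices}. Your use of the key $(\lambda_i,i)$ to make a Batcher network realize the stable sort is a helpful explicit detail that the paper leaves implicit.
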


\begin{proof}
We describe the circuit as a three-step procedure, and analyze the cost of each step separately.

\begin{enumerate}
   \item \textbf{Local isotypic decomposition.}

Apply 
\begin{align}
    \mcH
    \xrightarrow{\,U_{\mathrm{Sch}}^{(G)}\,}
    \bigoplus_{\lambda\in\Lambda}
    \mcR_\lambda\otimes\mcM_\lambda
    \xhookrightarrow{\;\bigoplus_{\lambda}
        (\1_{\mcR_\lambda}\otimes\iota_\lambda)\;}
    \bigoplus_{\lambda\in\Lambda}
    \mcR_\lambda\otimes\widetilde{\mcM}
\end{align}
independently to all $n$ tensor factors.
After applying this map, the $i$th factor is
represented by an irrep label $\lambda_i$, a basis register for
$\mcR_{\lambda_i}$, and a multiplicity register with
state lying in the embedded copy $\iota_{\lambda_i}(\mcM_{\lambda_i})$.
The $\mcR_{\lambda_i}$ registers use the fixed local irrep bases of Section~\ref{sec:random-purification-dilation}, shared with the $G$-QFT subroutine.

  \item \textbf{Stable sorting of the $G$-irrep labels.}
  
Reversibly and coherently apply a single stable sort, ordered by $\lambda_i$, to all $n$ register tuples of the following form.
\begin{align}
    \bigl(
        \lambda_i,\,\mcR_{\lambda_i},\,\widetilde{\mcM}
    \bigr)
\end{align}
Let $p_{\bm\lambda}\in\mfS_n$ be the resulting stable sorting permutation, with the convention of Eq.~\eqref{eq:def-sorted-tuple}, and retain its chosen encoding as part of the output, as an element in $\CC[\mathsf P_{\bm n}]$. After the sort, all tensor factors with the same label $\lambda$ are contiguous.

  \item \textbf{Schur transforms on the multiplicity spaces.}

For every label $\lambda$ that appears, we apply the Schur transform to
the corresponding embedded multiplicity space,
\begin{align}
    \iota_\lambda(\mcM_\lambda)^{\otimes n_\lambda}
    \longrightarrow
    \bigoplus_{\alpha^\lambda\vdash_{m_\lambda}n_\lambda}
    \mcS_{\alpha^\lambda}
    \otimes
    \mcU_{\alpha^\lambda}^{(m_\lambda)} .
    \label{eq:wreath-schur-block-output}
\end{align}
We use the recursive construction by Bacon, Chuang, and Harrow (BCH)~\cite{bacon2005quantum,bacon2006efficient}, in which the tensor factors are
added one at a time by Clebsch--Gordan transforms; see
Ref.~\cite[Sec.~8]{burchardt2025high}.
The BCH construction gives the Young--Yamanouchi basis on each symmetric-group irrep space $\mcS_{\alpha^\lambda}$. We use the matrix and phase convention fixed in Section~\ref{sec:schur-weyl-basis-conventions}; in particular, the permutation action on this register is the same $g_{\alpha^\lambda}$ used by the little-group QFT.

Proceed through the sorted tensor factors from left to right.  If the current label agrees with the preceding one, apply the next BCH Clebsch--Gordan step, by adding the current multiplicity register to the partial Schur transform of that block. If the labels differ, initialize the Schur transform for a new block instead. The same Clebsch--Gordan circuit can be used for every block. 

For a block with label $\lambda$, every input lies
in the subspace
\begin{align}
    \iota_\lambda(\mcM_\lambda)
    \subseteq
    \widetilde{\mcM}.
\end{align}
With the compatible Gelfand--Tsetlin basis used by the BCH construction,
the Schur transform on $\widetilde{\mcM}$, restricted to this subspace,
agrees with the $m_\lambda$-dimensional Schur transform.  Hence the
output of the block lies in the canonical embedded copy of
\begin{align}
    \bigoplus_{\alpha^\lambda\vdash_{m_\lambda}n_\lambda}
    \mcS_{\alpha^\lambda}
    \otimes
    \mcU_{\alpha^\lambda}^{(m_\lambda)} .
\end{align}

These choices also establish compatibility with the little-group Fourier basis. Step~1 fixes the matrices on $\mcR_{\bm n}$, Step~2 identifies each word with the stable transversal element $t_p=(\bm e,p^{-1})$, and Step~3 fixes the matrices on $\mcS_{\bm\alpha}$. Consequently, the action of $R(\bm g,\sigma)$ on the output registers $\CC[\mathsf P_{\bm n}]\otimes\mcR_{\bm n}\otimes\mcS_{\bm\alpha}$ is given exactly by Eq.~\eqref{eq:wreath-irrep-matrices}.

Across all labels, these Schur transforms produce the ordered multipartition
\[
    \bm\alpha=(\alpha^\lambda)_{\lambda\in\Lambda},
\]
which is the irrep label of the wreath product.  Since
$n_\lambda=|\alpha^\lambda|$, the multipartition $\bm\alpha$ also
determines the occupation vector $\bm n$.  
The remaining registers are
\begin{align}
    \CC[\mathsf P_{\bm n}]
    \otimes
    \mcR_{\bm n}
    \otimes
    \bigotimes_{\lambda\in\Lambda}\mcS_{\alpha^\lambda}
    =
    \mcG_{\bm\alpha},
    \qquad
    \bigotimes_{\lambda\in\Lambda}
    \mcU_{\alpha^\lambda}^{(m_\lambda)}
    =
    \mcU_{\bm\alpha},
\end{align}
which gives Eq.~\eqref{eq:wreath-schur-abstract-isometry}.
\end{enumerate}
For the gate count, Step~1 uses
\begin{align}
    nN_{\mathrm{Sch}}^{(G)}
    \!\left(\frac{\varepsilon}{2n}\right)
\end{align}
gates and contributes operator-norm error at most $\varepsilon/2$.
In Step~2, use a reversible stable sorting network to sort the tuples
consisting of the irrep label, the representation
register, and the multiplicity register.  The network uses
$O(n\polylog(n))$ comparisons.  Each comparison acts on the irrep
label, and each conditional swap also acts on the
corresponding representation and multiplicity registers.  Since a
multiplicity register contains $\lceil\log_2 m_{\max}\rceil$ qubits,
the total cost of the stable sort, the comparison of neighboring labels,
and the subsequent uncomputation of the temporary sorting data is
\begin{align}
    O\!\left(
        n(b_\Lambda+b_R)
        \polylog(n,m_{\max})
    \right).
\end{align}
For Step~3, consider a block containing $n_\lambda$ copies of the
embedded multiplicity space $\iota_\lambda(\mcM_\lambda)$.  The recursive
BCH Schur transform uses $n_\lambda-1$ Clebsch--Gordan transforms on this
block.  Hence the total number of Clebsch--Gordan transforms over all
blocks is less than
\begin{align}
    \sum_{\lambda\in\Lambda} n_\lambda=n.
\end{align}
By Ref.~\cite[Sec.~8]{burchardt2025high}, one such transform can be
implemented with
\begin{align}
    O\!\left(
        m_{\max}^4
        \polylog(n,m_{\max},1/\delta)
    \right)
\end{align}
gates when its operator-norm error is at most $\delta$.  Taking
$\delta=O(\varepsilon/n)$ gives
\begin{align}
    N_{\mathrm{BCH}}
    =
    O\!\left(
        nm_{\max}^4
        \polylog(n,m_{\max},1/\varepsilon)
    \right).
    \label{eq:wreath-controlled-schur-cost}
\end{align}
The total error contributed by these transforms is then at most
$\varepsilon/2$.
Combining the three steps gives
\begin{align}
    N_{\mathrm{Sch}}^{(\Wr_n)}(\varepsilon)
    \leq
    nN_{\mathrm{Sch}}^{(G)}
    \!\left(\frac{\varepsilon}{2n}\right)
    +
    O\!\left(
        n\bigl(b_\Lambda+b_R+m_{\max}^4\bigr)
        \polylog(n,m_{\max},1/\varepsilon)
    \right),
\end{align}
which proves Eq.~\eqref{eq:wreath-schur-gate-bound}.
\end{proof}

\paragraph{Permutation representation on $k$ qudits.}
Let $G=\mfS_k$ act by the permutation representation
$\pi_d$ on $(\CC^d)^{\otimes k}$, with $d$ fixed.  By Schur--Weyl
duality, only partitions $\lambda\vdash k$ with at most $d$ rows occur.
Moreover,
\begin{align}
    m_\lambda
    =
    \dim\mcM_\lambda
    =
    \prod_{1\leq i<j\leq d}
    \frac{\lambda_i-\lambda_j+j-i}{j-i},
\end{align}
and therefore
\begin{align}
    m_{\max}
    =
    O_d\!\left(k^{d(d-1)/2}\right).
\end{align}
The number of such partitions is polynomial in $k$, so
\[
    b_\Lambda=O_d(\log k).
\]
Furthermore, since $\dim\mcS_\lambda\leq d^k$, a compact encoding gives
\[
    b_R=O_d(k).
\]
Substituting these bounds into
Eq.~\eqref{eq:wreath-schur-gate-bound} gives
\begin{align}
    \begin{aligned}
    N_{\mathrm{Sch}}^{(\mfS_k\wr\mfS_n)}(\varepsilon)
    &\leq
    nN_{\mathrm{Sch}}^{(\pi_d)}
    \!\left(\frac{\varepsilon}{2n}\right)\\
    &\quad+
    O_d\!\left(
        nk^{2d(d-1)}
        \polylog(n,k,1/\varepsilon)
    \right).
    \end{aligned}
    \label{eq:symmetric-wreath-schur-gate-bound}
\end{align}
Since for fixed $d$, the embedding $\iota_\lambda$ requires $O_d(\polylog k)$ gates, using the BCH implementation also for the one-copy
Schur transform gives
\[
    N_{\mathrm{Sch}}^{(\pi_d)}
    \!\left(\frac{\varepsilon}{2n}\right)
    =
    O_d\!\left(
        k\,\polylog(n,k,1/\varepsilon)
    \right),
\]
so that
\begin{align}
    N_{\mathrm{Sch}}^{(\mfS_k\wr\mfS_n)}(\varepsilon)
    =
    O_d\!\left(
        nk^{2d(d-1)}
        \polylog(n,k,1/\varepsilon)
    \right).
\end{align}

After regrouping tensor factors, the channel applications use $\overline{\pi_d}\otimes\pi_d\cong\pi_{d^2}$ on Choi operators and $\pi_d\otimes\overline{\pi_d}\otimes\pi_d\cong\pi_{d^3}$ on dilation outputs, so the corresponding bounds follow by replacing $d$ with $d^2$ and $d^3$, respectively, and remain polynomial for fixed $d$.

This construction generalizes, in reverse, the embedding steps of the plethysm algorithm in Ref.~\cite[Fig.~1 and Sec.~3]{christandl_et_al:LIPIcs.CCC.2026.28}: those steps use inverse Schur transforms for a fixed local irrep $\nu$, whereas here we coherently allow several local irreps $\lambda$ and apply the corresponding Schur transforms blockwise.

\section{Optimal learning of covariant isometries in mean squared Choi trace distance}
\label{appendix:learning-covariant-isometries-choi-trace-norm}

We consider the estimation of covariant isometries
\begin{align}
    W\in \W_G\coloneqq \left\{W: \mcI \to \mcO \; \middle| \; W^\dagger W = \1_{\mcI}, \quad  W U_1(g) = U_2(g) W \quad \forall g\in G\right\},
\end{align}
which can be decomposed as
\begin{align}
    W = \bigoplus_{\lambda\in \irrep{G}} \1_{d_\lambda} \otimes W_\lambda, \quad
    W_\lambda: \CC^{m_\lambda} \to \CC^{M_\lambda},
\end{align}
where $m_\lambda$ and $M_\lambda$ are the multiplicities of the irreducible representation $\lambda$ in $U_1$ and $U_2$, respectively.
We can define the Haar measure on $\W_G$ by the product of the Haar measures on each multiplicity isometry $W_\lambda$.
We consider the quantum tester~\cite{chiribella2009theoretical} $\sfT = (\sfT_{\widehat{W}})$, which takes $n$ queries of $W$ as inputs and outputs an estimate $\widehat{W}$ of $W$ as a measurement outcome\footnote{Here we assume that the measurement outcome satisfies $\widehat{W}\in \W_G$.}.
The probability distribution is given by
\begin{align}
    \Prob_\sfT(\widehat{W}\mid W) = \Tr[T_{\widehat{W}}^{\top} \dketbra{W}^{\otimes n}],
\end{align}
using the Choi operator $T_{\widehat{W}}$ of the tester $\sfT_{\widehat{W}}$.
We consider the mean squared Choi trace distance error of estimating covariant isometries defined as
\begin{align}
    \int_{\W_G} \dd W \mathbb{E}_\sfT \qty[d_\mathrm{tr}(\rho_W, \rho_{\widehat{W}})^2].
\end{align}
We consider a collection of Young diagrams $\bm{\alpha} = (\alpha^\lambda)_{\lambda\in \irrep{G}}$ with $\alpha^\lambda\vdash_{m_\lambda} n_\lambda$ for $n_\lambda$ satisfying $\sum_\lambda n_\lambda = n$, and define the set of such collections of Young diagrams as
\begin{align}
    \young{\bm{m}}{n} \coloneqq \left\{ \bm{\alpha} = (\alpha^\lambda)_{\lambda\in \irrep{G}} \; \middle| \; \alpha^\lambda\vdash_{m_\lambda} n_\lambda, \sum_\lambda n_\lambda = n \right\}.
\end{align}
We use the convention $n_\lambda=0$ and $\alpha^\lambda=\varnothing$ whenever $m_\lambda=0$.
We define $\bm{\alpha}+e^{\lambda}_j$ as the collection of Young diagrams obtained by adding one box to the $j$-th row of $\alpha^\lambda$, and define the set $\bm{\alpha}+_{\bm{m}} \square$ as
\begin{align}
    \bm{\alpha}+_{\bm{m}} \square \coloneqq \left\{ \bm{\alpha}+e^\lambda_j \; \middle| \; 1\leq j\leq m_\lambda \text{ s.t. } \alpha^\lambda+e_j \text{ is a valid Young diagram} \right\},
\end{align}
and define $\bm{\alpha}+_{\bm{M}} \square$ similarly.
We also define $\Lambda\coloneqq \{\lambda \in \irrep{G} \mid m_\lambda>0\}$.
Using this, we define the estimation matrix $E_n$ by
\begin{align}
    (E_n)_{\bm{\alpha}\bm{\beta}} \coloneqq {1\over d_{\mcI}^2} \sum_{\lambda, \mu\in \Lambda} \sum_{i,j} \delta_{\bm{\alpha}+e^\lambda_i, \bm{\beta}+e^\mu_j} d_\lambda d_\mu f_\lambda(\alpha^\lambda_i-i) f_\mu(\beta^\mu_j-j) \quad \forall \bm{\alpha}, \bm{\beta} \in \young{\bm{m}}{n},
    \label{eq:learning-estimation-matrix}
\end{align}
where $f_\lambda$ is a function defined by
\begin{align}
    f_\lambda(x) \coloneqq \sqrt{x+m_\lambda+1\over x+M_\lambda+1},
\end{align}
and the summation is taken over $i\leq m_\lambda, j\leq m_\mu$ such that $\alpha^\lambda+e_i$ and $\beta^\mu+e_j$ are valid Young diagrams.
Then, similarly to the estimation of an unknown unitary/isometry channel~\cite{chiribella2005optimal,yoshida2025quantum}, the optimal average-case channel fidelity can be expressed as the largest eigenvalue of the estimation matrix $E_n$:
\begin{theorem}[Optimal mean squared Choi trace distance error of estimating covariant isometries]
\label{thm:learning-colored-estimation}
For every finite query number $n$, we have
\begin{align}
\inf_{\sfT} \int_{\W_G} \dd W \mathbb{E}_\sfT \qty[d_\mathrm{tr}(\rho_W, \rho_{\widehat{W}})^2] = 1-\lambda_{\max}(E_n),
\label{eq:learning-finite-estimation}
\end{align}
where $\lambda_{\max}(E_n)$ is the largest eigenvalue of the estimation matrix $E_n$, and the optimum is achieved by a parallel tester.
\end{theorem}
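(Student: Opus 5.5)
The plan is to reduce the mean squared Choi trace distance to an average fidelity, symmetrize the tester using the $\prod_\lambda\U(m_\lambda)\times\U(M_\lambda)$ invariance of the Haar measure on $\W_G$, and then diagonalize the resulting quadratic form in the Schur--Weyl (Gelfand--Tsetlin) basis. This follows the route of Refs.~\cite{chiribella2005optimal,yoshida2025quantum} for unrestricted isometries, adapted to the coupled multiplicity sectors.

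\emph{Step 1: reduction to a fidelity.} Since $\rho_W$ and $\rho_{\widehat W}$ are pure, $d_\mathrm{tr}(\rho_W,\rho_{\widehat W})^2=1-|\Tr(W^\dagger\widehat W)|^2/d_\mcI^2$. Moreover $\Tr(W^\dagger\widehat W)=\sum_\lambda d_\lambda\Tr(W_\lambda^\dagger\widehat W_\lambda)$. It therefore suffices to show that the optimal value of
\begin{align}
F_n\coloneqq\sup_\sfT\int\dd W\,\mathbb E_\sfT\Bigl[\Bigl|\sum_\lambda \tfrac{d_\lambda}{d_\mcI}\Tr(W_\lambda^\dagger\widehat W_\lambda)\Bigr|^2\Bigr]
\end{align}
equals $\lambda_{\max}(E_n)$.

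\emph{Step 2: covariance and reduction to parallel testers.} The figure of merit is invariant under $W_\lambda\mapsto V_\lambda W_\lambda U_\lambda^\dagger$ with $(U_\lambda,V_\lambda)\in\U(m_\lambda)\times\U(M_\lambda)$. Twirling any tester over this group gives a covariant tester with the same value. For a covariant tester we use the decomposition $W^{\otimes n}\cong\bigoplus_{\bm n}\1\otimes\bigotimes_\lambda W_\lambda^{\otimes n_\lambda}$, as in the proof of Prop.~\ref{prop:double-centralizer}, together with Eq.~\eqref{eq:prelim-isometry-schur-decomposition} applied to each $W_\lambda^{\otimes n_\lambda}$. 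The dependence on $W$ then enters only through $\bigoplus_{\bm\alpha}f_{\bm\alpha}^{(\bm m\to\bm M)}(W)$. As in Ref.~\cite{yoshida2025quantum}, the Choi operator of the twirled (possibly sequential) tester then takes the form $\bigoplus_{\bm\alpha,\bm\beta}(\text{input-side operator})\otimes(\text{covariant POVM on }\mcU^{(\bm M)})$. The normalization constraint enters only through a density operator on $\bigoplus_{\bm\alpha}\mcU_{\bm\alpha}^{(\bm m)}\otimes\mcU_{\bm\alpha}^{(\bm m)}$, so a parallel tester attains the optimum. A further Schur-lemma argument shows that the optimal input is the state $\bigoplus_{\bm\alpha}c_{\bm\alpha}\dket{\1_{\mcU_{\bm\alpha}^{(\bm m)}}}/\sqrt{\dim\mcU^{(\bm m)}_{\bm\alpha}}$ with $c\in\RR_{\geq0}$ and $\|c\|_2=1$, embedded via the multiplicity spaces $\mcS_{\bm\alpha}$. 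The measurement is the covariant seed POVM $\int\dd\widehat W\,\ketbra{\eta_{\widehat W}}$, where $\ket{\eta_{\widehat W}}=\bigoplus_{\bm\alpha}\sqrt{\dim\mcU^{(\bm M)}_{\bm\alpha}}\,\dket{f_{\bm\alpha}(\widehat W)}$, obtained by the standard Cauchy--Schwarz step showing that a rank-one seed with aligned phases is optimal.

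\emph{Step 3: the quadratic form.} With these choices, $\int|\langle\!\langle W|\widehat W\rangle\!\rangle|^2$ becomes $c^\top E_n c$, and maximizing over unit $c$ gives $\lambda_{\max}(E_n)$; note that $E_n$ is entrywise nonnegative, so its Perron vector is nonnegative. To obtain the entries, multiply $|\Tr(W^\dagger\widehat W)|^2$ with $\bigoplus_{\bm\alpha}f_{\bm\alpha}(W^\dagger\widehat W)$ and decompose $W_\lambda^\dagger\widehat W_\lambda\otimes f_{\alpha^\lambda}(\cdot)$ via the Pieri rule $\alpha\otimes\square=\bigoplus_i(\alpha+e_i)$. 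The block $\bm\alpha$ couples to $\bm\beta$ exactly when $\bm\alpha+e^\lambda_i=\bm\beta+e^\mu_j$, which produces the Kronecker delta in Eq.~\eqref{eq:learning-estimation-matrix}. The factors $d_\lambda d_\mu/d_\mcI^2$ come from Step 1. The remaining weight is the square root of the ratio of Weyl dimensions
\begin{align}
\frac{m^{(m_\lambda)}_{\alpha+e_i}\,m^{(M_\lambda)}_{\alpha}}{m^{(m_\lambda)}_{\alpha}\,m^{(M_\lambda)}_{\alpha+e_i}}=\frac{\alpha_i-i+m_\lambda+1}{\alpha_i-i+M_\lambda+1},
\end{align}
computed from Eq.~\eqref{eq:weyl-dimension}. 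This gives $f_\lambda(\alpha^\lambda_i-i)$.

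I expect the main obstacle to be Step 2. One must justify rigorously that the twirled general tester cannot outperform the parallel one, and that the optimal seed is rank one with consistent phases, now across several sectors $\lambda$ with different pairs $(m_\lambda,M_\lambda)$ and a sum over $\bm n$. I would first try to adapt the argument of Ref.~\cite{yoshida2025quantum} verbatim, sector by sector, using the product structure of the symmetry group. After that, the Clebsch--Gordan bookkeeping of Step 3 should be routine.
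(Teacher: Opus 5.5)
Your plan follows the paper's proof: reduce to the average Choi fidelity, twirl over $\U_G(\mcI)\times\U_G(\mcO)$ and apply Schur's lemma to the total tester, pass to a parallel tester with input $\bigoplus_{\bm\alpha}\sqrt{t_{\bm\alpha}}\,\dket{\1_{\mcU^{(\bm m)}_{\bm\alpha}}}$, bound a covariant POVM using the Pieri decomposition and Cauchy--Schwarz, attain the bound with the rank-one seed $\eta_{\bm\alpha}=f_{\bm\alpha}(W_0)$, and read off $f_\lambda$ from the Weyl-dimension ratio, exactly as in your Step~3. Two corrections to Step~2: the twirled tester elements need not factorize as you describe, since the paper only uses that $\widetilde T_{\mathrm{total}}$ is Schur-diagonal and then sets $M_{\widehat W}=(\widetilde T_{\mathrm{total}}^{-1/2}\widetilde T_{\widehat W}\widetilde T_{\mathrm{total}}^{-1/2})^\top$, which reproduces the statistics of any tester, including sequential or indefinite-causal-order ones; and the input embedding goes through the full multiplicity space $\mcG_{\bm\alpha}=\CC[\mathsf P_{\bm n}]\otimes\mcR_{\bm n}\otimes\mcS_{\bm\alpha}$, not through $\mcS_{\bm\alpha}$ alone.
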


\begin{proof}
    For isometries $W,\widehat{W}:\mcI\to\mcO$, we define their Choi channel fidelity by
    \begin{align}
        F_\mathrm{ch}(W,\widehat{W})\coloneqq F(\rho_W,\rho_{\widehat{W}})={\left|\Tr[W^\dagger\widehat{W}]\right|^2\over d_{\mcI}^2},
        \label{eq:learning-isometry-channel-fidelity}
    \end{align}
    where $F$ is the squared state fidelity in Eq.~\eqref{eq:state-fidelity} and $d_{\mcI}=\dim\mcI$.
    Since the normalized Choi states of $W$ and $\widehat{W}$ are pure, the pure-state trace-distance identity gives $d_\mathrm{tr}(\rho_W,\rho_{\widehat{W}})^2=1-F_\mathrm{ch}(W,\widehat{W})$.
    Thus, minimizing the mean squared Choi trace distance error is equivalent to maximizing the average channel fidelity.
    By using a tester $\sfT = (\sfT_{\widehat{W}})_{\widehat{W}\in \W_G}$ with the Choi operator $T_{\widehat{W}}$, the probability distribution of the estimated multiplicity isometries $\widehat{W}$ given the true multiplicity isometries $W$ can be expressed as
    \begin{align}
        \Prob_{\sfT}(\widehat{W} \mid W) = \Tr[T_{\widehat{W}}^{\top} \dketbra{W}^{\otimes n}].
    \end{align}
    Similarly to Eq.~\eqref{eq:generalized-schur-decomposition}, $W^{\otimes n}$ can be decomposed as
    \begin{align}
        W^{\otimes n} \cong \bigoplus_{\bm{\alpha}} \1_{\mcG_{\bm{\alpha}}} \otimes f_{\bm{\alpha}}^{(\bm{m} \to \bm{M})}(W),
    \end{align}
    where $f_{\bm{\alpha}}^{(\bm{m} \to \bm{M})}(W)$ is defined by
    \begin{align}
        f_{\bm{\alpha}}^{(\bm{m} \to \bm{M})}(W) \coloneqq \bigotimes_{\lambda\in \irrep{G}} f_{\alpha^\lambda}^{(m_\lambda \to M_\lambda)}(W_\lambda): \mcU_{\bm{\alpha}}^{(\bm{m})} \to \mcU_{\bm{\alpha}}^{(\bm{M})},
    \end{align}
    where we use the decomposition of $W_\lambda^{\otimes n_\lambda}$ as given in Eq.~\eqref{eq:prelim-isometry-schur-decomposition}.
    We also define $f_n(U), f_n(V)$ for $U\in \U_G(\mcI)$ and $V\in \U_G(\mcO)$ by
    \begin{align}
        f_n(U) \coloneqq \bigoplus_{\bm{\alpha}} f_{\bm{\alpha}}^{(\bm{m})}(U), \quad f_n(V) \coloneqq \bigoplus_{\bm{\alpha}} f_{\bm{\alpha}}^{(\bm{M})}(V).
    \end{align} 
    Thus, the Choi vector $\dket{W}^{\otimes n}$ is given by
    \begin{align}
        \dket{W}^{\otimes n} = \bigoplus_{\bm{\alpha}} \dket{\1_{\mcG_{\bm{\alpha}}}} \otimes \dket{f_{\bm{\alpha}}^{(\bm{m} \to \bm{M})}(W)},
    \end{align}
    which can be compressed into
    \begin{align}
        \dket{f_n(W)} \coloneqq \bigoplus_{\bm{\alpha}} \dket{f_{\bm{\alpha}}^{(\bm{m} \to \bm{M})}(W)} \in \bigoplus_{\bm{\alpha}\in \young{\bm{m}}{n}} \overline{\mcU_{\bm{\alpha}}^{(\bm{m})}} \otimes \mcU_{\bm{\alpha}}^{(\bm{M})} \subset \bigoplus_{\bm{\alpha}\in \young{\bm{m}}{n}} \overline{\mcU_{\bm{\alpha}}^{(\bm{m})}} \otimes \bigoplus_{\bm{\beta} \in \young{\bm{M}}{n}}\mcU_{\bm{\beta}}^{(\bm{M})},
    \end{align}
    such that
    \begin{align}
        \dket{W}^{\otimes n} = \iota_{G,n} \dket{f_n(W)},
    \end{align}
    where $\iota_{G, n}: \bigoplus_{\bm{\alpha}} \overline{\mcU_{\bm{\alpha}}^{(\bm{m})}} \otimes \mcU_{\bm{\alpha}}^{(\bm{M})} \to (\overline{\mcI} \otimes \mcO)^{\otimes n}$ is the embedding defined by
    \begin{align}
        \iota_{G,n}\coloneqq \bigoplus_{\bm{\alpha}} \dket{\1_{\mcG_{\bm{\alpha}}}} \otimes \1_{\overline{\mcU_{\bm{\alpha}}^{(\bm{m})}}} \otimes \1_{\mcU_{\bm{\alpha}}^{(\bm{M})}}.
    \end{align}
    Defining $\widetilde{S}_{\widehat{W}}$ by
    \begin{align}
        \widetilde{S}_{\widehat{W}}^{\top} \coloneqq \iota_{G,n}^{\dagger} T_{\widehat{W}}^{\top}\iota_{G,n}  \in \mcL\qty(\bigoplus_{\bm{\alpha}\in \young{\bm{m}}{n}} \overline{\mcU_{\bm{\alpha}}^{(\bm{m})}} \otimes \mcU_{\bm{\alpha}}^{(\bm{M})}),
    \end{align}
    we have
    \begin{align}
        \Prob_{\sfT}(\widehat{W}\mid W)
        =\Tr[\widetilde{S}_{\widehat{W}}^{\top} \dketbra{f_n(W)}].
    \end{align}

    The compressed Choi operators satisfy $\widetilde{S}_{\widehat{W}}^{\top}\succeq0$ and $\int_{\W_G}\dd\widehat{W}\,\Tr[\widetilde{S}_{\widehat{W}}^{\top}\dketbra{f_n(W)}]=1$ for every $W\in\W_G$.
    These are the only properties used below, and they hold for every valid tester, including sequential or indefinite-causal-order ones~\cite{oreshkov2012quantum}.
    Thus, the optimality of Thm.~\ref{thm:learning-colored-estimation} holds for all testers, including those with indefinite causal order.
    We denote the twirled tester by $\widetilde{\sfT}\coloneqq(\widetilde T_{\widehat{W}})_{\widehat{W}\in\W_G}$.
    Since the Haar measure on $\W_G$ is invariant under the left and right multiplication of $\U_G(\mcI)$ and $\U_G(\mcO)$, we can show that the average-case channel fidelity does not change under the following twirling:
    \begin{align}
        \widetilde{T}_{\widehat{W}}^{\top} \coloneqq \int_{\U_G(\mcI)} \dd U \int_{\U_G(\mcO)} \dd V [\overline{f_n(U)} \otimes f_n(V)] \widetilde{S}_{V^\dagger \widehat{W} U}^{\top} [\overline{f_n(U)} \otimes f_n(V)]^\dagger.
    \end{align}
    This covariance also shows that the average-case channel fidelity coincides with the worst-case channel fidelity.
    We define the operator
    \begin{align}
        \widetilde{T}_\mathrm{total} \coloneqq \int_{\W_G} \dd \widehat{W} \widetilde{T}_{\widehat{W}},
    \end{align}
    satisfying
    \begin{align}
        \label{eq:total-tester-invariance}
        \qty[\widetilde{T}_\mathrm{total}^{\top}, \overline{f_n(U)} \otimes f_n(V)] = 0
        \quad \forall U\in \U_G(\mcI), V\in \U_G(\mcO).
    \end{align}
    Due to Schur's lemma, the invariance~\eqref{eq:total-tester-invariance} implies that $\widetilde{T}_\mathrm{total}$ can be decomposed as
    \begin{align}
        \widetilde{T}_\mathrm{total}^{\top} \cong \bigoplus_{\bm{\alpha}\in \young{\bm{m}}{n}} t_{\bm{\alpha}} \1_{\overline{\mcU_{\bm{\alpha}}^{(\bm{m})}}} \otimes \1_{\mcU_{\bm{\alpha}}^{(\bm{M})}},
    \end{align}
    using $t_{\bm{\alpha}} \geq 0$.
    Defining
    \begin{align}
        \label{eq:double-tilde-total-tester}
        \doublewidetilde{T}_\mathrm{total}^{\top} \cong \bigoplus_{\bm{\alpha}\in \young{\bm{m}}{n}} t_{\bm{\alpha}} \1_{\overline{\mcU_{\bm{\alpha}}^{(\bm{m})}}} \otimes \1_{\mcU_{\bm{\alpha}}^{(\bm{m})}},
    \end{align}
    we have
    \begin{align}
        [\1 \otimes f_n(W)] \doublewidetilde{T}_{\mathrm{total}}^{\top} = \widetilde{T}_\mathrm{total}^{\top} [\1 \otimes f_n(W)]
        \quad \forall W\in \W_G.
    \end{align}
    We then define the quantum state $\ket{\phi_\mathrm{est}}$ and a POVM $\{M_{\widehat{W}} \dd \widehat{W}\}_{\widehat{W}}$ by
    \begin{align}
        \label{eq:estimation-state}
        \ket{\phi_\mathrm{est}} &\coloneqq \sqrt{\doublewidetilde{T}_\mathrm{total}}^\top \bigoplus_{\bm{\alpha}} \dket{\1_{\mcU_{\bm{\alpha}}^{(\bm{m})}}} = \bigoplus_{\bm{\alpha}} \sqrt{t_{\bm{\alpha}}} \dket{\1_{\mcU_{\bm{\alpha}}^{(\bm{m})}}},\\
        M_{\widehat{W}} &\coloneqq (\widetilde{T}_\mathrm{total}^{-1/2} \widetilde{T}_{\widehat{W}} \widetilde{T}_\mathrm{total}^{-1/2})^\top,
    \end{align}
    where $\widetilde{T}_\mathrm{total}^{-1/2}$ denotes the Moore--Penrose inverse on $\supp\widetilde{T}_\mathrm{total}$.
    The state is normalized since
    \begin{align}
        \braket{\phi_\mathrm{est}}{\phi_\mathrm{est}}
        &= \bra{\phi_\mathrm{est}} (\1\otimes f_n(W))^\dagger (\1\otimes f_n(W)) \ket{\phi_\mathrm{est}}\\
        &= \Tr[\widetilde{T}_\mathrm{total}^\top \dketbra{f_n(W)}]\\
        &= 1,
    \end{align}
    while $\{M_{\widehat{W}} \dd \widehat{W}\}_{\widehat{W}}$ defines a POVM since
    \begin{align}
        \int \dd \widehat{W} M_{\widehat{W}}
        = (\widetilde{T}_\mathrm{total}^{-1/2} \widetilde{T}_\mathrm{total} \widetilde{T}_\mathrm{total}^{-1/2})^\top
        = \Pi_{\supp \widetilde{T}_\mathrm{total}}^\top\preceq \1
    \end{align}
    holds, where $\Pi_{\supp \widetilde{T}_\mathrm{total}}$ is the projection onto the support of $\widetilde{T}_\mathrm{total}$.
    We can add a positive operator so that $\{M_{\widehat{W}} \dd \widehat{W}\}_{\widehat{W}}$ becomes a POVM on the whole space $\bigoplus_{\bm{\alpha}\in \young{\bm{m}}{n}} \overline{\mcU_{\bm{\alpha}}^{(\bm{m})}} \otimes \bigoplus_{\bm{\beta} \in \young{\bm{M}}{n}}\mcU_{\bm{\beta}}^{(\bm{M})}$.
    We construct a parallel tester $\doublewidetilde{\sfT}$ using $\ket{\phi_\mathrm{est}}$ and $\{M_{\widehat{W}} \dd \widehat{W}\}_{\widehat{W}}$ by
    \begin{align}
        \Prob_{\doublewidetilde{\sfT}}(\widehat{W} \mid W) \coloneqq \bra{\phi_\mathrm{est}} (\1\otimes f_n(W))^\dagger M_{\widehat{W}} (\1\otimes f_n(W)) \ket{\phi_\mathrm{est}}.
    \end{align}
    Then it reproduces the probabilities of the twirled tester $\widetilde{\sfT}$, and hence achieves the same average-case channel fidelity as the original tester $\sfT$, since
    \begin{align}
        \Prob_{\doublewidetilde{\sfT}}(\widehat{W} \mid W)
        &= \sum_{\bm{\alpha}, \bm{\beta}} \dbra{\1_{\mcU_{\bm{\alpha}}^{(\bm{m})}}}\sqrt{\doublewidetilde{T}_{\mathrm{total}}}^\top [\1\otimes f_n(W)]^\dagger M_{\widehat{W}}[\1\otimes f_n(W)]\sqrt{\doublewidetilde{T}_{\mathrm{total}}}^\top \dket{\1_{\mcU_{\bm{\beta}}^{(\bm{m})}}}\\
        &= \dbra{f_n(W)} \sqrt{\widetilde{T}_{\mathrm{total}}}^{\top} M_{\widehat{W}} \sqrt{\widetilde{T}_{\mathrm{total}}}^{\top} \dket{f_n(W)}\\
        &= \dbra{f_n(W)} \qty(\widetilde{T}_{\mathrm{total}}^{1/2} \widetilde{T}_{\mathrm{total}}^{-1/2} \widetilde{T}_{\widehat{W}} \widetilde{T}_{\mathrm{total}}^{-1/2} \widetilde{T}_{\mathrm{total}}^{1/2})^{\top} \dket{f_n(W)}\\
        &= \dbra{f_n(W)} \widetilde{T}_{\widehat{W}}^{\top} \dket{f_n(W)}\\
        &= \Prob_{\widetilde{\sfT}}(\widehat{W} \mid W)
    \end{align}
    holds.
    Thus, we show that the optimal average-case channel fidelity can be achieved by a parallel protocol.

    Finally, we optimize the average-case channel fidelity over parallel protocols to obtain the expression~\eqref{eq:learning-finite-estimation}.
    We rewrite Eq.~\eqref{eq:estimation-state} as
    \begin{align}
        \ket{\phi_\mathrm{est}} = \bigoplus_{\bm{\alpha}\in \young{\bm{m}}{n}} {v_{\bm{\alpha}} \over \sqrt{D_{\bm{\alpha}}^{(\bm{m})}}} \dket{\1_{\mcU_{\bm{\alpha}}^{(\bm{m})}}},
    \end{align}
    where $D_{\bm{\alpha}}^{(\bm{m})} \coloneqq \dim \mcU_{\bm{\alpha}}^{(\bm{m})}$ and $v_{\bm{\alpha}}\coloneqq \sqrt{t_{\bm{\alpha}}D_{\bm{\alpha}}^{(\bm{m})}}\geq 0$.
    By applying $f_n(W)$ to $\ket{\phi_\mathrm{est}}$, we have
    \begin{align}
        \ket{\phi_W} \coloneqq (\1\otimes f_n(W)) \ket{\phi_\mathrm{est}} = \bigoplus_{\bm{\alpha}\in \young{\bm{m}}{n}} {v_{\bm{\alpha}} \over \sqrt{D_{\bm{\alpha}}^{(\bm{m})}}} \dket{f_{\bm{\alpha}}^{(\bm{m} \to \bm{M})}(W)}.
    \end{align}
    Then, the average-case channel fidelity can be expressed as
    \begin{align}
        \int_{\W_G} \dd W \mathbb{E}[F_\mathrm{ch}(W, \widehat{W})]
        &= {1\over d_\mcI^2} \int_{\W_G} \dd W \int_{\W_G} \dd \widehat{W} \Tr[\ketbra{\phi_W} \otimes \dketbra{\bm{W}} \cdot M_{\widehat{W}} \otimes \dketbra{\widehat{\bm{W}}}],
    \end{align}
    where $\bm{W}$ and $\widehat{\bm{W}}$ are defined by $\bm{W} \coloneqq \bigoplus_\lambda \sqrt{d_\lambda} W_\lambda$ and $\widehat{\bm{W}} \coloneqq \bigoplus_\lambda \sqrt{d_\lambda} \widehat{W}_\lambda$.
    Defining the operator
    \begin{align}
        \Pi\coloneqq \int \dd W \ketbra{\phi_{W}} \otimes \dketbra{\bm{W}},
    \end{align}
    we have
    \begin{align}
        \int_{\W_G} \dd W \mathbb{E}[F_\mathrm{ch}(W, \widehat{W})] = {1\over d_{\mcI}^2} \int_{\W_G} \dd \widehat{W} \Tr[\Pi \cdot M_{\widehat{W}} \otimes \dketbra{\widehat{\bm{W}}}].
    \end{align}
    Defining
    \begin{align}
        R_n(U, V)\coloneqq
        \bigoplus_{\bm\alpha\in\young{\bm{m}}{n}}
        \overline{f_{\bm\alpha}(U)}\otimes f_{\bm\alpha}(V),
        \qquad
        R_1(U,V)\coloneqq
        \bigoplus_{\lambda\in\irrep{G}}\overline U_\lambda\otimes V_\lambda \quad \forall U\in \U_G(\mcI), V\in \U_G(\mcO),
    \end{align}
    the operator $\Pi$ satisfies the invariance
    \begin{align}
        \qty[\Pi,R_n(U,V)\otimes R_1(U,V)] = 0
        \quad \forall U\in \U_G(\mcI), V\in \U_G(\mcO).
    \end{align}
    Thus, we can assume that the POVM $\{M_{\widehat{W}} \dd \widehat{W}\}_{\widehat{W}}$ is covariant:
    \begin{align}
        M_{\widehat{W}} = R_n(U, V)
        M_{V^\dagger \widehat{W} U}
        R_n(U,V)^\dagger
        \quad \forall U\in \U_G(\mcI), V\in \U_G(\mcO).
    \end{align}
    Then, for an arbitrary fixed $W_0\in \W_G$, we have
    \begin{align}
        \int_{\W_G} \dd W \mathbb{E}[F_\mathrm{ch}(W, \widehat{W})] = {1\over d_{\mcI}^2} \Tr[\Pi\cdot M_{W_0} \otimes \dketbra{\bm{W}_0}].
    \end{align}
    We decompose $M_{W_0}$ as
    \begin{align}
        M_{W_0} = \sum_{i=1}^{r} \dketbra{\eta^i}, \quad
        \dket{\eta^i} = \bigoplus_{\bm{\alpha}\in \young{\bm{m}}{n}} \sqrt{D_{\bm{\alpha}}^{(\bm{M})}} \dket{\eta_{\bm{\alpha}}^i},
    \end{align}
    using linear operators $\eta_{\bm{\alpha}}^i: \mcU_{\bm{\alpha}}^{(\bm{m})} \to \mcU_{\bm{\alpha}}^{(\bm{M})}$ satisfying the POVM normalization
    \begin{align}
        \1
        &= \int_{\W_G}\dd\widehat{W} M_{\widehat{W}}\\
        &=\int_{\U_G(\mcO)}\dd V M_{V W_0}\\
        &=\int_{\U_G(\mcO)}\dd V
        R_n(\1,V)M_{W_0}R_n(\1,V)^\dagger\\
        &=\bigoplus_{\bm\alpha\in\young{\bm{m}}{n}}
        \left[\left(\sum_{i=1}^r
        \eta_{\bm\alpha}^{i\dagger}\eta_{\bm\alpha}^i\right)^\top
        \otimes\1_{\mcU_{\bm\alpha}^{(\bm{M})}}\right],
    \end{align}
    where the second equality uses that $V W_0$ is distributed according to the product invariant measure on $\W_G$.
    Equivalently, we have
    \begin{align}
        \sum_{i=1}^{r} \eta_{\bm{\alpha}}^{i\dagger} \eta_{\bm{\alpha}}^i = \1_{\mcU_{\bm{\alpha}}^{(\bm{m})}} \quad \forall \bm{\alpha}\in \young{\bm{m}}{n}.
    \end{align}
    Defining the operator $\eta_{\bm{\mu} \to \bm{\nu}}^{i, \bm{\alpha}}: \mcU_{\bm{\mu}}^{(\bm{m})} \to \mcU_{\bm{\nu}}^{(\bm{M})}$ for $\bm{\mu}\in \bm{\alpha}+_{\bm{m}}\square, \bm{\nu}\in \bm{\alpha}+_{\bm{M}}\square$ by
    \begin{align}
       \eta_{\bm{\alpha}}^i \otimes \bigoplus_\lambda W_{0, \lambda} \cong \left[\eta_{\bm{\mu} \to \bm{\nu}}^{i, \bm{\alpha}}\right]_{\substack{\bm{\nu}\in\bm{\alpha}+_{\bm{M}}\square\\\bm{\mu}\in\bm{\alpha}+_{\bm{m}}\square}}, \qquad \eta_{\bm{\mu} \to \bm{\nu}}^{i, \bm{\alpha}}: \mcU_{\bm{\mu}}^{(\bm{m})} \to \mcU_{\bm{\nu}}^{(\bm{M})},
    \end{align}
    we have
    \begin{align}
        \bigoplus_{\bm{\mu} \in \bm{\alpha} +_{\bm{m}} \square} \1_{\mcU_{\bm{\mu}}^{(\bm{m})}}
        &= \1_{\mcU_{\bm{\alpha}}^{(\bm{m})}} \otimes \bigoplus_\lambda \1_{m_\lambda}\\
        &= \sum_{i=1}^{r} \eta_{\bm{\alpha}}^{i\dagger} \eta_{\bm{\alpha}}^i \otimes \bigoplus_{\lambda} W_{0, \lambda}^\dagger W_{0, \lambda}\\
        &= \left[\sum_{\bm{\nu}\in \bm{\alpha}+_{\bm{M}}\square} \sum_{i=1}^{r} \eta_{\bm{\mu} \to \bm{\nu}}^{i, \bm{\alpha}\dagger} \eta_{\bm{\mu}' \to \bm{\nu}}^{i, \bm{\alpha}}\right]_{\bm{\mu},\bm{\mu}'\in\bm{\alpha}+_{\bm{m}}\square},
    \end{align}
    i.e.,
    \begin{align}
        \sum_{\bm{\nu}\in \bm{\alpha}+_{\bm{M}}\square} \sum_{i=1}^{r} \eta_{\bm{\mu} \to \bm{\nu}}^{i, \bm{\alpha}\dagger} \eta_{\bm{\mu}' \to \bm{\nu}}^{i, \bm{\alpha}} = \delta_{\bm{\mu},\bm{\mu}'} \1_{\mcU_{\bm{\mu}}^{(\bm{m})}}.
    \end{align}
    Taking $\bm{\mu}'=\bm{\mu}$ and retaining only the positive summands with $\bm{\nu}=\bm{\mu}$ gives
    \begin{align}
        \sum_{i=1}^{r} \eta_{\bm{\mu} \to \bm{\mu}}^{i, \bm{\alpha}\dagger} \eta_{\bm{\mu} \to \bm{\mu}}^{i, \bm{\alpha}} \preceq \1_{\mcU_{\bm{\mu}}^{(\bm{m})}}.
        \label{eq:learning-diagonal-block-contraction}
    \end{align}
    In the operator sums below, vectors in individual sectors are embedded in the corresponding direct-sum space.
    Thus, we can bound the average-case channel fidelity as
    \begin{align}
    &\int_{\W_G} \dd W \mathbb{E}[F_\mathrm{ch}(W, \widehat{W})] \notag\\
    &= {1\over d_{\mcI}^{2}}
    \Tr\!\left[
    M_{W_0}\otimes
    \dketbra{\bm{W}_0}\cdot\Pi
    \right]
    \\
    &= {1\over d_{\mcI}^{2}}
    \Tr\!\left[
    \sum_{\bm{\alpha},\bm{\beta}}
    \sqrt{D_{\bm{\alpha}}^{(\bm{M})}D_{\bm{\beta}}^{(\bm{M})}}
    \sum_{i=1}^{r}
    \dketbra{\eta_{\bm{\alpha}}^i\otimes\bm{W}_0}{\eta_{\bm{\beta}}^i\otimes\bm{W}_0}
    \cdot\Pi
    \right]
    \\
    &= {1\over d_{\mcI}^{2}}
    \Tr\!\left[
    \sum_{\bm{\alpha},\bm{\beta}}
    \sqrt{D_{\bm{\alpha}}^{(\bm{M})}D_{\bm{\beta}}^{(\bm{M})}}
    \sum_{i=1}^{r}
    \sum_{\substack{
    \bm{\mu}=\bm{\alpha}+e_j^\lambda,\,
    \bm{\nu}=\bm{\alpha}+e_{j'}^\lambda\\
    \bm{\mu}'=\bm{\beta}+e_\ell^\mu,\,
    \bm{\nu}'=\bm{\beta}+e_{\ell'}^\mu}}
    \sqrt{d_\lambda d_\mu}\,
    \dket{\eta_{\bm{\mu}\to\bm{\nu}}^{i,\bm{\alpha}}}
    \dbra{\eta_{\bm{\mu}'\to\bm{\nu}'}^{i,\bm{\beta}}}
    \cdot\Pi
    \right]
    \\
    &= {1\over d_{\mcI}^{2}}
    \sum_{\bm{\alpha},\bm{\beta}}
    \overline{v_{\bm{\alpha}}}v_{\bm{\beta}}
    \sum_{\bm{\mu}\in(\bm{\alpha}+_{\bm{m}}\square)\cap(\bm{\beta}+_{\bm{m}}\square)}
    d_\lambda d_\mu
    \sqrt{{D_{\bm{\alpha}}^{(\bm{M})}D_{\bm{\beta}}^{(\bm{M})}
    \over D_{\bm{\alpha}}^{(\bm{m})}D_{\bm{\beta}}^{(\bm{m})}}}
    \sum_{i=1}^{r}
    {\Tr\!\left[
    \eta_{\bm{\mu}\to\bm{\mu}}^{i,\bm{\alpha}}
    \eta_{\bm{\mu}\to\bm{\mu}}^{i,\bm{\beta}\dagger}
    \right]
    \over D_{\bm{\mu}}^{(\bm{M})}}
    \\
    &\leq {1\over d_{\mcI}^{2}}
    \sum_{\bm{\alpha},\bm{\beta}}
    \overline{v_{\bm{\alpha}}}v_{\bm{\beta}}
    \!\sum_{\bm{\mu}\in(\bm{\alpha}+_{\bm{m}}\square)\cap(\bm{\beta}+_{\bm{m}}\square)}
    d_\lambda d_\mu
    \sqrt{{D_{\bm{\alpha}}^{(\bm{M})}D_{\bm{\beta}}^{(\bm{M})}
    \over D_{\bm{\alpha}}^{(\bm{m})}D_{\bm{\beta}}^{(\bm{m})}}}
    {\sqrt{\sum_{i=1}^{r}\Tr\!\left[
    \eta_{\bm{\mu}\to\bm{\mu}}^{i,\bm{\alpha}}
    \eta_{\bm{\mu}\to\bm{\mu}}^{i,\bm{\alpha}\dagger}
    \right]}
    \sqrt{\sum_{i=1}^{r}\Tr\!\left[
    \eta_{\bm{\mu}\to\bm{\mu}}^{i,\bm{\beta}}
    \eta_{\bm{\mu}\to\bm{\mu}}^{i,\bm{\beta}\dagger}
    \right]}
    \over D_{\bm{\mu}}^{(\bm{M})}}
    \\
    &\leq {1\over d_{\mcI}^{2}}
    \sum_{\bm{\alpha},\bm{\beta}}
    \overline{v_{\bm{\alpha}}}v_{\bm{\beta}}
    \sum_{\bm{\mu}\in(\bm{\alpha}+_{\bm{m}}\square)\cap(\bm{\beta}+_{\bm{m}}\square)}
    d_\lambda d_\mu
    \sqrt{{D_{\bm{\alpha}}^{(\bm{M})}D_{\bm{\beta}}^{(\bm{M})}
    \over D_{\bm{\alpha}}^{(\bm{m})}D_{\bm{\beta}}^{(\bm{m})}}}
    {D_{\bm{\mu}}^{(\bm{m})}\over D_{\bm{\mu}}^{(\bm{M})}},
    \end{align}
    where in each inner sum, the labels $\lambda,\mu,j,\ell$ are determined by $\bm{\mu}=\bm{\alpha}+e_j^\lambda=\bm{\beta}+e_\ell^\mu$.
    We use $\operatorname{Re}z\leq|z|$ and the Cauchy--Schwarz inequality in the fifth line, and Eq.~\eqref{eq:learning-diagonal-block-contraction} in the last line.
    The bound is attained by the rank-one covariant seed obtained by taking $r=1$ and $\eta_{\bm\alpha}=f_{\bm\alpha}(W_0)$, namely
    \begin{align}
        M_{\widehat{W}}= \sum_{\bm{\alpha}, \bm{\beta}\in\young{\bm{m}}{n}} \sqrt{D_{\bm{\alpha}}^{(\bm{M})}D_{\bm{\beta}}^{(\bm{M})}} \dketbra{f_{\bm{\alpha}}^{(\bm{m} \to \bm{M})}(\widehat{W})}{f_{\bm{\beta}}^{\bm{m} \to \bm{M}}(\widehat{W})}.
    \end{align}
    In this case, the fidelity is given by
    \begin{align}
        \bm{v}^\dagger E_n \bm{v},
    \end{align}
    by defining the matrix $E_n$ by
    \begin{align}
        (E_n)_{\bm{\alpha}\bm{\beta}} \coloneqq {1\over d_{\mcI}^{2}}
        \sum_{\bm{\mu}\in(\bm{\alpha}+_{\bm{m}}\square)\cap(\bm{\beta}+_{\bm{m}}\square)}
        d_\lambda d_\mu
        \sqrt{{D_{\bm{\alpha}}^{(\bm{M})}D_{\bm{\beta}}^{(\bm{M})}
        \over D_{\bm{\alpha}}^{(\bm{m})}D_{\bm{\beta}}^{(\bm{m})}}}
        {D_{\bm{\mu}}^{(\bm{m})}\over D_{\bm{\mu}}^{(\bm{M})}}.
    \end{align}
    By using the Weyl-dimension formula, we can rewrite this as in Eq.~\eqref{eq:learning-estimation-matrix}.
    The optimal fidelity is then given by
    \begin{align}
        \max_{\|\bm{v}\|_2=1} \bm{v}^\dagger E_n \bm{v} = \lambda_\mathrm{max}(E_n),
    \end{align}
    which concludes the proof.
\end{proof}

The largest eigenvalue of the estimation matrix $E_n$ can be evaluated up to the leading order $\Theta(n^{-1})$ using arguments similar to those in Ref.~\cite{yoshida2025quantum}, when the parameter $C_G$ defined below satisfies $C_G>0$.

\begin{theorem}[Largest eigenvalue of estimation matrix]
\label{thm:learning-leading-estimation}
Fix a compact group $G$ and unitary representations $U_1,U_2$.
The largest eigenvalue of the estimation matrix $E_n$ is given by
\begin{align}
    \label{eq:learning-leading-estimation}
    \lambda_\mathrm{max}(E_n) = 1-{C_G\over n} + O(n^{-4/3}),
\end{align}
where $C_G$ is defined by
\begin{align}
    \label{eq:def-C_G}
    C_G \coloneqq {1\over d_\mcI}\!\left(\sum_{\lambda\in\Lambda}m_\lambda\sqrt{d_\lambda (M_\lambda-m_\lambda)}\right)^2.
\end{align}
\end{theorem}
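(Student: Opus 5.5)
We prove Eq.~\eqref{eq:learning-leading-estimation} by matching upper and lower bounds on $\lambda_{\max}(E_n)$, following the strategy used for unrestricted isometry estimation in Ref.~\cite{yoshida2025quantum}. The first step rewrites $E_n$ as a perturbation of a discrete Laplacian. A multipartition $\bm\alpha$ is a lattice point $(\alpha^\lambda_i)$ in a product of Weyl chambers with total size $n$, and $E_n$ couples $\bm\alpha$ and $\bm\beta$ whenever both can be completed to the same $\bm\mu=\bm\alpha+e^\lambda_i=\bm\beta+e^\mu_j$. So $E_n$ has the form $F^\top F$, where $F$ maps $\young{\bm m}{n}\to\young{\bm m}{n+1}$ by $(F)_{\bm\mu,\bm\alpha}=d_\lambda f_\lambda(\alpha^\lambda_i-i)/d_\mcI$. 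We expand $f_\lambda(x)=1-\tfrac{M_\lambda-m_\lambda}{2x}+O(x^{-2})$ for large row lengths. Because $\sum_\lambda d_\lambda m_\lambda=d_\mcI$, the quadratic form $\bm v^\dagger E_n\bm v$ for a slowly varying $\bm v$ equals $\|\bm v\|^2$ minus two kinds of terms. The first is a kinetic term: squared differences of $\bm v$ along the moves $e^\lambda_i-e^\mu_j$, with weights $d_\lambda d_\mu/d_\mcI^2$. The second is a potential term $\sum_{\lambda,i}\frac{d_\lambda}{d_\mcI}\frac{M_\lambda-m_\lambda}{\alpha^\lambda_i}|v_{\bm\alpha}|^2$, up to $O(\cdot^{-2})$ corrections.

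\textbf{Lower bound.} We take an explicit trial vector. We factor $v_{\bm\alpha}$ into two parts. The first is a smooth profile in the sector occupations $x_\lambda=n_\lambda/n$. The second, inside each sector, is concentrated near balanced rows $\alpha^\lambda_i\approx n_\lambda/m_\lambda$, with fluctuations of width $n^{1/2+\epsilon}$; we choose the scale so that the boundary and kinetic errors are $O(n^{-4/3})$. The potential then evaluates to $\frac1n\sum_\lambda\frac{d_\lambda m_\lambda^2(M_\lambda-m_\lambda)}{d_\mcI x_\lambda}$. The kinetic cost of the intra-sector part is lower order, since the sets of Young-diagram rows are large. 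We minimize over the simplex $\sum_\lambda x_\lambda=1$. By Cauchy--Schwarz the minimum is attained at $x_\lambda\propto m_\lambda\sqrt{d_\lambda(M_\lambda-m_\lambda)}$, and its value is $\frac{1}{n}\frac{1}{d_\mcI}\bigl(\sum_\lambda m_\lambda\sqrt{d_\lambda(M_\lambda-m_\lambda)}\bigr)^2=C_G/n$. This is the same optimization that produced $C'_G$ in Thm.~\ref{thm:covariant-isometry-tomography-high-probability}. The result is $\lambda_{\max}\ge 1-C_G/n-O(n^{-4/3})$.

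\textbf{Upper bound.} We use a weighted Schur test, i.e. a Perron--Frobenius argument. All entries of $E_n$ are nonnegative, so $\lambda_{\max}(E_n)\le\max_{\bm\alpha}(E_n\bm w)_{\bm\alpha}/w_{\bm\alpha}$ for any positive vector $\bm w$. We take $\bm w$ equal to the trial profile and compute the local ratio. In the bulk the ratio is $1-C_G/n+O(n^{-4/3})$. Near the boundaries of the chambers, and where some $\alpha^\lambda_i$ are small, the factor $f_\lambda<1$ and the missing neighbours only decrease the ratio. For this to work, $\bm w$ must vanish appropriately at the walls, which we arrange with a Weyl-denominator-type factor $\prod_{i<j}(\alpha_i-i-\alpha_j+j)$ as in Ref.~\cite{yoshida2025quantum}. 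We must also check that sectors with $x_\lambda$ far from optimal cost at least $C_G/n$, which follows from the convexity of $\sum_\lambda a_\lambda/x_\lambda$.

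\textbf{Main obstacle.} The hardest step is the upper bound, and in particular controlling the error uniformly over all multipartitions. This includes regions where some $n_\lambda$ is $O(n^{2/3})$: there the $1/x$ potential expansion fails, and this is exactly where the $n^{-4/3}$ error originates. We would handle it by splitting into two cases, $\min_\lambda n_\lambda\ge n^{2/3}$ and the rest. In the second case the potential alone exceeds $C_G/n$, since $a/x$ blows up, so a cruder bound on the ratio suffices there.
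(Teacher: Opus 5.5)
Your upper-bound route has a genuine gap. A Collatz--Wielandt (weighted Schur) bound $\lambda_{\max}(E_n)\le\max_{\bm\alpha}(E_n\bm w)_{\bm\alpha}/w_{\bm\alpha}$ needs $\bm w>0$ on all of $\young{\bm{m}}{n}$ and is only as good as its worst point, so a localized weight cannot be used.

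If $\bm w$ is your compactly supported trial profile, the ratio is infinite just outside its support. Restricting to the support only controls a principal submatrix, whose top eigenvalue is a lower bound for that of $E_n$. If instead $\bm w$ is a positive Gaussian-type profile of width $\sigma\sim n^{2/3}$, then at distance $|y|$ from its peak the factors $w_{\bm\beta}/w_{\bm\alpha}>1$ for moves toward the peak add a positive term of order $|y|^2/\sigma^4$ to the ratio. For $|y|$ of order $n$ this is $\sim n^{-2/3}\gg C_G/n$, so the maximum exceeds $1$. Neither the Weyl-denominator factor nor convexity of $\sum_\lambda a_\lambda/x_\lambda$ (which only controls the potential) repairs this.

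The missing idea is the constant weight $\bm w=\bm 1$, i.e.\ plain row sums, which is what the paper uses:
\begin{enumerate}
\item For $\bm\beta=\bm\alpha+e^\lambda_i-e^\mu_j$ one has $\beta^\mu_j\le\alpha^\mu_j$, and $f_\mu$ is increasing because $m_\mu\le M_\mu$.
\item Dropping the validity constraints (all terms are nonnegative) therefore bounds every row sum by $\bigl[\frac{1}{d_\mcI}\sum_{\lambda,i}d_\lambda f_\lambda(\alpha^\lambda_i-i)\bigr]^2$.
\item Jensen's inequality for the concave $f_\lambda$ replaces the rows by their mean $n_\lambda/m_\lambda-(m_\lambda+1)/2$, and $\sqrt{1-x}\le 1-x/2$ linearizes.
\item Cauchy--Schwarz gives $\sum_\lambda d_\lambda m_\lambda^2(M_\lambda-m_\lambda)/(n_\lambda+O(1))\ge d_\mcI C_G/(n+O(1))$ for \emph{every} occupation vector, small $n_\lambda$ included.
\end{enumerate}
This yields $\lambda_{\max}(E_n)\le 1-C_G/n+O(n^{-2})$ with no case analysis. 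The $O(n^{-4/3})$ term comes only from the lower bound, where the $n^{2/3}$ window balances the $O(n^{-4/3})$ kinetic loss against the $O(n^{2/3}/n^2)$ error in evaluating $f_\lambda$. It does not come from the small-$n_\lambda$ region you identified as the main obstacle.

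Your lower bound is essentially the paper's construction. The paper uses a product vector with discrete sine profiles on windows of width $\lfloor n^{2/3}\rfloor$ around balanced rows, for which $1-\sum_k\sqrt{g_kg_{k+1}}=O(n^{-4/3})$. The last row of one sector absorbs the constraint $\sum_\lambda n_\lambda=n$, and the occupations are $q_\lambda\propto m_\lambda\sqrt{d_\lambda(M_\lambda-m_\lambda)}$.

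One correction is needed: your optimum puts $x_\lambda=0$ on every sector with $M_\lambda=m_\lambda$. But the identity $\frac{1}{d_\mcI}\sum_\lambda d_\lambda m_\lambda=1$ only enters if, in every sector, all $m_\lambda$ rows can both gain and lose a box. In an empty sector only row $1$ can gain a box and none can lose one, which costs a $\Theta(1)$ amount in $\bm v^\dagger E_n\bm v$. The paper gives such sectors $q_\lambda=c_\lambda n^{-1/3}$ (rows of order $n^{2/3}$) and rescales the others by $1-\delta_n$ with $\delta_n=O(n^{-1/3})$, which perturbs the potential only at order $n^{-4/3}$. If $M_\lambda=m_\lambda$ for all $\lambda$, any fixed positive $q$ works.
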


\begin{proof}
    We first show Eq.~\eqref{eq:learning-leading-estimation} by showing the upper and lower bounds of the largest eigenvalue of the estimation matrix $E_n$.
    For permutation-covariant channels, the asymptotic scaling of $C_{\mfS_k}$ is shown in Prop.~\ref{prop:symmetric-group-parameters-asymptotics} in Appendix~\ref{sec:uniform-fixed-height-young-estimates}.

    \paragraph{Upper bound.}
    Since $(E_n)_{\bm{\alpha}\bm{\beta}} \geq 0$ holds for all $\bm{\alpha}, \bm{\beta}$, due to the Perron--Frobenius theorem, we have
    \begin{align}
        \lambda_\mathrm{max}(E_n) &\leq \max_{\bm{\alpha}} \sum_{\bm{\beta}} (E_n)_{\bm{\alpha}\bm{\beta}}.
    \end{align}
    The right-hand side is bounded as follows:
    \begin{align}
        \max_{\bm{\alpha}} \sum_{\bm{\beta}} (E_n)_{\bm{\alpha}\bm{\beta}}
        &\leq {1\over d_{\mcI}^2} \max_{\bm{\alpha}} \sum_{\bm{\beta}} \sum_{\lambda, \mu\in \Lambda} \sum_{i,j} \delta_{\bm{\alpha}+e^\lambda_i, \bm{\beta}+e^\mu_j} d_\lambda d_\mu f_\lambda(\alpha^\lambda_i-i) f_\mu(\beta^\mu_j-j)\\
        &\leq {1\over d_{\mcI}^2} \max_{\bm{\alpha}} \sum_{\lambda, \mu\in \Lambda} \sum_{i,j} d_\lambda d_\mu f_\lambda(\alpha^\lambda_i-i) f_\mu(\alpha^\mu_j-j)\\
        &\leq \qty[{1\over d_\mcI} \max_{\bm{\alpha}} \sum_{\lambda\in \Lambda} \sum_{i} d_\lambda f_\lambda(\alpha^\lambda_i-i)]^2\\
        &\leq \qty[{1\over d_\mcI} \max_{\bm{n}} \sum_{\lambda\in \Lambda} d_\lambda m_\lambda f_\lambda\qty({n_\lambda\over m_\lambda}-{m_\lambda+1\over 2})]^2\\
        &\leq \qty[{1\over d_\mcI} \max_{\bm{n}} \sum_{\lambda\in \Lambda} d_\lambda m_\lambda \qty[1-{M_\lambda-m_\lambda \over 2\qty({n_\lambda\over m_\lambda}-{m_\lambda+1\over 2}+M_\lambda+1)}]]^2\\
        &= \qty[1-{1\over 2d_\mcI} \min_{\bm{n}} \sum_{\lambda\in \Lambda} {d_\lambda m_\lambda^2 (M_\lambda-m_\lambda) \over n_\lambda-m_\lambda({m_\lambda+1\over 2}-M_\lambda-1)}]^2,
    \end{align}
    where we use the monotonicity of $f_\lambda$ in the second line and Jensen's inequality for the concave function $f_\lambda$ and ${1\over m_\lambda} \sum_i (\alpha_i^\lambda-i) = {n_\lambda\over m_\lambda}-{m_\lambda+1\over 2}$ in the fourth line, and $\sqrt{1-x}\leq 1-{x\over 2}$ for all $x\leq 1$ in the fifth line.
    Due to the Cauchy--Schwarz inequality, the summation in the last equality is bounded as
    \begin{align}
        \sum_{\lambda\in \Lambda} {d_\lambda m_\lambda^2 (M_\lambda-m_\lambda) \over n_\lambda-m_\lambda({m_\lambda+1\over 2}-M_\lambda-1)}
        &\geq {\qty(\sum_{\lambda\in\Lambda} m_\lambda \sqrt{d_\lambda (M_\lambda-m_\lambda)})^2 \over \sum_{\lambda\in \Lambda} \qty[n_\lambda-m_\lambda({m_\lambda+1\over 2}-M_\lambda-1)]}\\
        &= d_\mcI {C_G \over n+O(1)}.
    \end{align}
    Thus, we have
    \begin{align}
        \lambda_\mathrm{max}(E_n)\leq \qty[1-{C_G\over 2(n+O(1))}]^2 = 1-{C_G\over n}+O(n^{-2}).
    \end{align}

    \paragraph{Lower bound.}
    We show a lower bound by constructing a normalized vector $\bm{v} = (v_{\bm{\alpha}})_{\bm{\alpha}\in \young{\bm{m}}{n}}$ such that $\bm{v}^\dagger E_n \bm{v} \geq 1-{C_G\over n}+O(n^{-4/3})$ by extending the construction in Refs.~\cite{yang2020optimal,yoshida2025quantum}.
    Throughout the lower-bound construction, $n$ is assumed to be sufficiently large.
    We define
    \begin{align}
        s_G\coloneqq\sum_{\lambda\in\Lambda}
        m_\lambda\sqrt{d_\lambda(M_\lambda-m_\lambda)}.
    \end{align}
    If $s_G>0$, we choose constants $c_\lambda>0$ for $\lambda\in\Lambda$ satisfying $m_\lambda = M_\lambda$, and set, for $\lambda\in\Lambda$,
    \begin{align}
        \delta_n\coloneqq n^{-1/3}\sum_{\lambda\in\Lambda: m_\lambda = M_\lambda}c_\lambda,\qquad
        q_\lambda\coloneqq
        \begin{cases}
            (1-\delta_n){m_\lambda \sqrt{d_\lambda(M_\lambda-m_\lambda)} \over s_G} & (m_\lambda<M_\lambda)\\
            c_\lambda n^{-1/3} & (m_\lambda = M_\lambda)
        \end{cases},
        \label{eq:learning-nondegenerate-occupation-weights}
    \end{align}
    so that $\sum_{\lambda\in\Lambda}q_\lambda=1$ and $q_\lambda>0$ holds.
    If $s_G=0$, choose instead any fixed strictly positive probability vector $(q_\lambda)_{\lambda\in\Lambda}$.
    We restrict the support of $\bm{v}$ to collections $\bm\alpha$ satisfying $|\alpha^\lambda| = q_\lambda n + O(n^{2/3})$ and $\alpha_i^\lambda = {q_\lambda n \over m_\lambda} + O(n^{2/3})$ for all $\lambda\in\Lambda$ and $i\in[m_\lambda]$.
    To this end, we fix $\lambda_*\in \Lambda$ and nonnegative integers $A_{\lambda, i}$ for $\lambda\in \Lambda$ and $i\in [m_\lambda]$ such that
    \begin{align}
        A_{\lambda, i} &= {q_\lambda n \over m_\lambda} + O(n^{2/3}) \quad \forall \lambda\in\Lambda, i\in [m_\lambda],\\
        A_{\lambda,i} &\geq A_{\lambda, i+1} + \lfloor n^{2/3}\rfloor \quad \forall \lambda\in\Lambda, i\in [m_\lambda-1],\\
        A_{\lambda_*, m_{\lambda_*}} &\geq \sum_{\lambda\in\Lambda} m_\lambda \lfloor n^{2/3}\rfloor,\\
        \sum_{\lambda\in \Lambda} \sum_{i=1}^{m_\lambda} A_{\lambda, i} &= n.
    \end{align}
    Then, for any $\widetilde{\bm{\alpha}} = (\widetilde{\alpha}^\lambda)_{\lambda\in \Lambda}$ with $\widetilde{\alpha}^\lambda \in \{0, \ldots, \lfloor n^{2/3}\rfloor-1\}^{m_\lambda}$ for $\lambda\neq \lambda_*$ and $\widetilde{\alpha}^{\lambda_*} \in \{0, \ldots, \lfloor n^{2/3}\rfloor-1\}^{m_{\lambda_*}-1}$, we define $\bm{\alpha} = (\alpha^\lambda)_{\lambda\in\Lambda}$ by
    \begin{align}
        \alpha^\lambda_i &= A_{\lambda, i} + \widetilde{\alpha}^\lambda_i \quad (i\in [m_\lambda] \text{ for } \lambda \neq \lambda_* \text{ and } i\in [m_{\lambda_*}-1] \text{ for } \lambda = \lambda_*),\\
        \alpha^{\lambda_*}_{m_{\lambda_*}} &= A_{\lambda_*, m_{\lambda_*}} - \sum_{i=1}^{m_{\lambda_*}-1} \widetilde{\alpha}^{\lambda_*}_i - \sum_{\substack{\lambda\in \Lambda\setminus \{\lambda_*\} \\ i\in [m_\lambda]}} \widetilde{\alpha}^\lambda_i.
    \end{align}
    Then, $\bm{\alpha}\in \young{\bm{m}}{n}$ holds for any $\widetilde{\bm{\alpha}}$.
    We define $v_{\bm{\alpha}}$ on this support by
    \begin{align}
        v_{\bm{\alpha}} &\coloneqq \prod_{\lambda\in \Lambda} v_{\widetilde{\alpha}^\lambda}^\lambda\\
        v_{\widetilde{\alpha}^\lambda}^{\lambda}&\coloneqq
        \begin{cases}
            \prod_{i=1}^{m_\lambda} \sqrt{g_{\widetilde{\alpha}^\lambda_i}} & (\lambda \neq \lambda_*)\\
            \prod_{i=1}^{m_{\lambda_*}-1} \sqrt{g_{\widetilde{\alpha}^{\lambda_*}_i}} & (\lambda = \lambda_*)
        \end{cases},\\
        g_k &\coloneqq {2\over \lfloor n^{2/3}\rfloor} \sin^2\qty({\pi(2k+1)\over 2\lfloor n^{2/3}\rfloor}),
    \end{align}
    Set $v_{\bm{\alpha}}$ and each $v^\lambda_{\widetilde{\alpha}^\lambda}$ to zero outside their respective supports.
    Since the coordinates $\widetilde{\alpha}^\lambda$ range independently over the cubes above, we have
    \begin{gather}
        \sum_{\bm{\alpha}}v_{\bm{\alpha}}^2
        =\prod_{\lambda\in\Lambda}\sum_{\widetilde{\alpha}^\lambda}
        \left(v^\lambda_{\widetilde{\alpha}^\lambda}\right)^2=1.
    \end{gather}
    To evaluate $\bm{v}^\dagger E_n \bm{v}$, we first bound $E_n$ on the support of $\bm{v}$ as
    \begin{align}
        (E_n)_{\bm{\alpha}\bm{\beta}}
        &= {1\over d_{\mcI}^2} \sum_{\lambda, \mu\in \Lambda} \sum_{i,j} \delta_{\bm{\alpha}+e^\lambda_i, \bm{\beta}+e^\mu_j} d_\lambda d_\mu f_\lambda(\alpha^\lambda_i-i) f_\mu(\beta^\mu_j-j)\\
        &\geq {1\over d_{\mcI}^2} \sum_{\lambda, \mu\in \Lambda} \sum_{i,j} \delta_{\bm{\alpha}+e^\lambda_i, \bm{\beta}+e^\mu_j} d_\lambda d_\mu \underline{f}_\lambda \underline{f}_\mu,
    \end{align}
    where $\underline{f}_\lambda$ is defined by
    \begin{align}
        \underline{f}_\lambda
        &\coloneqq \min_{\bm{\alpha}: v_{\bm{\alpha}} \neq 0} \min_{i\in [m_\lambda]} f_\lambda(\alpha^\lambda_i-i)\\
        &\geq f_\lambda\qty({q_\lambda n \over m_\lambda} + O(n^{2/3}))\\
        &\geq 1-{m_\lambda(M_\lambda-m_\lambda) \over 2q_\lambda n} + O(n^{-4/3}).
    \end{align}
    Then, we have
    \begin{align}
        \bm{v}^\dagger E_n \bm{v}
        &\geq {1\over d_{\mcI}^2} \sum_{\lambda, \mu\in \Lambda} d_\lambda d_\mu \underline{f}_\lambda \underline{f}_\mu \sum_{\bm{\alpha}, \bm{\beta}} \sum_{i,j} v_{\bm{\alpha}} v_{\bm{\beta}} \delta_{\bm{\alpha}+e_i^\lambda, \bm{\beta}+e_j^\mu}.
    \end{align}
    In shifts of $\widetilde{\alpha}^\lambda$ below, $e_i$ denotes the $i$th coordinate unit vector, except that $e_{m_{\lambda_*}}=0$ in the $\lambda_*$ factor because its last row is dependent.
    Here, we have
    \begin{align}
        \sum_{\bm{\alpha}, \bm{\beta}} \sum_{i,j} v_{\bm{\alpha}} v_{\bm{\beta}} \delta_{\bm{\alpha}+e_i^\lambda, \bm{\beta}+e_j^\mu}
        &= \sum_{\bm{\alpha}} \sum_{i,j} v_{\bm{\alpha}} v_{\bm{\alpha}+e_i^\lambda-e_j^\mu}\\
        &=\begin{cases}
           \sum_{\widetilde{\alpha}^\lambda} \sum_{i,j} v_{\widetilde{\alpha}^\lambda}^{\lambda} v_{\widetilde{\alpha}^\lambda+e_i-e_j}^{\lambda} & (\lambda = \mu)\\
           \sum_{\widetilde{\alpha}^\lambda, \widetilde{\alpha}^\mu} \sum_{i,j} v_{\widetilde{\alpha}^\lambda}^{\lambda} v_{\widetilde{\alpha}^\lambda+e_i}^{\lambda} v_{\widetilde{\alpha}^\mu}^{\mu} v_{\widetilde{\alpha}^\mu-e_j}^{\mu} & (\lambda \neq \mu).
        \end{cases}
    \end{align}
    We can evaluate the above summation as
    \begin{align}
        \sum_{\widetilde{\alpha}^\lambda} v_{\widetilde{\alpha}^\lambda}^{\lambda} v^{\lambda}_{\widetilde{\alpha}^\lambda+e_i-e_j}
        &=
        \begin{cases}
        1 & (i=j)\\
        \qty(\sum_{k=0}^{\lfloor n^{2/3}\rfloor-2} \sqrt{g_k g_{k+1}})^2 & (i\neq j,\ \lambda\neq \lambda_* \text{ or } i,j\neq m_{\lambda_*})\\
        \sum_{k=0}^{\lfloor n^{2/3}\rfloor-2} \sqrt{g_k g_{k+1}} & (i\neq j,\ \lambda = \lambda_*,\ i=m_{\lambda_*} \text{ or } j=m_{\lambda_*}).
        \end{cases}
    \end{align}
    Since
    \begin{align}
        1-\sum_{k=0}^{\lfloor n^{2/3}\rfloor-2} \sqrt{g_k g_{k+1}}
        &= 1-{2\over \lfloor n^{2/3}\rfloor} \sum_{k=0}^{\lfloor n^{2/3}\rfloor-2} \sin\qty({\pi(2k+1)\over 2\lfloor n^{2/3}\rfloor}) \sin\qty({\pi(2k+3)\over 2\lfloor n^{2/3}\rfloor})\\
        &= 1-{1\over \lfloor n^{2/3}\rfloor} \sum_{k=0}^{\lfloor n^{2/3}\rfloor-2} \qty[\cos\qty({\pi\over \lfloor n^{2/3}\rfloor}) - \cos\qty({\pi(2k+2)\over \lfloor n^{2/3}\rfloor})]\\
        &= \qty(1-{1\over \lfloor n^{2/3}\rfloor})\qty(1-\cos\qty({\pi\over \lfloor n^{2/3}\rfloor}))\\
        &= O(n^{-4/3})
    \end{align}
    holds, we have
    \begin{align}
        \sum_{\bm{\alpha}, \bm{\beta}} \sum_{i,j} v_{\bm{\alpha}} v_{\bm{\beta}} \delta_{\bm{\alpha}+e_i^\lambda, \bm{\beta}+e_j^\mu}
        =m_\lambda m_\mu-O(n^{-4/3}).
    \end{align}
    Using these bounds, we have
    \begin{align}
        \bm{v}^\dagger E_n \bm{v}
        &\geq {1\over d_{\mcI}^2} \sum_{\lambda, \mu\in \Lambda} d_\lambda d_\mu \underline{f}_\lambda \underline{f}_\mu \left(m_\lambda m_\mu-O(n^{-4/3})\right)\\
        &= {1\over d_{\mcI}^2} \qty(\sum_{\lambda\in \Lambda} d_\lambda m_\lambda \underline{f}_\lambda)^2 + O(n^{-4/3})\\
        &= 1-{1\over d_\mcI}\sum_{\lambda\in\Lambda}{d_\lambda m_\lambda^2(M_\lambda-m_\lambda) \over q_\lambda n} + O(n^{-4/3}).
    \end{align}
    When $s_G>0$, Eq.~\eqref{eq:learning-nondegenerate-occupation-weights} provides
    \begin{align}
        \sum_{\lambda\in\Lambda}
        {d_\lambda m_\lambda^2(M_\lambda-m_\lambda)\over q_\lambda}
        ={s_G^2\over1-\delta_n}
        =s_G^2+O(n^{-1/3}).
        \label{eq:learning-occupation-weight-optimization}
    \end{align}
    When $s_G=0$, the same sum is identically zero since $M_\lambda=m_\lambda$ holds for every $\lambda\in\Lambda$.
    Therefore Eq.~\eqref{eq:learning-occupation-weight-optimization} yields
    \begin{align}
        \bm{v}^\dagger E_n \bm{v}
        &\geq 1-{s_G^2\over d_\mcI n} + O(n^{-4/3})\\
        &= 1-{C_G\over n} + O(n^{-4/3}),
    \end{align}
    which concludes the proof.
\end{proof}

\section{Lower bound for covariant-state tomography}
\label{app:state-tomography-lower}

We prove the minimax lower bound in Theorem~\ref{thm:state-learning-optimal}, allowing arbitrary collective measurements on the input copies.
All sector sums in this appendix range over the isotypic components for which $m_\lambda>0$.
Throughout this appendix $L$ denotes the number of irreducible sectors with $m_\lambda>0$, and $D_{\mathrm{st}}\coloneqq\sum_\lambda m_\lambda^2$.
In the representation basis, a covariant density operator and its compressed multiplicity state have the forms
\begin{equation}
\rho=\bigoplus_\lambda\frac{\1_{\mcR_\lambda}}{d_\lambda}\otimes\rho_\lambda,
\qquad \sigma=\bigoplus_\lambda\rho_\lambda,
\qquad \rho_\lambda\succeq0,\quad\sum_\lambda\Tr\rho_\lambda=1.
\end{equation}
Measuring the sector and discarding its irrep register maps $\rho$ to $\sigma$, and appending the maximally mixed irrep state on each sector gives the inverse map.
Both maps are quantum channels independent of the unknown state, and they preserve trace distances within the model.
Moreover, applying the compression channel $\mathcal C$ to any estimate $\widehat\rho$ gives an estimate $\widehat\sigma=\mathcal C(\widehat\rho)$ satisfying $d_{\mathrm{tr}}(\widehat\sigma,\sigma)\leq d_{\mathrm{tr}}(\widehat\rho,\rho)$ by contractivity.
It therefore suffices to establish the lower bound for the direct-sum states $\sigma$.
We adapt the Holevo--Fano packing method for unrestricted state tomography~\cite[Sec.~VI]{haah2017sample} to the direct-sum model.
We first show that if a sufficiently large family of states is well separated in trace distance while remaining close in relative entropy, then any learning protocol that succeeds with constant probability must use many copies. 
We then construct such a family by varying the quantum state within the multiplicity spaces.
We also obtain a complementary lower bound by restricting to states that vary only in the probabilities assigned to the irreducible sectors and invoking the classical minimax lower bound for distribution estimation.
Together, these two arguments account for all independent degrees of freedom of the covariant state family.
Finally, we obtain the dependence on the failure probability by reducing state learning to binary hypothesis testing between two nearby states.

\paragraph{Obtaining sample lower bounds from packings.}\label{par:packing-lowerbound}
Let $\mcZ$ be a finite index set, and let $\{\sigma_z:z\in\mcZ\}$ be a collection of states with pairwise trace distance at least $3\varepsilon$.
Suppose there exists a density operator $\tau$ such that
\begin{equation}
D_{\mathrm{rel}}(\sigma_z\Vert\tau)
\coloneqq\Tr[\sigma_z(\log\sigma_z-\log\tau)]
\leq c\varepsilon^2
\qquad\text{for every }z\in\mcZ.
\label{eq:state-lower-relative-entropy}
\end{equation}
Choose an index $Z$ uniformly at random from $\mcZ$, prepare $n$ copies of the corresponding state $\sigma_Z$, and perform an arbitrary collective measurement on them.
Let the outcome of this measurement be the random variable $Y$.
Define the average $n$-copy state by
\begin{equation}
\bar\rho\coloneqq\frac{1}{|\mcZ|}\sum_{z\in\mcZ}\sigma_z^{\otimes n}.
\end{equation}
The Holevo bound, the relative-entropy decomposition with respect to $\tau^{\otimes n}$, and additivity of relative entropy give
\begin{align}
I(Z:Y)
&\leq \frac{1}{|\mcZ|}\sum_{z\in\mcZ}D_{\mathrm{rel}}(\sigma_z^{\otimes n}\Vert\bar\rho) \\
&= \frac{1}{|\mcZ|}\sum_{z\in\mcZ}D_{\mathrm{rel}}(\sigma_z^{\otimes n}\Vert\tau^{\otimes n})-D_{\mathrm{rel}}(\bar\rho\Vert\tau^{\otimes n}) \\
&\leq \frac{1}{|\mcZ|}\sum_{z\in\mcZ}D_{\mathrm{rel}}(\sigma_z^{\otimes n}\Vert\tau^{\otimes n}) \\
&= \frac{n}{|\mcZ|}\sum_{z\in\mcZ}D_{\mathrm{rel}}(\sigma_z\Vert\tau) \\
&\leq nc\varepsilon^2.
\label{eq:state-lower-holevo}
\end{align}
As the states $\{\sigma_z:z\in\mcZ\}$ are separated by at least $3\varepsilon$, any estimate $\widehat\sigma$ satisfying $d_{\mathrm{tr}}(\widehat\sigma,\sigma_Z)\leq\varepsilon$ determines the index $Z$ uniquely.
Suppose we have an algorithm producing such an estimate with probability at least $2/3$. Then, by Fano's inequality,
\begin{equation}
nc\varepsilon^2\geq I(Z:Y)\geq\log |\mcZ|-h_2(1/3)-\frac13\log(|\mcZ|-1)\geq\frac1{20}\log |\mcZ|,
\qquad |\mcZ|\geq2,
\label{eq:state-lower-fano}
\end{equation}
where $h_2$ denotes binary entropy with natural logarithms.
The last inequality always holds for $|\mcZ|\ge 2$.
Thus, to prove a lower bound on the sample complexity, it suffices to construct a family $\{\sigma_z:z\in\mcZ\}$ satisfying the separation and relative-entropy conditions above, since any state-learning protocol that succeeds with probability at least $2/3$ must then use
\begin{equation}
n=\Omega\!\left(\frac{\log |\mcZ|}{\varepsilon^2}\right)
\end{equation}
samples.
\paragraph{Quantum degrees of freedom in the multiplicity spaces.}
For each sector with $m_\lambda\geq2$, put $s_\lambda\coloneqq\lfloor m_\lambda/2\rfloor$ and restrict its multiplicity state to a fixed $2s_\lambda$-dimensional subspace.
Let
\begin{equation}
\Lambda_*\coloneqq\{\lambda:m_\lambda\geq2\},
\qquad
R_*\coloneqq\sum_{\lambda\in\Lambda_*}s_\lambda^2.
\end{equation}
In this paragraph we focus on the case in which $R_*>0$.
\begin{lemma}[Haar-unitary packing]
\label{lem:state-lower-haar-packing}
There exists a universal constant $c_{\mathrm H}>0$ and a set
\begin{equation}
\mcC\subseteq\bigtimes_{\lambda\in\Lambda_*}\mathrm U(s_\lambda)
\end{equation}
such that $\log|\mcC|\geq c_{\mathrm H}R_*$ and, for all distinct tuples $U=(U_\lambda)_{\lambda\in\Lambda_*}$ and $V=(V_\lambda)_{\lambda\in\Lambda_*}$ in $\mcC$,
\begin{equation}
\sum_{\lambda\in\Lambda_*}s_\lambda\|U_\lambda-V_\lambda\|_{\rm HS}^2\geq R_*.
\label{eq:state-lower-haar-packing}
\end{equation}
\end{lemma}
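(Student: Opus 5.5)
We construct $\mcC$ by the probabilistic method: sample tuples independently from the product Haar measure on $\bigtimes_{\lambda\in\Lambda_*}\mathrm U(s_\lambda)$ and show that a fixed pair of independent samples violates Eq.~\eqref{eq:state-lower-haar-packing} with probability at most $e^{-cR_*}$ for a universal $c>0$. Then a standard greedy or union-bound argument gives the packing. Concretely, draw $N=\lfloor e^{cR_*/2}\rfloor$ tuples. The expected number of violating pairs is at most $N^2e^{-cR_*}\leq1$, so deleting one element from each violating pair leaves at least $N/2$ tuples. Hence $\log|\mcC|\geq c_{\mathrm H}R_*$ for suitable $c_{\mathrm H}$, after separately handling small $R_*$, where a two-element set suffices. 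Equivalently, one can take a maximal separated set and compare volumes of "bad balls" with the full measure.

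\textbf{Step 1: the mean.} For independent Haar $U,V\in\mathrm U(s)$, we have $\|U-V\|_{\rm HS}^2=2s-2\operatorname{Re}\Tr(U^\dagger V)$. Since $U^\dagger V$ is Haar distributed, $\mathbb E\Tr(U^\dagger V)=0$. Thus
\begin{equation}
\mathbb E\sum_\lambda s_\lambda\|U_\lambda-V_\lambda\|_{\rm HS}^2=2R_*.
\end{equation}
The required event is therefore a deviation of at least half the mean, namely
\begin{equation}
X\coloneqq\sum_\lambda s_\lambda\operatorname{Re}\Tr W_\lambda\geq R_*/2,
\end{equation}
where the $W_\lambda$ are independent Haar unitaries.

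\textbf{Step 2: concentration.} The function $W\mapsto\operatorname{Re}\Tr W$ is $\sqrt{s}$-Lipschitz in Hilbert--Schmidt norm, since $|\Tr A|\leq\sqrt s\|A\|_{\rm HS}$. Hence $X$ is Lipschitz on the product group, equipped with the product Hilbert--Schmidt metric, with constant
\begin{equation}
\Bigl(\sum_\lambda s_\lambda^2\cdot s_\lambda\Bigr)^{1/2}=\Bigl(\sum_\lambda s_\lambda^3\Bigr)^{1/2}.
\end{equation}
Gaussian concentration on $\mathrm U(s)$ from log-Sobolev inequalities has constant about $s$, but only on $\mathrm{SU}(s)$ (Meckes--Meckes). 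Plugging this in naively gives an exponent of order $s_\lambda R_*^2/\sum_\lambda s_\lambda^3$, which is not obviously of order $R_*$ when one sector dominates. This is the main obstacle. To fix it, I would use the sharper fact that the concentration constant scales with dimension: for $\mathrm{SU}(s)$ with metric HS, we have $\Prob[f-\mathbb Ef\geq t]\leq e^{-s t^2/(12L^2)}$. Weighting per sector, the sector-$\lambda$ contribution $s_\lambda\operatorname{Re}\Tr W_\lambda$ is subgaussian with variance proxy $O(s_\lambda^2\cdot s_\lambda/s_\lambda)=O(s_\lambda^2)$.

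To handle the determinant phase, write $W=e^{i\theta}W'$ with $W'\in\mathrm{SU}(s)$. Alternatively, one can simply use the moment bound $\mathbb E|\Tr W|^{2k}\leq k!$ for $k\leq s$ (Diaconis--Shahshahani) together with a truncation argument. Either route shows that each $Y_\lambda\coloneqq s_\lambda\operatorname{Re}\Tr W_\lambda$ is centered and subgaussian with parameter $O(s_\lambda)$, at least in the range $t\lesssim s_\lambda^2$, which is all we need, since $|Y_\lambda|\leq s_\lambda^2$ anyway. Independence across sectors then gives
\begin{equation}
\Prob[X\geq R_*/2]\leq\exp\Bigl(-c\,R_*^2\Big/\sum_\lambda s_\lambda^2\Bigr)=e^{-cR_*}.
\end{equation}

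\textbf{Step 3: conclusion.} Conclude via the deletion argument described above. The delicate point is Step 2's uniform subgaussian bound. If needed, I would fall back on a Bernstein-type bound using the fact that $|Y_\lambda|\leq s_\lambda^2$ and $\operatorname{Var}Y_\lambda=s_\lambda^2/2$. This directly gives an exponent of order
\begin{equation}
\frac{R_*^2}{\sum_\lambda s_\lambda^2+\max_\lambda s_\lambda^2\,R_*}.
\end{equation}
That exponent is only $\Omega(1)$ if a single sector dominates, so in that case I would instead invoke the single-block Szarek packing of $\mathrm U(s)$, which gives $e^{cs^2}$ points at HS distance of order $\sqrt{s}$, and take products of such packings across sectors.
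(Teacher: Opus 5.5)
Your proposal is correct and follows the same overall plan as the paper's proof. The bad event is $\sum_\lambda s_\lambda\operatorname{Re}\Tr W_\lambda>R_*/2$, where $W_\lambda=U_\lambda^\dagger V_\lambda$ are independent Haar unitaries. One needs a per-sector sub-Gaussian bound with variance proxy of order $s_\lambda^2$, and independence across sectors then gives $e^{-cR_*}$. Drawing $e^{\Theta(R_*)}$ random tuples, plus the pair $\{\1,-\1\}$ for small $R_*$, finishes the argument.

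Where you differ is the per-sector estimate. The paper gets it from the Schur--Weyl moment identity $\mathbb E|\Tr W|^{2k}=\sum_{\mu\vdash k,\,\ell(\mu)\leq s}(f^\mu)^2$. This is at most $k!$ for \emph{every} $k$, because $\sum_{\mu\vdash k}(f^\mu)^2=k!$. You believed the bound holds only for $k\leq s$ and so reached for truncation. No truncation is needed: $\mathbb E e^{t\operatorname{Re}\Tr W}\leq\sum_k k!\,t^{2k}/(2k)!\leq e^{t^2}$ uniformly in $s$. This gives the tail $e^{-R_*/16}$ and $c_{\mathrm H}=1/128$ in a few elementary lines.

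Your primary route also works: apply log-Sobolev concentration to each $\mathrm U(s_\lambda)$ separately. The map $s\operatorname{Re}\Tr W$ is $s^{3/2}$-Lipschitz in Hilbert--Schmidt norm and the concentration rate is of order $s$. So the tail is $e^{-t^2/(12s^2)}$ and the centered MGF is $e^{O(s^2\theta^2)}$. You were right that a single application on the product group would only see the smallest $s_\lambda$, so the per-sector step is necessary. Your concern that the dimension-scaled bound holds only on $\mathrm{SU}(s)$ is unfounded. The Meckes--Meckes corollary, which the paper itself cites in its Choi-trace lower bound, covers $\mathrm U(n)$ with $e^{-nt^2/(12L^2)}$.

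In short, your route costs heavier machinery and unspecified constants, while the paper's is self-contained with explicit constants.

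The Step~3 fallback is unnecessary, and as stated it would not work on its own. A plain Cartesian product of per-sector Szarek packings contains distinct tuples that differ only in one light sector. Such pairs violate the weighted separation $\geq R_*$ unless you add an outer code over sectors, so you should drop that fallback rather than rely on it.
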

\begin{proof}
For a Haar unitary $W\in\mathrm U(s)$, Schur--Weyl duality gives
\begin{equation}
\mathbb E|\Tr W|^{2k}
=
\dim\operatorname{End}((\CC^s)^{\otimes k})^{\mathrm U(s)}
=
\sum_{\mu\vdash k,\,\ell(\mu)\leq s}(f^\mu)^2
\leq k!,
\end{equation}
where $f^\mu$ is the dimension of the irreducible $\mfS_k$-representation indexed by $\mu$ and we used $\sum_{\mu\vdash k}(f^\mu)^2=k!$.
Since $W$ is Haar-random, the random variable $\operatorname{Re}\Tr W$ is symmetric about zero and hence has vanishing odd moments. We therefore obtain
\begin{equation}\label{eq: gen fun Haar}
\mathbb E\,e^{t\operatorname{Re}\Tr W}
=
\sum_{k\geq0}\frac{t^{2k}}{(2k)!}\mathbb E(\operatorname{Re}\Tr W)^{2k}
\leq
\sum_{k\geq0}\frac{t^{2k}}{(2k)!}\mathbb E|\Tr W|^{2k}
\leq
\sum_{k\geq0}\frac{k!t^{2k}}{(2k)!}
\leq e^{t^2}.
\end{equation}
For two independent Haar tuples $U=(U_\lambda)_{\lambda\in\Lambda_*}$ and $V=(V_\lambda)_{\lambda\in\Lambda_*}$, define
\begin{equation}
S\coloneqq\sum_{\lambda\in\Lambda_*} s_\lambda\operatorname{Re}\Tr(U_\lambda^\dagger V_\lambda),
\qquad
Y\coloneqq\sum_{\lambda\in\Lambda_*} s_\lambda\|U_\lambda-V_\lambda\|_{\rm HS}^2.
\end{equation}
Since
\begin{equation}
\|U_\lambda-V_\lambda\|_{\rm HS}^2
=
2s_\lambda-2\operatorname{Re}\Tr(U_\lambda^\dagger V_\lambda),
\end{equation}
we have
\begin{equation}
Y=2R_*-2S.
\end{equation}
The matrices $U_\lambda^\dagger V_\lambda$ are independent Haar unitaries, so Eq.~\eqref{eq: gen fun Haar} gives
\begin{equation}
\mathbb E e^{tS}
=
\prod_{\lambda\in\Lambda_*}
\mathbb E e^{t s_\lambda\operatorname{Re}\Tr(U_\lambda^\dagger V_\lambda)}
\leq
\exp\!\left(t^2\sum_{\lambda\in\Lambda_*}s_\lambda^2\right)
=
e^{t^2R_*}.
\end{equation}
Therefore, by Chernoff's bound,
\begin{align}
\Prob\{Y<R_*\}
&=
\Prob\{S>R_*/2\}\\
&\leq
\inf_{t>0}\exp\!\left(-\frac{tR_*}{2}+t^2R_*\right)\\
&\leq e^{-R_*/16},
\end{align}
where we have taken $t=1/4$ in the last inequality.

If $R_*\geq128\log2$, draw
\begin{equation}
M\coloneqq\left\lfloor e^{R_*/64}\right\rfloor
\end{equation}
independent Haar tuples.
The probability that at least one pair violates $Y\geq R_*$ is at most
\begin{equation}
\binom{M}{2}e^{-R_*/16}<1,
\end{equation}
so there exists a set of $M$ tuples satisfying the required pairwise separation.
Moreover,
\begin{equation}
\log M\geq \frac{R_*}{128}.
\end{equation}
If $R_*<128\log2$, the two tuples defined by $U_\lambda=\1$ and $V_\lambda=-\1$ satisfy
\begin{equation}
\sum_{\lambda\in\Lambda_*}s_\lambda\|U_\lambda-V_\lambda\|_{\rm HS}^2
=
4R_*\geq R_*,
\end{equation}
and since $\log2>R_*/128$, the same constant $c_{\mathrm H}=1/128$ applies.
\end{proof}

Let $p_\lambda\coloneqq\frac{s_\lambda^2}{R_*}$. For $U\in\mcC$, define
\begin{equation}
H_{U,\lambda}\coloneqq
\begin{pmatrix}
0&U_\lambda\\
U_\lambda^\dagger&0
\end{pmatrix},
\qquad
\sigma_{U,\lambda}\coloneqq\frac{p_\lambda}{2s_\lambda}(\1_{2s_\lambda}+\delta H_{U,\lambda}),
\qquad0<\delta\leq\frac12.
\end{equation}
Since $H_{U,\lambda}^2=\1$ and $\Tr H_{U,\lambda}=0$, each block is positive and has trace $p_\lambda$ on the chosen $2s_\lambda$-dimensional subspace of $\mcM_\lambda$.
Extending each block by zero on the orthogonal complement and taking the direct sum over all sectors gives a density operator on the full compressed multiplicity space.
We now write
\begin{align}
d_{\mathrm{tr}}(\sigma_U,\sigma_V)
&=
\frac12\sum_{\lambda\in\Lambda_*}
\|\sigma_{U,\lambda}-\sigma_{V,\lambda}\|_1\\
&=
\frac{\delta}{4}
\sum_{\lambda\in\Lambda_*}
\frac{p_\lambda}{s_\lambda}
\|H_{U,\lambda}-H_{V,\lambda}\|_1\\
&=
\frac{\delta}{2}
\sum_{\lambda\in\Lambda_*}
\frac{p_\lambda}{s_\lambda}
\|U_\lambda-V_\lambda\|_1\\
&=
\frac{\delta}{2R_*}
\sum_{\lambda\in\Lambda_*}
s_\lambda\|U_\lambda-V_\lambda\|_1\\
&\geq
\frac{\delta}{4R_*}
\sum_{\lambda\in\Lambda_*}
s_\lambda\|U_\lambda-V_\lambda\|_{\rm HS}^2\\
&\geq
\frac{\delta}{4}.
\label{eq:state-lower-quantum-separation}
\end{align}
The first inequality follows from $\|X\|_{\rm HS}^2\leq\|X\|_\infty\|X\|_1$ and $\|U_\lambda-V_\lambda\|_\infty\leq2$, and the last uses Eq.~\eqref{eq:state-lower-haar-packing}.

To apply the argument in Paragraph~\ref{par:packing-lowerbound}, it remains to show that all states in the family are close in relative entropy to a single fixed state.
To this end, define $\tau$ by the blocks $\tau_\lambda\coloneqq p_\lambda\1_{2s_\lambda}/(2s_\lambda)$ on the chosen $2s_\lambda$-dimensional subspaces, extended by zero on their orthogonal complements.
Since $\sigma_{U,\lambda}=\tau_\lambda(\1+\delta H_{U,\lambda})$ and $\tau_\lambda$ commutes with $H_{U,\lambda}$, we have
\begin{align}
D_{\mathrm{rel}}(\sigma_U\Vert\tau)
&=
\sum_{\lambda\in\Lambda_*}
\frac{p_\lambda}{2s_\lambda}
\Tr\!\left[
(\1+\delta H_{U,\lambda})
\log(\1+\delta H_{U,\lambda})
\right]\\
&\leq
\sum_{\lambda\in\Lambda_*}
\frac{p_\lambda}{2s_\lambda}
\left(
\delta\Tr H_{U,\lambda}
+
\delta^2\Tr H_{U,\lambda}^2
\right)\\
&=
\delta^2
\sum_{\lambda\in\Lambda_*}
\frac{p_\lambda}{2s_\lambda}
\Tr H_{U,\lambda}^2\\
&=
\delta^2
\sum_{\lambda\in\Lambda_*}p_\lambda\\
&=
\delta^2.
\end{align}
Here we used $\log(1+x)\leq x$, $\Tr H_{U,\lambda}=0$, and $H_{U,\lambda}^2=\1_{2s_\lambda}$.
Choosing $\delta\coloneqq12\varepsilon$ gives pairwise trace distance at least $3\varepsilon$ by Eq.~\eqref{eq:state-lower-quantum-separation}, while the bound above gives $D_{\mathrm{rel}}(\sigma_U\Vert\tau)\leq144\varepsilon^2$.
Therefore, for $0<\varepsilon\leq1/24$, Paragraph~\ref{par:packing-lowerbound} and $\log|\mcC|\geq c_{\mathrm H}R_*$ imply
\begin{equation}
n=\Omega\!\left(\frac{R_*}{\varepsilon^2}\right).
\end{equation}

\paragraph{Classical degrees of freedom in the sector probabilities.}
The preceding construction captures the matrix degrees of freedom when $R_*>0$.
We now obtain a separate lower bound from the classical degrees of freedom given by determining a distribution over the set of isotypic components of the $G$-representation; this argument also covers the multiplicity-free case $R_*=0$, in which every isotypic component has $m_\lambda=1$.
Assume $L\geq2$, label the isotypic components by $1,\ldots,L$, fix one unit vector in each multiplicity space and consider the subfamily obtained by varying only the probabilities $p=(p_1,\ldots,p_L)$ of the resulting orthogonal sector states.
For two states in this subfamily,
\begin{equation}
d_{\mathrm{tr}}(\sigma_p,\sigma_q)
=
\frac12\|p-q\|_1,
\end{equation}
so learning these states in trace distance is equivalent to learning an unknown distribution on $\{1,2,\dots,L\}$ in total variation distance.
This also holds for quantum estimates not in the subfamily:
measuring the sector of $\widehat\sigma$ produces a distribution
$\widehat p$ with
$d_{\mathrm{TV}}(\widehat p,p)
\leq d_{\mathrm{tr}}(\widehat\sigma,\sigma_p)$
by contractivity.
Moreover, since all states in this subfamily commute in a known basis, any collective measurement on $n$ copies is a classical post-processing of $n$ independent samples from the corresponding distribution.
The standard Assouad lower-bound argument mentioned after Theorem~1 of Ref.~\cite{canonne2020learning} yields universal constants $c_{\mathrm{cl}},\varepsilon_{\mathrm{cl}}>0$ such that, for every $L\geq2$ and $0<\varepsilon\leq\varepsilon_{\mathrm{cl}}$, learning an arbitrary distribution on $L$ outcomes with success probability at least $2/3$ requires
\begin{equation}
n\geq c_{\mathrm{cl}}\frac{L-1}{\varepsilon^2}.
\end{equation}

\paragraph{Combining the two lower bounds}
The two preceding arguments control complementary parts of the model.
The first varies the state within the multiplicity spaces and yields a lower bound proportional to the number of internal matrix degrees of freedom, while the second varies the probabilities assigned to the irreducible sectors and yields a lower bound proportional to the number of distinct labels.
Together these account for all degrees of freedom of a covariant state.
Indeed,
\begin{align}
D_{\mathrm{st}}-1
&=
(L-1)+\sum_\lambda (m_\lambda^2-1)
\\
&\leq
(L-1)+8R_*
\\
&\leq
9\max\{L-1,R_*\},
\end{align}
where in the first inequality we have used $m^2-1 \leq 8 \lfloor m/2 \rfloor ^2$ for all integers $m\ge 1$.
Since any learning procedure that succeeds uniformly over the whole model must in particular succeed on both subfamilies, the two lower bounds combine to give
\begin{equation}
n=\Omega\!\left(\frac{D_{\mathrm{st}}-1}{\varepsilon^2}\right).
\end{equation}

\paragraph{Dependence on the failure probability.}
The $\Omega(\log(1/\eta)/\varepsilon^2)$ term follows from a standard biased-coin testing reduction~\cite[Sec.~1]{canonne2020learning}.
If $D_{\mathrm{st}}\geq2$, the model contains two orthogonal states; let $\sigma_+$ and $\sigma_-$ be the mixtures of these two states with respective weights $(1/2+2\varepsilon,1/2-2\varepsilon)$ and $(1/2-2\varepsilon,1/2+2\varepsilon)$.
They satisfy
\begin{equation}
d_{\mathrm{tr}}(\sigma_+,\sigma_-)=4\varepsilon,
\qquad
D_{\mathrm{rel}}(\sigma_+\Vert\sigma_-)
=
4\varepsilon\log\frac{1+4\varepsilon}{1-4\varepsilon}
\leq64\varepsilon^2
\end{equation}
for $\varepsilon\leq1/16$.
Hence any estimator with trace error at most $\varepsilon$ and failure probability at most $\eta$ yields a binary test between $\sigma_+$ and $\sigma_-$ with both error probabilities at most $\eta$.

By additivity of relative entropy and the bound $D_{\mathrm{rel}}(\sigma_+\Vert\sigma_-)\leq64\varepsilon^2$,
\begin{equation}
D_{\mathrm{rel}}(\sigma_+^{\otimes n}\Vert\sigma_-^{\otimes n})
=
nD_{\mathrm{rel}}(\sigma_+\Vert\sigma_-)
\leq
64n\varepsilon^2.
\end{equation}
Let $a$ be the probability that the binary test outputs $+$ when the input is $\sigma_+^{\otimes n}$, and let $b$ be the probability that it outputs $+$ when the input is $\sigma_-^{\otimes n}$.
Since both error probabilities are at most $\eta$, we have
\begin{equation}
a\geq1-\eta,
\qquad
b\leq\eta.
\end{equation}
Applying the data-processing inequality for relative entropy to this binary test gives
\begin{align}
64n\varepsilon^2
&\geq D_{\mathrm{rel}}(\sigma_+^{\otimes n}\Vert\sigma_-^{\otimes n})
\\
&\geq a\log\frac{a}{b}
+(1-a)\log\frac{1-a}{1-b}
\\
&\geq (1-\eta)\log\frac{1-\eta}{\eta}
+\eta\log\frac{\eta}{1-\eta}
\\
&= (1-2\eta)\log\frac{1-\eta}{\eta}
\\
&\geq \frac15\log(1/\eta),
\qquad 0<\eta\leq\frac13.
\end{align}
Therefore,
\begin{equation}
n=\Omega\!\left(\frac{\log(1/\eta)}{\varepsilon^2}\right).
\end{equation}
Combining this with the previously established bound
\begin{equation}
n=\Omega\!\left(\frac{D_{\mathrm{st}}-1}{\varepsilon^2}\right),
\end{equation}
and using $\max\{x,y\}\geq (x+y)/2$, we conclude that there exist universal constants
$c_0,\varepsilon_0>0$ such that
\begin{equation}
n\geq
c_0\,\frac{D_{\mathrm{st}}-1+\log(1/\eta)}{\varepsilon^2},
\qquad
0<\varepsilon\leq\varepsilon_0,\quad
0<\eta\leq\frac13,\quad
D_{\mathrm{st}}\geq2.
\end{equation}
This proves the claimed minimax lower bound.

\section{Lower bound for covariant-channel tomography}
\label{appendix:lower-bound-channel}
\subsection{Proof of Thm.~\ref{thm:permutation-channel-strong-diamond-lower}: Learning $G$-covariant channels in diamond distance}

We construct a hard family of $G$-covariant channels $\{\Lambda_W\}_{W\in \W_G}$ to show Thm.~\ref{thm:permutation-channel-strong-diamond-lower}, where $\Lambda_W$ is defined by $\Lambda_W(\cdot)\coloneqq\Tr_{\mcE}[W(\cdot)W^\dagger]$.
We define the environment gauge group $\U_G(\mcE)$ by
\begin{align}
    \U_G(\mcE) \coloneqq \{V\in \U(\mcE) \mid [V, U_1(g)\otimes \overline{U_2}(g)] = 0 \quad \forall g\in G\}.
\end{align}
Using the isotypic decomposition of $U_1\otimes \overline{U_2}$ given by
\begin{align}
    U_1(g)\otimes \overline{U_2}(g) \cong \bigoplus_{\lambda\in\irrep{G}} U_\lambda(g)\otimes \1_{\mu_\lambda}
\end{align}
using the multiplicity $\mu_\lambda$, we have
\begin{align}
    \U_G(\mcE) = \{V = \bigoplus_\lambda \1_{d_\lambda}\otimes V_\lambda \mid V_\lambda\in\U(\mu_\lambda)\} \cong \prod_\lambda \U(\mu_\lambda).
\end{align}
Since $V\otimes \1_{\mcO}$ with $V\in \U_G(\mcE)$ satisfies $V\otimes \1_{\mcO} \in \U(\mcE\otimes \mcO) \cap \Comm(U_1\otimes \overline{U}_2\otimes U_2) \cong \prod_\lambda \U(M_\lambda)$, we can define a homomorphism
\begin{align}
    \phi: \U_G(\mcE)\ni V \mapsto (\phi_\lambda(V))_{\lambda} \in \prod_{\lambda\in\irrep{G}} \U(M_\lambda)
\end{align}
such that
\begin{align}
    (V\otimes \1_{\mcO})W \cong \bigoplus_\lambda \1_{d_\lambda}\otimes \phi_\lambda(V) W_\lambda.
\end{align}
This homomorphism defines the action of $\U_G(\mcE)$ on $\prod_\lambda \W_{M_\lambda, m_\lambda}$ given by $V\cdot (W_\lambda)_\lambda \coloneqq (\phi_\lambda(V)W_\lambda)_\lambda$, and we define the quotient space
\begin{align}
     \left(\prod_{\lambda\in\irrep{G}}\W_{M_\lambda,m_\lambda}\right)\big/\phi(\U_G(\mcE)),
\end{align}
whose elements are equivalence classes of the form $[(W_\lambda)_{\lambda}] \coloneqq \{(\phi_\lambda(V)W_\lambda)_{\lambda} \mid V\in \U_G(\mcE)\}$.
We also define the equivalence relation in $\W_G$ by $W\sim W'$ if and only if there exists $V\in \U_G(\mcE)$ such that $W' = (V\otimes \1_\mcO) W$, and we denote the equivalence class by $[W] \coloneqq \{(V\otimes \1_\mcO) W \mid V\in \U_G(\mcE)\}$.
Then, we have the bijection
\begin{align}
    \W_G/\sim \cong \left(\prod_{\lambda\in\irrep{G}}\W_{M_\lambda,m_\lambda}\right)\big/\phi(\U_G(\mcE)).
\end{align}
We introduce the corresponding distance by
\begin{align}
    d_\mathrm{Stin}([(W_\lambda)_{\lambda}], [(W'_\lambda)_{\lambda}])
    &\coloneqq \inf_{V\in \U_G(\mcE)} \max_\lambda \left\| W_\lambda -\phi_\lambda(V) W'_\lambda \right\|_\infty,\\
    d_\mathrm{Stin}([W], [W'])&\coloneqq \inf_{V\in \U_G(\mcE)} \left\| W-(V\otimes \1_\mcO) W' \right\|_\infty.
\end{align}
Then, we have $d_\mathrm{Stin}([(W_\lambda)_{\lambda}], [(W'_\lambda)_{\lambda}]) = d_\mathrm{Stin}([W], [W'])$ for $W, W'\in \W_G$ given by $W = \bigoplus_\lambda \1_{d_\lambda} \otimes  W_\lambda$ and $W' = \bigoplus_\lambda \1_{d_\lambda} \otimes W'_\lambda$.
Then, we have the following generalization of the Kretschmann--Schlingemann--Werner theorem~\cite{kretschmann2008information,singh2026short} for $G$-covariant channels.

\begin{lemma}[Kretschmann--Schlingemann--Werner theorem for $G$-covariant channels]
    \label{lem:ksw-covariant}
    For any $G$-covariant channels $\Lambda_1,\Lambda_2\in\CovChan_G(\mcI, \mcO)$ and their $G$-covariant Stinespring isometries $W_1\in\Dil_G(\Lambda_1)$ and $W_2\in\Dil_G(\Lambda_2)$ with the common environment $\mcE$,
    \begin{align}
        d_\mathrm{Stin}([W_1],[W_2])^2
        \leq4d_\diamond(\Lambda_1,\Lambda_2).
    \end{align}
\end{lemma}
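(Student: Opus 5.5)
The plan is to run the usual Kretschmann--Schlingemann--Werner argument, but to keep every object $G$-covariant so that the optimal environment unitary can be taken from $\U_G(\mcE)$.

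\emph{Step 1: fidelity of complementary channels.} Write $\Lambda_i^c(\cdot)\coloneqq\Tr_\mcO[W_i(\cdot)W_i^\dagger]$ for the complementary channels. By Uhlmann's theorem, applied to the purifications $(\1\otimes W_i)\ket{\phi}$ of $(\id\otimes\Lambda_i)(\ketbra{\phi})$ with input reference, for each input state $\ket\phi$ on $\CC^r\otimes\mcI$ there is a unitary $V_\phi$ on $\mcE$ with
\begin{align}
\operatorname{Re}\bra{\phi}(\1\otimes W_1^\dagger (V_\phi\otimes\1_\mcO)W_2)\ket{\phi}=\sqrt{F\bigl((\id\otimes\Lambda_1)(\ketbra\phi),(\id\otimes\Lambda_2)(\ketbra\phi)\bigr)}.
\end{align}
The uniform statement of the original proof is a minimax: over $V$ in the closed unit ball of operators on $\mcE$ (a convex compact set) and $\ket\phi$, equivalently over input density operators $\rho$. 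The functional $(V,\rho)\mapsto\operatorname{Re}\Tr[\rho\, W_1^\dagger(V\otimes\1)W_2]$ is bilinear, so Sion's minimax theorem yields a contraction $V$ with $\operatorname{Re}W_1^\dagger(V\otimes\1)W_2\succeq(1-\epsilon)\1$ in the sense of minimum over inputs, where $\epsilon\le d_\diamond$ via Fuchs--van de Graaf ($1-\sqrt F\le d_\mathrm{tr}$).

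\emph{Step 2: covariantization.} This is the new ingredient. The map $V\mapsto \int_G \dd g\, (U_1\otimes\overline U_2)(g)V(U_1\otimes\overline U_2)(g)^\dagger$ preserves the unit ball and the objective. Indeed, using $[(U_1\otimes\overline U_2)(g)\otimes U_2(g)]W_i=W_iU_1(g)$, the objective at the twirled $V$ equals the average of the objective at inputs $U_1(g)^\dagger\rho U_1(g)$. So we may work with $G$-invariant $\rho$ and commutant-valued $V$; the minimax is still valid because both restricted sets are convex and compact. We thereby obtain a contraction $V\in\Comm(U_1\otimes\overline U_2)$ with $\operatorname{Re}\langle W_1^\dagger(V\otimes\1)W_2\rangle_\rho\ge1-d_\diamond$ for all $\rho$. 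Note that the worst-case input can indeed be taken covariant, since the diamond norm of a difference of covariant channels is attained at symmetric inputs up to twirling; that is the same averaging.

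\emph{Step 3: from contraction to unitary.} Take the polar decomposition $V=\tilde V|V|$ inside the commutant, a product of matrix algebras $\prod_\lambda \mcL(\CC^{\mu_\lambda})$, so that the polar unitary $\tilde V\in\U_G(\mcE)$. Then apply the standard estimate
\begin{align}
\|W_1-(\tilde V\otimes\1)W_2\|_\infty^2\le 2-2\min_\rho\operatorname{Re}\langle W_1^\dagger(\tilde V\otimes\1)W_2\rangle_\rho .
\end{align}
Replacing $V$ by $\tilde V$ must not decrease the real part. This is the main obstacle: in the original KSW proof it is handled by working with $\|\cdot\|$ of a possibly nonunitary $V$ and bounding $\|W_1-(V\otimes\1)W_2\|^2\le 2-2\operatorname{Re}(\cdot)$ directly, using $\|V\|\le1$. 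I would follow that route: infimum over contractions in the commutant (compact), then note that the extreme points of the commutant unit ball are exactly $\U_G(\mcE)$ (blockwise unitaries), and use convexity of $V\mapsto\|W_1-(V\otimes\1)W_2\|_\infty$ together with linearity of the objective.

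Combining the steps gives $d_\mathrm{Stin}^2\le 2\cdot 2 d_\diamond=4d_\diamond$.
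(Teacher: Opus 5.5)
Your Steps 1 and 2 are sound and match the paper's argument. The paper also characterizes the worst-case root fidelity as the maximum, over contractions $C$, of $\lambda_{\min}$ of $\operatorname{Re}W_1^\dagger(C\otimes\1_\mcO)W_2$, and then twirls $C$ into $\Comm(U_1\otimes\overline U_2)$.

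\textbf{The gap is Step 3.} Knowing that the extreme points of the commutant unit ball are exactly $\U_G(\mcE)$ does not let you pass from the contraction to a unitary.
\begin{itemize}
\item The map $C\mapsto\|W_1-(C\otimes\1_\mcO)W_2\|_\infty$ is convex, so extreme points carry its \emph{maximum}, not its minimum.
\item The map $C\mapsto\min_\rho\operatorname{Re}\Tr[\rho\,W_1^\dagger(C\otimes\1_\mcO)W_2]$ is a minimum of linear functions. It is therefore concave rather than linear, and its maximum need not sit at an extreme point either.
\end{itemize}
The infimum over contractions really can be smaller than over unitaries. Take $G$ trivial and $\dim\mcI=1$, so $W_i=\ket{w_i}$. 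Let $F$ be the fidelity of the $\mcO$-marginals and $V_0$ an Uhlmann unitary. The contraction $C=tV_0$ gives $\|\ket{w_1}-(C\otimes\1_\mcO)\ket{w_2}\|^2=1+t^2-2t\sqrt F$. At $t=\sqrt F$ this equals $1-F$, which is strictly below the unitary optimum $2-2\sqrt F$ whenever $F<1$.

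Unitarily dilating the contraction on an enlarged environment would sidestep this, but that is not available here: $d_{\mathrm{Stin}}$ is an infimum over $\U_G(\mcE)$ for the fixed $\mcE$. Note also that your three steps, if they worked, would give $2d_\diamond$. The factor $2\cdot 2$ in your last line is not produced by anything you wrote, and it is exactly the factor lost in the missing step.

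\textbf{What is missing.} You need a direct comparison between the optimal contraction and its polar unitary, which is how the paper proceeds, generalizing Lemma~3 of \cite{singh2026short}.
\begin{itemize}
\item Write $C=VP$ with $V\in\U_G(\mcE)$ and $0\preceq P\preceq\1_\mcE$ in the commutant.
\item Set $X=W_1-(C\otimes\1_\mcO)W_2$ and $Y=((V-C)\otimes\1_\mcO)W_2$, so that $W_1-(V\otimes\1_\mcO)W_2=X-Y$.
\item Use $C^\dagger C=P^2$ and $(\1_\mcE-P)^2\preceq\1_\mcE-P^2$, together with your bound $\operatorname{Re}W_1^\dagger(C\otimes\1_\mcO)W_2\succeq(1-d_\diamond)\1_\mcI$. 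This gives
\[
X^\dagger X+Y^\dagger Y\preceq 2\cdot\1_\mcI-\bigl(W_1^\dagger(C\otimes\1_\mcO)W_2+W_2^\dagger(C^\dagger\otimes\1_\mcO)W_1\bigr)\preceq 2d_\diamond\1_\mcI .
\]
\item Then $(X-Y)^\dagger(X-Y)\preceq2(X^\dagger X+Y^\dagger Y)$ yields $\|W_1-(V\otimes\1_\mcO)W_2\|_\infty^2\leq4d_\diamond$.
\end{itemize}
So your choice of the polar unitary is the right one, but it costs a factor of two rather than nothing. That factor is where the constant $4$ comes from.
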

\begin{proof}
    We follow the proof strategy of Ref.~\cite{singh2026short}.
    We define the operational root channel fidelity by
    \begin{align}
        f_{\mathrm{op}}(\Lambda_1,\Lambda_2)
        \coloneqq\min_{\substack{\ket{\psi}\in \overline{\mcI}\otimes\mcI\\\|\psi\|_2=1}}
        \left\|
        \sqrt{(\1_{\mcL(\overline{\mcI})}\otimes\Lambda_1)(\ketbra{\psi})}
        \sqrt{(\1_{\mcL(\overline{\mcI})}\otimes\Lambda_2)(\ketbra{\psi})}
        \right\|_1,
    \end{align}
    which is related to the diamond distance by the Fuchs--van de Graaf inequalities~\cite{fuchs2002cryptographic} as
    \begin{align}
        \label{eq:fuchs-van-de-graaf-diamond}
        1-f_{\mathrm{op}}(\Lambda_1,\Lambda_2)
        \leq d_\diamond(\Lambda_1,\Lambda_2)
        \leq\sqrt{1-f_{\mathrm{op}}(\Lambda_1,\Lambda_2)^2}.
    \end{align}
    Defining 
    \begin{align}
        \Gamma(C)\coloneqq\lambda_{\min}(W_1^\dagger(C\otimes\1_{\mcO})W_2
        +W_2^\dagger(C^\dagger\otimes\1_{\mcO})W_1)
    \end{align}
    for $C\in \mcL(\mcE)$, we can express the operational root channel fidelity as~\cite{vomende2023progress}
    \begin{align}
        f_{\mathrm{op}}(\Lambda_1,\Lambda_2)
        ={1\over2}\max_{\substack{C\in\mcL(\mcE)\\\|C\|_\infty\leq1}}\Gamma(C).
    \end{align}
    Since the twirling of $C$ given by
    \begin{align}
        C\mapsto \int_G \dd g [U_1(g)\otimes \overline{U_2}(g)]^\dagger C [U_1(g)\otimes \overline{U_2}(g)]
    \end{align}
    does not decrease $\Gamma(C)$, we can restrict the optimization to $C\in\Comm(U_1\otimes \overline{U}_2)$, i.e.,
    \begin{align}
        f_{\mathrm{op}}(\Lambda_1,\Lambda_2)
        ={1\over2}\max_{\substack{C\in\Comm(U_1\otimes \overline{U}_2)\\\|C\|_\infty\leq1}}\Gamma(C).
    \end{align}
    We then show the following inequality, which generalizes \cite[Lem.~3]{singh2026short}:
    \begin{align}
        \label{eq:singh-lem3-generalization}
        2\max_{\substack{C\in\Comm(U_1\otimes \overline{U}_2)\\\|C\|_\infty\leq1}}\Gamma(C)
        \leq2+\max_{V\in \U_G(\mcE)}\Gamma(V)
    \end{align}
    as follows.
    The polar decomposition of $C\in\Comm(U_1\otimes \overline{U}_2)$ is given by $C=VP$, where $V\in \U_G(\mcE)$ and $P\in\Comm(U_1\otimes \overline{U}_2)$ satisfies $0\preceq P\preceq\1_{\mcE}$.
    Defining
    \begin{align}
        X_C\coloneqq W_1-(C\otimes\1_{\mcO})W_2,
        \qquad
        Y_C\coloneqq((V-C)\otimes\1_{\mcO})W_2,
    \end{align}
    we have $W_1-(V\otimes\1_{\mcO})W_2=X_C-Y_C$ and
    \begin{align}
        Y_C^\dagger Y_C
        =W_2^\dagger((\1_{\mcE}-P)^2\otimes\1_{\mcO})W_2
        \preceq W_2^\dagger((\1_{\mcE}-P^2)\otimes\1_{\mcO})W_2.
    \end{align}
    Since $C^\dagger C=P^2$, we also have
    \begin{align}
        X_C^\dagger X_C
        =\1_{\mcI}+W_2^\dagger(P^2\otimes\1_{\mcO})W_2-(W_1^\dagger(C\otimes\1_{\mcO})W_2
        +W_2^\dagger(C^\dagger\otimes\1_{\mcO})W_1).
    \end{align}
    Since~\cite[Lem.~2]{singh2026short}
    \begin{align}
        (X_C-Y_C)^\dagger(X_C-Y_C)
        \preceq2(X_C^\dagger X_C+Y_C^\dagger Y_C)
    \end{align}
    holds, we obtain
    \begin{align}
        \bigl(W_1-(V\otimes\1_{\mcO})W_2\bigr)^\dagger
        \bigl(W_1-(V\otimes\1_{\mcO})W_2\bigr)
        \preceq2\bigl(2\1_{\mcI}-(W_1^\dagger(C\otimes\1_{\mcO})W_2
        +W_2^\dagger(C^\dagger\otimes\1_{\mcO})W_1)\bigr).
    \end{align}
    By taking the largest eigenvalues, we have
    \begin{align}
        2-\Gamma(V)\leq2\bigl(2-\Gamma(C)\bigr),
    \end{align}
    which establishes Eq.~\eqref{eq:singh-lem3-generalization}.
    Thus, we have
    \begin{align}
        f_{\mathrm{op}}(\Lambda_1,\Lambda_2)
        \leq {1\over2}+{1\over4}\max_{V\in \U_G(\mcE)}\Gamma(V)
        =1-{1\over4}\min_{V\in \U_G(\mcE)}
        \left\|W_1-(V\otimes\1_{\mcO})W_2\right\|_\infty^2
        =1-{1\over4}d_\mathrm{Stin}([W_1],[W_2])^2,
    \end{align}
    and Eq.~\eqref{eq:fuchs-van-de-graaf-diamond} yields
    \begin{align}
        d_\diamond(\Lambda_1,\Lambda_2)
        \geq1-f_{\mathrm{op}}(\Lambda_1,\Lambda_2)
        \geq {1\over4}d_\mathrm{Stin}([W_1],[W_2])^2,
    \end{align}
    which concludes the proof.
\end{proof}

\begin{proof}[Proof of Thm.~\ref{thm:permutation-channel-strong-diamond-lower}]
We follow the proof strategy of Ref.~\cite{mele2025optimal}.
Suppose $N(\mcX,r)$ and $M(\mcX,r)$ denote the minimum size of an $r$-net and the maximum size of an $r$-packing, respectively, of a metric space $\mcX$, i.e.,
\begin{align}
    N(\mcX,r)&\coloneqq \min\{\abs{\mcN} \mid \forall x\in \mcX, \exists y\in \mcN \text{ such that } d(x,y)\leq r\},\\
    \label{eq:def-packing-number}
    M(\mcX,r)&\coloneqq \max\{\abs{\mcM} \mid \forall x,y\in \mcM, x\neq y \implies d(x,y)>r\},
\end{align}
where $d$ is the metric of $\mcX$. These packing and covering numbers satisfy
\begin{align}
    M(\mcX,2r)\leq N(\mcX,r)\leq M(\mcX,r).
\end{align}
A maximal $r$-packing of $\W_G/\sim$, together with an $r$-net of $\U_G(\mcE)$, induces a $2r$-net of $\W_G$.
Thus, as in Ref.~\cite[Lem.~IV.8]{mele2025optimal}, we have
\begin{align}
    M(\W_G/\sim,r)\geq\frac{N(\W_G,2r)}{N(\U_G(\mcE),r)}.
\end{align}
As shown in Ref.~\cite[Lem.~IV.7]{mele2025optimal} (shown using Refs.~\cite{szarek1997metric,aubrun2017alice}), we have the following bounds on $N(\mcX, r)$ for the Stiefel manifold $\W_{s,t}$ and the unitary group $\U(s)$ for $0<r\leq 1$:
\begin{align}
    N(\W_{s,t},r)&\geq(c/r)^{2st-t^2},\qquad N(\U(s),r)\leq(C/r)^{s^2},
\end{align}
using universal constants $c, C>0$ satisfying $C\geq c$.
Since the product of $4r$-packings in $\W_{M_\lambda, m_\lambda}$ for $\lambda\in \irrep{G}$ induces a $4r$-packing in $\W_G$, we have
\begin{align}
    N(\W_G,2r)\geq M(\W_G, 4r) \geq \prod_\lambda M(\W_{M_\lambda, m_\lambda}, 4r) \geq \prod_\lambda N(\W_{M_\lambda, m_\lambda}, 4r) \geq \qty(c \over 4r)^{\sum_\lambda(2m_\lambda M_\lambda-m_\lambda^2)}.
\end{align}
Since the product of $r$-nets in $\U(\mu_\lambda)$ for $\lambda\in\irrep{G}$ induces an $r$-net in $\U_G(\mcE)$, we have
\begin{align}
    N(\U_G(\mcE),r)\leq \prod_\lambda N(\U(\mu_\lambda),r) \leq \qty(C \over r)^{\sum_\lambda \mu_\lambda^2}.
\end{align}
The exponents are evaluated as follows:
\begin{align}
    \sum_{\lambda} (2m_\lambda M_\lambda -m_\lambda^2)
    &= D_G + D''_G,\\
    \sum_\lambda \mu_\lambda^2
    &= \sum_\lambda \abs{\langle \chi_{U_1\otimes \overline{U}_2}, \chi_\lambda\rangle_{G}}^2\\
    &= \langle \chi_{U_1\otimes \overline{U}_2}, \chi_{U_1\otimes \overline{U}_2}\rangle_{G}\\
    &= \langle \chi_{U_1}, \chi_{U_1\otimes \overline{U}_2 \otimes U_2}\rangle_{G}\\
    &= \sum_\lambda m_\lambda M_\lambda\\
    &= D_G,
\end{align}
where we use the orthogonality of the characters.
Thus, for $0<r\leq {1/4}$, we have
\begin{align}
    M(\W_G/\sim,r)\geq\frac{N(\W_G,2r)}{N(\U_G(\mcE),r)}\geq\qty(c\over 4r)^{\sum_\lambda(2m_\lambda M_\lambda-m_\lambda^2)}\qty(C \over r)^{-\sum_\lambda \mu_\lambda^2} \geq \qty(c_\kappa \over r)^{D''_G},
\end{align}
where $c_\kappa$ is defined by
\begin{align}
    c_\kappa\coloneqq \qty(c\over 4)^{1+1/\kappa} C^{-1/\kappa},
\end{align}
and we use $D_G/D''_G\leq 1/\kappa$ and $c/(4C) \leq 1$.
We choose the accuracy threshold in Thm.~\ref{thm:permutation-channel-strong-diamond-lower} as follows:
\begin{align}
    \varepsilon_0(\kappa)\coloneqq\min\left\{\frac{1}{144},\frac{c_\kappa^2}{8(3e)^{2/\kappa}}\right\}>0.
\end{align}
For $0<\varepsilon_\diamond \leq\varepsilon_0(\kappa)$, we take $r\coloneqq \sqrt{8\varepsilon_\diamond}\leq 1/4$.
Due to Lem.~\ref{lem:ksw-covariant}, an $r$-packing of $\W_G/\sim$ induces an $r^2/4=2\varepsilon_\diamond$-packing of $\CovChan_G(\mcI, \mcO)$.
An estimate of an element in the $2\varepsilon_\diamond$-packing with error at most $\varepsilon_\diamond$ identifies its packing element by nearest-neighbor decoding.
By invoking the discrimination bound in Ref.~\cite[Prop.~IV.4]{mele2025optimal}\footnote{
By Eq.~\eqref{eq:decomposition_W}, the covariant Stinespring isometries lie in a complex linear space of dimension $D_G=\sum_\lambda m_\lambda M_\lambda$.
Thus, the dimension argument in the proof of the cited proposition applies with $D=D_G$.
}, if we have a discrimination strategy with success probability at least $2/3$ on every packing element, then the number of queries $n$ must satisfy
\begin{align}
    {2\over 3} \leq {1\over M(\W_G/\sim,r)} \binom{n+D_G-1}{D_G-1} \leq \qty({\sqrt{8\varepsilon_\diamond}\over c_\kappa})^{D''_G} \qty(e(n+D_G-1)\over D_G-1)^{D_G-1},
\end{align}
i.e.,
\begin{align}
    n
    &\geq(D_G-1)\qty[\frac{(2/3)^{1/(D_G-1)}}{e}\qty(\frac{c_\kappa}{\sqrt{8\varepsilon_\diamond}})^{D''_G/(D_G-1)}-1]\\
    &\geq {D_G \over 6e} \qty(c_\kappa \over \sqrt{8})^\kappa \varepsilon_\diamond^{-\kappa/2},
\end{align}
where we use $D_G-1\geq D_G/2$, $D''_G/(D_G-1)\geq D''_G/D_G\geq \kappa$, $(2/3)^{1/(D_G-1)}\geq 2/3$, and $\left(\frac{c_\kappa}{\sqrt{8\varepsilon_\diamond}}\right)^\kappa\geq 3e$ due to $D_G\geq 2$ and the range of $\varepsilon_\diamond$.
\end{proof}

\subsection{Proof of Thm.~\ref{thm:permutation-channel-trace-matching-lower}: Learning permutation-covariant channels in Choi trace distance}

We construct a hard family of permutation-covariant channels to show Thm.~\ref{thm:permutation-channel-trace-matching-lower}, which satisfies the following entropic conditions:

\begin{lemma}[Conditions of hard family of quantum channels in Choi trace distance]
\label{lem:adaptive-geometric-renyi-packing}
Suppose $\{\Lambda_z\}_{z\in \mathsf{Z}}$ is a finite set of channels with common input and output spaces and $\abs{\mathsf{Z}}\geq2$, and let $\Lambda_\star$ be a reference channel on the same spaces.
We define the Choi operators $J_z\coloneqq J_{\Lambda_z}$ and $J_\star\coloneqq J_{\Lambda_\star}$, and assume that $J_\star\succ 0$.
For $\varepsilon_\mathrm{tr},\xi>0$, suppose this family of channels satisfies
\begin{align}
    \label{eq:trace-norm-bound}
    \min_{z\neq z'}d_\mathrm{tr}(\rho_z,\rho_{z'})&\geq 3\varepsilon_\mathrm{tr},\\
    \widehat{D}_2(\Lambda_z\|\Lambda_\star) &\leq \xi,
\end{align}
where $\rho_z$ is the normalized Choi operator of $\Lambda_z$ and $\widehat{D}_2$ is the geometric R\'enyi divergence of order 2~\cite{fang2021geometric} given by
\begin{align}
    \widehat{D}_2(\Lambda_z\|\Lambda_\star)\coloneqq \log\left\|\Tr_{\rm out}(J_zJ_\star^{-1}J_z)\right\|_\infty.
\end{align}
If a protocol given $n$ queries to the unknown $\Lambda_z$ outputs a channel $\widehat\Lambda$ satisfying
\begin{align}
    \Prob_z\qty[d_\mathrm{tr}(\rho_z,\rho_{\widehat{\Lambda}})\leq \varepsilon_\mathrm{tr}]\geq 1-\eta,
\end{align}
for $0\leq \eta\leq 1/2$, where $\rho_{\widehat{\Lambda}}$ is the normalized Choi operator of $\widehat{\Lambda}$, then we have
\begin{align}
    n\xi \geq \log \abs{\mathsf{Z}}-h_2(\eta)-\eta \log(\abs{\mathsf{Z}}-1),
\end{align}
where $h_2(\eta)\coloneqq -\eta \log \eta-(1-\eta)\log(1-\eta)$ is the binary entropy function.
\end{lemma}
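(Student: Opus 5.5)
The plan is a Fano argument in which the Holevo information of an adaptive $n$-query protocol is bounded by $n\xi$, using the chain rule for channel divergences. The geometric Rényi divergence of order 2 is amortization-collapsing. Place a uniform prior on $Z\in\mathsf{Z}$, run the protocol on $\Lambda_Z$, and let $Y$ be the classical output $\widehat\Lambda$. Since the Choi states are $3\varepsilon_\mathrm{tr}$-separated, nearest-neighbour decoding of $\widehat\Lambda$ recovers $Z$ whenever $d_\mathrm{tr}(\rho_Z,\rho_{\widehat\Lambda})\le\varepsilon_\mathrm{tr}$. The decoding error is therefore at most $\eta$, and Fano's inequality gives $I(Z:Y)\ge\log|\mathsf{Z}|-h_2(\eta)-\eta\log(|\mathsf{Z}|-1)$. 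It then remains to show $I(Z:Y)\le n\xi$.

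For the upper bound, let $\omega_z$ denote the final state of the general (sequential, adaptive) protocol with $n$ uses of $\Lambda_z$, including all memory, and let $\omega_\star$ be the same with $\Lambda_\star$. By data processing, $I(Z:Y)\le\frac{1}{|\mathsf{Z}|}\sum_z D(\omega_z\|\bar\omega)$ with $\bar\omega$ the average. As in Appendix~\ref{app:state-tomography-lower}, subtracting $D(\bar\omega\|\omega_\star)\ge0$ gives $I(Z:Y)\le\frac{1}{|\mathsf{Z}|}\sum_z D(\omega_z\|\omega_\star)$, and monotonicity in the Rényi order gives $D\le \widehat D_2$. I then bound $\widehat D_2(\omega_z\|\omega_\star)\le n\widehat D_2(\Lambda_z\|\Lambda_\star)$. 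Write $\omega^{(j)}$ for the state after $j$ channel uses. Interleaved processing is data-processing monotone, so each step is controlled by the chain rule $\widehat D_2((\id\otimes\Lambda_z)(\rho)\|(\id\otimes\Lambda_\star)(\sigma))\le\widehat D_2(\rho\|\sigma)+\widehat D_2(\Lambda_z\|\Lambda_\star)$ of Fang and Fawzi~\cite{fang2021geometric}. Iterating $n$ times from identical initial states gives the factor $n\xi$.

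The main obstacle is the chain rule together with the channel formula $\widehat D_2(\Lambda_z\|\Lambda_\star)=\log\|\Tr_{\rm out}(J_zJ_\star^{-1}J_z)\|_\infty$, for which the hypothesis $J_\star\succ0$ is needed. I would take both from Ref.~\cite{fang2021geometric} rather than reprove them. If a self-contained argument is wanted, the approach is as follows. Write $\exp\widehat D_2(\rho\|\sigma)=\Tr[\rho\sigma^{-1}\rho]$ for the $\alpha=2$ geometric divergence. Apply the transformer inequality to the map $X\mapsto(\id\otimes\Lambda)(X)$ expressed through the Choi operators. This yields $\Lambda_z(\rho)\Lambda_\star(\sigma)^{-1}\Lambda_z(\rho)$ dominated (after a partial trace) by a contraction of $\rho\sigma^{-1}\rho$ with $J_zJ_\star^{-1}J_z$. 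Taking the operator norm of the output-traced factor then gives the multiplicative bound, and taking logarithms completes the argument.
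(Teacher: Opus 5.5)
Your proposal is correct and follows the paper's proof essentially step for step: nearest-neighbour decoding plus Fano for the lower bound on $I(Z:Y)$, the Holevo bound taken relative to the reference output $\omega_\star$, the comparison $D\le\widehat D_2$, and the chain rule of~\cite{fang2021geometric} iterated over the $n$ slots of the comb. Your sketch via the transformer inequality for $(X,Y)\mapsto XY^{-1}X$ is also a valid way to obtain that chain rule at $\alpha=2$. One small precision: the $\alpha\to1$ limit of $\widehat D_\alpha$ is the Belavkin--Staszewski divergence, so $D\le\widehat D_2$ needs $D\le D_{\mathrm{BS}}$ in addition to monotonicity in $\alpha$; this is the comparison the paper cites from~\cite{fang2021geometric}.
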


\begin{proof}
    Suppose $z$ obeys the uniform distribution over $\mathsf{Z}$, $\rho_z^{(n)}$ and $\rho_\star^{(n)}$ are the final states produced by applying $\Lambda_z$ and $\Lambda_\star$ to the same $n$-slot quantum comb, respectively, and let $\widehat{\Lambda}$ be the estimate of $\Lambda_z$ obtained by measuring $\rho_z^{(n)}$.
    We define the estimate $\widehat{z}$ of $z$ by $\widehat{z}\in \argmin_{z'\in \mathsf{Z}} d_\mathrm{tr}(\rho_{\widehat{\Lambda}}, \rho_{z'})$.
    Then, by the assumptions of the protocol, we have
    \begin{align}
        \Prob_z\qty[\widehat{z}=z] \geq 1-\eta.
    \end{align}
    The chain rule for the geometric R\'enyi divergence and its comparison with the relative entropy~\cite[Thm.~3]{fang2021geometric} provide
    \begin{align}
        D_\mathrm{rel}(\rho_z^{(n)}\|\rho_\star^{(n)}) \leq \widehat{D}_2(\rho_z^{(n)}\|\rho_\star^{(n)}) \leq n \widehat{D}_2(\Lambda_z\|\Lambda_\star) \leq n\xi,
    \end{align}
    where $D_\mathrm{rel}$ is the quantum relative entropy.
    Then, the Holevo bound~\cite{holevo1973bounds,schumacher2002relative} provides
    \begin{align}
        I(z:\widehat{z}) \leq {1\over \abs{\mathsf{Z}}} \sum_z D_\mathrm{rel}(\rho_z^{(n)}\|\rho_\star^{(n)}) \leq n\xi,
    \end{align}
    where $I(z:\widehat{z})$ is the mutual information between $z$ and $\widehat{z}$.
    On the other hand, Fano's inequality~\cite{fano1961transmission} provides
    \begin{align}
        I(z:\widehat{z}) \geq \log \abs{\mathsf{Z}}-h_2(\eta)-\eta \log(\abs{\mathsf{Z}}-1),
    \end{align}
    which concludes the proof.
\end{proof}

To construct a hard family of permutation-covariant channels, we take a subset $\mathsf{Y}_{\mathrm{hard}} \subset \{\mu\vdash_d k\}$ of the input Young diagrams and consider the Choi space~\eqref{eq:choi-space-decomposition} where the input Young diagram belongs to $\mathsf{Y}_{\mathrm{hard}}$:
\begin{align}
    \bigoplus_{\mu\in \mathsf{Y}_{\mathrm{hard}}} \overline{\mcS_\mu} \otimes \CC^{m_\mu^{(d)}} \otimes (\CC^d)^{\otimes k}
    &\cong \bigoplus_{\nu\vdash k} \mcS_\nu \otimes \CC^{c_\nu},
\end{align}
where $c_\nu$ is defined by
\begin{align}
    c_{\nu} \coloneqq \sum_{\mu\in \mathsf{Y}_{\mathrm{hard}}} w_{\mu\nu}.
\end{align}
Then, we have the decomposition of the multiplicity space:
\begin{align}
    \label{eq:c_nu-decomposition}
    \CC^{c_\nu} \cong \bigoplus_{\mu\in \mathsf{Y}_{\mathrm{hard}}} \CC^{w_{\mu\nu}}.
\end{align}
We also define the dimension $r_{\mathsf{Y}_{\mathrm{hard}}}$ of the input Young sector and the dimension of the set of permutation-invariant operators supported on the input Young sector:
\begin{align}
    r_{\mathsf{Y}_{\mathrm{hard}}} \coloneqq \sum_{\mu\in \mathsf{Y}_{\mathrm{hard}}} d_\mu m_\mu^{(d)}, \quad D_{\mathsf{Y}_{\mathrm{hard}}} \coloneqq \sum_{\nu\vdash k} c_\nu^2.
\end{align}
Then, we can take $\mathsf{Y}_{\mathrm{hard}}$ satisfying the following properties:

\begin{lemma}[Input Young sectors for hard families]
\label{lem:permutation-central-window-recoupling}
For each fixed $d\geq2$, there exists $k_1(d)$ such that, for every $k\geq k_1(d)$, one can choose a set $\mathsf{Y}_{\mathrm{hard}}\subset\left\{\mu\vdash_d k\right\}$ with the following properties
\begin{align}
    \label{eq:r-C}
    {r_{\mathsf{Y}_{\mathrm{hard}}}\over d^k} = \Theta_d(1), \qquad {d_\mu m_\mu^{(d)}\over r_{\mathsf{Y}_{\mathrm{hard}}}}=\Theta_d(k^{-(d-1)/2}) \quad \forall \mu\in\mathsf{Y}_\mathrm{hard},\qquad
    D_{\mathsf{Y}_{\mathrm{hard}}} = \Theta_d(k^{d^4-d^2}),\\
    \label{eq:recoupling-uniformity}
    {\sum_{\mu\in\mathsf{Y}_{\mathrm{hard}}} \sum_{\nu\vdash k} w_{\mu\nu}^2 \over D_{\mathsf{Y}_{\mathrm{hard}}}} = O_d(k^{-(d-1)/2}),\qquad
    {\max_{\mu\in \mathsf{Y}_{\mathrm{hard}}} \sum_{\nu\vdash k} c_\nu w_{\mu\nu} \over D_{\mathsf{Y}_{\mathrm{hard}}}} = O_d(k^{-(d-1)/2}).
\end{align}
\end{lemma}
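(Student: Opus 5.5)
The plan is to take $\mathsf{Y}_{\mathrm{hard}}$ to be a central Young window, $\mathsf{Y}_{\mathrm{hard}}\coloneqq\young{d}{k}(\Delta_0)$ from Eq.~\eqref{eq:young-window}, with a fixed constant $\Delta_0=\Delta_0(d)\geq1$. Each property then becomes a local-limit estimate near the centre $\lambda\approx(k/d,\ldots,k/d)$, handled with the Frobenius and Weyl formulas~\eqref{eq:hook-length}, \eqref{eq:weyl-dimension} and the uniform fixed-height estimates of Appendix~\ref{sec:uniform-fixed-height-young-estimates}.

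For the first property, Lemma~\ref{lem:prefactor-free-schur-window}, Eq.~\eqref{eq:prefactor-free-uniform-window-tail}, gives $r_{\mathsf{Y}_{\mathrm{hard}}}/d^k=\Tr P_{k,d}(\Delta_0)/d^k\geq1-e^{-c_d^{\mathrm{SW}}\Delta_0}$. Choosing $\Delta_0$ so that this is at least $1/2$ shows $r/d^k=\Theta_d(1)$. For the second property, apply Stirling to $d_\mu m_\mu^{(d)}/d^k$ in the window. This is the Schur--Weyl measure, which locally approximates a $(d-1)$-dimensional Gaussian (GUE-type) density in $\bm x$ with scale $\sqrt k$. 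Since $\bm x$ ranges over a compact set, each weight is $\Theta_d(k^{-(d-1)/2})$, and the window contains $\Theta_d(k^{(d-1)/2})$ diagrams, which is consistent.

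For the third property, $D_{\mathsf{Y}_{\mathrm{hard}}}=\Theta_d(k^{d^4-d^2})$. The upper bound is Eq.~\eqref{eq:prefactor-free-compressed-dimension} with $\Delta=\Delta_0$. For the lower bound I would reuse the character identity from the proof of Lemma~\ref{lem:prefactor-free-schur-window}:
\[
D=\binom{k+d^4-1}{d^4-1}\int\dd\psi\,\Tr\!\left[P^{\otimes2}\rho_{\mcA\mcB}^{\otimes k}\right].
\]
I would bound the integrand from below by restricting to $\rho_{\mcA\mcB}$ whose marginals lie within $O(1/\sqrt k)$ of $\1/d$; this set has Hilbert--Schmidt volume $\Theta_d(k^{-(d^2-1)})$. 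On it, $\Tr[(P\otimes P)\rho_{\mcA\mcB}^{\otimes k}]$ is bounded below by a constant. To see this, note that $\rho_\mcA^{\otimes k}$ and $\rho_\mcB^{\otimes k}$ each land in the window with probability $\geq 1-\tfrac14$ by Eq.~\eqref{eq:prefactor-free-schur-concentration} once $\Delta_0$ is large, and a union bound on the joint weak Schur sampling of the two marginals finishes it.

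For Eq.~\eqref{eq:recoupling-uniformity}, write $D=\sum_\nu(\sum_\mu w_{\mu\nu})^2$, so the quantities in question are the diagonal part and the single-row part of a double sum over $\mu,\mu'$. Define $D_\mu\coloneqq\sum_\nu w_{\mu\nu}^2$, which is the $\Delta$-compressed dimension for the single sector $\mu$. Its character formula is $D_\mu=\binom{k+d^4-1}{d^4-1}\int\dd\psi\,\Tr[P_\mu^{\otimes2}\rho_{\mcA\mcB}^{\otimes k}]$. By Cauchy--Schwarz as in Eq.~\eqref{eq:prefactor-free-kernel-haar-integral}, this is at most $\binom{\cdot}{\cdot}\int\sqrt{\Tr[P_\mu\rho_\mcA^{\otimes k}]\Tr[P_\mu\rho_\mcB^{\otimes k}]}$. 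I would then use a local bound $\Tr[P_\mu\rho^{\otimes k}]\leq O_d(k^{-(d-1)/2})\cdot(\text{Gaussian-type decay in }\|\rho-\1/d\|_1\sqrt k)$, which follows from $\Tr[P_\mu\rho^{\otimes k}]\leq d_\mu m_\mu^{(d)}\max s_\mu(\mathrm{spec}\,\rho)$, i.e.\ the Schur polynomial bounded by its value at the mean. The coarea argument then gives $D_\mu=O_d(k^{d^4-d^2}k^{-(d-1)/2})$. Summing over the window, this yields only $O_d(k^{d^4-d^2})$ for the diagonal, so the bound on $D_\mu$ alone is not sufficient.

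I therefore expect the main obstacle to be the first estimate in Eq.~\eqref{eq:recoupling-uniformity}, and I will sharpen the argument there. The diagonal sum corresponds to the two copies of the input Young projector coinciding, so I would compute it through the kernel $\sum_{\mu}\Tr[P_\mu\rho_\mcA^{\otimes k}]\Tr[P_\mu\rho_\mcB^{\otimes k}]$. Under the local CLT this is the probability that two independent Schur samples give the same $\mu$. For marginals at distance $O(1/\sqrt k)$ that probability is $O_d(k^{-(d-1)/2})$, because each point mass is $O(k^{-(d-1)/2})$. Integrating against the Haar density as in Eq.~\eqref{eq:prefactor-free-offcenter-kernel-integral} then gives the ratio $O_d(k^{-(d-1)/2})$.

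For the second estimate, $\sum_\nu c_\nu w_{\mu\nu}=\sum_{\mu'}\sum_\nu w_{\mu\nu}w_{\mu'\nu}$, whose kernel is $\Tr[P_\mu\rho_\mcA^{\otimes k}]\Tr[P\rho_\mcB^{\otimes k}]$ with $\mu$ fixed. Because the point mass of a single $\mu$ is $O_d(k^{-(d-1)/2})$ uniformly in $\rho$, the same integration gives the ratio $O_d(k^{-(d-1)/2})$ uniformly in $\mu$. Finally, $k_1(d)$ is chosen large enough that all the local-limit approximations hold.
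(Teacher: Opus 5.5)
There are two genuine gaps.

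\textbf{The central window breaks the second property.} Taking $\mathsf{Y}_{\mathrm{hard}}=\young{d}{k}(\Delta_0)$, a window centred at the rectangular shape, makes the second property in Eq.~\eqref{eq:r-C} false. Your justification (``$\bm x$ ranges over a compact set, so each weight is $\Theta_d(k^{-(d-1)/2})$'') would need the limiting Schur--Weyl density to be bounded \emph{below} on the window. But by Eqs.~\eqref{eq:hook-length} and~\eqref{eq:weyl-dimension}, $d_\mu m_\mu^{(d)}$ carries the factor $\prod_{i<j}(\widetilde{\mu}_i-\widetilde{\mu}_j)^2$. This is the Vandermonde-squared factor of the GUE-type density, and it vanishes on the walls $x_i=x_j$, which all pass through the centre of your window.

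For instance, take $d=2$, $k$ even and $\mu=(k/2,k/2)$. Then $m_\mu^{(2)}=1$ and $d_\mu$ is a Catalan number, so $d_\mu m_\mu^{(2)}/r_{\mathsf{Y}_{\mathrm{hard}}}=\Theta(k^{-3/2})$ rather than $\Theta(k^{-1/2})$. This lower bound is exactly what is used later in Eq.~\eqref{eq:J-nu-bound}.

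The paper avoids the walls by centring a ball of radius $1/\sqrt8$ at $x_i^\star=(d+1-2i)/2$. Then all row gaps are at least $\sqrt k/2$, and $d_\mu$, $m_\mu^{(d)}$, $m_\mu^{(d^3)}$ are uniformly $\Theta_d$ as in Eq.~\eqref{eq:d-mu-mu-d}. In that case $r_{\mathsf{Y}_{\mathrm{hard}}}/d^k=\Theta_d(1)$ follows from counting $\Theta_d(k^{(d-1)/2})$ diagrams, not from the tail bound.

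Moving off the walls also affects your lower bound on $D_{\mathsf{Y}_{\mathrm{hard}}}$. Your union bound needs each marginal to land in the window with probability above $1/2$, and a small off-wall window does not give that. The paper instead uses the purely algebraic bound $D_{\mathsf{Y}_{\mathrm{hard}}}\geq(\sum_\nu c_\nu m_\nu^{(d^2)})^2/\sum_\nu(m_\nu^{(d^2)})^2$, together with $\sum_\nu c_\nu m_\nu^{(d^2)}=\sum_{\mu\in\mathsf{Y}_{\mathrm{hard}}}m_\mu^{(d)}m_\mu^{(d^3)}$.

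\textbf{The recoupling estimates assume an independence that does not hold.} Your argument for Eq.~\eqref{eq:recoupling-uniformity} replaces the kernel $\Tr[(P_\mu\otimes P_\mu)\rho_{\mcA\mcB}^{\otimes k}]$ (respectively $\Tr[(P_\mu\otimes P)\rho_{\mcA\mcB}^{\otimes k}]$) by the product of marginal probabilities $\Tr[P_\mu\rho_\mcA^{\otimes k}]\Tr[P_\mu\rho_\mcB^{\otimes k}]$. That is, it treats the weak Schur samples of the two halves of $\rho_{\mcA\mcB}^{\otimes k}$ as independent.

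They are not independent. For pure $\rho_{\mcA\mcB}$ the two samples coincide with probability one, and the Haar average does not decouple them. In fact $\binom{k+d^4-1}{d^4-1}\int\dd\psi\,\Tr[(P_\mu\otimes P_\lambda)\rho_{\mcA\mcB}^{\otimes k}]=\sum_\nu w_{\mu\nu}w_{\lambda\nu}$ exactly, so the decoupling is the content of the estimate, not an input.

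Using only marginals, you have Cauchy--Schwarz or the minimum of the two marginal probabilities. Both give $\sum_\nu w_{\mu\nu}^2=O_d(k^{d^4-d^2-(d-1)/2})$ per $\mu$, hence only $O(D_{\mathsf{Y}_{\mathrm{hard}}})$ for the diagonal sum. For the second estimate, the square-root integration you invoke gives only $O_d(k^{-(d-1)/4})$, and the pointwise minimum recovers $k^{-(d-1)/2}$ only up to logarithmic factors.

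What is needed is the per-pair bound $\sum_\nu w_{\mu\nu}w_{\lambda\nu}=O_d(k^{d^4-d^2-(d-1)})$, uniformly over $\mu,\lambda\in\mathsf{Y}_{\mathrm{hard}}$. Your sketch would require a joint bipartite local limit theorem, uniform in $\rho_{\mcA\mcB}$, to get this, and it does not provide one.

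The paper obtains the per-pair bound with no probabilistic input:
\begin{itemize}
\item By the Cauchy identity with $\bm u=\bm x\otimes\bm y$ and $\bm v=1^{d^2}$, the quantity $t^k\sum_\nu w_{\mu\nu}w_{\lambda\nu}/(m_\mu^{(d)}m_\lambda^{(d)})$ is the $(\chi_\mu,\chi_\lambda)$ coefficient of $\det(\1-tU\otimes V)^{-d^2}$ on $\U(d)\times\U(d)$. Hence it is at most $m_\mu^{(d)}m_\lambda^{(d)}I_d(t)$.
\item The Weyl integration formula together with the Cauchy determinant gives $I_d(t)=O_d((1-t)^{-(d^2-1)^2})$.
\item Setting $t=1-1/k$ yields the per-pair bound. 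Both estimates then follow by summing over at most $|\mathsf{Y}_{\mathrm{hard}}|=\Theta_d(k^{(d-1)/2})$ diagrams.
\end{itemize}
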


\begin{proof}
Defining
\begin{align}
\mathsf{X}_d\coloneqq
\left\{\bm x\in\mathbb R^d \; \middle|\;
\sum_{i=1}^d x_i=0,
\ \sum_{i=1}^d(x_i-x_i^\star)^2\leq{1\over8}\right\},
\end{align}
where $x_i^\star\coloneqq(d+1-2i)/2$, we define the set of Young diagrams $\mathsf{Y}_{\mathrm{hard}}$ by
\begin{align}
\label{eq:hard-young-set}
\mathsf{Y}_{\mathrm{hard}}\coloneqq
\left\{\mu\vdash_d k \;\middle|\; \mu_i = {k\over d} + \sqrt{k} x_i, \bm{x}\in \mathsf{X}_d \right\}.
\end{align}
The dimension estimates proved in Appendix~\ref{sec:uniform-fixed-height-young-estimates} provide
\begin{align}
    \label{eq:d-mu-mu-d}
    d_\mu = \Theta_d\left(d^k k^{-(d^2+d-2)/4}\right),
    \qquad
    m_\mu^{(d)} = \Theta_d(k^{d(d-1)/4}),
    \qquad
    m_\mu^{(d^3)} = \Theta_d\!\left(k^{d^4-d^2+d(d-1)/4}\right),
\end{align}
and we also have
\begin{align}
    \abs{\mathsf{Y}_{\mathrm{hard}}} = \Theta_d(k^{(d-1)/2}).
\end{align}
Thus, we have
\begin{align}
    r_{\mathsf{Y}_{\mathrm{hard}}} = \sum_{\mu\in \mathsf{Y}_{\mathrm{hard}}} d_\mu m_\mu^{(d)} = \Theta_d(d^k k^{-(d^2+d-2)/4})\cdot\Theta_d(k^{d(d-1)/4})\cdot \Theta_d(k^{(d-1)/2}) = \Theta_d(d^k).
\end{align}
This proves the first two bounds in Eq.~\eqref{eq:r-C}.
Since
\begin{align}
    \sum_{\nu\vdash k} c_\nu m_\nu^{(d^2)}
    &= \sum_{\mu\in\mathsf{Y}_{\mathrm{hard}}} m_\mu^{(d)} \sum_{\alpha, \nu\vdash k} m_\alpha^{(d)} g_{\mu\alpha\nu} m_\nu^{(d^2)}\\
    &= \sum_{\mu\in \mathsf{Y}_{\mathrm{hard}}} m_\mu^{(d)} m_\mu^{(d^3)}\\
    &= \Theta_d(k^{d^4-(d^2+1)/2})
\end{align}
holds, by applying the Cauchy--Schwarz inequality to $D_{\mathsf{Y}_{\mathrm{hard}}} = \sum_{\nu\vdash k} c_\nu^2$, we have
\begin{align}
    D_{\mathsf{Y}_{\mathrm{hard}}} \geq {\left(\sum_{\nu\vdash k} c_\nu m_\nu^{(d^2)}\right)^2 \over \sum_{\nu\vdash k} \left(m_\nu^{(d^2)}\right)^2} = {\left(\sum_{\nu\vdash k} c_\nu m_\nu^{(d^2)}\right)^2 \over \binom{k+d^4-1}{d^4-1}} = \Omega_d(k^{d^4-d^2}).
\end{align}
    Since $\mathsf{X}_d$ is bounded, there is a constant $\Delta_d\geq1$ such that $\mathsf{Y}_{\mathrm{hard}}\subset\young{d}{k}(\Delta_d)$ for all $k$, where $\young{d}{k}(\Delta)$ is defined in Eq.~\eqref{eq:young-window}.
    The multiplicities $c_\nu$ are nonnegative, so Eq.~\eqref{eq:prefactor-free-compressed-dimension} applied to $\Delta = \Delta_d$ shows
    \begin{align}
        D_{\mathsf{Y}_{\mathrm{hard}}}
        \leq D_{\Delta_d}
        =O_d(k^{d^4-d^2}),
    \end{align}
    which proves the last bound in Eq.~\eqref{eq:r-C}.

To prove Eq.~\eqref{eq:recoupling-uniformity}, we obtain an upper bound on
\begin{align}
    \sum_{\nu\vdash k} w_{\mu \nu} w_{\lambda \nu}
    &= m_\mu^{(d)} m_\lambda^{(d)} \sum_{\alpha, \beta \vdash_d k} m_\alpha^{(d)} m_\beta^{(d)} \sum_{\nu\vdash k} g_{\mu\alpha\nu} g_{\lambda\beta\nu}\\
    &=m_\mu^{(d)} m_\lambda^{(d)} \sum_{\alpha, \beta \vdash_d k} m_\alpha^{(d)} m_\beta^{(d)} \sum_{\nu\vdash k} \langle \chi_{\mu} \chi_{\alpha}, \chi_\nu \rangle_{\mfS_k} \langle \chi_\nu, \chi_{\lambda} \chi_{\beta} \rangle_{\mfS_k}\\
    &= m_\mu^{(d)} m_\lambda^{(d)} \sum_{\alpha, \beta \vdash_d k} m_\alpha^{(d)} m_\beta^{(d)} \langle \chi_{\mu} \chi_{\alpha}, \chi_{\lambda} \chi_{\beta} \rangle_{\mfS_k}\\
    &= m_\mu^{(d)} m_\lambda^{(d)} \sum_{\alpha, \beta \vdash_d k} m_\alpha^{(d)} m_\beta^{(d)} \langle \chi_{\mu} \chi_{\lambda}, \chi_{\alpha} \chi_{\beta} \rangle_{\mfS_k}\\
    &=m_\mu^{(d)} m_\lambda^{(d)} \sum_{\alpha, \beta \vdash_d k} m_\alpha^{(d)} m_\beta^{(d)} \sum_{\nu\vdash k} \langle \chi_{\mu} \chi_{\lambda}, \chi_\nu \rangle_{\mfS_k} \langle \chi_\nu, \chi_{\alpha} \chi_{\beta} \rangle_{\mfS_k}\\
    &= m_\mu^{(d)} m_\lambda^{(d)} \sum_{\alpha, \beta \vdash_d k} m_\alpha^{(d)} m_\beta^{(d)} \sum_{\nu\vdash k} g_{\mu\lambda\nu} g_{\alpha\beta\nu},
\end{align}
where we use the orthogonality of characters in the third and fifth lines.
To this end, we apply the Cauchy identity (see, e.g., Ref.~\cite{macdonald1995symmetric}):
\begin{align}
    \prod_{a,b}(1-t u_a v_b)^{-1} = \sum_{k=0}^{\infty} t^k \sum_{\nu\vdash k} s_\nu(\bm{u}) s_\nu(\bm{v}),
\end{align}
where $\bm{u} = (u_a)_a$ and $\bm{v} = (v_b)_b$ are two sets of variables, and $s_\nu$ is the Schur polynomial.
By applying the Cauchy identity to $\bm{u} = \bm{x} \otimes \bm{y} \coloneqq (x_i y_j)_{i, j=1}^{d}$ with $\bm{x} = (x_i)_{i=1}^{d}$, $\bm{y} = (y_j)_{j=1}^{d}$ and $\bm{v} = 1^{d^2}$, we have
\begin{align}
    \prod_{i,j=1}^{d} (1-t x_i y_j)^{-d^2} = \sum_{k=0}^{\infty} t^k \sum_{\nu\vdash k} s_\nu(\bm{x}\otimes \bm{y}) s_\nu(1^{d^2}).
\end{align}
The Schur polynomial $s_\nu(\bm{x}\otimes \bm{y})$ can be decomposed as
\begin{align}
    s_\nu(\bm{x}\otimes \bm{y}) = \sum_{\mu, \lambda} g_{\mu\lambda\nu} s_\mu(\bm{x}) s_\lambda(\bm{y}),
\end{align}
and we have
\begin{align}
    s_\nu(1^{d^2}) = m_{\nu}^{(d^2)} = \sum_{\alpha, \beta\vdash_d k} m_\alpha^{(d)} m_\beta^{(d)} g_{\alpha\beta\nu}.
\end{align}
Thus, we have
\begin{align}
    \prod_{i,j=1}^{d} (1-t x_i y_j)^{-d^2}
    &= \sum_{k=0}^{\infty} t^k \sum_{\mu, \lambda \vdash k} s_\mu(\bm{x}) s_\lambda(\bm{y}) \sum_{\alpha, \beta\vdash_d k} m_\alpha^{(d)} m_\beta^{(d)} \sum_{\nu\vdash k} g_{\mu\alpha\nu} g_{\lambda\beta\nu}\\
    &= \sum_{k=0}^{\infty} t^k \sum_{\mu, \lambda \vdash_d k} {\sum_{\nu\vdash k} w_{\mu \nu} w_{\lambda \nu} \over m_\mu^{(d)} m_\lambda^{(d)}}s_\mu(\bm{x}) s_\lambda(\bm{y}).
    \label{eq:permutation-central-recoupling-generating-function-2}
\end{align}
Suppose $U, V\in \U(d)$ have eigenvalues $\bm{x} = (x_i)_{i=1}^{d}$ and $\bm{y} = (y_j)_{j=1}^{d}$, respectively.
Then, Eq.~\eqref{eq:permutation-central-recoupling-generating-function-2} yields
\begin{align}
    \det(\1_d\otimes \1_d-t U\otimes V)^{-d^2} = \sum_{k=0}^{\infty} t^k \sum_{\mu, \lambda \vdash_d k} {\sum_{\nu\vdash k} w_{\mu \nu} w_{\lambda \nu} \over m_\mu^{(d)} m_\lambda^{(d)}} \chi_\mu(U) \chi_\lambda(V),
\end{align}
by using the characters $\chi_\mu, \chi_\lambda$ of $\U(d)$.
The orthogonality of characters then provides
\begin{align}
    t^k {\sum_{\nu\vdash k} w_{\mu \nu} w_{\lambda \nu} \over m_\mu^{(d)} m_\lambda^{(d)}}
    &= \int_{\U(d)\times \U(d)} \dd U \dd V \overline{\chi_\mu(U) \chi_\lambda(V)} \det(\1_d\otimes \1_d-t U\otimes V)^{-d^2}\\
    &\leq m_\mu^{(d)} m_\lambda^{(d)} I_d(t),
\end{align}
where we use $\abs{\chi_\mu(U)} \leq m_\mu^{(d)}$ since $\chi_\mu(U)$ is the trace of an $m_\mu^{(d)}$-dimensional unitary matrix and we define $I_d(t)$ by
\begin{align}
    I_d(t)\coloneqq \int_{\U(d)\times \U(d)} \dd U \dd V \abs{\det(\1_d\otimes \1_d-t U\otimes V)}^{-d^2},
\end{align}
i.e.,
\begin{align}
    \label{eq:sum-w-mu-nu-bound}
    \sum_{\nu\vdash k} w_{\mu \nu} w_{\lambda \nu}\leq t^{-k} (m_\mu^{(d)} m_\lambda^{(d)})^2 I_d(t).
\end{align}

To obtain a desired bound, we show that the divergence of $I_d(t)$ as $t\nearrow 1$ is at most of order $(1-t)^{-(d^2-1)^2}$.
The Weyl integration formula provides (see, e.g., Ref.~\cite{meckes2019random})
\begin{align}
    \label{eq:Id-weyl-integration}
    I_d(t) = {1\over (d!)^2} \int_{[0,2\pi]^{2d}} {\prod_i \dd \theta_i \dd \phi_i \over (2\pi)^{2d}} \abs{\Delta(e^{\mathrm{i}\theta_1}, \ldots, e^{\mathrm{i}\theta_d}) \Delta(e^{\mathrm{i}\phi_1}, \ldots, e^{\mathrm{i}\phi_d})}^2 \prod_{i,j=1}^{d} \abs{1-t e^{\mathrm{i}(\theta_i + \phi_j)}}^{-d^2},
\end{align}
where $\Delta(z_1, \ldots, z_d) \coloneqq \prod_{i<j} (z_i - z_j)$ is the Vandermonde determinant of $\mathrm{diag}(z_1, \ldots, z_d)$.
The Cauchy determinant identity provides (see, e.g., Ref.~\cite{macdonald1995symmetric})
\begin{align}
    \det[(1-tu_iv_j)^{-1}]_{i,j=1}^{d}={t^{d(d-1)/2}\Delta(u_1, \ldots, u_d)\Delta(v_1, \ldots, v_d) \over \prod_{i,j=1}^{d}(1-tu_iv_j)}.
\end{align}
The Cauchy--Schwarz inequality applied to the expansion of the determinant
\begin{align}
    \det[(1-tu_iv_j)^{-1}]_{i,j=1}^{d} = \sum_{\sigma\in \mfS_d} \sgn(\sigma) \prod_{i=1}^{d} (1-tu_iv_{\sigma(i)})^{-1},
\end{align}
where $\sgn{\sigma}$ is the sign of the permutation $\sigma$, provides
\begin{align}
    \abs{\det[(1-tu_iv_j)^{-1}]_{i,j=1}^{d}}^2 \leq d! \sum_{\sigma\in \mfS_d} \prod_{i=1}^{d} \abs{1-tu_iv_{\sigma(i)}}^{-2}.
\end{align}
Thus, we have
\begin{align}
    \abs{\Delta(u_1, \ldots, u_d)\Delta(v_1, \ldots, v_d)}^2 \leq d! t^{-d(d-1)} \sum_{\sigma\in \mfS_d} \prod_{i=1}^{d} \abs{1-tu_iv_{\sigma(i)}}^{-2} \prod_{i,j=1}^{d} \abs{1-tu_iv_j}^{2}.
\end{align}
From Eq.~\eqref{eq:Id-weyl-integration}, we have
\begin{align}
    I_d(t)
    &\leq {t^{-d(d-1)} \over d!} \sum_{\sigma\in \mfS_d} \int_{[0,2\pi]^{2d}} {\prod_i \dd \theta_i \dd \phi_i \over (2\pi)^{2d}} \prod_{i,j=1}^{d} \abs{1-t e^{\mathrm{i}(\theta_i + \phi_j)}}^{-a_{ij}^{\sigma}}\\
    &= t^{-d(d-1)} \int_{[0,2\pi]^{2d}} {\prod_i \dd \theta_i \dd \phi_i \over (2\pi)^{2d}} \prod_{i,j=1}^{d} \abs{1-t e^{\mathrm{i}(\theta_i + \phi_j)}}^{-a_{ij}}
\end{align}
where $a_{ij}^{\sigma}$ and $a_{ij}$ are defined by
\begin{align}
    a_{ij}^{\sigma} \coloneqq d^2-2+2\delta_{\sigma(i), j}, \qquad a_{ij} \coloneqq d^2-2+2\delta_{ij},
\end{align}
and we use the symmetry of the integrand to obtain the last line.
Thus, for $1/2\leq t<1$, we have
\begin{align}
    I_d(t) = O_d\qty(\int_{[0,2\pi]^{2d}} \prod_i \dd \theta_i \dd \phi_i \prod_{i,j=1}^{d} \abs{1-t e^{\mathrm{i}(\theta_i + \phi_j)}}^{-a_{ij}}).
    \label{eq:permutation-central-recoupling-direct-integral}
\end{align}
We bound the integral in Eq.~\eqref{eq:permutation-central-recoupling-direct-integral}.
Since $\abs{1-t e^{\mathrm{i} \theta}}^2 = (1-t)^2 + 2t(1-\cos \theta) \geq (1-t)^2$ for all $\theta$, we have
\begin{align}
    &\int_{[0,2\pi]^{2d}} \prod_i \dd \theta_i \dd \phi_i \prod_{i,j=1}^{d} \abs{1-t e^{\mathrm{i}(\theta_i + \phi_j)}}^{-a_{ij}}\notag\\
    &\leq (1-t)^{-\sum_{i\neq d, j\neq d} a_{ij}} \int_{[0,2\pi]^{2d}} \prod_i \dd \theta_i \dd \phi_i \prod_{i=d \lor j=d} \abs{1-t e^{\mathrm{i}(\theta_i + \phi_j)}}^{-a_{ij}}.
\end{align}
We introduce new variables $\widetilde{\theta}_i\coloneqq \theta_i+\phi_d$ for $1\leq i\leq d$, $\widetilde{\phi}_j\coloneqq \theta_d+\phi_j$ for $1\leq j\leq d-1$, and $\widetilde{\phi}_d \coloneqq \phi_d$ as the last coordinate.
The Jacobian of this change of variables is 1, and we obtain
\begin{align}
    \int_{[0,2\pi]^{2d}} \prod_i \dd \theta_i \dd \phi_i
    \prod_{i=d \lor j=d}\abs{1-t e^{\mathrm{i}(\theta_i+\phi_j)}}^{-a_{ij}}
    =2\pi\prod_{i=d\lor j=d}
    \int_{-\pi}^{\pi}\dd\theta\,
    \abs{1-t e^{\mathrm{i}\theta}}^{-a_{ij}},
\end{align}
where we use the periodicity of the integrand to change the interval $[0,2\pi]$ to $[-\pi, \pi]$.
For $\theta\in[-\pi,\pi]$ and $t\geq1/2$, the inequality $1-\cos\theta\geq2\theta^2/\pi^2$ provides
\begin{align}
    \abs{1-t e^{\mathrm{i}\theta}}^2
    =(1-t)^2+2t(1-\cos\theta)
    \geq(1-t)^2+{2\theta^2\over\pi^2}.
\end{align}
Thus, using $x=\theta/(1-t)$, we have
\begin{align}
    \int_{-\pi}^{\pi}\dd\theta\,
    \abs{1-t e^{\mathrm{i}\theta}}^{-a_{ij}}
    \leq(1-t)^{1-a_{ij}}
    \int_{\mathbb{R}}\dd x\,
    \left(1+{2x^2\over\pi^2}\right)^{-a_{ij}/2}.
\end{align}
The last integral converges since $a_{ij}>1$ holds for all $i,j$, and provides a constant factor depending only on $d$.
Thus, we have
\begin{align}
    I_d(t)=O_d\qty((1-t)^{2d-1-\sum_{ij}a_{ij}})
    =O_d\qty((1-t)^{-(d^2-1)^2}).
\end{align}
Taking $t=1-k^{-1}$ for $k\geq2$ in Eq.~\eqref{eq:sum-w-mu-nu-bound} and using Eq.~\eqref{eq:d-mu-mu-d}, we obtain
\begin{align}
    \sum_{\nu\vdash k}w_{\mu\nu}w_{\lambda\nu}
    =O_d(k^{d^4-d^2-d+1}).
\end{align}
By taking $\lambda = \mu$ and summing over $\mu\in \mathsf{Y}_{\mathrm{hard}}$, we have
\begin{align}
    \sum_{\mu\in\mathsf{Y}_{\mathrm{hard}}} \sum_{\nu\vdash k} w_{\mu \nu}^2 = \abs{\mathsf{Y}_{\mathrm{hard}}} \cdot O_d(k^{d^4-d^2-d+1}) = O_d(k^{d^4-d^2-(d-1)/2}),
\end{align}
which proves the first bound in Eq.~\eqref{eq:recoupling-uniformity}.
Similarly, we have
\begin{align}
    \sum_{\nu\vdash k} c_\nu w_{\mu\nu} = \sum_{\nu\vdash k} \sum_{\lambda\in\mathsf{Y}_{\mathrm{hard}}} w_{\lambda\nu} w_{\mu\nu}  = \abs{\mathsf{Y}_{\mathrm{hard}}} \cdot O_d(k^{d^4-d^2-d+1}) = O_d(k^{d^4-d^2-(d-1)/2}),
\end{align}
which proves the second bound in Eq.~\eqref{eq:recoupling-uniformity}.
\end{proof}

\begin{proof}[Proof of Thm.~\ref{thm:permutation-channel-trace-matching-lower}]
    We first show the proof sketch.
    In this proof, we write $\pi\coloneqq\pi_d$ for simplicity.
    Using the set $\mathsf{Y}_{\mathrm{hard}}$ constructed in Lem.~\ref{lem:permutation-central-window-recoupling}, we construct a hard family of permutation-covariant channels $\Lambda_z$, whose Choi operators are given by
    \begin{align}
        \label{eq:def-J-z}
        J_z = J^{1/2} (\1+\alpha X_z) J^{1/2} + (\1_{\mcI} - \overline{P}_{\mathsf{Y}_\mathrm{hard}}) \otimes {\1_{\mcO} \over d^k}
    \end{align}
    with a parameter $\alpha>0$ and Hermitian operators $J$, $X_z$ for $z\in \mathsf{Z}$ satisfying
    \begin{align}
        \Tr_{\mcO} J = \overline{P}_{\mathsf{Y}_\mathrm{hard}}, \quad
        \Tr_{\mcO} [J^{1/2} X_z J^{1/2}] = 0, \quad X_z^2\preceq \1, \quad J, X_z\in \Comm(\overline{\pi}\otimes \pi),
    \end{align}
    where $\overline{P}_{\mathsf{Y}_\mathrm{hard}}$ is defined by
    \begin{align}
        \overline{P}_{\mathsf{Y}_{\mathrm{hard}}} \coloneqq \sum_{\mu\in\mathsf{Y}_{\mathrm{hard}}} \overline{P}_\mu.
    \end{align}
    Since $X_z^2\preceq \1$ holds, by choosing $\alpha\leq 1$, we have $\1+\alpha X_z\succeq 0$, and thus $J_z\succeq 0$.
    We also have $\Tr_{\mcO} J_z = \1_{\mcI}$, and $[J_z, \overline{\pi}(\sigma)_{\overline{\mcI}} \otimes \pi(\sigma)_{\mcO}] = 0$ for all $\sigma\in\mfS_k$, i.e., $J_z$ is a valid Choi operator of a permutation-covariant channel.
    In particular, we take $X_z, J$ whose supports are constrained to the subspace $\supp [\overline{P}_{\mathsf{Y}_\mathrm{hard}}] \otimes \mcO$.
    We also define $J_\star$ by taking $\alpha=0$ in the definition~\eqref{eq:def-J-z} of $J_z$.
    Then, we have
    \begin{align}
        d_\mathrm{tr}(\rho_z, \rho_{z'})
        &= {\alpha \over 2 d^k} \left\| J^{1/2} (X_z - X_{z'}) J^{1/2} \right\|_1
    \end{align}
    and
    \begin{align}
        \widehat{D}_2(\Lambda_z\|\Lambda_\star)
        &= \log \left\| \Tr_{\mcO}(J_z J_\star^{-1} J_z) \right\|_\infty\\
        &= \log \left\|\Tr_{\mcO}[J]+2\alpha \Tr_{\mcO}[J^{1/2} X_z J^{1/2}] + \alpha^2 \Tr_{\mcO}[J^{1/2} X_z^2 J^{1/2}] + (\1-\overline{P}_{\mathsf{Y}_\mathrm{hard}}) \right\|_\infty\\
        &= \log \|\1 + \alpha^2 \Tr_{\mcO}[J^{1/2} X_z^2 J^{1/2}]\|_\infty\\
        &\leq \log (1+\alpha^2 \|\Tr_{\mcO}[J]\|_\infty)\\
        &= \log(1+\alpha^2 \|\overline{P}_{\mathsf{Y}_\mathrm{hard}}\|_\infty)\\
        &= \log(1+\alpha^2)\\
        &\leq \alpha^2,
    \end{align}
    where we use the triangle inequality and $X_z^2\preceq \1$ in the fourth line, and the inequality $\log(1+t)\leq t$ for $t\geq 0$ in the last line.
    We construct the operators $J$, $X_z$ such that
    \begin{align}
        \label{eq:trace-norm-bound-Xz}
        {1\over d^k} \min_{z\neq z'} \left\|J^{1/2} (X_z-X_{z'}) J^{1/2}\right\|_1 &= \Omega_d(1),\\
        \label{eq:packing-size-bound}
        \log \abs{\mathsf{Z}} &= \Omega_d(k^{d^4-d^2}).
    \end{align}
    Then, by taking $\alpha = \Theta_d(\varepsilon_\mathrm{tr})$ so that Eq.~\eqref{eq:trace-norm-bound} holds and invoking Lem.~\ref{lem:adaptive-geometric-renyi-packing} with $\xi = \alpha^2$, $\eta = 1/3$, we obtain
    \begin{align}
        n = \Omega_d\qty(\log \abs{\mathsf{Z}} \over \alpha^2) = \Omega_d\qty({k^{d^4-d^2} \over \varepsilon_\mathrm{tr}^2}).
    \end{align}

    We first construct an operator $J$ on the Hilbert space~\eqref{eq:choi-space-decomposition}.
    We take the Young projector $\overline{P}_\mu$ on $\overline{\mcI}$ and $P_\nu$ on $\overline{\mcI} \otimes \mcO$, and define
    \begin{align}
        Q_{\mu\nu}\coloneqq (\overline{P}_\mu\otimes \1_{\mcO})(P_\nu)_{\overline{\mcI}\mcO},
    \end{align}
    which defines the projector onto the subspace $\mcS_\nu\otimes\mathbb{C}^{w_{\mu\nu}}$ in the decomposition~\eqref{eq:choi-space-decomposition}, with $\Tr Q_{\mu\nu}=d_\nu w_{\mu\nu}$.
    Since $P_\nu$ on $\overline{\mcI} \otimes \mcO$ commutes with $\overline{\pi}(\sigma) \otimes \pi(\sigma)$ for all $\sigma\in\mfS_k$, we have
    \begin{align}
        [\Tr_{\mcO} Q_{\mu\nu}, \overline{\pi}(\sigma)] = 0 \quad \forall \sigma\in\mfS_k,
    \end{align}
    i.e.,
    \begin{align}
        \label{eq:commutation-trace-Q-mu-nu-1}
        [\Tr_{\mcO} Q_{\mu\nu}, \overline{g}_{\mu}(\sigma) \otimes \1_{m_\mu^{(d)}}] = 0 \quad \forall \sigma\in\mfS_k.
    \end{align}
    Since $P_\nu$ commutes with any unitary operator on the multiplicity space $\CC^{w_{\mu\nu}}$, it commutes with a unitary operator on the multiplicity space $\CC^{m_\mu^{(d)}}$ in the decomposition~\eqref{eq:choi-space-decomposition-2} as well.
    Thus, we have
    \begin{align}
        \label{eq:commutation-trace-Q-mu-nu-2}
        [\Tr_{\mcO} Q_{\mu\nu}, \1_{\mcS_\mu} \otimes U] = 0 \quad \forall U\in \U(m_\mu^{(d)}).
    \end{align}
    By applying Schur's lemma to Eqs.~\eqref{eq:commutation-trace-Q-mu-nu-1} and \eqref{eq:commutation-trace-Q-mu-nu-2}, we have
    \begin{align}
        \Tr_{\mcO} Q_{\mu\nu}\propto \overline{P}_\mu,
    \end{align}
    and by comparing the traces of both sides, we obtain
    \begin{align}
        \label{eq:trace-Q-mu-nu}
        \Tr_{\mcO} Q_{\mu\nu}
        ={d_\nu w_{\mu\nu}\over d_\mu m_\mu^{(d)}}\overline{P}_\mu.
    \end{align}
    We define $J$ by
    \begin{align}
        J\coloneqq
        \sum_{\substack{\mu\in\mathsf{Y}_{\mathrm{hard}}\\\nu\vdash k}}
        {d_\mu m_\mu^{(d)}c_\nu\over
        d_\nu s_\mu}\,Q_{\mu\nu},
    \end{align}
    where $s_\mu$ is defined by
    \begin{align}
        s_\mu\coloneqq\sum_{\kappa\vdash k}c_\kappa w_{\mu\kappa}>0 \quad \forall \mu\in \mathsf{Y}_{\mathrm{hard}}.
    \end{align}
    Then, Eq.~\eqref{eq:trace-Q-mu-nu} yields
    \begin{align}
        \Tr_\mcO J
        =\overline{P}_{\mathsf{Y}_{\mathrm{hard}}}.
    \end{align}
    By decomposing $J$ as $J\cong\sum_{\nu\vdash k}\1_{\mcS_\nu}\otimes J_\nu$ in the decomposition~\eqref{eq:choi-space-decomposition}, we have
    \begin{align}
        \label{eq:J-nu-bound}
        {1\over d^k} J_\nu
        =\bigoplus_{\mu\in\mathsf{Y}_{\mathrm{hard}}}
        {d_\mu m_\mu^{(d)}c_\nu\over d^k
        d_\nu s_\mu}\,
        \1_{\mathbb{C}^{w_{\mu\nu}}}
        \succeq {c_\nu \over d_\nu}
        \Omega_d\!\left(k^{-(d^4-d^2)}\right)
        \1_{\mathbb{C}^{c_\nu}},
    \end{align}
    where we use Lem.~\ref{lem:permutation-central-window-recoupling}.

    We then take a set of Hermitian operators $X_z$ on the Hilbert space~\eqref{eq:choi-space-decomposition} such that $X_z\cong\sum_{\nu\vdash k}\1_{\mcS_\nu}\otimes X_{z,\nu}$ in the decomposition~\eqref{eq:choi-space-decomposition}, and $X_{z,\nu}$ is a Hermitian operator on $\CC^{c_\nu}$.
    For each $\nu\vdash k$, we divide the set $\{\mu\in\mathsf{Y}_\mathrm{hard} \mid w_{\mu\nu}>0\}$ into two disjoint sets $U_\nu$ and $V_\nu$ and define a bipartition of $\CC^{c_\nu}$ by
    \begin{align}
        \label{eq:bipartition-U-V}
        \CC^{c_{\nu}} \simeq \mcU_\nu \oplus \mcV_\nu, \qquad \mcU_\nu\coloneqq \bigoplus_{\mu\in U_\nu} \CC^{w_{\mu\nu}}, \qquad \mcV_\nu\coloneqq \bigoplus_{\mu\in V_\nu} \CC^{w_{\mu\nu}}.
    \end{align}
    Then, using linear operators $Z_\nu: \mcU_\nu \to \mcV_\nu$, we define $X_{z,\nu}$ by
    \begin{align}
        X_{z,\nu} \coloneqq \begin{pmatrix}
            0 & Z_\nu^\dagger\\
            Z_\nu & 0
        \end{pmatrix}_{\mcU_\nu \oplus \mcV_\nu}.
    \end{align}
    When forming products with $J_\nu$, we regard $Z_\nu$ as an operator on $\mcU_\nu\oplus\mcV_\nu$ by extending it by zero on $\mcV_\nu$, and embed the $\nu$-block into the full Choi space using the decomposition~\eqref{eq:choi-space-decomposition}.
    Let $T_\nu\coloneqq\Tr_{\mcO}(\1_{\mcS_\nu}\otimes J_\nu^{1/2}Z_\nu J_\nu^{1/2})$.
    Similarly to Eq.~\eqref{eq:commutation-trace-Q-mu-nu-1}, we have
    \begin{align}
        \overline{\pi}(\sigma)^\dagger T_\nu\overline{\pi}(\sigma)=T_\nu \quad \forall \sigma\in\mfS_k.
    \end{align}
    Since $J_\nu$ preserves each $\mu$-block, $T_\nu$ maps the input Young sectors in $U_\nu$ into those in $V_\nu$.
    The sets $U_\nu$ and $V_\nu$ are disjoint, so Schur's lemma gives
    \begin{align}
        T_\nu=0,
    \end{align}
    i.e.,
    \begin{align}
        \Tr_{\mcO}[J^{1/2} X_z J^{1/2}] = 0.
    \end{align}
    By taking $Z_\nu$ such that $\|Z_\nu\|_\infty\leq 1$, we have $X_z^2\preceq \1$.
    We construct bipartitions $\mcU_\nu, \mcV_\nu$ for each $\nu\vdash k$ and a set of matrices
    \begin{align}
        \mathsf{Z} \subset \left\{(Z_\nu)_{\nu\vdash k} \in \bigoplus_{\nu\vdash k} \mcL(\mcU_\nu \to \mcV_\nu) \; \middle| \; \|Z_\nu\|_\infty \leq 1 \right\}
    \end{align}
    such that Eqs.~\eqref{eq:trace-norm-bound-Xz} and \eqref{eq:packing-size-bound} hold to complete the proof.

    We take two disjoint sets $U_\nu$ and $V_\nu$ such that
    \begin{align}
        \label{eq:permutation-fixed-fidelity-cross-dimension}
        \sum_\nu u_\nu v_\nu = \Theta_d(k^{d^4-d^2}), \qquad u_\nu \coloneqq \dim \mcU_\nu = \sum_{\mu\in U_\nu} w_{\mu\nu}, \qquad v_\nu \coloneqq \dim \mcV_\nu = \sum_{\mu\in V_\nu} w_{\mu\nu},
    \end{align}
    which defines the bipartition~\eqref{eq:bipartition-U-V}.
    To show the existence of such a bipartition, we introduce independent random variables $\zeta_{\mu\nu}\in\{U,V\}$ with $\Prob[\zeta_{\mu\nu}=U]=\Prob[\zeta_{\mu\nu}=V]=1/2$ and define
    \begin{align}
        U_\nu \coloneqq \{\mu\in \mathsf{Y}_{\mathrm{hard}}\mid \zeta_{\mu\nu}=U\}, \qquad V_\nu \coloneqq \{\mu\in \mathsf{Y}_{\mathrm{hard}}\mid \zeta_{\mu\nu}=V\}.
    \end{align}
    Then, we have
    \begin{align}
        \mathbb{E}\sum_\nu u_\nu v_\nu
        &= {1\over 4} \sum_\nu \sum_{\mu\neq \lambda} w_{\mu\nu} w_{\lambda\nu} \\
        &= {1\over 4} \sum_\nu \qty[\qty(\sum_\mu w_{\mu\nu})^2 - \sum_\mu w_{\mu\nu}^2] \\
        &= {1\over 4} \qty[D_{\mathsf{Y}_{\mathrm{hard}}} - \sum_{\mu\in \mathsf{Y}_{\mathrm{hard}}}\sum_{\nu\vdash k} w_{\mu\nu}^2] \\
        &= \Omega_d(k^{d^4-d^2}),
    \end{align}
    where we use Lem.~\ref{lem:permutation-central-window-recoupling} in the last line.
    Thus, there exists a realization of the random variables $\zeta_{\mu\nu}$ such that
    \begin{align}
        \sum_\nu u_\nu v_\nu = \Omega_d(k^{d^4-d^2})
    \end{align}
    holds.
    The converse upper bound holds since
    \begin{align}
        \sum_\nu u_\nu v_\nu \leq \sum_\nu \qty(\sum_\mu w_{\mu\nu})^2 = D_{\mathsf{Y}_{\mathrm{hard}}} = O_d(k^{d^4-d^2}),
    \end{align}
    i.e., Eq.~\eqref{eq:permutation-fixed-fidelity-cross-dimension} holds.

    For $u_\nu v_\nu=0$, set $\mathsf{Z}_{v_\nu,u_\nu}\coloneqq\left\{0\right\}$ and $l_\nu\coloneqq0$.
    For each $\nu$ with $u_\nu v_\nu>0$, we next construct a set $\mathsf{Z}_{v_\nu,u_\nu}$ of matrices $Z:\mcU_\nu\to\mcV_\nu$, represented by $Z\in\mathbb{C}^{v_\nu\times u_\nu}$, such that
    \begin{align}
        \label{eq:condition-for-Z}
        \abs{\mathsf{Z}_{v_\nu,u_\nu}}
        \geq\exp\!\left(\Omega(u_\nu v_\nu)\right),
        \qquad \|Z\|_\infty\leq1,
        \qquad \|Z-Z'\|_1
        \geq\Omega\!\left(\min\left\{u_\nu,v_\nu\right\}\right)
        \quad\forall Z\neq Z'\in\mathsf{Z}_{v_\nu,u_\nu}.
    \end{align}
    First suppose $u_\nu\leq v_\nu$, fix $W_0\in\W_{v_\nu,u_\nu}$, and draw $V$ from Haar measure on $\U(v_\nu)$, so that $VW_0$ is Haar distributed on $\W_{v_\nu,u_\nu}$.
    Defining $F:\U(v_\nu)\to\mathbb{R}$ by
    \begin{align}
        F(V)\coloneqq\Re\Tr(W_0W_0^\dagger V),
    \end{align}
    we obtain from the Cauchy--Schwarz inequality
    \begin{align}
        \abs{F(V)-F(V')}
        \leq\|W_0W_0^\dagger\|_2\|V-V'\|_2
        =\sqrt{u_\nu}\|V-V'\|_2,
    \end{align}
    i.e., $F$ is $\sqrt{u_\nu}$-Lipschitz with respect to the Frobenius norm.
    On the other hand, Haar invariance gives $\mathbb{E}F(V)=0$ and
    \begin{align}
        \|VW_0-W_0\|_2^2=2u_\nu-2F(V).
    \end{align}
    The concentration inequality for Haar measure~\cite[Cor.~17]{meckes2013spectral} therefore yields
    \begin{align}
    \Prob\left[\|VW_0-W_0\|_2\leq\sqrt{u_\nu}\right]=\Prob\left[F(V)\geq {u_\nu\over2}\right]
    \leq\exp\!\left(-{u_\nu v_\nu\over48}\right).
    \label{eq:permutation-fixed-fidelity-stiefel-small-ball}
    \end{align}
    This bound holds  for any $W_0 \in \W_{v_\nu, u_\nu}$.
    Let $M(\W_{v_\nu, u_\nu}, \sqrt{u_\nu})$ be the maximum size of a $\sqrt{u_\nu}$-packing of $\W_{v_\nu, u_\nu}$ in the Frobenius norm, and let $\mcM_{v_\nu,u_\nu}$ be the corresponding maximum packing [see also Eq.~\eqref{eq:def-packing-number}].
    Then, by maximality, the $\sqrt{u_\nu}$-balls centered at the points of $\mcM_{v_\nu,u_\nu}$ cover $\W_{v_\nu, u_\nu}$, so the probability bound~\eqref{eq:permutation-fixed-fidelity-stiefel-small-ball} implies
    \begin{align}
        \abs{\mcM_{v_\nu,u_\nu}} = M(\W_{v_\nu, u_\nu}, \sqrt{u_\nu}) \geq \exp\left({u_\nu v_\nu\over48}\right).
    \end{align}
    We take $\mathsf{Z}_{v_\nu, u_\nu}\coloneqq \mcM_{v_\nu, u_\nu}$.
    Then, since $Z, Z'\in \mathsf{Z}_{v_\nu, u_\nu}$ are isometries, we have
    \begin{align}
        \|Z\|_\infty = 1, \qquad \|Z-Z'\|_\infty \leq \|Z\|_\infty + \|Z'\|_\infty = 2.
    \end{align}
    By definition, we have $\|Z-Z'\|_2\geq \sqrt{u_\nu}$.
    Thus, we have
    \begin{align}
        \|Z-Z'\|_1
        \geq {\|Z-Z'\|_2^2\over\|Z-Z'\|_\infty}
        \geq {u_\nu\over2},
    \end{align}
    which proves Eq.~\eqref{eq:condition-for-Z} when $u_\nu\leq v_\nu$.
    When $u_\nu>v_\nu$, we take $\mathsf{Z}_{v_\nu,u_\nu}\coloneqq\left\{Z^\dagger\;\middle|\;Z\in\mcM_{u_\nu,v_\nu}\right\}$ to obtain the same bound.

    Finally, we define a subset $\mathsf{Z}\subset \{\bigoplus_{\nu\vdash k} Z_\nu \mid Z_\nu \in \mathsf{Z}_{v_\nu, u_\nu}\}$ such that Eqs.~\eqref{eq:trace-norm-bound-Xz} and \eqref{eq:packing-size-bound} hold.
    The left-hand side of Eq.~\eqref{eq:trace-norm-bound-Xz} is lower bounded by
    \begin{align}
        {1\over d^k} \min_{z\neq z'} \left\|J^{1/2} (X_z-X_{z'}) J^{1/2}\right\|_1
        &= {1\over d^k} \min_{z\neq z'} \sum_{\nu:Z_\nu\neq Z_\nu'} d_\nu \left\|J_\nu^{1/2} (X_{z,\nu}-X_{z',\nu}) J_\nu^{1/2}\right\|_1\\
        &\geq \min_{z\neq z'} \sum_{\nu:Z_\nu\neq Z_\nu'} c_\nu \Omega_d(k^{-(d^4-d^2)}) \left\|X_{z,\nu}-X_{z',\nu}\right\|_1\\
        &=\Omega_d(k^{-(d^4-d^2)}) \min_{z\neq z'} \sum_{\nu:Z_\nu\neq Z_\nu'} c_\nu \|Z_\nu - Z'_\nu\|_1\\
        &= \Omega_d(k^{-(d^4-d^2)}) \min_{z\neq z'} \sum_{\nu:Z_\nu\neq Z_\nu'} c_\nu \Omega(\min\{u_\nu, v_\nu\})\\
        &= \Omega_d(k^{-(d^4-d^2)}) \min_{z\neq z'} \sum_{\nu:Z_\nu\neq Z_\nu'} \Omega(u_\nu  v_\nu),
        \label{eq:trace-norm-bound-Xz-lower-bound}
    \end{align}
    Here the first line uses $\|\1_{\mcS_\nu}\otimes H\|_1=d_\nu\|H\|_1$.
    The second line follows from Eq.~\eqref{eq:J-nu-bound}, applied on $\CC^{c_\nu}$, and the inequality
    \begin{align*}
        \|A^{1/2}HA^{1/2}\|_1\geq\lambda_{\min}(A)\|H\|_1 \quad (A\succ0).
    \end{align*}
    The factor $d_\nu$ then cancels the denominator in Eq.~\eqref{eq:J-nu-bound}.
    We also use $\|X_{z,\nu}-X_{z',\nu}\|_1=2\|Z_\nu-Z'_\nu\|_1$, Eq.~\eqref{eq:condition-for-Z}, and $c_\nu=u_\nu+v_\nu$, which implies $c_\nu\min\{u_\nu,v_\nu\}\geq u_\nu v_\nu$.
    Thus, we need to construct a set $\mathsf{Z}$ such that the last line is lower bounded by $\Omega_d(1)$.
    To this end, we take a subset $\mathsf{Z}$ such that $\sum_{\nu:Z_\nu\neq Z'_\nu}u_\nu v_\nu=\Omega_d(k^{d^4-d^2})$ uniformly over all distinct $Z,Z'\in\mathsf{Z}$.
    We show the existence of such a subset $\mathsf{Z}$ by using the Gilbert--Varshamov bound~\cite{gilbert1952comparison,varshamov1957estimate} as follows.
    We take a subset $\mathsf{Z}'_{v_\nu, u_\nu} \subset \mathsf{Z}_{v_\nu, u_\nu}$ such that $\abs{\mathsf{Z}'_{v_\nu, u_\nu}} = 2^{l_\nu}$, with $l_\nu = \Theta(u_\nu v_\nu)$ for $u_\nu v_\nu>0$ and $l_\nu=0$ otherwise, which is possible since Eq.~\eqref{eq:condition-for-Z} holds.
    Then, we label the elements of $\mathsf{Z}'_{v_\nu, u_\nu}$ by $z_\nu \in \{0,1\}^{l_\nu}$, and concatenate the labels to obtain $z = (z_\nu)_\nu \in \{0,1\}^L$ with 
    \begin{align}
        L \coloneqq \sum_\nu l_\nu = \Theta\left(\sum_\nu u_\nu v_\nu\right) = \Theta_d(k^{d^4-d^2}).
    \end{align}
    Then, the Gilbert--Varshamov bound shows the existence of a codebook $\mathsf{Z} \subset \{0,1\}^L$ such that
    \begin{align}
        d_{\rm Ham}(z, z') \geq \lfloor \delta L \rfloor \quad (z\neq z'), \qquad \log\abs{\mathsf{Z}} \geq L(\log 2 - h_2(\delta))
    \end{align}
    for $0<\delta<1/2$, where $d_\mathrm{Ham}(z, z')\coloneqq \sum_i \abs{z_i-z'_i}$ is the Hamming distance and $h_2(\delta) \coloneqq -\delta\log\delta-(1-\delta)\log(1-\delta)$ is the binary entropy function.
    The Hamming distance lower bound implies
    \begin{align}
        \delta L-1 \leq d_\mathrm{Ham}(z, z')\leq \sum_{\nu: Z_\nu\neq Z'_\nu} l_\nu \leq O\left(\sum_{\nu:Z_\nu\neq Z'_\nu} u_\nu v_\nu\right),
    \end{align}
    and by taking $\delta = 1/3$, we have
    \begin{align}
        \sum_{\nu:Z_\nu\neq Z'_\nu} u_\nu v_\nu = \Omega_d(k^{d^4-d^2}) \quad \forall z\neq z'\in\mathsf{Z}, \qquad \log \abs{\mathsf{Z}} = \Omega_d(k^{d^4-d^2}).
    \end{align}
    Then, Eq.~\eqref{eq:trace-norm-bound-Xz-lower-bound} implies Eq.~\eqref{eq:trace-norm-bound-Xz}, and we have Eq.~\eqref{eq:packing-size-bound} as well.
    By following the discussion in the beginning of the proof, we obtain the desired lower bound on the query complexity.
\end{proof}

\section{Asymptotic scaling of parameters related to Young diagrams}
\label{sec:uniform-fixed-height-young-estimates}

This appendix proves the asymptotic scaling of the parameters $C_G, C'_G, D_G, D'_G, D''_G, K_G$ given in Eqs.~\eqref{eq:parameters-for-diamond-norm}, \eqref{eq:parameters-for-choi-trace-norm}, \eqref{eq:parameters-for-diamond-lower-bound}, and \eqref{eq:def-C_G} for the symmetric group $G=\mfS_k$, which are used for the lower and upper bounds for learning permutation-covariant channels.

\begin{proposition}
\label{prop:symmetric-group-parameters-asymptotics}
Fix $d\geq2$ and let $k\to\infty$.
For $G=\mfS_k$, the parameters $C_G, C'_G, D_G, D'_G, D''_G, K_G$ given in Eqs.~\eqref{eq:parameters-for-diamond-norm}, \eqref{eq:parameters-for-choi-trace-norm}, \eqref{eq:parameters-for-diamond-lower-bound}, and \eqref{eq:def-C_G} with $m_\lambda=m_\lambda^{(d)}$, $M_\lambda=m_\lambda^{(d^3)}$, $d_\mcI=d^k$, and $L=|\young{d}{k}|$ are given by
\begin{align}
    K_{\mfS_k}=\Theta_d\!\left(k^{(d-1)(d+2)/2}\right), \qquad C_{\mfS_k}, C'_{\mfS_k} = \Theta_d\left(k^{d^4-(d^2+1)/2}\right), \qquad D_{\mfS_k}, D'_{\mfS_k}, D''_{\mfS_k} = \Theta_d\left(k^{d^4-1}\right).
\end{align}
\end{proposition}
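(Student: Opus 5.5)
The plan is to reduce every parameter to sums over Young diagrams $\lambda\vdash_d k$ of products of the explicit dimension formulas \eqref{eq:hook-length} and \eqref{eq:weyl-dimension}. These sums are then evaluated either by exact identities or by a Riemann-sum approximation around the central region $\lambda_i=k/d+O(\sqrt k)$.

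\emph{Exact identities.} First, $D_{\mfS_k}=\sum_\lambda m_\lambda^{(d)}m_\lambda^{(d^3)}=\binom{k+d^4-1}{d^4-1}$ follows from Schur--Weyl duality. Indeed, $\sum_\lambda m_\lambda^{(a)}m_\lambda^{(b)}$ is the dimension of the $\mfS_k$-invariants in $(\CC^a)^{\otimes k}\otimes(\CC^b)^{\otimes k}$, which equals $\dim\operatorname{Sym}^k(\CC^{ab})$. This gives $D_{\mfS_k}=\Theta_d(k^{d^4-1})$. Similarly, $\sum_\lambda (m_\lambda^{(d)})^2=\binom{k+d^2-1}{d^2-1}$. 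Hence $D''_{\mfS_k}=D_{\mfS_k}-\Theta_d(k^{d^2-1})=\Theta_d(k^{d^4-1})$, because $d^2-1<d^4-1$.

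\emph{Counting sums.} $K_{\mfS_k}=\sum_\lambda m_\lambda^{(d)}$ is a sum of a polynomial of degree $d(d-1)/2$ in the $\lambda_i$ over the $\Theta_d(k^{d-1})$ lattice points of the simplex. Scaling $\lambda=kx$ gives a Riemann sum $k^{d-1}\cdot k^{d(d-1)/2}\int\prod_{i<j}(x_i-x_j)\,dx$, whose integral is a positive constant. This yields $\Theta_d(k^{(d-1)(d+2)/2})$. Since $L=|\young{d}{k}|=\Theta_d(k^{d-1})$, we get $K\log(4L)=O_d(k^{(d-1)(d+2)/2}\log k)$, which is $o(D)$. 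Therefore $D'_{\mfS_k}=\Theta_d(k^{d^4-1})$.

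\emph{Weighted sums $C$ and $C'$.} With $d_\mcI=d^k$,
\[
\sqrt{C'}=d^{-k/2}\sum_\lambda m_\lambda^{(d)}\sqrt{d_\lambda m_\lambda^{(d^3)}},
\]
and $C$ is the same expression with $M_\lambda$ replaced by $M_\lambda-m_\lambda$. For $d\geq2$ and $\lambda$ in the central window we have $m_\lambda^{(d)}/m_\lambda^{(d^3)}=o(1)$, so $M_\lambda-m_\lambda=\Theta(M_\lambda)$ there; outside the window both are bounded above by the $C'$ sum. It therefore suffices to estimate $C'$, upper and lower.

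For the lower bound I would restrict to the window used for $\mathsf{Y}_{\mathrm{hard}}$ in \eqref{eq:hard-young-set}, where \eqref{eq:d-mu-mu-d} gives
\[
d_\lambda=\Theta_d\bigl(d^kk^{-(d^2+d-2)/4}\bigr),\qquad m^{(d)}=\Theta_d\bigl(k^{d(d-1)/4}\bigr),\qquad m^{(d^3)}=\Theta_d\bigl(k^{d^4-d^2+d(d-1)/4}\bigr),
\]
with $\Theta_d(k^{(d-1)/2})$ terms. The summand is then
\[
\Theta_d\bigl(d^{k/2}\,k^{\,d(d-1)/4-(d^2+d-2)/8+(d^4-d^2)/2+d(d-1)/8}\bigr).
\]
Multiplying by $k^{(d-1)/2}$ and squaring should give exponent $d^4-(d^2+1)/2$; I would verify this arithmetic.

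For the upper bound I would use Cauchy--Schwarz:
\[
\Bigl(\sum_\lambda m\sqrt{d_\lambda M}\Bigr)^2\leq\Bigl(\sum_\lambda d_\lambda m_\lambda^{(d)}\Bigr)\Bigl(\sum_\lambda m_\lambda^{(d)}M_\lambda\Bigr)=d^k D_{\mfS_k}.
\]
That gives only $k^{d^4-1}$, which is too weak. Instead I would split the sum into the windows $\young{d}{k}(\Delta)$ and apply the Gaussian tail of Lemma~\ref{lem:prefactor-free-schur-window}, namely $\sum_{\lambda\notin\young{d}{k}(j)}d_\lambda m_\lambda\leq d^k e^{-c j}$, together with the polynomial bound $M_\lambda\leq O_d(k^{d^4-d^2})\,m_\lambda^{(d)}\cdot\mathrm{poly}(\Delta)$ within the shells. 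Summing the shells, with Cauchy--Schwarz applied on each shell, should give the matching $O_d(k^{d^4-(d^2+1)/2})$.

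I expect this shell-wise upper bound for $C'$ to be the main obstacle. It requires uniform control of $m^{(d^3)}_\lambda/m^{(d)}_\lambda$ away from the center, which grows only polynomially in $\|x\|$ because the Weyl product is polynomial, and this growth must be balanced against the exponential decay of the Schur--Weyl weight.
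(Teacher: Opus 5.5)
Your proposal is correct. For $D_{\mfS_k}$, $D'_{\mfS_k}$, $D''_{\mfS_k}$, $K_{\mfS_k}$, and the lower bound on $C_{\mfS_k}$ via $\mathsf{Y}_{\mathrm{hard}}$, it is essentially the paper's argument, and your exponent arithmetic for the central window does give $d^4-(d^2+1)/2$.

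The genuine difference is the upper bound on $C'_{\mfS_k}$. The paper derives a pointwise envelope $d_\lambda=O_d\bigl(d^kk^{-(d-1)(d+2)/4}(1+\|\bm x\|_2)^{d(d+1)/2}e^{-c\|\bm x\|_2^2}\bigr)$ from the Frobenius formula via Stirling and Pinsker. It multiplies this by polynomial envelopes for $m^{(d)}_\lambda$ and $m^{(d^3)}_\lambda$ and compares the lattice sum with a Gaussian integral. You instead reuse the Schur-sampling tail of Lemma~\ref{lem:prefactor-free-schur-window}, and this closes with no real obstacle:
\begin{itemize}
\item On the shell $S_j\coloneqq\young{d}{k}(j)\setminus\young{d}{k}(j-1)$ you have $\sum_{S_j}d_\lambda m^{(d)}_\lambda\leq d^ke^{-c_d^{\mathrm{SW}}(j-1)}$ for all $j\geq1$.
\item The ratio $m^{(d^3)}_\lambda/m^{(d)}_\lambda\leq(k+d^3-1)^{d(d^3-d)}$ holds uniformly because $\lambda_i\leq k$. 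So, contrary to your worry, it needs no control in $\|\bm x\|$.
\item The polynomial growth in $j$ comes only from $m^{(d)}_\lambda=O_d((jk)^{d(d-1)/4})$ on $\young{d}{k}(j)$ and from $|\young{d}{k}(j)|=O_d((jk)^{(d-1)/2})$.
\end{itemize}
Hence $\sum_{S_j}m^{(d)}_\lambda m^{(d^3)}_\lambda=O_d(k^{d^4-(d^2+1)/2}j^{(d^2-1)/2})$. Cauchy--Schwarz on each shell then gives $\sqrt{C'_{\mfS_k}}\leq O_d(k^{(d^4-(d^2+1)/2)/2})\sum_{j\geq1}e^{-c_d^{\mathrm{SW}}(j-1)/2}j^{(d^2-1)/4}$, which is a convergent series.

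Your route avoids the Stirling/Pinsker envelope and the rounding fix needed when $d$ does not divide $k$. The paper's route produces pointwise estimates that it needs anyway, both for the lower bound and for Lemma~\ref{lem:permutation-central-window-recoupling}.

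One dependency still has to be discharged. The central-window estimates you quote from \eqref{eq:d-mu-mu-d} are not independent input. The paper states them in Lemma~\ref{lem:permutation-central-window-recoupling} as consequences of this appendix and proves them inside the proof of this very proposition, so quoting them is circular. You should derive them directly:
\begin{itemize}
\item On $\mathsf{Y}_{\mathrm{hard}}$, the condition $\|\bm x-\bm x^\star\|_2^2\leq1/8$ forces $x_i-x_j\geq j-i-1/2$, hence $\widetilde\lambda_i-\widetilde\lambda_j\geq\sqrt k/2$. This gives $m^{(d)}_\lambda=\Theta_d(k^{d(d-1)/4})$ and the Vandermonde factor of the Frobenius formula.
\item Stirling with $H(\bm p)=\log d+O_d(k^{-1})$ and $k!/\Lambda!=\Theta_d(k^{-d(d-1)/2})$ then yields $d_\lambda=\Theta_d(d^kk^{-(d^2+d-2)/4})$.
\item Using $\lambda_i=\Theta_d(k)$ in the Weyl ratio yields the bound on $m^{(d^3)}_\lambda$.
\item You also need $|\mathsf{Y}_{\mathrm{hard}}|=\Theta_d(k^{(d-1)/2})$, which is a lattice-point count in a ball of radius $\Theta_d(\sqrt k)$.
\end{itemize}
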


\begin{proof}
    We first prove the estimate for $K_{\mfS_k}$.
    For every $\lambda\vdash_d k$, the Weyl dimension formula~\eqref{eq:weyl-dimension} gives $m_\lambda^{(d)}=O_d(k^{d(d-1)/2})$, since each of its $d(d-1)/2$ factors is $O_d(k)$.
    Together with $L\leq\binom{k+d-1}{d-1}=O_d(k^{d-1})$, this yields $K_{\mfS_k}=O_d(k^{(d-1)(d+2)/2})$.
    For the matching lower bound, we define
    \begin{align}
        \alpha_i\coloneqq\frac{2(d-i+1)}{d(d+1)}\quad(i=1,\ldots,d),\qquad h\coloneqq\frac{1}{2d^2(d+1)},
    \end{align}
    and choose the integers $\lambda_i$ independently in $[k(\alpha_i-h),k(\alpha_i+h)]$ for $i=1,\ldots,d-1$, and set $\lambda_d=k-\sum_{i=1}^{d-1}\lambda_i$.
    Since $\sum_i\alpha_i=1$ and $\alpha_i-\alpha_{i+1}=2/[d(d+1)]$ hold, every such choice satisfies
    \begin{align}
        \lambda_d\geq\frac{3k}{2d(d+1)},\qquad \lambda_i-\lambda_{i+1}\geq\frac{3k}{2d(d+1)}\quad(1\leq i<d).
    \end{align}
    Thus all these choices define $\Theta_d(k^{d-1})$ distinct partitions $\lambda\vdash_d k$.
    The Weyl dimension formula~\eqref{eq:weyl-dimension} gives $m_\lambda^{(d)}=\Omega_d(k^{d(d-1)/2})$ for these partitions.
    Thus, we have
    \begin{align}
        K_{\mfS_k}=\Omega_d\!\left(k^{d-1}k^{d(d-1)/2}\right)=\Omega_d\!\left(k^{(d-1)(d+2)/2}\right).
    \end{align}

    By using the identities $\sum_{\lambda\vdash k} m_\lambda^{(a)} m_\lambda^{(b)} = \binom{k+ab-1}{ab-1}$ for any $a, b\in \ZZ_{>0}$ and $L = \abs{\young{d}{k}} \leq \binom{k+d-1}{d-1} = O_d(k^{d-1})$, we have
    \begin{align}
        D_{\mfS_k} &= \sum_{\lambda\vdash_d k} m_\lambda^{(d)} m_\lambda^{(d^3)} = \binom{k+d^4-1}{d^4-1} = \Theta_d(k^{d^4-1}),\\
        D'_{\mfS_k} &= D_{\mfS_k} + K_{\mfS_k}\log(4L) = \Theta_d(k^{d^4-1}),\\
        D''_{\mfS_k} &= \sum_{\lambda\vdash_d k} m_\lambda^{(d)} [m_\lambda^{(d^3)} -m_\lambda^{(d)}] = D_{\mfS_k} - \binom{k+d^2-1}{d^2-1} = \Theta_d(k^{d^4-1}).
    \end{align}
    Since $C_{\mfS_k} \leq C'_{\mfS_k}$ holds by definition, it suffices to show an upper bound on \begin{align}
        C'_{\mfS_k} = \frac{1}{d^k}\left(\sum_{\lambda\vdash_d k} m_\lambda^{(d)} \sqrt{d_\lambda m_\lambda^{(d^3)}}\right)^2
    \end{align} and a lower bound on
    \begin{align}
        C_{\mfS_k} = \frac{1}{d^k}\left(\sum_{\lambda\vdash_d k} m_\lambda^{(d)} \sqrt{d_\lambda \qty[m_\lambda^{(d^3)} - m_\lambda^{(d)}]}\right)^2
    \end{align}
    of the same order.

    To show an upper bound on $C'_{\mfS_k}$, we introduce $x_i\coloneqq \frac{\lambda_i-k/d}{\sqrt{k}}$ and use the following upper bounds:
    \begin{align}
        \label{eq:d-lambda-upper-bound}
        d_\lambda&= O_d\qty(d^k k^{-(d-1)(d+2)/4}(1+\|\bm x\|_2)^{d(d+1)/2}e^{-c\|\bm x\|_2^2}),\\
        \label{eq:m-lambda-d-upper-bound}
        m_\lambda^{(d)}&= O_d\qty(k^{d(d-1)/4}(1+\|\bm x\|_2)^{d(d-1)/2}),\\
        \label{eq:m-lambda-d3-upper-bound}
        m_\lambda^{(d^3)}&= O_d\qty(k^{d^4-d^2+d(d-1)/4}(1+\|\bm x\|_2)^{d(d-1)/2}),
    \end{align}
    where $c$ is a positive constant depending only on $d$. We prove these bounds as follows.
    The Frobenius formula~\eqref{eq:hook-length} provides
    \begin{align}
        \label{eq:hook-length-modified}
        d_\lambda&=\frac{k!}{\Lambda!}\frac{\Lambda!}{\prod_{i=1}^d\widetilde{\lambda}_i!}\prod_{1\leq i<j\leq d}(\widetilde{\lambda}_i-\widetilde{\lambda}_j),
    \end{align}
    where $\widetilde{\lambda}_i\coloneqq \lambda_i +d-i$ and $\Lambda\coloneqq \sum_i \widetilde{\lambda}_i = k+d(d-1)/2$.
    First, we have
    \begin{align}
        {k!\over \Lambda!} \leq k^{-d(d-1)/2}.
    \end{align}
    Defining $\bm{p} = (p_i)_i$ with $p_i \coloneqq {\widetilde{\lambda}_i \over \Lambda}$ and using the Stirling approximation\footnote{We use $\sqrt{n+1}$ instead of $\sqrt{n}$ to use it even for the case of $n=0$.} $n! = \Theta\qty(\sqrt{n+1} (n/e)^n)$, we have
    \begin{align}
        \label{eq:stirling-combinatorics}
        {\Lambda! \over \prod_{i=1}^{d} \widetilde{\lambda}_i!} = \Theta_d\qty(e^{\Lambda H(\bm{p})} (\Lambda+1)^{-(d-1)/2}\prod_{i=1}^d \sqrt{\Lambda+1 \over \widetilde{\lambda}_i+1}),
    \end{align}
    where $H(\bm{p}) \coloneqq -\sum_{i=1}^d p_i \log p_i$ is the entropy of $\bm{p}$.
    By applying Pinsker's inequality~\cite{pinsker1964information} to the relative entropy between $\bm{p}$ and the uniform distribution $\bm{u}_d = (1/d, \ldots, 1/d)$, we have
    \begin{align}
        H(\bm{p}) -\log d \leq - {1\over 2}\|\bm{p}-\bm{u}_d\|_2^2,
    \end{align}
    and the right-hand side is further bounded by
    \begin{align}
        - {1\over 2}\|\bm{p}-\bm{u}_d\|_2^2
        &= -{1\over 2\Lambda^2} \sum_{i=1}^{d} \qty(\sqrt{k}x_i+{d+1\over 2}-i)^2\\
        &\leq -{1\over 2\Lambda^2} \qty[{1\over 2} k \|x\|_2^2-\sum_{i=1}^{d} \qty({d+1\over 2}-i)^2],
    \end{align}
    where we use $(u+v)^2 \geq {1\over 2} u^2-v^2$ for $u, v\in \RR$.
    Thus, we have
    \begin{align}
        e^{\Lambda H(\bm{p})} = O_d\qty(d^k e^{-{k \|x\|^2 \over 4\Lambda}}) = O_d\qty(d^k e^{-c\|x\|_2^2}),
    \end{align}
    where $c\coloneqq {1\over 2d(d-1)+4}$ and we use $\Lambda = k+{d(d-1) \over 2}\leq \qty(1+{d(d-1)\over 2})k$ for $k\geq1$.
    We also have
    \begin{align}
        \sqrt{\Lambda+1 \over \widetilde{\lambda}_i+1}
        &\leq
        \begin{cases}
            \sqrt{2d(\Lambda+1) \over k} & (\widetilde{\lambda}_i+1 \geq k/(2d))\\
            \sqrt{\Lambda+1} & (\widetilde{\lambda}_i+1 < k/(2d))
        \end{cases}\\
        &=
        \begin{cases}
            O_d(1) & (\widetilde{\lambda}_i+1 \geq k/(2d))\\
            O_d(1+\abs{x_i}) & (\widetilde{\lambda}_i+1 < k/(2d))
        \end{cases}\\
        &= O_d(1+\|\bm{x}\|_2),
    \end{align}
    where we use $x_i<-\sqrt{k}/(2d)$ for the second case.
    Thus, we have
    \begin{align}
        \prod_{i=1}^{d} \sqrt{\Lambda+1 \over \widetilde{\lambda}_i+1} = O_d\qty((1+\|\bm{x}\|_2)^d),
    \end{align}
    i.e.,
    \begin{align}
        {\Lambda! \over \prod_{i=1}^{d} \widetilde{\lambda}_i!} = O_d\qty(d^k k^{-(d-1)/2}(1+\|\bm{x}\|_2)^d e^{-c\|\bm{x}\|_2^2}).
    \end{align}
    Since $\widetilde{\lambda}_i-\widetilde{\lambda}_j = \sqrt{k}(x_i-x_j) + j-i = O_d(\sqrt{k}(1+\|\bm{x}\|_2))$ holds, we have
    \begin{align}
        \label{eq:lambda-difference-upper-bound}
        \prod_{1\leq i<j\leq d} (\widetilde{\lambda}_i-\widetilde{\lambda}_j) = O_d\qty(k^{d(d-1)/4}(1+\|\bm{x}\|_2)^{d(d-1)/2}).
    \end{align}
    Thus, combining these upper bounds, we obtain Eq.~\eqref{eq:d-lambda-upper-bound}.
    The Weyl dimension formula~\eqref{eq:weyl-dimension}
    with Eq.~\eqref{eq:lambda-difference-upper-bound} provides Eq.~\eqref{eq:m-lambda-d-upper-bound}.
    Finally, the Weyl dimension formula shows
    \begin{align}
        \label{eq:weyl-dimension-d3}
        {m_\lambda^{(d^3)} \over m_\lambda^{(d)}} = \prod_{i=1}^{d} \prod_{j=d+1}^{d^3} {\lambda_i+j-i \over j-i} \leq (k+d^3-1)^{d(d^3-d)} = O_d\qty(k^{d^4-d^2}),
    \end{align}
    which provides Eq.~\eqref{eq:m-lambda-d3-upper-bound}.
    Using the upper bounds in Eqs.~\eqref{eq:d-lambda-upper-bound}, \eqref{eq:m-lambda-d-upper-bound}, and \eqref{eq:m-lambda-d3-upper-bound}, we have
    \begin{align}
        \sum_{\lambda\vdash_d k} m_\lambda^{(d)} \sqrt{d_\lambda m_\lambda^{(d^3)}} = O\qty(d^{k/2} k^{(2d^4-d^2-2d+1)/4}) \sum_{\lambda\vdash_d k} (1+\|\bm x\|_2)^{d^2-d/2} e^{-c\|\bm x\|_2^2/2}.
    \end{align}
    For simplicity, we assume that $k$ is a multiple of $d$.
    Then, the summation can be upper bounded by
    \begin{align}
        \sum_{\lambda\vdash_d k} (1+\|\bm x\|_2)^{d^2-d/2} e^{-c\|\bm x\|_2^2/2}
        &\leq \sum_{\bm{x} \in k^{-1/2}A_{d-1}} (1+\|\bm x\|_2)^{d^2-d/2} e^{-c\|\bm x\|_2^2/2} \label{eq:x-range}\\
        &= O_d\qty(k^{(d-1)/2} \int_{\mathbb{R}^{d-1}} \dd \bm{x} (1+\|\bm{x}\|_2)^{d^2-d/2} e^{-c\|\bm{x}\|_2^2/2})\\
        &= O_d\qty(k^{(d-1)/2}),
    \end{align}
    where $A_{d-1}$ is the set of lattice points defined by
    \begin{align}
        A_{d-1} \coloneqq \left\{\bm z\in\mathbb Z^d\;\middle|\;\sum_i z_i=0\right\}.
    \end{align}
    Thus, we obtain
    \begin{align}
        C'_{\mfS_k} = O_d\qty(k^{d^4-(d^2+1)/2}).
    \end{align}
    We can show the same bound for arbitrary $k$ by shifting $\bm{x}$ by $O(k^{-1/2})$ in Eq.~\eqref{eq:x-range}, which does not change the order of the summation.

    We then prove a converse lower bound on $C_{\mfS_k}$ using the set $\mathsf{Y}_{\mathrm{hard}}$ defined in Eq.~\eqref{eq:hard-young-set}.
    We write $\lambda_i = k/d + \sqrt{k} x_i$ for $\lambda\in \mathsf{Y}_{\mathrm{hard}}$ with $\bm x \in \mathsf{X}_d$, and show the following lower bounds for $\lambda\in \mathsf{Y}_{\mathrm{hard}}$:
    \begin{align}
        \label{eq:d-lambda-lower-bound}
        d_\lambda &= \Omega_d\qty(d^k k^{-(d^2+d-2)/4}),\\
        \label{eq:m-lambda-d-lower-bound}
        m_\lambda^{(d)} &= \Omega_d\qty(k^{d(d-1)/4}),\\
        \label{eq:m-lambda-d3-lower-bound}
        m_\lambda^{(d^3)} &= \Omega_d\qty(k^{d^4-d^2+d(d-1)/4}).
    \end{align}
    We first show a lower bound on $d_\lambda$ using the Frobenius formula~\eqref{eq:hook-length-modified}.
    We have ${k!\over \Lambda!} = \Theta_d(k^{-d(d-1)/2})$.
    Since $p_i = \widetilde{\lambda}_i/\Lambda = 1/d+O_d(k^{-1/2})$, we have $H(\bm p) = \log d + O_d(k^{-1})$.
    Using the Stirling-based estimate shown in Eq.~\eqref{eq:stirling-combinatorics}, we have
    \begin{align}
        {\Lambda! \over \prod_{i=1}^{d} \widetilde{\lambda}_i!} = \Theta_d\qty(d^k k^{-(d-1)/2}).
    \end{align}
    Since
    \begin{align}
        \abs{(x_i-x_i^\star)-(x_j-x_j^\star)}^2 \leq 2[(x_i-x_i^\star)^2+(x_j-x_j^\star)^2] \leq 2\sum_{i=1}^{d}(x_i-x_i^\star)^2 \leq 1/4
    \end{align}
    holds for any $\bm{x}\in \mathsf{X}_d$, we have
    \begin{align}
        x_i - x_j = (x_i-x_i^\star)-(x_j-x_j^\star) + (j-i) \geq j-i-1/2 \geq 1/2,
    \end{align}
    for any $i<j$.
    Thus, we have $\widetilde{\lambda}_i-\widetilde{\lambda}_j \geq \sqrt{k}/2$ for any $i<j$, which shows
    \begin{align}
        \label{eq:lambda-difference-lower-bound}
        \prod_{1\leq i<j\leq d} (\widetilde{\lambda}_i-\widetilde{\lambda}_j) = \Theta_d(k^{d(d-1)/4}).
    \end{align}
    Combining these bounds, we obtain Eq.~\eqref{eq:d-lambda-lower-bound}.
    The Weyl dimension formula~\eqref{eq:weyl-dimension} with Eq.~\eqref{eq:lambda-difference-lower-bound} provides Eq.~\eqref{eq:m-lambda-d-lower-bound}.
    Finally, the Weyl dimension formula~\eqref{eq:weyl-dimension-d3} with $\lambda_i = \Theta_d(k)$ provides Eq.~\eqref{eq:m-lambda-d3-lower-bound}.
    Since $\abs{\mathsf{Y}_\mathrm{hard}} = \Theta_d(k^{(d-1)/2})$ holds, we have
    \begin{align}
        C_{\mfS_k} &\geq {1\over d^k} \qty(\sum_{\lambda\in \mathsf{Y}_\mathrm{hard}} m_\lambda^{(d)} \sqrt{d_\lambda [m_\lambda^{(d^3)} - m_\lambda^{(d)}]})^2\\
        &= {1\over d^k} \qty[\Omega_d(k^{(d-1)/2}) \cdot \Omega_d(k^{d(d-1)/4}) \cdot \sqrt{\Omega_d(d^k k^{-(d^2+d-2)/4}) \cdot \Omega_d(k^{d^4-d^2+d(d-1)/4})}]^2\\
        &= \Omega_d(k^{d^4-(d^2+1)/2}).
    \end{align}
\end{proof}

\end{document}